\pdfoutput=1

\documentclass{article}

\usepackage[english]{babel}

\usepackage[letterpaper,top=1in,bottom=1in,left=1in,right=1in]{geometry}

\usepackage[dvipsnames]{xcolor}
\usepackage{tikz}
\usetikzlibrary{arrows.meta,positioning}
\usepackage{amsmath,amsthm,amssymb}
\usepackage{graphicx}
\usepackage[colorlinks=true, allcolors=blue]{hyperref}
\usepackage[nameinlink]{cleveref}
\usepackage[linesnumbered,ruled,vlined,algo2e]{algorithm2e}
\usepackage{algorithm}
\usepackage{algpseudocode}
\usepackage{enumitem}
\usepackage[sort&compress,numbers]{natbib}
\usepackage{svg}
\usepackage{thm-restate}
\usepackage[skins,many]{tcolorbox}
\usepackage{bm}

\newtheorem{theorem}{Theorem}
\newtheorem{corollary}[theorem]{Corollary}
\newtheorem{lemma}[theorem]{Lemma}
\newtheorem{definition}[theorem]{Definition}
\newtheorem{claim}[theorem]{Claim}
\newtheorem{observation}[theorem]{Observation}

\usepackage{hyperref}
\hypersetup{
    colorlinks = true,
    allcolors = black
}

\usepackage[nameinlink]{cleveref}

\newcommand{\A}{\mathcal{A}}
\newcommand{\N}{\mathbb{N}}
\newcommand{\Z}{\mathbb{Z}}
\newcommand{\T}{\mathcal{T}}
\renewcommand{\P}{\mathsf{P}}

\newcommand{\maxID}{\mathcal{ID}_{\text{max}}}

\newcommand{\DS}{\mathcal{DG}}
\newcommand{\maxIDi}{\maxID^{(i)}}
\newcommand{\Lmax}{L_{\text{max}}}

\newcommand{\Sin}{\Sigma_{\text{in}}}
\newcommand{\C}{\mathcal{C}}
\newcommand{\Sout}{\Sigma_{\text{out}}}

\newcommand{\set}[1]{\left\{#1\right\}}

\newcommand{\fA}{\mathcal A}

\newtcolorbox{myframe}[2][]{%
	breakable,enhanced,colback=white,colframe=black,coltitle=black,
	sharp corners,boxrule=0.4pt,
	fonttitle=\itshape,
	attach boxed title to top left={yshift=-0.3\baselineskip-0.4pt,xshift=2mm},
	boxed title style={tile,size=minimal,left=0.5mm,right=0.5mm,
		colback=white,before upper=\strut},
	title=#2,#1
}

\usepackage{authblk}

\title{The Distributed Complexity Landscape on Trees\\ Depends on the Knowledge About the Network Size}

\author[1]{Alkida Balliu}
\author[2]{Sebastian Brandt}
\author[3]{Fabian Kuhn}
\author[1]{Dennis Olivetti}
\author[4]{Timothé Picavet}
\author[3]{Gustav Schmid}

\affil[1]{Gran Sasso Science Institute, L'Aquila, Italy}
\affil[2]{CISPA Helmholtz Center for Information Security, Saarbrücken, Germany}
\affil[3]{University of Freiburg, Germany}
\affil[4]{LaBRI, Université de Bordeaux, France}

\date{}

\begin{document}
\maketitle


\begin{abstract}

One of the most successful theoretical models in distributed computing is the LOCAL one, introduced in a seminal work by Linial [SIAM J.\ Comp.\ 1992]. Over the years, when studying distributed graph problems in the LOCAL model, researchers made different assumptions on the exact details of this model. For example, sometimes it is assumed that all machines know the exact size of the network, other times machines are assumed to only know a polynomial upper bound on the size of the network, while sometimes no prior knowledge is assumed. Are these small differences irrelevant details or do they actually heavily affect the obtained results? We investigate how robust our current understanding of the LOCAL model truly is, by focusing on one of the most studied classes of problems, called Locally Checkable Labelings (LCLs).

LCLs are graph problems for which correct solutions can be described by listing a finite set of valid constant-radius neighborhoods. 
Since Naor and Stockmeyer introduced LCLs [FOCS 1995], understanding them has been in the center of attention, and, in the last 10 years, researchers were able to make a lot of progress. For example, Chang, Kopelowitz, and Pettie [FOCS 2016] showed that the randomized complexity of any LCL problem on $n$-node graphs is at least its deterministic complexity on $\sqrt{\log n}$-node graphs. Later, Chang and Pettie [FOCS 2017], showed that, on bounded-degree trees, any randomized algorithm solving an LCL in $n^{o(1)}$ rounds can be automatically transformed into a deterministic algorithm with runtime $O(\log n)$. Then, Balliu, Hirvonen, Korhonen, Lempiäinen, Olivetti, and Suomela [STOC 2018] showed that these kind of automatic speedups are no longer possible for general bounded-degree graphs.
The above-mentioned results make use of the assumption that the nodes have, for free, prior knowledge of $n$. How much does this assumption affect the beautiful theory of LCLs as we know it nowadays?

It turns out that, perhaps surprisingly, if we were to consider a setting where nodes are oblivious of $n$, or if we relax the setting such that nodes know a polynomial upper bound of $n$, already on trees, the theory of LCLs looks quite different from the one we currently know. In fact, while the fundamental classification of problems seems to remain the same, our results show that the picture becomes much more complex: for example, there are \emph{more} cases in which randomness helps in solving LCLs faster; there are problems with \emph{very unnatural} complexities; and for some problems the exact lower bound even depends on which definition of $\Omega$ we use!

\end{abstract}
\thispagestyle{empty}

\clearpage
\thispagestyle{empty}

\tableofcontents
\clearpage
\setcounter{page}{1}

\section{Introduction}

Over the last 40 years, researchers have made remarkable progress in advancing our knowledge in the field of theoretical distributed computation. In the process of exploring such a vast area of research, it is only natural to study various meaningful models of distributed computing, each focusing on different aspects. 
One of the most successful theoretical models in distributed computing is the LOCAL one, introduced in a seminal work by Linial~\cite{Linial92}. This is a synchronous message passing model, where a network is modeled as a graph, in which nodes represent machines, and edges represent communication links.

When striving to understand the unknown, it helps significantly to simplify the context or, sometimes, the task at hand. 
For instance, when trying to design fast algorithms for distributed graph problems in the LOCAL model, it is very often assumed that the nodes of the graph are given, for free, \emph{prior knowledge of the total number $n$ of nodes} in the graph. In fact, in the literature, some algorithms are designed to work when the exact value of $n$ is known a priori, others require to know only a polynomial approximation of $n$, while still for others this global knowledge of $n$ is not needed at all.

Assuming prior knowledge of the total number $n$ of nodes is indeed quite useful: it allows us to design distributed algorithms that state operations like ``first, each node $v$ checks whether some property is satisfied in its $O(\log^2 n)$-radius neighborhood, then ...''. Such an operation requires that the algorithm is given a polynomial upper bound on $n$. However, while in the centralized setting knowing the size of the input is a natural assumption, in the distributed setting this is not quite realistic: think, e.g., of networks that change over time, or huge networks that span the whole world. Moreover, for many interesting distributed graph problems, learning the total number of nodes in a graph potentially requires much more time than solving the problem with the best known algorithm that receives this knowledge for free.

Therefore, it is natural to ask the following questions. What can be solved efficiently, in the distributed setting, with algorithms that are not given as input the value of $n$ or if nodes are only given an imprecise estimate of $n$? Are current algorithms heavily relying on the knowledge of $n$, and if they are, can we turn them into algorithms that do not have access to this information while maintaining the same runtime? How much would our knowledge change in a setting in which an algorithm is not given any prior knowledge of global parameters? Ultimately, how robust is our current knowledge in the theory of distributed computing? 

\paragraph{Some algorithms can be turned into uniform ones.}
The question of whether we can obtain \emph{uniform algorithms}~\cite{KSV11}, that is algorithms which are oblivious to global parameters, has already been discussed in a few cases. In some cases, researchers have made an effort and asked whether the algorithms they were presenting could be made uniform. For example, in \cite{BarenboimE10}, Barenboim and Elkin first gave an algorithm for solving MIS when knowing $n$, and then, by paying some overhead on the runtime, they show how to turn their algorithm into one in which no knowledge about $n$ is necessary. Moreover, there are some generic approaches that one could try for obtaining uniform algorithms. For instance, one could try to \emph{guess} the value $n$ of the total number of nodes: one can first run a non-uniform algorithm with some value $N=N_0$, then, if it fails, one can try again with some larger $N$, until one succeeds. By how much one increases the guess depends on the runtime of the given algorithm. Korman, Sereni, and Viennot refined this idea and designed a technique that proved useful for making existing algorithms uniform \cite{KSV11}. On a high level, their technique works as follows. They first run the algorithm with a guessed value of $n$: after this, some nodes are going to succeed, while some others are going to fail (think of the $(\Delta + 1)$-vertex coloring problem; nodes that fail have got a color that conflicts with some neighbor). The nodes that succeed keep their output (their color), while the others run the algorithm with a suitably increased guess of $n$. The authors used this technique on many existing algorithms that solve variants of maximal independent set and vertex/edge colorings, obtaining uniform algorithms that asymptotically have the same runtime as the non-uniform ones. However, this technique can be applied only to a specific kind of algorithms. In fact, in order to be able to apply it, the algorithm must be able to provide an output that is compatible with the outputs of nodes that have already terminated.

\paragraph{Parts of the distributed theory inherently can't be made uniform.}
Despite the efforts of understanding distributed computing in the uniform setting, that is in the setting where algorithms may not leverage global parameters, many important results in the literature strongly rely on prior knowledge of global parameters, and previous techniques do not help in obtaining uniform and efficient algorithms. One striking example is the successful line of research that studies Locally Checkable Labeling (LCL) problems. The research community put a lot of effort into understanding LCLs and yet a major part of the theory that researchers have built crumbles in an oblivious-to-$n$ setting.
In this paper, we make a first step forward in understanding this extensively-studied class of problems in the uniform setting and in settings where the knowledge about $n$ is less accurate. Perhaps surprisingly, our results show that even when restricted to bounded-degree trees, when $n$ is not known a priori, the theory of LCLs looks vastly different than the one we have known so far, and it gets more complex as well.

LCLs were introduced in 1995 in the seminal work of Naor and Stockmeyer \cite{NaorStockmeyer95}, and, informally, they are a class of problems for which correct solutions can be specified by listing a finite set of valid labeled constant-radius neighborhoods (plus some additional restrictions). Observe that this definition implies that the correctness of a solution for an LCL problem can be checked, distributedly, in constant communication rounds. Many well-studied problems fall into this class: maximal matching, maximal independent set, vertex coloring, etc. Since its introduction, the LCL class of problems has been extensively studied over the years (see e.g., \cite{lcls_on_paths_and_cycles,balliu19lcl-decidability,B0COSS22_LCLregularTrees,BHOS19HomogeneousLCL,CP19timeHierarchy,LCLs_in_rooted_trees,binary_lcls,brandt21trees,bcmos21,B0FLMOU22,chang20,lcls_on_grids,BBOS18almostGlobal,BBOS20paddedLCL,BHKLOS18lclComplexity,CKP19exponential}), and one very well-studied aspect is the complexity landscape of LCLs: what are possible complexities of LCLs? For example, nowadays we know that there cannot be any LCL with deterministic complexity $\Theta(\sqrt{\log n})$. Moreover, it is worth mentioning that investigating LCLs has given insights and has helped in understanding problems outside the LCL class. For example, while LCLs are problems defined on graphs of bounded degree, the \emph{round elimination} technique, which has been developed to understand the complexity of LCLs \cite{brandtlower16}, has then been used successfully to show many different lower bounds in the unbounded-degree setting \cite{Balliu2019, trulytight,rs-siam,outdegree-ds,Balliu0KO22,mm-hypergraphs}.

An interesting graph class where LCLs have been studied is \emph{trees}. The graph class of trees is very important, not only because it is well-studied and natural, but also because, in the distributed setting, for many interesting natural problems, trees appear to be hard instances. For example, most lower bounds proved via the round elimination technique hold already on trees. Nowadays, the general feeling is that we know everything regarding the landscape of the deterministic and randomized complexities of LCLs on trees. In fact, we know that the deterministic complexity of any LCL problem on trees is one of the following: $O(1)$, $\Theta(\log^* n)$, $\Theta(\log n)$, or $\Theta(n^{1/k})$ for all integers $k>0$~\cite{CP19timeHierarchy}. The complexity gaps are constructive: if we design an algorithm that solves an LCL on trees in e.g., $O(\log^2 n)$ rounds, then we can automatically speed it up and reduce its complexity to $O(\log n)$. We also know that randomness may \emph{only} help for problems with deterministic complexity $\Theta(\log n)$, and if it does help, it helps exponentially \cite{CKP19exponential}. 

The case of trees already highlights a big issue suffered by the beautiful distributed complexity theory of LCLs that has been built over the years. For the so called lower regime $O(\log n)$ much is already known, while for the polynomial regime complexities $\Theta(n^{1/k})$ the picture is unclear. 
LCLs that have complexity $O(\log n)$ when $n$ is known to the algorithm, can be solved with normal-form algorithms that do not require any knowledge of $n$ \cite{CP19timeHierarchy, CKP19exponential}. See also the work of Brandt et al.~\cite{uniform_algorithms}, in which randomized uniform algorithms for LCLs in regular trees are considered. In that work, they consider algorithms such that for any $\varepsilon>0$ they have the guarantee that any node running for at least $T(\varepsilon)$ rounds has terminated with probability at least $1-\varepsilon$. They show the following equivalences between the standard randomized non-uniform LOCAL complexities and the uniform LOCAL complexities: $O(1) = O(1)$, $\Theta(\log^* n)=\Theta(\log^* 1/\varepsilon)$, $\Theta(\log \log n) =\Theta(\log \log 1/\varepsilon)$ (where we should think of $\varepsilon = \frac{1}{n^c}$).

It seems reasonable to hope that similar results can be proven about the polynomial regime complexities $\Theta(n^{1/k})$, but it is unclear upon closer inspection. In fact, all \emph{upper bounds} of $O(n^{1/k})$ are shown by assuming that some value $N = \Theta(n)$ is given to the nodes and the algorithms heavily rely on this. If we instead give some value $N = \omega(n)$ to the nodes, then the upper bounds get asymptotically worse, and it is not known whether a tight $\Theta(n^{1/k})$ complexity (for all integers $k>1$) can be achieved without relying on such knowledge.
In other words, some known facts about the distributed complexity theory of LCLs may \emph{collapse}, and hence the following questions arise naturally. When relaxing the assumption that nodes have knowledge about the network size: Is it really true that randomness may help only for problems with deterministic complexity $\Theta(\log n)$? Is it really true that there are no LCLs with deterministic complexity in the range $\omega(\sqrt{n})$ to $o(n)$? How robust is our current knowledge of LCLs? In this work, we try to bring some clarity to these issues.

\subsection{Some Useful Background}
In a cornerstone work, Chang and Pettie \cite{CP19timeHierarchy} proved that, on trees, there can be no LCL with a complexity that is in the range $\omega(\log n)\cap n^{o(1)}$. After that, the techniques used in \cite{CP19timeHierarchy} have been used to prove many other different \emph{complexity gaps} \cite{chang20,BBOS18almostGlobal}. In essence, in the $\omega(\log n)$ region, there can only be problems with complexity $\Theta(n^{1/k})$, for all integers $k>0$, and, as shown in \cite{CP19timeHierarchy,chang20}. We consider two natural examples of problems with such complexities: $k$-hierarchical $2 \frac{1}{2}$-coloring and $k$-rake-and-compress decomposition. 
The latter one of these problems is the theoretically more interesting problem since it is \emph{complete} for the class of problems solvable in $\Theta(n^{1/k})$ rounds. That is, if we have a fast algorithm for $k$-rake-and-compress decomposition, we can also solve any other LCL with complexity $O(n^{1/k})$ in the same asymptotic time \cite{chang20,BBOS18almostGlobal}. However, in the spirit of both brevity and presenting our ideas in a more understandable manner, we restrict our more involved proofs to the easier to deal with problem of $k$-hierarchical $2 \frac{1}{2}$-coloring. We note that $k$-hierarchical $2 \frac{1}{2}$-coloring is a natural candidate, since (1) it was the first family of problems revealing the $\Theta(n^{1/k})$ classes, and (2) its study eventually lead to the classification of LCLs into these classes.

\paragraph{Hierarchical coloring.}
The problem of $k$-hierarchical $2 \frac{1}{2}$-coloring requires us to produce a proper $2$-coloring, but only on some parts of a tree. We note that this is an artificial problem and its description may sound a bit convoluted.

\begin{figure}
    \centering
    \includegraphics[width=\linewidth]{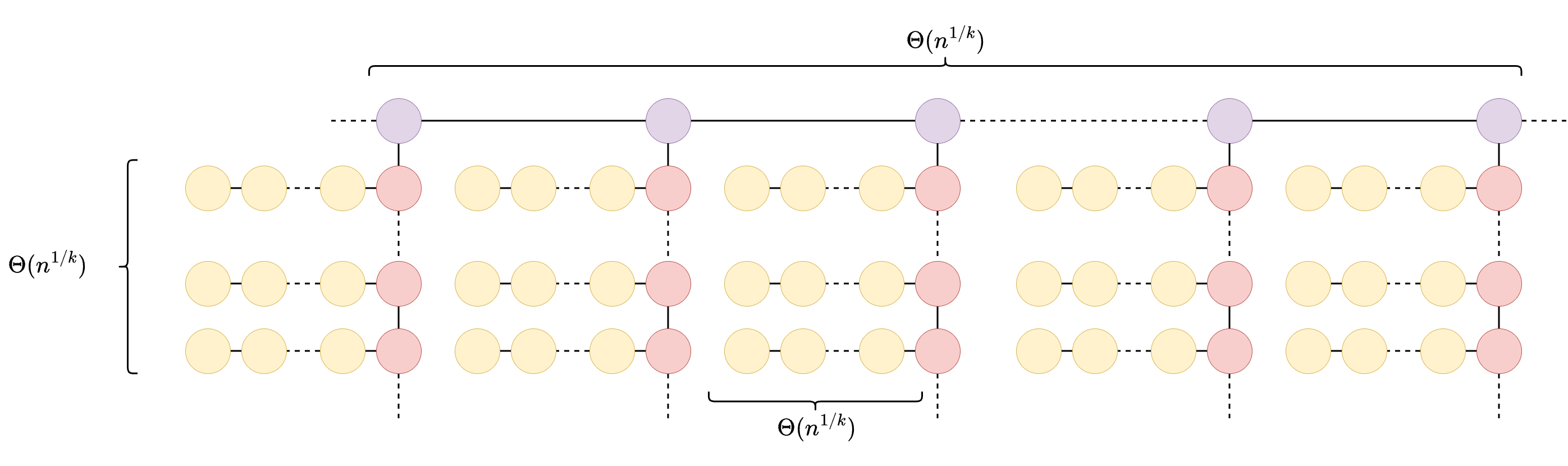}
    \caption{The lower bound instance for $k$-hierarchical $2 \frac{1}{2}$-coloring, for $k = 3$. Level-3 nodes are purple and form a path at the top. To each level-3 node, we attach a path of red level-2 nodes, and to each level-2 node we attach a path of yellow level-1 nodes. All paths have length $\Theta(n^{1/3})$.}
    \label{fig:halfColoringGraph}
\end{figure}

We explain the problem using the canonical lower bound instance shown in \Cref{fig:halfColoringGraph}. The graph is organized into hierarchical levels obtained by iteratively removing nodes of degree at most~$2$.

Formally, level~1 consists of all nodes of degree at most~$2$ (shown in yellow). After removing these nodes, level~2 consists of all remaining nodes of degree at most~$2$ (shown in red). More generally, level~$i$ consists of the nodes that have degree at most~$2$ after removing all nodes of levels $< i$. The parameter $k$ specifies the number of relevant levels; nodes of level larger than $k$ may output arbitrary labels. Since $k$ is a constant, membership in levels $\le k$ can be determined within a constant-radius neighborhood.

Each level induces a collection of pairwise disjoint paths. In the construction of \Cref{fig:halfColoringGraph}, with the particular case of $k=3$, all such paths have length $\Theta(n^{1/3})$.

Each level~1 path must either be properly $2$-colored, or its nodes must unanimously be labeled \emph{decline}. If the nodes of a level~1 path output a $2$-coloring, then any adjacent level~2 node (at the two endpoints of the path) may output the label \emph{exempt}. The \emph{exempt} label is compatible with any other label, effectively removing all constraints and thus splitting a level~2 path into two independent subpaths.

Level~2 paths (after splitting at \emph{exempt} nodes) face the same choice: they must either produce a proper $2$-coloring or unanimously output \emph{decline}. This process continues up the hierarchy. At the final level~$k$ (level~3 in \Cref{fig:halfColoringGraph}), nodes are not allowed to output \emph{decline}, and therefore every level~3 path, excluding exempt level~3 nodes, must be properly $2$-colored.

\begin{itemize}
\item \textbf{Lower bound intuition (case \bm{$k=3$}):} In the instance of \Cref{fig:halfColoringGraph}, all level~1 paths have length $\Theta(n^{1/3})$. Producing a proper $2$-coloring of such a path requires $\Omega(n^{1/3})$ rounds. Hence, if any level~1 path outputs a $2$-coloring, the desired lower bound already follows.

Otherwise, all level~1 paths output \emph{decline}, and thus no level~2 node may output \emph{exempt}. They therefore remain of length $\Theta(n^{1/3})$. So, the same argument applies here: if any level~2 path is $2$-colored, this requires $\Omega(n^{1/3})$ rounds.

If both level~1 and level~2 paths output \emph{decline}, then the responsibility is pushed to level~3, where again no node is allowed to output \emph{exempt}. However, level~3 nodes are not allowed to output \emph{decline}, and thus must $2$-color their path, which again has length $\Theta(n^{1/3})$. Consequently, in all cases, some path of length $\Theta(n^{1/3})$ must be $2$-colored, implying an $\Omega(n^{1/3})$ lower bound.

\item \textbf{Algorithmic idea:} A matching upper bound of $O(n^{1/3})$ rounds is obtained as follows. Each node explores the path it belongs to up to distance $O(n^{1/3})$. If the path has length at most $O(n^{1/3})$, the nodes produce a proper $2$-coloring. Otherwise, the entire path outputs \emph{decline}.

At higher levels, nodes adjacent to already colored lower-level paths may output \emph{exempt}, thereby splitting longer paths into smaller ones. Nodes that cannot output \emph{exempt} must have attached a lower level path that outputs \emph{decline}, such a declined path must have length $\Omega(n^{1/3})$ and so, by a charging argument, there can be at most $O(n^{1-1/3})$ level-2 nodes that do not output \emph{exempt}. Repeating this idea and exploring the level-2 paths until length $O(n^{1/3})$, we ensure that the number of level-3 nodes that do not output \emph{exempt} is at most $O(n^{2/3- 1/3}) = O(n^{1/3})$. We then produce a proper two coloring of all level-3 paths in $O(n^{1/3})$ rounds.
\item \textbf{Role of knowledge of \bm{$n$}:} The algorithm requires nodes to know (an upper bound on) $n$ in order to determine when to stop exploring their path. If a node stops too early, too few nodes will output \emph{exempt}, leaving a long path at the final level. If it explores for too long, the runtime may exceed the $O(n^{1/k})$ bound.
\end{itemize}

\paragraph{Rake-and-compress decomposition.}
While \cite{CP19timeHierarchy} gave $k$-hierarchical $2 \frac{1}{2}$-coloring as a first example of problems with complexity $\Theta(n^{1/k})$ in trees, in \cite{chang20} it has been shown that, for each integer $k$, there is a problem that is \emph{complete} for the class of problems that can be solved in time $\Theta(n^{1/k})$. That is, any $\Theta(n^{1/k})$-round problem can be solved by using an algorithm for the so-called $k$-rake-and-compress decomposition.\footnote{Note that the $k$ in the names of both problems will be the same $k$ as in the complexities, that is $2$-hierarchical $2 \frac{1}{2}$-coloring has complexity $\Theta(\sqrt{n})$ and computing a $3$-rake-and-compress decomposition has complexity $\Theta(n^{1/3})$.} Informally, a rake operation is the process of removing all nodes of degree exactly 1 and a compress operation is the process of removing all nodes of degree exactly 2. The goal is to repeatedly use these two operations to remove the entire graph, while using the compress operation only $k-1$ times. 

More formally, we are required to partition the nodes of a given tree $T$ into $2k-1$ subsets $R_1, C_1, \ldots$, $R_{k-1}$, $C_{k-1}$, $R_{k}$, where the sets $R_1, R_2, \ldots$ are called \emph{rake layers} and the sets $C_1,C_2,\ldots$ are called \emph{compress layers}. The components of each rake layer have an edge orientation that forms a rooted tree, and the nodes of each compress layer form disjoint paths, whose endpoints again have a single outgoing edge towards a Rake node. Note that performing multiple rake operations in a row naturally leads to the removed nodes forming a rooted tree. Similarly a compress operation makes disjoint paths and the endpoints of the removed paths had only one neighbor that was not removed. So this problem is supposed to capture the process of repeatedly using these two operations to remove the entire graph.

\medskip

It is shown, in \cite{chang20}, that once such a decomposition has been computed, any LCL of complexity $\Theta(n^{1/k})$ can be solved in a time proportional to the diameter of the largest connected component induced by nodes belonging to the same $R_i$ layer. Moreover, by performing $O(n^{1/k})$ rakes (which produces the rake layer $R_1$) followed by 1 compress (which produces the compress layer $C_1$) and repeating this process $k$ times (without performing the $k$-th compress), it is possible to compute a decomposition where the diameter of the largest connected component of every $R_i$ is upper bounded by $O(n^{1/k})$~\cite{chang20}. 

The formal definition of this problem is given in \Cref{sec:decAndLandscape}. The known algorithms for computing such decompositions have the same issues suffered by the known algorithms for $k$-hierarchical $2 \frac{1}{2}$-coloring: they strictly rely on having prior knowledge of $n$.

\paragraph{Why we restrict to these problem families.}
In our work, we prove lower and upper bounds for these families of problems, and show that minimal changes on the assumptions that are usually made in the LOCAL model can drastically affect the complexity of these problems.
As $k$-rake-and-compress is a \emph{complete} problem for LCLs of complexity $\Theta(n^{1/k})$, we believe that all LCL problems of this class behave similarly. In order to avoid a long and technical analysis, we decide to only work on $k$ rake-and-compress decompositions and the hierarchical coloring problems. Generalizing our theorems would significantly obfuscate the interesting aspects of our lower bound and upper bound results, specifically, how we optimize in these new settings and the concepts behind lower bounds in these new settings.

\paragraph{Many possible assumptions.}
The assumptions made over the years when studying graph problems in the LOCAL model and similar models are many and of different flavors\footnote{For instance, the dynamic-LOCAL model~\cite{dynamic_local} is defined such that the algorithm knows the value of $n$. If we define the dynamic-LOCAL model in perhaps the most intuitive way, where there is no assumption on the future number of nodes, the algorithms and known results do not follow through. However, the algorithms that we present in this paper would directly work in this version of dynamic-LOCAL. This shows that, when studying a model of computation, these assumptions really matter, even outside the standard LOCAL model.}. For example, the above-described algorithms need to know a linear upper bound on $n$, and if we relax this requirement a bit by providing a polynomial upper bound on $n$, the complexity of these algorithms would get worse. However, some other algorithms in the literature work perfectly fine when given a polynomial upper bound on $n$, and others do not need to know $n$ at all.

Moreover, there are other types of assumptions made in the literature, that are not about the knowledge of $n$. For example, it is typically assumed that nodes are assigned unique IDs, and some algorithms rely on the fact that the range of possible IDs is \emph{small} (e.g., polynomial in the number of nodes). Moreover, some algorithms assume and rely on the fact that all nodes know the range of the possible IDs, while some algorithms work fine even without this knowledge.

\subsection{Our Contributions}
In the following, we summarize our results. As we will show, not only is the complexity landscape of LCLs much more diverse than what was previously known, but also, the complexities even depend on which definition of $\Omega$ we use!

\paragraph{No assumptions at all.} We start our investigation in the most restrictive version of the LOCAL model: nodes do not have access to randomness and are not given any prior knowledge about $n$. To avoid trivial impossibility results, we assume that each node is assigned a unique ID that can be an arbitrary natural number. In fact, without IDs, even in a graph of just two nodes connected by an edge, it is impossible to solve basic problems like $2$-coloring because the nodes are not able to break symmetry. In \Cref{sec:lb-no-input}, we first illustrate the difficulties that we face in this model by giving an $\Omega(n)$ lower bound for $k$-hierarchical $2\frac{1}{2}$-coloring, for any integer $k$. 

\begin{restatable}{theorem}{HalfColNoBounds}\label{thm:2half-no-bounds}
For all integers $k > 0$, $k$-hierarchical $2\frac{1}{2}$-coloring requires $\Omega(n)$ rounds, in the LOCAL model, where no bound on $n$, nor on the size of the ID space, is provided to the nodes.
\end{restatable}

Then, in \Cref{thm:decomposition-no-bounds}, we show that this lower bound also holds for computing $k$-rake-and-compress decompositions, which is the key component at the heart of the theory of LCLs in trees. 
\begin{restatable}{theorem}{RcNoBounds}\label{thm:decomposition-no-bounds}
For all integers $k > 0$, solving $k$-rake-and-compress requires $\Omega(n)$ rounds, in the LOCAL model, where no bound on $n$, nor on the size of the ID space, is provided to the nodes.
\end{restatable}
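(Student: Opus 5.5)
We argue by contradiction and reduce to an indistinguishability argument on paths, as for \Cref{thm:2half-no-bounds}. Suppose some algorithm $\A$ solves $k$-rake-and-compress in $T(n)=o(n)$ rounds with no knowledge of $n$ or of the ID space. The key object is a long path $P$ (or a spider built from one), since on a path only one rake layer and at most $k-1$ compress layers are available. Fix the run of $\A$ on a path of length $m$. Because the nodes cannot tell whether they sit in a path of length $m$ or inside a much larger instance, the radius-$T(m)$ views of interior nodes look the same in both. We will embed these views into a huge tree where the decomposition forced on those nodes is invalid.

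\textbf{Step 1 (local structure on a long path).} Consider a path on $m$ nodes with arbitrary IDs. Every valid decomposition partitions the path into segments. Rake components are rooted subtrees, hence subpaths oriented towards a single exit. Compress components are subpaths with outgoing edges towards higher-layer rake nodes. Given $T(m)=o(m)$, some node $v$ in the middle has a view not containing an endpoint. We look at which layer $v$ chooses:
\begin{itemize}
\item if it is a rake layer $R_i$, then its rake component must terminate within distance $T(m)$ in some direction, or else the component's orientation is determined consistently along a long stretch;
\item if it is a compress layer $C_i$, the compress path must end at nodes of strictly higher layer.
\end{itemize}

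\textbf{Step 2 (pumping/embedding).} Because IDs are unbounded and no bound on $n$ is known, we can copy the labeled $T$-neighborhoods of several middle nodes (with fresh disjoint IDs) into a new tree. This works as in the lower bound for \Cref{thm:2half-no-bounds}. Concretely, we build a hierarchical instance like \Cref{fig:halfColoringGraph} in which each path is long and isomorphic in local views to a middle segment of a path. For each node, the algorithm's output then depends only on its view, which we choose to match an earlier run.

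The aim is to derive a contradiction about layers. Nodes on a bottom-level path cannot all be raked, since long rake components require $\Omega(\text{length})$ rounds to orient consistently: two copies of a middle segment glued with opposite root sides produce conflicting orientations. So some nodes must be compressed. Compress assignment forces the compressed nodes to form paths whose endpoints point at higher-layer rake nodes.

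\textbf{Step 3 (iteration over levels and the main obstacle).} Iterating over $k$ levels of the hierarchy, each level can ``use up'' at least one compress layer, and at the top level no compress layer remains. A long top path must then be a single rake component oriented toward one root. Orienting it needs knowledge of a distant endpoint, contradicting $T=o(n)$ via a gluing argument on two views.

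The main obstacle is making Step 3 rigorous. We must show that at each level, reusing views, nodes really must consume a new compress layer, i.e.\ the layer index strictly increases along the hierarchy. I would try to reduce directly to \Cref{thm:2half-no-bounds}: a $k$-rake-and-compress decomposition with rake components of small diameter seemingly yields a $k$-hierarchical $2\frac12$-coloring in additional time proportional to the rake diameter. However, arbitrary diameter is allowed here, so the reduction is unclear. If it fails, I would fall back on the direct orientation-gluing argument, where the ``decline'' choice is replaced by ``rake a long segment''.
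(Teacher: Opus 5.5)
\textbf{Main gap: your Step 3.} Your skeleton matches the paper's. You use a $k$-hierarchical instance in which each level is pushed one compress layer higher, so the top path is left with $R_k$ only and must be oriented towards a single root. But the step that turns this into a proof is exactly the one you leave open. Your fallback reduction to \Cref{thm:2half-no-bounds} cannot work, because rake components have unbounded diameter; the paper instead reruns the argument directly. What you are missing are two facts about valid labelings:
\begin{itemize}
\item every node of degree $\ge 3$ has an incoming edge, so every rake node is the end of an oriented path with nondecreasing labels that starts at a degree-$\le 2$ node, i.e.\ inside some $P_1$ (\Cref{obs:rc-path});
\item an all-rake path has a unique sink carrying its largest label (\Cref{obs:no-2-outgoing}).
\end{itemize}
With these the paper proves, by induction on $i$, that each $P_i$ contains a compress node of layer $\ge C_i$. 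It also proves that the endpoint $v$ of $P_i$ adjacent to its level-$(i+1)$ neighbour $u$ is never $R_i$. The argument for the second claim goes as follows. If $v$ were $R_i$, orienting $P_i$ towards that compress node forces $\{u,v\}$ to point into $v$, so $t(u)\le R_i$. Yet any monotone oriented path from a leaf to $u$ must enter the part of the tree above level $i$ through such an endpoint. Every such entry is either labelled too high or oriented downwards, a contradiction. Hence all nodes above level $i$ carry labels $\ge R_{i+1}$, compress layers can never be reused higher up, and level $k$ is left with $R_k$ alone.

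Your justification that a bottom path ``cannot all be raked'' also needs care. The glued-orientation conflict is forced only when the fast nodes carry the smallest label admissible at their level. Suppose the interior of $P_1$ outputs $R_2$, all oriented one way. Then the undetermined nodes at the seam of your glued instance can absorb the conflict with two adjacent $C_1$ nodes, each pointing outwards, and no rule is violated. The case you actually have to exclude is ``$v$ is $R_i$ and points up''. Since labels below $R_i$ are already excluded at levels $\ge i$, this case forces every node of $P_i$ to be $R_i$ oriented towards $v$, and then the seam cannot be repaired. For this to work, prove the invariant for every ID assignment. Reversing an interior segment of $P_i$ together with its hanging subtrees gives the same hierarchical graph with permuted IDs, so the invariant then also holds in the glued instance.

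\textbf{Second gap: running times above level 1, and the quantifier.} Step 2 claims that every path of the hierarchical instance looks locally like the middle of a plain path. For levels $\ge 2$ this is false: their views contain the hanging lower-level trees. So copying views from a run on a path of length $m$ says nothing about level-$i$ nodes, and in the final $n$-node instance the only a priori bound is $T(n)\gg \ell_i$.

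The paper handles this by choosing the lengths recursively:
\begin{itemize}
\item $\ell_i$ is fixed only after $\ell_1,\dots,\ell_{i-1}$;
\item the level-$\le i$ component containing $P_i$ has at most $c_2L_i$ nodes (\Cref{obs:nodes-k-hierarchical});
\item this component is padded, by a path attached far from the node in question, to exactly $N_i$ nodes, chosen so that $\A$ halts within $\ell_i/10$ rounds on it;
\item equality of radius-$\ell_i/10$ views transfers this bound back to $G$.
\end{itemize}

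Relatedly, the paper negates Knuth's $\Omega(n)$: it only assumes $\A$ is fast on infinitely many sizes and takes the $N_i$ among those, which is why exact padding is needed. Starting from $T(n)=o(n)$, as you do, would only yield the weaker Hardy--Littlewood lower bound.
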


These results convey the following message.

\begin{myframe}{}
    Having no knowledge at all about the size of the network, and no guarantees on the ID space, makes all the polynomial classes collapse. Hence, in order to find any non-trivial solution to these problems, \emph{some} additional guarantees are required.
\end{myframe}

\paragraph{Polynomial upper bound on $n$ given.} 
As a next step, we investigate the commonly considered setting in which nodes are provided a polynomial upper bound on $n$. More formally, we consider the setting in which nodes are given two inputs $N$ and $c$, and are promised that $n \le N\le n^c$. On a high level, we show that, problems that have complexity $\Theta(n^{1/k})$ when $n$ is known, still have a polynomial complexity in this setting, but the exponent of this polynomial depends in a complicated way on $c$ and $k$.

For this setting, we will provide upper bounds on the complexity of computing $k$-rake-and-compress decompositions in \Cref{sec:NAlgo}. We will show that the complexities of the algorithms that we provide can be derived by solving a non-trivial optimization problem. This will result in highly unnatural complexities. For example, for $k = 3$ and $c = 3$, we will obtain an algorithm with complexity $O(n^{\frac{7 -\sqrt{13}}{6}})\approx O(n^{0.566})$ (which should be compared with the complexity $O(n^{1/3})$ when $n$ is known). Since computing such a decomposition is a \emph{complete} problem for problems with complexity $O(n^{1/k})$ in the standard setting, we obtain the result stated in \Cref{thm:rc-upper-bound}.

\begin{restatable}{theorem}{RcUpperBound}\label{thm:rc-upper-bound}
Let $\Pi$ be an LCL problem that, on trees, can be solved in $O(n^{\frac{1}{k}})$ rounds in the LOCAL model, when nodes are provided with a linear upper bound on $n$ and IDs are from some polynomial range.\\
Consider the LOCAL model where nodes are provided $N$ and $c$, such that $N$ is guaranteed to satisfy $n \le N \le n^c$.
Let $\alpha_1 < \frac{1}{c}$ be the unique real solution to the equation $1 = \left(\frac{1}{1-c\alpha_1}\right)^{k - i_0} i_0 \alpha_1$, where $i_0 := \left\lfloor\frac{1}{c \alpha_1}\right\rfloor$.\\
The LCL problem $\Pi$ can be solved in $O(n^{c\alpha_1} + \log^* I)$ rounds on trees, where $I$ is the largest ID assigned to any node.
\end{restatable}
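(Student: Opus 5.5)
The plan is to reduce to computing a suitable $k$-rake-and-compress decomposition, as in the known-$n$ theory. By the completeness result of \cite{chang20} (see also \cite{BBOS18almostGlobal}), once a decomposition $R_1,C_1,\dots,C_{k-1},R_k$ is available, any LCL $\Pi$ with complexity $O(n^{1/k})$ can be solved in time proportional to the largest diameter of a connected component of any rake layer. We also have to add $O(\log^* I)$ rounds, which pays for symmetry breaking on compress paths (for instance a Cole--Vishkin style colouring used to cut long compress paths into constant-length pieces) and for the ID-dependent parts of that reduction. So it suffices to design an algorithm that uses only $N$ and $c$ and outputs a decomposition in which every rake component has diameter $O(n^{c\alpha_1})$, in $O(n^{c\alpha_1}+\log^* I)$ rounds, for every $n$ with $n\le N\le n^c$.

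\textbf{Step 1: the schedule.} The algorithm runs $k$ phases. Phase $j$ performs $\ell_j = N^{\alpha_j}$ rake steps and then one compress step; the last phase rakes until the graph is empty. The exponents $\alpha_1\le\alpha_2\le\dots$ are fixed in advance as functions of $c$ and $k$. I would take them to grow geometrically, $\alpha_{j+1}=\alpha_j/(1-c\alpha_1)$, which is the ratio that appears in the statement. The reason to let them grow is that small-$N$ instances are already finished in early phases, so later phases can afford longer rakes.

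\textbf{Step 2: the shrinking lemma.} I will use the standard fact that $\ell$ rakes followed by one compress on a forest with $m$ nodes leave at most $O(m/\ell)$ nodes. The argument is that every remaining node of degree at most $2$ in the compressed structure is charged to a raked subtree of depth at least $\ell$, or to a long compressed path. Iterating, if $N=n^\gamma$ with $\gamma\in[1,c]$, then after $i$ phases at most $n^{1-\gamma(\alpha_1+\dots+\alpha_i)}$ nodes remain. Two consequences follow. Each rake component created in phase $j$ has diameter $O(\ell_j)=O(n^{\gamma\alpha_j})$. Moreover, once the remaining count drops below $\ell_{i+1}$, the next rake phase removes everything, so the decomposition finishes early and later, larger $\ell_j$ are never actually paid for.

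\textbf{Step 3: the min--max optimisation (main obstacle).} The cost is a piecewise function of $\gamma$: the maximum of $\gamma\alpha_j$ over the phases actually reached, and the final remainder. The adversary chooses $\gamma$ to maximise the resulting exponent, and we must show that with the geometric schedule this maximum equals $c\alpha_1$. The worst case has $\gamma$ close to $c$ in the first phase, so the first rake length is the binding term $n^{c\alpha_1}$. The quantity $i_0=\lfloor 1/(c\alpha_1)\rfloor$ counts how many phases run at the minimal exponent before the remaining size is guaranteed to have shrunk. The remaining $k-i_0$ phases each gain a factor $1/(1-c\alpha_1)$. Requiring that everything is consumed within $k$ phases in every case gives exactly $1=(1-c\alpha_1)^{-(k-i_0)}i_0\alpha_1$.

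I expect this last step to be the hard part. It requires a careful case analysis over $\gamma$ and over which phase ends the process. It also requires showing that this equation has a unique solution with $\alpha_1<1/c$, by monotonicity of the right-hand side between its jumps at integer values of $1/(c\alpha_1)$. My first attempt would fix the phase index $i$ at which the process terminates, write the exponent as a linear function of $\gamma$ on each interval, and check the endpoints.
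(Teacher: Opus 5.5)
Your overall architecture matches the paper's: phases of $N^{\alpha_i}$ rakes plus one compress, the shrinking bound of \Cref{lem:rcremaining}, the reduction of \Cref{lem:all-problems}, and $O(\log^* I)$ for the ruling sets. The gap is the exponent schedule you commit to in Step~1. With it, the claim in Step~3 that the worst case equals $c\alpha_1$ is false. Take $k\ge 3$ and $\alpha_2=\alpha_1/(1-c\alpha_1)$. For $N=n^\gamma$, phase~2 then costs $\max_{\gamma\in[1,c]}\min\{n^{1-\gamma\alpha_1},n^{\gamma\alpha_2}\}$. If the crossing point $\gamma^*=1/(\alpha_1+\alpha_2)$ lies in $[1,c]$, the exponent is $\alpha_2/(\alpha_1+\alpha_2)=1/(2-c\alpha_1)$. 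This strictly exceeds $c\alpha_1$, because $(1-c\alpha_1)^2>0$. If instead $\gamma^*>c$, the exponent is $c\alpha_2>c\alpha_1$.

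This is not slack in the analysis. For $k=c=3$ one has $c\alpha_1=(7-\sqrt{13})/6\approx 0.566$ and $i_0=1$, while your rule gives $\alpha_2\approx 0.434$. Take $N\approx n^{1.6}$ and the hierarchical tree whose level-1 paths have length $2N^{\alpha_1}$ and whose level-2 paths have length $\Theta(N^{\alpha_2})$. The level-2 paths survive phase~1 intact. Phase~2 then spends $\Theta(n^{0.697})$ rounds raking them and creates rake components of that diameter. Your remark in Step~3 that the first $i_0$ phases ``run at the minimal exponent'' contradicts the Step~1 schedule, but it is the correct intuition.

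The missing idea is the per-phase worst case over $N$. By \Cref{lem:rcremaining}, phase $i<k$ costs $O(\min\{n/N^{A_{i-1}},N^{\alpha_i}\})$, where $A_i=\alpha_1+\dots+\alpha_i$. The adversary maximizes this at $N=n^{b_i}$ with $b_i=\min\{c,1/A_i\}$, giving cost $n^{b_i\alpha_i}$. Keeping every phase at $n^{c\alpha_1}$ therefore forces two regimes:
\begin{itemize}
\item $\alpha_i=\alpha_1$ as long as $b_i=c$, i.e.\ for $i\le i_0=\lfloor 1/(c\alpha_1)\rfloor$;
\item $\alpha_i/A_i=c\alpha_1$, i.e.\ $\alpha_i=\frac{c\alpha_1}{1-c\alpha_1}A_{i-1}$, for $i_0<i\le k-1$.
\end{itemize}
This is the content of \Cref{lem:iZero,lem:largeAlphas}.

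The factor $1/(1-c\alpha_1)$ in the statement is the growth rate of the partial sums, $A_i=A_{i-1}/(1-c\alpha_1)$. For the individual exponents it holds only from index $i_0+2$ on. Indeed $\alpha_{i_0+1}=c\,i_0\alpha_1\cdot\alpha_1/(1-c\alpha_1)$, which is smaller than $\alpha_{i_0}/(1-c\alpha_1)$ unless $1/(c\alpha_1)$ is an integer. The last phase costs $n^{1-A_{k-1}}$, worst at $N=n$. Setting $1-A_{k-1}=c\alpha_1$ with $A_{k-1}=(1-c\alpha_1)^{-(k-1-i_0)}i_0\alpha_1$ gives exactly the equation in the statement.

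If you also want existence and uniqueness of $\alpha_1$, monotonicity on each piece is not enough by itself. The paper additionally checks that the function is continuous at the breakpoints $\alpha_1=1/(tc)$.
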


In \Cref{sec:LBpolyPromise}, we will prove a matching lower bound. However, we will show that the actual complexities of these problems depend on which definition of $\Omega$ we use. In fact, there exist two incompatible definitions of $\Omega$: the one by Hardy-Littlewood and the one by Knuth. We recommend Knuths original article \cite{KnuthONotation} as a reference on the different notations.

In the Hardy-Littlewood version, $f(n)\in\Omega(g(n))$ states that there is a constant $c>0$ such that $f(n) \geq c\cdot g(n)$ for infinitely many, large enough, $n$. On the other hand, Knuth's version is stronger and states that there is a constant $c>0$, such that after some starting point $n_0$, we have $f(n)\geq c\cdot g(n)$ for all $n\ge n_0$.

Our lower bounds use the weaker Hardy-Littlewood definition of $\Omega$. We will show in \Cref{sec:Lowerbound} and \Cref{sec:Lowerbound2} that the upper bounds that we provide for $k$-hierarchical $2\frac{1}{2}$-coloring and for $k$-rake-and-compress decomposition are tight\footnote{Technically there is still a $\log^*(I)$ term in \Cref{thm:rc-upper-bound}, but this is negligible as long as the largest ID $I$ is bounded by something like $2^{2^{2^n}}$.}. More precisely, we prove \Cref{thm:lb-known-N} and \Cref{thm:lb-known-N-rc}.

\begin{restatable}{theorem}{HalfColLBPoly}\label{thm:lb-known-N}
    Consider the LOCAL model where nodes are provided $N$ and $c$, such that $N$ is guaranteed to satisfy $n \le N \le n^c$, and nodes are assigned unique IDs in $\{1, \ldots, n^c\}$.
    Let $\alpha_1 < \frac{1}{c}$ be the unique value satisfying $1 = \left(\frac{1}{1-c\alpha_1}\right)^{k - i_0} i_0 \alpha_1$, where $i_0 := \left\lfloor\frac{1}{c \alpha_1}\right\rfloor$. \\
    Then, for the Hardy-Littlewood definition of $\Omega$, the $k$-hierarchical $2\frac{1}{2}$-coloring problem requires $\Omega(n^{c\alpha_1})$ deterministic rounds.
\end{restatable}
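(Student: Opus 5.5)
We argue by contradiction. Fix a deterministic algorithm $\A$ for $k$-hierarchical $2\frac{1}{2}$-coloring and suppose it runs in at most $\varepsilon n^{c\alpha_1}$ rounds for every sufficiently large $n$; this is exactly the negation of the Hardy--Littlewood $\Omega$ statement. The key observation is that the value $N$ given to the nodes is the same on every instance with $N^{1/c}\le n\le N$. We therefore fix a single $N$ and consider a whole family of instances whose sizes range from $n_{\min}=N^{1/c}$ to $N$. On all of them $\A$ behaves identically on identical local views, but the allowed running time $\varepsilon n^{c\alpha_1}$ differs.

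We build the hierarchical instances from \Cref{fig:halfColoringGraph}, except that the path lengths grow with the level rather than all being equal. Writing $N=n_{\min}^c$, we let level-$i$ paths have length roughly $\ell_i=N^{\alpha_i}$ for a suitable increasing sequence $\alpha_1\le\alpha_2\le\dots\le\alpha_k$. The intended dichotomy is as follows. A level-$i$ path of length $\ell_i$ can be $2$-colored only if the algorithm spends $\Omega(\ell_i)$ rounds. If a path cannot afford this it must output \emph{decline}, which forbids \emph{exempt} labels on the level above, so the declined paths must be ``paid for'' by nodes one level up. Since the algorithm cannot tell which $n$ it is facing, whether a long path is colored or declined depends only on local views, and these views can be embedded both into a small instance (of size about $n_{\min}$, where the time budget is about $\varepsilon N^{\alpha_1}$) and into a large one.

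The steps, in order, are these. First, take the smallest instance, of size $n_{\min}=N^{1/c}$, where the level-1 paths have length $N^{\alpha_1}\gg \varepsilon n_{\min}^{c\alpha_1}$. These paths cannot be $2$-colored in time, so they must decline (by indistinguishability of their interior views from long paths). Second, use a copy/gluing argument: the same local views occur in larger instances, so the decisions transfer. Third, proceed level by level. At level $i$ the algorithm either $2$-colors paths of length $\ell_i$, which forces a running time of $\Omega(\ell_i)$ on some instance of size $n_i$ and so contradicts $\varepsilon n_i^{c\alpha_1}$, or it declines and pushes the burden upward. The level at which the instance size saturates $N$ is governed by $i_0=\lfloor 1/(c\alpha_1)\rfloor$: for the first $i_0$ levels the instance sizes stay at about $N^{1/c}$ while the total length budget $i_0\alpha_1$ is consumed, and the remaining $k-i_0$ levels each grow the exponent by a factor $1/(1-c\alpha_1)$. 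Finally, the top level cannot decline, so it must color a path of length $N^{\alpha_k}$. The defining equation $1=(1-c\alpha_1)^{-(k-i_0)}i_0\alpha_1$ is precisely the condition that the total node count is at most $N$, so the promise $n\le N\le n^c$ still holds while the required time exceeds $\varepsilon n^{c\alpha_1}$.

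The main obstacle is choosing the exponent sequence $\alpha_i$ and the instance sizes so that two conditions hold simultaneously. Every instance must satisfy $N^{1/c}\le n\le N$ and have IDs in $\{1,\dots,n^c\}$, and at each level the path length $N^{\alpha_i}$ must exceed the time budget $\varepsilon n_i^{c\alpha_1}$ of the instance on which it is tested. I would first try the recursion $\alpha_{i+1}=\alpha_i/(1-c\alpha_1)$ for levels beyond $i_0$ and check that it reproduces the given equation. A secondary subtlety is the indistinguishability step: a path must behave the same way when it is embedded in instances of different sizes. This needs a consistent reuse of IDs together with the fact that $N$ is fixed. Because the whole construction depends on a single fixed $N$, it yields the bound only for infinitely many $n$, which is why the result is stated with the Hardy--Littlewood $\Omega$.
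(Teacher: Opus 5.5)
Your proposal is correct and is essentially the paper's proof. The paper fixes $N$ as the size of one big hierarchical instance whose level-$i$ paths have length about $N^{\alpha_i}$. At each level it embeds the radius-$\Theta(\ell_i)$ view of a long-running interior node into a smaller instance of size about $\max\{N^{1/c},N^{A_i}\}$ with the same $N$, whose budget $\varepsilon\bar n^{c\alpha_1}$ is below $\ell_i/8$. So every level below $k$ declines, and the top path of length $N^{1-A_{k-1}}=N^{c\alpha_1}$ must be $2$-colored on the big instance.

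One correction to your exponents. For $i>i_0$, what works is $A_i=A_{i-1}/(1-c\alpha_1)$, i.e.\ $\alpha_i=c\alpha_1A_i$, which matches your ``grow by a factor'' remark. The literal recursion $\alpha_{i+1}=\alpha_i/(1-c\alpha_1)$ is valid only once $\alpha_i/A_i=c\alpha_1$, i.e.\ when computing $\alpha_{i_0+2}$ and later. If you start it at $\alpha_{i_0}=\alpha_1$, it overshoots $\alpha_{i_0+1}=c\,i_0\alpha_1^2/(1-c\alpha_1)$ and breaks the defining equation, unless $c\,i_0\alpha_1=1$.
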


\begin{restatable}{theorem}{RcLBPoly}\label{thm:lb-known-N-rc}
    Consider the LOCAL model where nodes are provided $N$ and $c$, such that $N$ is guaranteed to satisfy $n \le N \le n^c$, and nodes are assigned unique IDs in $\{1, \ldots, n^c\}$.
    Let $\alpha_1 < \frac{1}{c}$ be the unique value satisfying $1 = \left(\frac{1}{1-c\alpha_1}\right)^{k - i_0} i_0 \alpha_1$, where $i_0 := \left\lfloor\frac{1}{c \alpha_1}\right\rfloor$. \\
    Then, for the Hardy-Littlewood's definition of $\Omega$, the $k$-rake-and-compress decomposition problem requires $\Omega(n^{c\alpha_1})$ deterministic rounds.
\end{restatable}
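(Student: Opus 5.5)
The plan is to rerun the adversarial argument behind \Cref{thm:lb-known-N} on the same family of hard instances, rather than to reduce formally to that theorem. A black-box reduction looks hopeless: turning a decomposition into a hierarchical coloring costs time proportional to the diameter of the rake components, and the lower bound does not control that diameter. Instead I will show that a $k$-rake-and-compress decomposition, restricted to the layered instances (level-$i$ paths hanging off level-$(i+1)$ nodes, with level-dependent path lengths chosen as in the proof of \Cref{thm:lb-known-N}), carries exactly the dichotomy that the coloring argument exploits.

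\textbf{Dictionary.} On a level-$i$ path of length $L$, a maximal subpath that lies in a rake layer $R_j$ must be consistently oriented towards a unique root. A subpath that lies in a compress layer $C_j$ must have both endpoints pointing to rake nodes of strictly higher layers. So I translate as follows:
\begin{itemize}
\item a long subpath placed in a rake layer plays the role of a ``properly $2$-colored'' path;
\item a path placed in a compress layer plays the role of ``decline'': it pushes the burden upward, since the attachment node at level $i+1$ then cannot be removed before the path hangs off it;
\item a level-$(i+1)$ node whose whole attached lower-level structure was raked early plays the role of ``exempt'': it can itself be raked early, which splits the higher-level path.
\end{itemize}
Only $k-1$ compress layers are available. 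This is the analogue of the rule that level $k$ cannot decline: on the top level-$k$ path, or on some earlier level, a long path segment must be oriented inside a single rake component.

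\textbf{Core indistinguishability lemma.} The replacement for ``$2$-coloring a path of length $L$ needs $\Omega(L)$ rounds'' is the following. Any $T$-round algorithm with $T \ll L$ cannot consistently orient a path of length $L$ inside one rake component. The argument is the usual reversal one. The nodes in the middle of the path see views that are isomorphic under reflecting the path, with IDs permuted inside the allowed range $\{1,\dots,n^c\}$. Hence either the orientation flips somewhere, creating two roots or a sink inside one component, or we can build an instance in which two far-apart middle nodes see reflected views and point in opposite directions. Either way the decomposition is invalid. A second subtlety is that a rake component cannot stay shallow by chopping the path into compress pieces: each chop consumes a compress layer, and chopping at the right granularity needs knowledge of $n$, exactly as in the coloring case.

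\textbf{Counting argument.} With this lemma, the counting argument of \Cref{thm:lb-known-N} goes through unchanged. The adversary fixes the given $N$ and the values $n$ consistent with $n \le N \le n^c$. For a hypothetical algorithm running in $o(n^{c\alpha_1})$ rounds, for infinitely many $n$ (this is where the Hardy-Littlewood $\Omega$ enters), it picks level lengths so that the algorithm's local decision on level-$i$ paths, either raking them or compressing them, is wrong for one of two instances that are indistinguishable up to radius $T$. The exponents then satisfy the same recursion $1 = \left(\frac{1}{1-c\alpha_1}\right)^{k-i_0} i_0\alpha_1$, because the node counts and path lengths are identical. Thus only the ``local decision'' lemma changes, and the optimization is reused verbatim.

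\textbf{Main obstacle.} I expect the main difficulty to be making the dictionary airtight. Rake-and-compress allows more freedom than hierarchical coloring: several rake layers can be interleaved, and parts of a level-$i$ path can be raked in different layers from its two ends. So I must show that any valid decomposition computed in $T$ rounds induces, on each level-$i$ path of the instance, one of the two behaviours above up to $O(T)$ boundary nodes. My first attempt would be to show that raking inward from both ends of a path of length $L \gg T$ forces some node at distance $\gg T$ from both ends to be a root with two in-neighbours. Its decision would then depend on information it cannot see, and the reversal argument applies to it.
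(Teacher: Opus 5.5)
\textbf{Verdict.} Your architecture is the paper's: the same lower bound graph with $\ell_i=\lceil n_1^{\alpha_i}\rceil$ and $\ell_k=\lceil n_1^{1-A_{k-1}}\rceil$, the choice $N=n$, and the step that embeds a slow node's view into a smaller instance with the same $N$ to contradict $T(n)=o(n^{c\alpha_1})$. The gap is in the part you defer as the ``main obstacle'', and the route you sketch for it does not go through.

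\textbf{Why the reversal argument fails as stated.} Your core lemma places no restriction on which labels may appear, and in that generality the reversal argument yields no contradiction. In the rebuilt instance, the nodes near the seam have views that never occur in $G$. So they may output a short compress segment whose two ends point outward into the two rake pieces. Rule~3 of \Cref{def:decompLCL} only asks that this compress label lie below the labels of the rake nodes it points to. Hence if the pieces carry $R_{i+1}$ or higher, a seam labelled $C_i$ can be legal, and ``either way the decomposition is invalid'' is false.

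For the same reason, ``each chop consumes a compress layer'' does not follow from the rules. Nothing ties a compress label to a hierarchy level, so nothing in your plan prevents $C_1$ from being used on every level, and the bound of $k-1$ compress labels never bites. The classification you plan to prove is also not the right one. The standard compress step leaves ruling-set rake nodes throughout a compressed path, so a fast algorithm need not exhibit either of your two behaviours up to $O(T)$ boundary nodes.

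\textbf{The missing ingredient: a label invariant.} The paper carries such an invariant through the induction.
\begin{itemize}
\item By \Cref{obs:rc-path}, every rake node is the last node of an oriented path with non-decreasing labels that starts at a node of degree at most $2$; in the lower bound graph, such a node lies in some $P_1$.
\item Combined with \Cref{obs:no-2-outgoing}, this gives the following. If every lower-level endpoint attached to $P_i$ is a rake node of level $\ge i$ or a compress node of level $\ge i-1$, then every node of $P_i$ is a rake node of level $\ge i$ or a compress node of level $\ge i$.
\end{itemize}
The dichotomy on $P_i$ is then exact, with no boundary slack:
\begin{itemize}
\item Either $P_i$ contains at least one compress node of level $\ge i$. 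A single such node plays the role of ``decline''. The orientation argument of \Cref{thm:decomposition-no-bounds} then shows that the endpoint of $P_i$ attached to $P_{i+1}$ is a rake node of level $\ge i+1$ or a compress node of level $\ge i$, which is the same hypothesis one level up.
\item Or $P_i$ consists only of rake nodes rooted at a single sink. This is the case \Cref{lem:only-rakes-is-hard} charges with a slow node.
\end{itemize}

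\textbf{How the invariant closes the seam loophole.} A rake piece labelled $R_i$ on $P_i$ cannot be split by a compress segment, since that segment would need level $\le i-1$. The case that really must be excluded is an endpoint labelled $R_i$ whose outgoing edge goes into $P_{i+1}$. That case forces all of $P_i$ to be $R_i$ and oriented towards that endpoint. On $P_k$, only $R_k$ is possible, which gives the final contradiction.

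For the invariant to hold in the rebuilt instance too, build that instance by rearranging $P_i$ together with its hanging subtrees. It is then again a lower bound graph with a different ID placement, and the induction, which holds for every ID assignment, applies to it.
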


While it is a bit unsatisfying to use this weaker notion of $\Omega$, it is in fact unavoidable. For the more commonly used definition of $\Omega$ due to Knuth, we will show, in \Cref{ssec:ubknuth}, that proving $\omega(n^{1/k})$ lower bounds is not possible. For this purpose, we design an algorithm that, for infinitely many values of $n$, is able to solve these problems in $O(n^{1/k})$ rounds. More precisely, we prove the following.

\begin{restatable}{theorem}{NoKnuthLB}\label{thm:no-kuth-LB}
Let $\Pi$ be an LCL problem that, on trees, has complexity $\Theta(n^{\frac{1}{k}})$ in the LOCAL model, when nodes are provided with a linear upper bound on $n$ and IDs are from some polynomial range.\\
Consider the LOCAL model where nodes are provided $N$ and $c$, such that $N$ is guaranteed to satisfy $n \le N \le n^c$.\\
Then, for Knuth's definition of $\Omega$, the problem $\Pi$ is not solvable in $\Omega(T + \log^* I)$ rounds, for any $T$ that is asymptotically strictly larger than $n^{1/k}$, where $I$ is the largest ID assigned to any node.
\end{restatable}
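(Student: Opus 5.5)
We prove the statement by exhibiting, for every such $\Pi$, a single algorithm $\A$ whose running time is $O(n^{1/k} + \log^* I)$ for infinitely many values of $n$. That this suffices is a matter of definitions. Under Knuth's $\Omega$, a lower bound $\Omega(T+\log^* I)$ with $T = \omega(n^{1/k})$ would require the worst-case runtime of \emph{every} algorithm to be at least $c\,(T(n)+\log^* I)$ for \emph{all} $n\ge n_0$. Along our infinite sequence, $\A$ runs in $O(n^{1/k}+\log^* I)$, which is eventually smaller than $c\,T(n)$. So no such lower bound can hold. By completeness of $k$-rake-and-compress \cite{chang20}, it suffices to compute a decomposition in which every rake layer component has diameter $O(n^{1/k})$, and then run the generic solver from \cite{chang20} on top of it.

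The key idea is to choose a fixed, very sparse sequence of scales and let the algorithm assume that $n$ is close to one of them. For instance, define $t_0 = 2$ and $t_{j+1} = 2^{t_j}$, or any sequence with $t_{j+1} \ge t_j^{c+1}$. Given $N$ and $c$, the interval of possible sizes is $[N^{1/c}, N]$. Because the scales are so sparse, this interval can contain at most one scale $t_j$ in an essential sense. Concretely, the interval $[N^{1/c},N]$ meets at most a bounded number of the scales $t_j$, and the algorithm uses the largest such $t_j$ as its guess $\tilde n$. If no scale lies in the interval, it falls back to $\tilde n = N$.

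The algorithm then runs the standard $O(\tilde n^{1/k})$ decomposition procedure of \cite{chang20}: perform $\Theta(\tilde n^{1/k})$ rakes followed by one compress, and repeat this $k$ times. It prepends a $O(\log^* I)$ coloring step to break symmetry, since the IDs are not polynomially bounded. For correctness on all instances, the algorithm checks whether everything has been removed after the $k$-th rake phase. If something remains, it continues with the guess $N$, which is a valid linear-ish upper bound that yields a correct, if slow, decomposition of polynomial complexity. This mirrors the $N$-based algorithm of \Cref{thm:rc-upper-bound}, which we may invoke as the fallback so that correctness holds for every $n$.

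For the runtime, consider the infinite sequence $n = t_j$. Any valid input then satisfies $N \in [t_j, t_j^c]$. Since $t_{j+1} > t_j^c$ and $t_{j-1} < t_j^{1/c}$, the only scale in $[N^{1/c},N]$ is $t_j$, so every node guesses $\tilde n = n$ exactly. The standard analysis then shows that all nodes are removed within the $k$ phases in $O(n^{1/k})$ rounds, the fallback is never triggered, and the LCL is solved in $O(n^{1/k}+\log^* I)$ rounds.

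The main obstacle is making the fallback consistent. Nodes that finished under the guess $\tilde n$ must produce labels compatible with nodes that continue using $N$. The cleanest approach is to rely directly on \Cref{thm:rc-upper-bound}'s algorithm, which already works with the interval $[N^{1/c},N]$. We would modify its parameter schedule so that its first rake phase is scaled to the candidate scale $t_j$, and we would check that its charging argument still bounds the correctness of later phases. A simpler alternative is to note that a decomposition with more rake rounds per layer is still a valid decomposition, with larger diameter. Then only the runtime on the sequence $t_j$ matters, and we need a uniform and correct algorithm elsewhere, which the guessing-with-restart structure of the rake-and-compress layers provides.
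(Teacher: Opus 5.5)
Your overall strategy matches the paper's. You fix a very sparse sequence of scales and take the scale lying in $[N^{1/c},N]$ as the guess $X$ for $n$, with $X=N$ if there is none. You then run $k-1$ phases of $X^{1/k}$ rakes plus one compress, and observe that for $n$ in the sequence the guess is exact. The paper uses $s_1=2$, $s_{i+1}=s_i^{2c}$. Your tower sequence also works, since the scale in $[N^{1/c},N]$ is unique when $n=t_j$ for all large $j$. So the algorithm runs in $O(n^{1/k}+\log^* I)$ rounds for infinitely many $n$, which is what rules out a Knuth-style $\omega(n^{1/k})$ lower bound.

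\textbf{The gap: correctness for all other $n$.} Your fallback checks after the $k$-th rake phase whether everything was removed, and otherwise continues with the $N$-based algorithm of Theorem~\ref{thm:rc-upper-bound}. This does not work, for two reasons:
\begin{itemize}
\item No node can tell whether the whole graph is empty.
\item More importantly, a node still present at that point cannot begin a fresh decomposition. The compress labels $C_1,\dots,C_{k-1}$ are already used, and oriented edges must go from smaller to larger labels. So a leftover node pointed to by a node raked in the last phase is forced to be $R_k$.
\end{itemize}
You flag this yourself as ``the main obstacle'', but neither remedy is carried out. The first one also conflates correctness with the charging argument, which only concerns runtime.

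\textbf{The missing observation.} No check or restart is needed: let phase $k$ simply rake until the graph is empty. On a tree this always terminates with a valid decomposition, with $\gamma$ possibly $\Theta(n)$. Lemma~\ref{lem:all-problems} converts this into a correct solution of $\Pi$. By Lemma~\ref{lem:rcremaining} the last phase takes $O(n/X^{(k-1)/k})$ rounds, so the total is $O(\max\{X^{1/k},n/X^{(k-1)/k}\}+\log^* I)$. That is $O(n+\log^* I)$ always, and $O(n^{1/k}+\log^* I)$ when $X=n$. Your ``simpler alternative'' is essentially this idea; just drop the appeal to restarting.

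Minor point: the schedule is $k-1$ rake-and-compress phases followed by a final rake phase, not $k$ repetitions.
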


Our results for this setting can be summarized as follows.
\begin{myframe}{}
    By slightly changing the usual assumptions, that is, by assuming that nodes know a polynomial upper bound on $n$, rather than a linear upper bound on $n$, we obtain that LCL problems now exhibit very different complexities. However, the added complexity can be dealt with in a structured and constructive way.
\end{myframe}

\paragraph{Promise on the ID space.}
In \Cref{sec:promiseIds}, we consider the setting in which nodes are not given any upper bound $N$ on $n$, but they are only given a parameter $c$, and are promised that the largest ID assigned to the nodes of the graph is at most $n^c$. Observe that, while before nodes where given $N$ and $c$, in this case nodes are given \emph{just} $c$.

Perhaps surprisingly, we show that for some problems this knowledge is \emph{sufficient}. To be more precise, for $k$-hierarchical $2\frac{1}{2}$-coloring, we are able to obtain an algorithm that matches the lower bound (for the stronger setting in which $N$ is given) that we presented in \Cref{thm:lb-known-N}. Note that while the bounds of \Cref{thm:lb-known-N,thm:lb-known-N-rc} work for all LCLs, we present here an algorithm only for $2\frac{1}{2}$-coloring, since the technical details are already complicated enough for this problem.

\begin{restatable}{theorem}{IdAlgo}\label{thm:2.5ColIdsUB}
Consider the LOCAL model where nodes have no knowledge of $n$, but are assigned unique IDs in $\{1, \ldots, n^c\}$, and $c$ is known by all nodes.
Let $\alpha_1 < \frac{1}{c}$ be the unique value satisfying $1 = \left(\frac{1}{1-c\alpha_1}\right)^{k - i_0} i_0 \alpha_1$, where $i_0 := \left\lfloor\frac{1}{c \alpha_1}\right\rfloor$. 
Then, for all $k>1$, $k$-hierarchical $2\frac{1}{2}$-coloring can be solved in $O(n^{c\alpha_1})$ deterministic rounds.
\end{restatable}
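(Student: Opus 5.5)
The plan is to simulate the algorithm for known $N$ (the one behind \Cref{thm:rc-upper-bound}, specialized to $k$-hierarchical $2\frac12$-coloring). The global quantity $N$ is replaced by a \emph{local} estimate derived from the IDs a node sees. The reason this can work is the promise: if $I$ denotes the largest ID in the graph, distinctness gives $n \le I$, and the promise gives $I \le n^c$. So $I$ is exactly an admissible value of $N$ in the setting of \Cref{thm:lb-known-N}. Moreover, any ID $x$ a node sees certifies $n \ge x^{1/c}$, and any path segment of length $\ell$ it sees certifies $I \ge \ell$.

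I first recall the structure of the known-$N$ algorithm. For a sequence of exponents $\alpha_1,\dots,\alpha_k$, chosen by the optimization that defines $\alpha_1$ (with the first $i_0$ levels using exponent $\alpha_1$ and the remaining $k-i_0$ levels growing geometrically by the factor $\frac{1}{1-c\alpha_1}$), level-$i$ paths that have length at most $N^{\alpha_i}$ after splitting at exempt nodes are $2$-colored, and longer ones decline. The analysis has two parts:
\begin{enumerate}
\item[(a)] a running-time bound of $O(N^{\alpha_i}) \le O(n^{c\alpha_i})$ per level;
\item[(b)] a charging argument: every non-exempt level-$(i+1)$ node pays for a declined level-$i$ path of length at least $N^{\alpha_i}\ge n^{\alpha_i}$, so at most $n^{1-\sum_{j\le i}\alpha_j}$-type many nodes survive.
\end{enumerate}
The defining equation for $\alpha_1$ is precisely the condition that the surviving level-$k$ paths have length $O(n^{c\alpha_1})$.

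In the new algorithm, a level-$i$ path $P$ (after splitting at exempt nodes) is $2$-colored iff $|P| \le M(P)^{\alpha_i}$, where $M(P)$ is the maximum ID on $P$ together with the IDs of the lower-level structure already explored. Otherwise $P$ declines. Since every ID is at most $n^c$, a colored path has length at most $n^{c\alpha_i}$, which preserves the running-time part (a). For the charging part (b), the key observation is that a declined path $P$ satisfies $|P| > M(P)^{\alpha_i}$, and $M(P) \ge |P|$. In the worst case this only gives a weak bound, so I would instead redo the counting parametrized by the actual ID distribution. If all IDs in a region are at most some $m$, that region behaves like an instance with $N=m$, for which the known-$N$ analysis gives runtime $m^{\alpha_1}\le n^{c\alpha_1}$. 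When $m$ is much smaller than $n$, the number of nodes in the region is at most $m$ by distinctness, so the region is itself a small instance and its final-level paths are short. I would combine these two facts level by level, showing that surviving level-$k$ paths have length $O(n^{c\alpha_1})$ and then $2$-coloring them.

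The main obstacle is consistency: all nodes of a path must agree on color versus decline, but a node exploring outward cannot know $M(P)$ without seeing all of $P$, and $P$ may be very long. My first attempt is a monotone stopping rule. A node grows its radius $r$ along the path and stops declining once $r$ exceeds $(\text{max ID seen so far})^{\alpha_i}$. I would then replace the global predicate by a local one: the path is cut at nodes whose $r$-window certifies length greater than its own local maximum to the power $\alpha_i$, and only the resulting segments are colored or declined as a unit. Checking that this cut rule is decided identically by all nodes of a segment, and that each declined segment still has length at least $(\text{local max})^{\alpha_i}$ so the charging survives, is where I expect most of the work to go. If this fails, the fallback is a doubling scheme over guesses $N_j=2^{2^j}$ capped by the locally seen maximum ID.
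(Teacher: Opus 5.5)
Your level-1 rule (decline iff the path is longer than its maximum ID to the power $\alpha_1$) matches the paper's Condition~(1). The gap is at higher levels: using a max-ID estimate $M(P)$ at every level does not give $O(n^{c\alpha_1})$ whenever $i_0<k-1$.

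Part (a) is not what bounds the known-$N$ algorithm. There, phase $i$ costs $\min\{n/N^{A_{i-1}},N^{\alpha_i}\}$. For $i>i_0$ one has $\alpha_i>\alpha_1$, so the bare bound $n^{c\alpha_i}$ already exceeds the target. The budget $N^{\alpha_i}$ is affordable only because the \emph{same} large $N$ has already forced every lower-level decline to be long. Local ID estimates break this coupling, because the adversary can put small IDs below and large IDs on the level-$i$ path.

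Concretely, take $k=c=3$, so $i_0=1$, $c\alpha_1\approx 0.566$ and $c\alpha_2\approx 0.737$. Build a level-2 path $Q$ of about $n^{1-\alpha_1}/4$ nodes and hang a level-1 path of $2n^{\alpha_1}$ nodes from each node (two at each endpoint, as in the lower-bound graphs), padding elsewhere to $n$ nodes. Give the level-1 nodes IDs in $\{1,\dots,n\}$ and the nodes of $Q$ IDs close to $n^c$.
\begin{itemize}
\item Every level-1 path has $|P|>M(P)^{\alpha_1}$ and declines.
\item So $Q$ is entirely non-exempt, with $M(Q)\approx n^c$.
\item Under your stopping rule, a middle node of $Q$ needs about $\min\{|Q|,n^{c\alpha_2}\}\approx n^{0.737}$ rounds before it can decide.
\end{itemize}
Excluding $Q$'s own IDs does not help. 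A single declined level-1 path of $n^{c\alpha_1}+1$ nodes with IDs near $n^c$ inflates any max-type estimate for all of $Q$.

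The missing idea is that above level 1 the budget must come from what was actually paid below, aggregated by a \emph{sum} along the path, not from IDs. The paper's Condition~(2) makes a level-$j$ node decline when $|P_v|^{A_j/\alpha_j}>\sum_{u\in P_v}n^{(j-1)}_u$, where $n^{(j-1)}_u$ is the size of the declined lower-level structure hanging from $u$. For $j>i_0$ one has $A_j/\alpha_j=1/(c\alpha_1)$. So a node that has explored $t$ hops without declining certifies roughly $t^{1/(c\alpha_1)}$ distinct nodes below its path, hence $t=O(n^{c\alpha_1})$ regardless of IDs. This is the local substitute for the $\min\{n/N^{A_{i-1}},N^{\alpha_i}\}$ trade-off.

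Progress then comes from counting decline gadgets. A dyadic count over lengths shows that only $O(n^{1-\alpha_1})$ ID-disjoint level-1 gadgets fit into $n$ distinct IDs, since short ones need small IDs. An induction over minimal gadgets then gives $O(n^{1-A_{j-1}})$ non-exempt level-$j$ nodes. Your part (b) gestures at the first step but supplies neither.

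Finally, your cut rule is incompatible with the problem. A node outputting $B$ or $W$ may not be adjacent to a same-level node outputting $D$. So a maximal run of non-exempt nodes must be colored or declined as a whole, and it can be split only at $E$ nodes, which you cannot create at will. Agreement is actually the easy part, because declining is contagious: a node outputs $D$ as soon as a $D$ appears next to its active subpath. It colors only after seeing its whole run bounded by leaves or $E$ nodes and waiting $|P_v|$ further rounds.
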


This implies the following exciting possibility.
\begin{myframe}{}
    A promise on the size of the ID space may be just as good as having an actual upper bound on the number of nodes.
\end{myframe}

\paragraph{Randomized algorithms.}
The final setting examined in this work is the randomized setting, where nodes do not have any knowledge about $n$ (see \Cref{sec:rand}). In the randomized LOCAL model, nodes are given access to an arbitrary number of random bits and they typically do not have access to unique IDs. If each node knows a polynomial upper bound on the number of nodes, the nodes can use the randomness to generate unique $O(\log n)$-bit identifiers w.h.p. Even without any knowledge about $n$, the nodes can still use the randomness to generate unique (infinitely long) names, thus we consider nodes receive no additional input (in particular no form of IDs).

In the version of the randomized LOCAL model where a linear bound on $n$ is known, randomization does not help at all when solving the problem of $k$-hierarchical $2\frac{1}{2}$-coloring \cite{chang20}. However, in the setting without any bound on $n$, randomness actually provides an advantage over the corresponding deterministic case (the case of \Cref{thm:2half-no-bounds}).

We first show that the complexity of computing a $2$-hierarchical $2\frac{1}{2}$-coloring in the oblivious-to-$n$ setting is $\Theta(\frac{n}{\log n})$, in contrast to the deterministic complexity $\Theta(n)$. 

We then go on to show that our ideas can be pushed further to the $3$-hierarchical $2\frac{1}{2}$-coloring problem, where we can observe some interesting behavior. We give an algorithm that solves $3$-hierarchical $2\frac{1}{2}$-coloring in time $O(\frac{n}{f(n)})$, where $f(n)$ is a functional square root of $\log n$. That means that $f(n)$ is a function that satisfies $f(f(n)) = \log n$. Note that $f(n)$ is a function that grows much slower than any polynomial function, but also much faster than e.g., any polynomial in $\log n$ (cf.\ \Cref{subsec:halflog} for a more detailed discussion). 

We additionally prove that this is not just an artifact of our techniques, by also proving a lower bound of $\Omega\big(\frac{n}{f(n) \log n}\big)$, which makes the upper bound result nearly tight. We believe that this suffices to illustrate the possibly unexpected behavior that is emerging in this setting and we leave the full generalization as a open question for future research\footnote{Some back-of-the-envelope calculation suggest that the following generalization might be true. For $k\geq 2$, the complexity of the $k$-hierarchical $2\frac{1}{2}$-coloring problem seems to be $O\big(\frac{n}{f(n)}\big)$, where $f(\cdot)$ is a function such that $f^{(k-1)}(n)=\log n$. That is, for $k=2$, we have $f(n)=\log n$, for $k=3$, we have $f(f(n))=\log n$, for $k=4$, we have $f(f(f(n)))=\log n$, and so on.}.
Our results about $2$ and $3$-hierarchical $2\frac{1}{2}$-coloring are formally stated and proven in \Cref{thm:randomized_k_2_upper,thm:randomized_k_2_lower,thm:randomized_k_3_upper,thm:randomized_k_3_lower}.

\begin{myframe}{}
    In the standard LOCAL model, it is known that randomness helps on trees only for LCLs with deterministic complexity $\Theta(\log n)$. Our results imply that this statement is an artifact of the assumptions that are typically made. In more restricted settings, randomness can be used in non-trivial ways to get faster algorithms. 
\end{myframe}

\paragraph{Open questions.}
While our work implies many new open questions regarding the complexity of graph problems when we deviate from common assumptions, we highlight the following two open questions which are most related to this work.
\begin{itemize}
    \item Let $\Pi$ be an arbitrary LCL problem which, when $n$ is known, has complexity $\Theta(n^{1/k})$. What is the complexity of $\Pi$ in the setting in which an upper bound $N\le n^c$ on $n$ is given, and $c$ is known by all nodes? What about the setting in which IDs are bounded by $n^c$ and $c$ is known?
    \item Let $\Pi$ be an arbitrary LCL problem which, when $n$ is known, has complexity $\Theta(n^{1/k})$. What is its randomized complexity if $n$ is not known?
\end{itemize}
Note that, in our work, we answer these questions only for the problems of $k$-hierarchical $2 \frac{1}{2}$-coloring and $k$-rake-and-compress decompositions, and in the randomized case, only for $k=1$ and $k=2$.

\subsection{High-level Ideas}
We describe the challenges and key ideas in each of the discussed settings.

\paragraph{No assumptions at all.}
In this setting, where nodes do not have any knowledge about $n$ and IDs are unbounded, we prove $\Omega(n)$ lower bounds.

The key difficulty in this setting is that IDs carry no useful information. We illustrate the lower bound idea with the 2-hierarchical $2\frac{1}{2}$-coloring problem. Imagine a node $v$ located in the middle of a path with monotonically increasing IDs in both directions (see \Cref{fig:lB} on the left). From $v$'s perspective, every additional round only reveals one more hop of the path.

\begin{figure}[h!]
    \centering
    \includegraphics[width=0.8\linewidth]{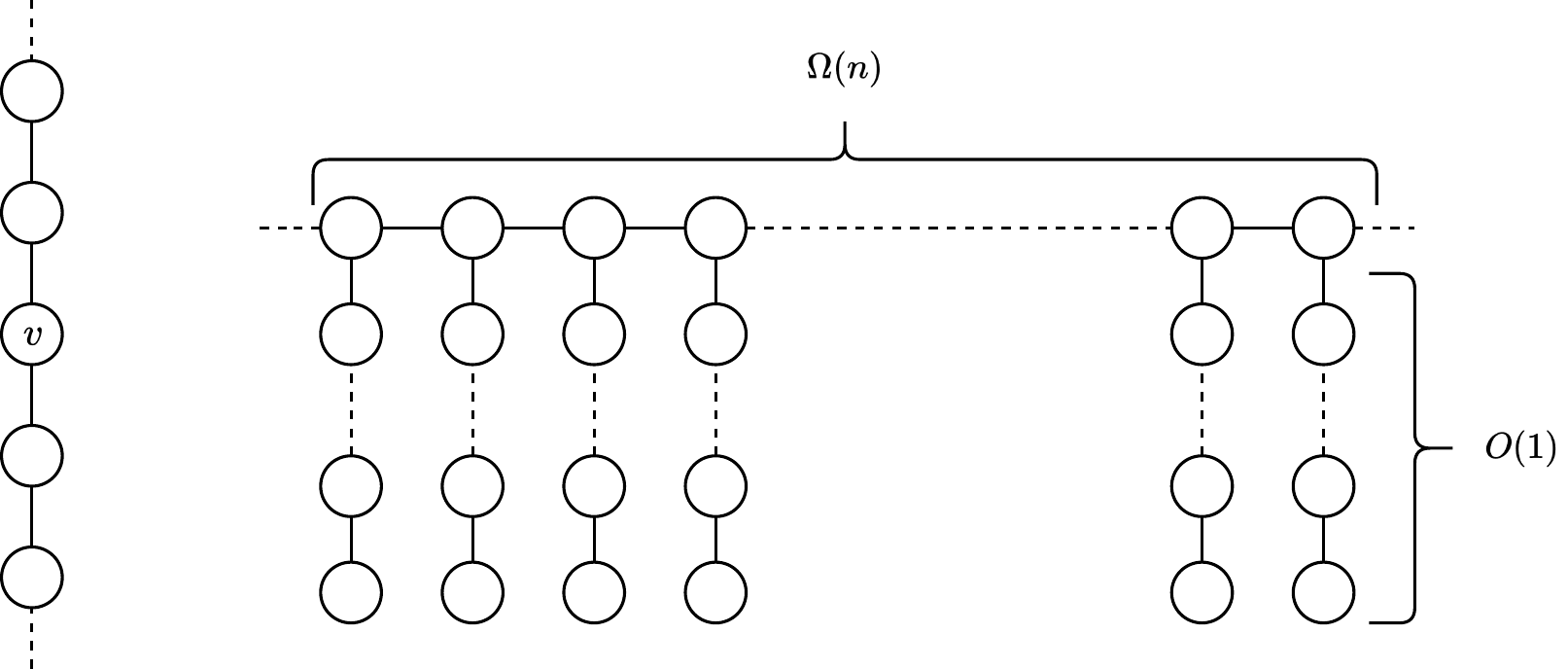}
    \caption{We depict the impossible trade off that an algorithm is faced with when solving 2-hierarchical $2\frac{1}{2}$-coloring in the oblivious setting. From the perspective of a level-1 node (on the left), the node can tell it is in a path, but it has no information of how long this path is compared to the remaining instance. If it is not willing to spend $\Theta(n)$ rounds, it will have to terminate in a constant number of rounds. But this results in the construction on the right, where all level-1 paths have constant length and decline too quickly, leaving a linear-in-$n$ length level-2 path on top that must be 2-colored.}
    \label{fig:lB}
\end{figure}

If the node keeps exploring the path for too many rounds, an adversary can choose $n$ so that the explored portion is in fact the entire graph, implying that the algorithm already spent $\Omega(n)$ rounds. Hence, to achieve sublinear runtime, nodes must terminate after a bounded number of rounds, without seeing the full extent of the path.

However, a node that has only seen a constant-radius neighborhood cannot safely produce a consistent $2$-coloring of the path. Therefore, level~1 nodes must output \emph{Decline} after a constant number of rounds. Since this decision is local, it propagates to all level~1 paths, which thus unanimously output \emph{Decline}.

But if we can force very short level-1 paths to decline, we can construct a hard instance by attaching many such level~1 paths to a long path of level~2 nodes (see \Cref{fig:lB} on the right). As all adjacent level~1 paths output \emph{Decline}, none of the level~2 nodes may output \emph{exempt}, and must produce a consistent $2$-coloring of the top path. Since this path has length $\Omega(n)$, this instance requires $\Omega(n)$ rounds to be solved.

The lower bound for computing a $k$-rake-and-compress decomposition essentially follows the same idea, but the technical details are more involved.

\paragraph{Polynomial upper bound on $n$ given.}
In the setting where nodes are provided with a constant $c$ and a polynomial upper bound $N \le n^c$, the previous trade off of the oblivious setting above becomes manageable.

For our upper bound, we adapt the original $O(n^{1/k})$ round $k$-rake-and-compress decomposition algorithm that is known to solve all LCLs~\cite{chang20}. That algorithm uses the two subroutines \emph{rake} and \emph{compress}.

\begin{figure}[htpb]
    \centering
    \begin{tikzpicture}
    [
        scale=0.75,
        every node/.style={circle,thick,draw=black},
        rake1/.style={fill=YellowGreen},
        compress1/.style={fill=CornflowerBlue},
        rake2/.style={fill=OrangeRed},
        every path/.style={very thick},
        oriented/.style={-{Latex[round,width=2mm,length=2mm]}},
    ]
		\node [style=rake1] (0) at (2.25, -1) {};
		\node [style=rake1] (1) at (3.75, -1) {};
		\node [style=rake1] (3) at (3, 0) {};
		\node [style=compress1] (4) at (3, 1.5) {};
		\node [style=rake1] (5) at (5.75, 0) {};
		\node [style=rake1] (6) at (7.75, 0) {};
		\node [style=rake1] (7) at (7.75, -1) {};
		\node [style=rake1] (8) at (5.75, -1) {};
		\node [style=rake1] (10) at (13, 0) {};
		\node [style=rake1] (11) at (16, 0) {};
		\node [style=rake1] (12) at (16, -1) {};
		\node [style=rake1] (13) at (13, -1) {};
		\node [style=compress1] (19) at (5.75, 1.5) {};
		\node [style=compress1] (20) at (7.75, 1.5) {};
		\node [style=compress1] (22) at (1.25, 1.5) {};
		\node [style=rake1] (23) at (1.25, 0) {};
		\node [style=rake1] (24) at (1.25, -1) {};
		\node [style=compress1] (25) at (0, 1.5) {};
		\node [style=compress1] (27) at (13, 1.5) {};
		\node [style=compress1] (28) at (16, 1.5) {};
		\node [style=compress1] (34) at (6.75, 1.5) {};
		\node [style=rake1] (36) at (17.5, 0) {};
		\node [style=rake1] (37) at (17.5, -1) {};
		\node [style=rake2] (38) at (17.5, 3) {};
		\node [style=rake1] (39) at (14.5, 0) {};
		\node [style=rake1] (40) at (14.5, -1) {};
		\node [style=compress1] (41) at (14.5, 1.5) {};
		\node [style=rake2] (42) at (12.5, 3) {};
		\node [style=rake1] (46) at (10.75, 0) {};
		\node [style=rake1] (47) at (11.5, -1) {};
		\node [style=compress1] (48) at (10.75, 1.5) {};
		\node [style=compress1] (49) at (12, 1.5) {};
		\node [style=rake2] (50) at (5.75, 4.5) {};
		\node [style=rake1] (51) at (10, -1) {};
		\node [style=rake1] (52) at (4.75, -1) {};
		\node [style=rake1] (53) at (6.75, -1) {};
		\node [style=rake2] (54) at (9, 3) {};
		\node [style=rake2] (55) at (-0.75, 3) {};
		\node [style=rake1] (56) at (-0.75, 0) {};
		\node [style=rake1] (57) at (-0.75, -1) {};
		\node [style=compress1] (58) at (5.75, 1.5) {};
		\node [style=rake1] (59) at (9, 0) {};
		\node [style=rake1] (60) at (9, -1) {};
		\node [draw=none,text=YellowGreen!80!Green] (61) at (-3, -0.5) {\large \bm{$R_1$}};
		\node [draw=none,text=CornflowerBlue!85!cyan] (62) at (-3, 1.5) {\large \bm{$C_1$}};
		\node [draw=none,text=OrangeRed] (63) at (-3, 4) {\large \bm{$R_2$}};
		\draw [style=oriented] (0) to (3);
		\draw [style=oriented] (8) to (5);
		\draw [style=oriented] (24) to (23);
		\draw (22) to (25);
		\draw [style=oriented] (12) to (11);
		\draw [style=oriented] (13) to (10);
		\draw [style=oriented] (7) to (6);
		\draw [style=oriented] (1) to (3);
		\draw [style=oriented] (3) to (4);
		\draw [style=oriented] (23) to (22);
		\draw [style=oriented] (5) to (19);
		\draw [style=oriented] (6) to (20);
		\draw [style=oriented] (11) to (28);
		\draw [style=oriented] (10) to (27);
		\draw [style=oriented] (37) to (36);
		\draw [style=oriented] (36) to (38);
		\draw [style=oriented] (40) to (39);
		\draw [style=oriented] (39) to (41);
		\draw (27) to (41);
		\draw (41) to (28);
		\draw [style=oriented] (47) to (46);
		\draw [style=oriented] (27) to (42);
		\draw [style=oriented] (46) to (48);
		\draw [style=oriented] (49) to (42);
		\draw (48) to (49);
		\draw [style=oriented] (4) to (50);
		\draw [style=oriented] (19) to (50);
		\draw [style=oriented] (42) to (50);
		\draw [style=oriented] (52) to (5);
		\draw [style=oriented] (53) to (5);
		\draw [style=oriented] (51) to (46);
		\draw [style=oriented] (57) to (56);
		\draw [style=oriented] (56) to (55);
		\draw [style=oriented] (25) to (55);
		\draw (34) to (20);
		\draw (34) to (58);
		\draw (22) to (4);
		\draw [style=oriented] (60) to (59);
		\draw [style=oriented] (59) to (54);
		\draw [style=oriented] (48) to (54);
		\draw [style=oriented] (28) to (38);
    \end{tikzpicture}
    \caption{An example of a rake-and-compress decomposition, obtained by applying, in order, 2 rake operations, a single compress operation, and 2 rake operations. The red and green rake layers $R_1,R_2$ form rooted trees, while the (in this case single blue) compress layer $C_1$ form disjoint paths.
    Notice that every edge incident to a rake vertex is oriented in accordance to the order induced by the layers.}
    \label{fig:r&c}
\end{figure}

A rake operation is the removal\footnote{In this context, ``remove'' does not mean that the node physically disappears from the graph. Rather, it means that the node is marked as processed (or assigned to the current layer of the decomposition), and the algorithm proceeds on the residual graph induced by unprocessed nodes.} of all degree $\le 1$ nodes and a compress operation is the removal of all nodes of degree exactly 2. 

A rake-and-compress decomposition is then obtained by repeatedly performing $O(n^{1/k})$ rakes, followed by a single compress (see \Cref{fig:r&c} for an example). It can be shown that only one application of this is guaranteed to remove a $\frac{1}{n^{1/k}}$-fraction of all nodes. So, after repeating it $k-1$ times, only $O(n^{1/k})$ nodes remain. Those nodes can now be ``raked away'' with the final $O(n^{1/k})$ rake operations.

The trivial way to extend this approach to the new setting is to just do $O(N^{1/k})$ rakes followed by a compress, but this can be improved. 

Informally, the improvement over the $O(N^{1/k})$-round adaptation of the original algorithm comes from parameterizing the phases of the algorithm (see \Cref{sec:NAlgo}). Instead of using the same budget $N^{1/k}$ in each phase, we perform $N^{\alpha_i}$ rake steps in phase~$i$, for carefully chosen exponents $\alpha_i$.

There is a sweet spot in choosing the value of these exponents. To ensure a good runtime when $N$ is close to $n^c$, the $\alpha_i$'s should be small. However, if $N$ is large, even with a small $\alpha_1$, we have performed so many rakes that we have already removed a large portion of the graph. Therefore, we can afford to increase the other $\alpha_i$, because if $N$ is too large, we will already have removed the entire graph well before having performed $N^{\alpha_i}$ many rakes.

This allows us to increase the $\alpha_i$ across phases: if the algorithm does not terminate early, it implicitly indicates that $N$ is closer to $n$, and we can afford to spend more rounds. Balancing these effects leads to an optimization problem that determines the optimal choice of the $\alpha_i$ (see \Cref{ssec:optimization-problem}).

\paragraph{Why we can't use Knuths definition of $\Omega$:}
Before talking about our lower bounds for the polynomial setting, we discuss here why we use the Hard-Littlewood definition of $\Omega$ for our lower bounds. As already mentioned above, it is, in fact, impossible to show a lower bound of $\omega(n^{1/k})$ for this setting with a polynomial bound on $n$. To prove this, we give an algorithm that is slow for most values of $n$, but runs in $O(n^{1/k})$ for infinitely many values of $n$.

On a high level, we are able to achieve such a result as follows. For any given $N$, we will fix a strategy to guess what the true $n$ is. We then use this guess $N'$ as if it were the actual $n$. This guess will be correct for infinitely many $n$ (no matter which $N$ is given as input, we will correctly guess $n$). Whenever we have guessed the correct $n$, our algorithm will run in time $O(n^{1/k})$, which is sufficient to violate Knuth's definition of $\Omega$.

We have the promise that all nodes in our graph are given the same upper bound $n\le N\le n^c$ and all nodes know $c$. So, the nodes know that the true $n$ must be somewhere in the range $\set{N^{1/c}, \ldots, N}$.

To guess the correct $n$, we define a sequence of values that are very far apart, so that for any given $N$, only one element of the sequence falls in the range $\set{N^{1/c}, \ldots, N}$. For example, the recursive sequence defined below is one of them.
\begin{align*}
    s_1 &= 2, \\
    s_{i+1} &= s_i^{2c} ~~\text{for}~i\geq 1
\end{align*}
When $n$ is itself an element $s_i$ of the sequence, then no matter what $N$ is given to the nodes, we will know $s_i=n \in \set{N^{1/c}, \ldots, N}$. 
With this in mind, our guessing strategy is simply the following: if there exists an $s_i \in \set{N^{1/c}, \ldots, N}$, then act as if the actual number of nodes in the graph is $s_i$. This guess will be correct whenever $n=s_i$, no matter which $N$ is given as input. Hence, the algorithm will run in $O(n^{1/k})$ for all $n\in \{s_i\}_i$.

\paragraph{Lower bounds with a polynomial upper bound.}
The parameters $\alpha_i$ in our upper bound are derived from an optimization problem that captures the trade-off described above: choosing the $\alpha_i$ too large leads to excessive runtime when $N$ is close to $n^c$, while choosing them too small limits the progress made in early phases and does not make the remaining instance significantly easier. 

We then proceed to prove a lower bound of $n^{c\alpha_1}$. 
By showing a lower bound matching our upper bound, we show that this optimization problem perfectly represents this tradeoff: it exactly captures the difficulty of this setting.

Let us think of the conceptually easier $2\frac{1}{2}$-coloring problem. 
Remember that each node is assigned a level $1,\ldots, k$, where each level consists of disjoint paths. Conceptually, a level-$i$ node $v$ has at least one path of level-$(i-1)$ nodes attached to it, and if this attached path outputs a 2-coloring, then the constraints on $v$ are removed and $v$ may output \emph{exempt}. If instead, all level $i$ paths adjacent to $v$ output the \emph{decline} label, the constraints on $v$ remain. 

This behavior represents the tradeoff for the problem of $2\frac{1}{2}$-coloring; if all level-$i$ paths that output the decline label have length at least $L$, then if there are $n_i$ nodes of level $i$ in total, it follows that at most $n_{i}/L$ nodes of level $i+1$ are still constrained. This means that the remaining instance is small and hence easier. 

The proof extends the idea from the setting without knowledge of $N$ by playing a sort of adversarial game between the algorithm and the instance. There, we essentially asked the algorithm how long it is willing to explore a path of level-1 nodes until it commits to outputting a \emph{decline} label on such a path.
If it spends more than $N^{\alpha_1}$ rounds, then, by setting $N = n^c$, the algorithm runs for too long. 
If instead, the nodes decide to output decline when the path is much shorter than $N^{\alpha_1}$, the adversary sets $N = n$, which means the number of higher level nodes without properly 2-colored neighbors is very large. Then, even if the perfect trade-offs are made for all higher levels, we still have a long path that eventually needs to be properly 2-colored. 

Carefully repeating this argument over $k-1$ phases enforces the same trade-off throughout the execution. As a result, if an algorithm deviates from the behavior described in the optimization problem, it necessarily has to 2-color a path that is too long at some point. Thus, despite its complex appearance, the optimization problem exactly captures the aforementioned trade-off, yielding tight lower bounds.

\paragraph{Promise on the ID space.}
In our next result, we consider only a promise on the size of the largest ID instead of a promise on the concrete number of nodes. We have to deviate from the upper bound strategy described above, but we will still rely on the same tradeoff described by our optimization problem.

Since the nodes do not have any estimate of $n$ given as input, we instead have to rely on an indirect measure: the identifiers that they see. To this end, consider the node $v$ in the left part of \Cref{fig:lB}. Node $v$ will now decide to \emph{Decline} in the following way. Let $i$ be the current round number and let $\maxID$ be the largest ID that node $v$ has seen so far. Then $v$ will output \emph{Decline} if 
\[
i > (\maxID)^{\alpha_1},
\]
where $\alpha_1$ is the same constant as in the optimization problem. This will result in some nodes deciding to \emph{Decline} very quickly. However, we are able to show that this is not the case for too many nodes.
For this, we introduce Decline Gadgets. They are a small graph structure equipped with a node ID assignment, that make our algorithm output \emph{Decline} on one of their nodes. What makes Decline Gadgets special is that they are a sort of necessary condition to force our algorithm to output \emph{Decline}. That is, whenever a node in a real instance outputs \emph{Decline}, it is precisely because such a gadget is present.
Using the fact that the number of available IDs is limited, we can bound the maximum number of such gadgets. This allows us to prove that our algorithm makes sufficient progress and obtains the same runtime as the algorithm providing the polynomial upper bound.

\paragraph{Randomized algorithms.}
We now focus on the randomized setting, where nodes are given unlimited random bits.
The key idea here is that nodes can obtain a crude estimate of $\log n$ despite having no knowledge of $n$. Again, consider the setting in the left of \Cref{fig:lB}, where a level-1 node $v$ sees only its own level-1 path and needs to figure out when to decide to output \emph{decline}. We now apply the following strategy: Each node independently marks itself with probability $1/2$. Nodes then explore their level-$1$ path until either the path ends, or a marked node is encountered. Upon seeing a marked node, they immediately output \emph{Decline}.

This simple mechanism induces a probabilistic length threshold. Any level-$1$ path of length $\Omega(\log n)$ contains a marked node with high probability and therefore declines. Conversely, if a path does not encounter any marked node, then with high probability, its length is at most $O(\log n)$, thus giving the level-1 nodes some confidence to keep exploring their path. Therefore, randomization allows nodes to implicitly recover a notion of the $\log n$ scale without explicit knowledge of $n$.

The consequences for the level-2 nodes are roughly as follows. A level-$2$ path only needs to be fully $2$-colored if \emph{all} its attached level-$1$ paths decline. If the level $2$ path has at least polynomial in $n$ length, then there must also be polynomially many level-1 paths that have all decided to decline. W.h.p., this only happens if a majority of all those level-$1$ paths are $\Omega(\log n)$ nodes long. Thus, such a level-2 path can only have length at most $O(n/\log n)$ w.h.p., which is why this is the complexity of our algorithm.

For the lower bound, we observe that a node in the middle of a long level-$1$ path must decide to decline within constant time with at least constant probability (otherwise, we can force the algorithm to require $\Omega(n)$ rounds on some path graphs). Any level-$1$ path of length $\Theta(\log n)$ therefore has to decline w.h.p.\ and we can thus create a level-$2$ path of length $\Theta(n/\log n)$ that has to be $2$-colored.

When generalizing to $3$ levels, we treat level $1$ exactly as before. 
If we encounter long level-$2$ paths whose attached level-$1$ paths all decline, we obtain coarse information about $n$: namely, such level-$1$ paths must have length $\Omega(\log n)$. So when relating the lengths of these level-1 paths to the length of our level-2 paths, the question becomes: at what length should we decide to start declining level-2 paths as well?
From a high-level perspective, this again leads to an optimization problem analogous to the deterministic case, where a polynomial bound on $n$ was given.

However, the nature of this optimization problem changes significantly. In the polynomial setting, it suffices to optimize over functions of the form $n^{\alpha}$ for small constants $\alpha$. Here, in contrast, the information available is only at the scale of $\log n$, and it is no longer clear that restricting attention to polynomial functions yields the correct trade-off. Instead, one must optimize over a much broader class of functions.

In fact, solving the resulting optimization problem reveals that the correct threshold is given by a function $f(n)$ satisfying:
\[
f(f(n)) = \log n.
\]
This yields an algorithm with complexity $O(\frac{n}{f(n)})$.
Thus, unlike in the polynomial regime, here the optimal choice is characterized implicitly over the whole space of functions via a functional equation, reflecting the fundamentally different scaling behavior in the randomized setting.

Compared to the other settings, the lower bound is established using a similar, albeit significantly more technical argument.
Still, we are able to show that our optimization problem is the correct one, by proving an almost tight lower bound of $\Omega\Large(\frac{n}{f(n) \log n}\Large)$.




\section{Preliminaries}
We rely on classical graph theoretical notation, and refer to an undirected graph as $G=(V,E)$. In addition to edges, we also refer to halfedges, that is the set
\[
    \bar{E} := \set{(e,v) \mid e \in E \land v \in e},
\]
which essentially splits every edge into two parts. For our formal definition of LCLs, we will require labelings of the halfedges $\phi: \bar{E} \rightarrow \Sigma$, that assign every halfedge a label from some set $\Sigma$.\\
We denote by $B_r(u)$ the radius $r$ ball around a node $u \in V$, and sometimes we also refer to this as the $r$-hop neighborhood. \\
Throughout this work, we will restrict ourselves to the setting where the maximum degree $\Delta$ is a constant and where all graphs are trees.

\subsection{The LOCAL Model}
The LOCAL model is a model of distributed computing. This means that we are given a network represented as a graph $G = (V,E)$, where each node represents a computational unit, and edges represent communication links. Nodes are assumed to be entities with unlimited computational power and communication links are unbounded in the amount of information they can transmit. Time is measured in synchronous rounds of communication. Nodes all start at the same time (round 0), then in each round any node can send an arbitrary amount of information (e.g. its entire state) to all of its direct neighbors and then perform arbitrary computation. At the end of the computation, each node must produce an output, such that all outputs together solve the computational task. For example, each node is required to output a color, and the solution is globally correct if there does not exist a monochromatic edge.

Additionally, nodes may start the computation with some predetermined inputs in round 0. For example, if we restrict the nodes to only perform deterministic computation it is common to give each node a unique identifier from the set $[1, n^c]$, for some constant $c\ge 1$. This is then called the deterministic LOCAL model. Most often, nodes are also provided with knowledge about some of the graph parameters, like the number of nodes $n$ and the maximum degree $\Delta$. What exactly is given as an input to the nodes in round 0 varies between different works in the field. 

In \Cref{sec:lb-no-input} we assume that nodes are restricted to deterministic computation and are given unique natural numbers as IDs, nothing else, so in particular no upper bound on $n$ is known.

In \Cref{sec:NAlgo,sec:LBpolyPromise}, we assume nodes are restricted to deterministic computation and are given two natural numbers $N,c\ge1 $ as inputs, together with the promise that $n \le N \le n^c$. Additionally, nodes are given a unique ID from the set $[1, n^c]$.

In \Cref{sec:promiseIds}, we assume nodes are restricted to deterministic computation and are given only one natural number $c \ge 1$ as input. Again nodes are given unique identifiers from the set $[1,n^c]$.

In \Cref{sec:rand}, nodes are given access to an unbounded number of random coins. Nodes are not given any additional information, so they know nothing about $n$ and are not provided with any IDs.

\subsection{LCL Problems}
An LCL problem $\Pi = (\Sin, \Sout, r, \C)$ is a quadruple where 
\begin{itemize}
    \item $\Sin$ is a finite set of input labels
    \item $\Sout$ is a finite set of output labels
    \item $r \in \N$ is the checkability radius of $\Pi$
    \item $\C$ is a finite set of input-/output- labeled centered $r$-hop neighborhoods $C \in \C$. The labeling assigns every halfedge in $C$ an input label from $\Sin$ and an output label from $\Sout$.
\end{itemize}

A solution to $\Pi$ on a $\Sin$ labeled graph $(G, \phi_{in})$ is an output labeling $\sigma_{out}$ that assigns every halfedge in $G$ an output label from $\Sout$, such that for every node $v$, the input-/output- labeled $r$ hop ball around $v$ is isomorphic to a member of $\C$.

\paragraph{LCLs with labels on nodes.}
For our definition of the LCL $k$-hierarchical $2\frac{1}{2}$-coloring, we will refer to labels on nodes instead. Note that we can encode labels on nodes, by using labels on edges, by requiring all halfedges adjacent to some node $v$ to output the same label.

\subsection{\texorpdfstring{$k$}{k}-hierarchical \texorpdfstring{$2\frac{1}{2}$}{2½}-coloring} \label{sec:2.5Col}
Many of our results are about $k$-hierarchical $2\frac{1}{2}$-coloring. These problems are the first example of problems with complexities $\Theta(n^{1/k})$ \cite{CP19timeHierarchy}. From a high level view, $k$-hierarchical $2\frac{1}{2}$-coloring exactly captures what makes a problem with complexity $\Theta(n^{1/k})$ hard: we might have to solve a global problem in some long path. The value of $k$ determines how much freedom we have in choosing which path to solve this hard problem in.

The $k$-hierarchical $2\frac{1}{2}$-coloring problem is defined in the following way.
Given a tree $G=(V,E)$, we define $k$ sets $L_1, \ldots, L_{k}$, called respectively level $1, \ldots, k$.

We define the levels inductively on $1 \le i \le k$.
First, 
\[
L_1 = \{v\in V \mid \deg(v)\leq 2\}.
\]
Let $G^{(i)}$ be the subgraph induced by $V^{(i)} = V \setminus \bigcup_{1 \le j <i}L_j$. 
We then have  
\[
L_i = \{v\in V^{(i)} \mid \deg_{G^{(i)}}(v)\leq 2\}.
\]
We call $G^{(k+1)}$ the remainder.

Each node is either in a level $1 ,\ldots, k$, or in the remainder. Which set a node $v$ belongs to can be determined in $k$ rounds. There are no input labels, and the set of output labels is $\set{B,W, E,D}$, which stand for \emph{Black}, \emph{White}, \emph{Exempt} and \emph{Decline} respectively. The constraints are as follows:
\begin{enumerate}
    \item All nodes in the remainder output $D$.
    \item No node in level $k$ may output $D$ and no node in level 1 may output $E$.
    \item No node that outputs $W$ can be adjacent to a node of the same level that outputs $W$, or $D$. Similarly no node that outputs $B$ can be adjacent to a node of the same level that outputs $B$, or $D$.
    \item A node may only output $E$, if it is adjacent to a lower level node that outputs one of $\set{B,W,E}$.
\end{enumerate}
This concludes the problem description.
We make the following observations. 
\begin{itemize}
    \item Each level consists only of isolated nodes and paths.
    \item Each level-1 path is either properly 2-colored, using $W,B$, or all nodes output $D$.
    \item Each level $\ge 1$ path consists of continuous subpaths that are either properly 2-colored, or where all nodes output $D$. These subpaths must be separated by nodes that output $E$.
    \item If $G^{(k)}$ is non-empty, then at least one path must be properly 2-colored. If no level $<k$ node outputs $B$, or $W$, then no node is allowed to ever output $E$. Importantly, no node of level $k$ can output $E$. Since level $k$ nodes cannot output $D$, they are forced to output a consistent 2-coloring.
\end{itemize}

Essentially, the class of the $2\frac{1}{2}$-coloring problems are considered the canonical representatives of the complexity classes $\Theta(n^{1/k})$. They are first introduced in \cite{CP19timeHierarchy} as the first examples of LCLs that have the polynomial complexities $\Theta(n^{1/k})$. As a result, all of these complexity classes are non-empty.
\begin{lemma}[\cite{CP19timeHierarchy}]
For any $k \in \N$, $k$-hierarchical $2\frac{1}{2}$-coloring has complexity $\Theta(n^{1/k})$.
\end{lemma}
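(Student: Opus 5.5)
Since $n$ is known here (the standard setting of \cite{CP19timeHierarchy}), the plan is to prove the $O(n^{1/k})$ upper bound with the threshold algorithm from the introduction and the $\Omega(n^{1/k})$ lower bound with the hierarchical instance of \Cref{fig:halfColoringGraph}.

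\emph{Upper bound.} Set $L = \lceil n^{1/k}\rceil$ and process levels $1,\dots,k$ in order. Every node first computes its level in $k$ rounds. Suppose levels $<i$ are already fixed, so each level-$i$ node knows whether it may output $E$: this holds when some adjacent lower-level node outputs $B$, $W$ or $E$. Such nodes output $E$ (level 1 never does). Removing them splits the level-$i$ paths into subpaths. Each node explores its subpath to distance $L$.
\begin{itemize}
\item If the subpath has at most $2L$ nodes, the node sees all of it and 2-colors it canonically, for example by parity of distance from the endpoint with the smaller ID.
\item Otherwise, all nodes of the subpath output $D$.
\end{itemize}
At level $k$, $D$ is forbidden, so every remaining subpath is 2-colored regardless of length. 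Each level costs $O(L)$ rounds plus a constant, so the total is $O(kL)$ provided the level-$k$ subpaths also have length $O(L)$.

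The key step is a charging argument. Let $m_i$ be the number of level-$i$ nodes that do not output $E$. We have $m_1 \le n$. A level-$(i+1)$ node that is not exempt is adjacent to a level-$\le i$ structure, namely a declined subpath. In the tree every node has a level-$\le i$ neighbor; for the canonical situation this is a declined level-$i$ subpath of length $>2L$. Each such subpath has two endpoints, so it is adjacent to at most two higher-level nodes. Hence $m_{i+1} \le 2m_i/L$, which gives $m_k = O(n/L^{k-1}) = O(n^{1/k})$, and every level-$k$ subpath has length $O(n^{1/k})$.

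This charging is the step I expect to be the main obstacle. A level-$(i+1)$ node may be adjacent only to lower-level nodes of levels $<i$, or to isolated level-$i$ nodes. I will handle this by also treating exempt-or-declined lower structures carefully: a single-node or short level-$i$ path is always colored, so its neighbor becomes exempt. Every level-$(i+1)$ node has degree $\ge 3$ in $G^{(i)}$ and degree $\le 2$ in $G^{(i+1)}$, so it has a neighbor in $L_i$ (or a lower level, by induction). That neighbor lies on a level-$i$ path which is either short and colored, making the node exempt, or long and declined, in which case it is charged.

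\emph{Lower bound.} Use the instance of \Cref{fig:halfColoringGraph}: a level-$k$ path of length $\ell = \Theta(n^{1/k})$. To each level-$j$ node attach a level-$(j-1)$ path of length $\ell$ at an endpoint, padded with leaves so that the degrees produce exactly these levels. Suppose an algorithm runs in $t = o(n^{1/k})$ rounds. On a middle node of any level-$j$ path, its view is a path-like neighborhood that does not see the endpoints. Assign IDs so that the two colorings of a path interior are indistinguishable by the standard argument: take two nodes at odd distance $\ge 2t+2$ on the same path, reverse the ID ordering, and note that their views coincide. So a node cannot output a consistent 2-coloring of a path of length $\gg t$. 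Therefore every level-$<k$ path must output $D$, as the exempt option only matters at endpoints. Since nothing is colored at levels $<k$, no level-$k$ node may output $E$. The whole level-$k$ path of length $\Theta(n^{1/k}) \gg t$ must then be 2-colored, which is a contradiction. Formally, I will argue that 2-coloring a path segment of length $>2t+1$ all of whose nodes see no endpoint fails on some ID assignment. The total node count is $\ell^k = n$, as required.
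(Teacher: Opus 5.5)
Your plan follows the paper's route. The paper only cites \cite{CP19timeHierarchy} for this lemma, but its introduction sketches exactly your threshold algorithm with the charging bound. Its lower-bound machinery (the $k$-hierarchical lower bound graph of \Cref{def:LowerBoundGraph} with the parity argument of \Cref{lem:onlyDecl}) is the instance and indistinguishability step you describe. Your resolution of the charging obstacle is right. A level-$(i+1)$ node has degree $\ge 3$ in $G^{(i)}$ and $\le 2$ in $G^{(i+1)}$, so it has a level-$i$ neighbor, and that neighbor is an endpoint of (or the only node of) its level-$i$ path. A declined subpath reaches higher levels only through the at most two endpoints of the original level-$i$ path, each with at most one neighbor off that path in $G^{(i)}$. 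The decision rule, however, fails as stated. With exploration radius $L$ and threshold $2L$, take a subpath of between $L+2$ and $2L$ nodes: the middle nodes see all of it and color, but the endpoints see only $L+1$ nodes and cannot tell whether to color or decline. A colored node next to a same-level $D$ node violates constraint~3. Use radius $2L$ (or threshold $L$ with radius $L$); then all nodes of a subpath agree and $m_{i+1}\le 2m_i/L$ follows.

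In the lower bound, two steps fail as written.

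\textbf{Padding with leaves.} This cannot give the intended levels once $k\ge 3$. Leaves vanish with level~$1$, so the free endpoint of a level-$j$ path with $j\ge 3$ keeps only its path neighbor and its child in $G^{(2)}$ and drops to level~$2$. Moreover, each padded leaf is a one-node level-$1$ path that may be colored, which licenses $E$ on its neighbor. Use the paper's rule instead: attach two child paths at every node of degree $1$ in $T_{i+1}$. Then every node has its intended level, and $E$ is impossible once all lower levels decline. The size is then $\Theta(\ell^k)$ rather than exactly $\ell^k$; to hit every $n$, lengthen one level-$1$ path.

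\textbf{Indistinguishability.} ``Reverse the ID ordering, and note that their views coincide'' is not an indistinguishability argument. With unique IDs two views never coincide. Reversing the IDs merely mirrors the instance, so the nodes carrying the views of $u$ and $v$ stay at the same distance and no parity conflict arises. Two fixes work:
\begin{itemize}
\item Keep the ID-labeled ball of $u$ fixed and translate the ID-labeled ball of $v$ one step along the path, hanging subtrees included (they are isomorphic at all interior path nodes).
\item Use the surgery of \Cref{lem:onlyDecl}.
\end{itemize}
Both keep $n$ fixed, which is needed since $n$ is known. The contradiction at level $j$ is obtained in the modified instance, so you also need all levels below $j$ to decline there. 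State the inductive claim for every ID assignment; with the translation trick the modified instance is the same graph, so this suffices to rule out any $E$ between the two nodes.
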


\subsection{Tree Decompositions and Complexity Classes}\label{sec:decAndLandscape}
When restricting the input graph to a tree, the distributed complexities that exist for LCL problems are very well understood.  This setting is therefore an excellent case study for our question of how giving different initial inputs to nodes changes the model.

To this end we give an overview about which polynomial complexity classes exist on trees in the LOCAL model, when nodes know a linear upper bound on $n$. At the heart of the study of these complexity classes are rake-and-compress decompositions. 

\begin{definition}[$(\gamma, \ell, L)$-decomposition \cite{CP19timeHierarchy}]\label{def:decomposition}
A $(\gamma, \ell, L)$-decomposition of a tree $\T$ is a decomposition of the nodes in $2L-1$ rake and compress layers. The compress layers $V^C_1, \ldots, V^C_{L-1}$ consist of paths and the rake layers $V^R_1, \ldots, V^R_L$ each consist of $\gamma$ sublayers $V^R_i = (V^R_{i,1}, \ldots, V^R_{i,\gamma} )$. The layers satisfy the following properties, based on the layer ordering $V^R_i < V^C_i < V^R_{i+1}$ and $V^R_{i,j} < V^R_{i,j+1}$ for all $1 \leq i < L-1, 1\leq j < \gamma$. 
\begin{enumerate}
	\item The components of each compress layer $V^C_i$ are isolated paths of length in $[\ell, 2\ell]$. Furthermore, the endpoints of each such path have exactly one neighbor in a higher layer. All other nodes have no neighbors in higher layers.
	\item The components of each rake sublayer $V^R_{i,j}$ are isolated nodes with at most one neighbor of a higher layer.
\end{enumerate}
\end{definition}

With the decomposition formally defined, we can give formal descriptions of the two operations. Assume we already have a (partial) $(\gamma, \ell, L)$-decomposition of some subtree $\T \subset G$. Let the remaining graph be $\bar{G} = G\setminus \T$. 

\paragraph{The Rake Operation:}
Only nodes that have degree $\le 1$  in $\bar{G}$ participate in a rake operation. If there are two degree-1 nodes connected by an edge, only one of them participates, chosen arbitrarily. Any node $v$ that participates computes the minimum $V^R_{i,j}$, such that $V^R_{i,j}$ is larger than the layer of all of $v$'s neighbors in $\T$. Then $v$ outputs $V^R_{i,j}$. Note that $1\le i\le L$ and $1 \le j \le \gamma$.\\

Inside of a compress procedure we will need to compute a $(\ell, 2\ell)$-ruling set\footnote{An $(a,b)$-ruling set is a subset $S$ of the nodes of the input graph such that the distance between any two nodes from $S$ is at least $a$ and for each node $u$ that is not contained in $S$, there is a node from $S$ that is in distance at most $b$ from $u$.}. To do this efficiently, we first precompute a distance-$\ell$ $O(1)$-coloring at the beginning of the algorithm. Since $\Delta \in O(1)$ this can be done deterministically by using e.g.\ Linial's coloring-reduction algorithm on $G^{\ell}$ in $O(\log^* C)$ rounds. Here, $C$ is the size of some initial coloring, which is typically given by the initial unique IDs assignment. 
Given such a coloring we can compute a $(\ell, 2\ell)$-ruling set in $O(1)$ rounds, by simply iterating through the colors. 
Having clarified this, we can describe the compress procedure.

\paragraph{The Compress Operation:}
The compress operation takes as input two values $\ell \in \N$ and $1 \le j\le L-1$.
Let $P$ be any maximal subpath of nodes of degree exactly 2 in $\bar{G}$, of length at least $\ell$. We first compute a $(\ell, 2\ell)$-ruling set on $P$. Then all ruling set nodes join layer $V^R_{j+1,1}$ and all of the remaining nodes join layer $V^C_j$.\\

Clearly a Rake operation can be performed in just one round of the LOCAL model. For $\ell \in O(1)$ and using the already discussed idea of precomputing a distance coloring, a Compress operation can be performed in $O(1)$ rounds, after an initial $O(\log^* \maxID)$ rounds of precomputation.

We then compute a $(\gamma, \ell, L)$-decomposition, by performing $\gamma$ rakes, followed by a Compress$(\ell, j)$ and then repeatng these two steps for $1 \le j \le L -1$. We then finish with another round of $\gamma$ rakes. Doing this we get the following results.
\begin{lemma}[\cite{CP19timeHierarchy}]\label{lem:polyDecomp}
For $\ell \in O(1)$ and any positive integer $k$, set $\gamma = n^{1/k}(\ell/2)^{1-1/k}$ then a $(\gamma, \ell , k)$-decomposition can be computed in $O(kn^{1/k})$ rounds.
\end{lemma}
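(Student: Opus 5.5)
The plan is to run the algorithm just described, $\gamma$ rakes followed by one Compress$(\ell,j)$, repeated for $j=1,\ldots,k-1$ and then a final round of $\gamma$ rakes, and to prove that with $\gamma = n^{1/k}(\ell/2)^{1-1/k}$ every node is assigned a layer. Correctness of the layer structure with respect to \Cref{def:decomposition} is then essentially by construction. Rake nodes have at most one remaining neighbor, so they have at most one higher neighbor. Compress paths are cut by the $(\ell,2\ell)$-ruling set into pieces of length in $[\ell,2\ell]$, and each piece's endpoints are adjacent to a ruling-set node in $V^R_{j+1,1}$ or to the unique outside neighbor of the path. The running time is $k\gamma$ rounds of raking plus $O(k)$ rounds for the compresses, after the $O(\log^* \maxID)$ precomputation of the distance coloring. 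Since $\ell = O(1)$, this gives $O(kn^{1/k})$ rounds.

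The core of the argument is a progress lemma for a single phase. Suppose the remaining forest $\bar G$ has $m$ nodes. After $\gamma$ rakes, every remaining node lies on a path of length greater than $\gamma$ between two ``branching'' points of the remaining forest, or else has a large subtree removed below it. I would show the following: if $\bar G'$ is the forest left after the $\gamma$ rakes, then the nodes of $\bar G'$ of degree at least 3 are few. Contract $\bar G'$ to its skeleton on nodes of degree $\neq 2$. Each leaf of $\bar G'$ was not raked, so it must root a removed subtree of depth at least $\gamma$, which contains at least $\gamma$ nodes. This gives at most $m/\gamma$ leaves, and in a tree the number of degree-$\ge 3$ nodes is less than the number of leaves. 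After the compress, every maximal degree-2 path of length at least $\ell$ is removed except for ruling-set nodes, which are at density at most $1/\ell$. Shorter degree-2 paths, of length below $\ell$, are charged to skeleton edges, and there are $O(m/\gamma)$ of those. So the number of surviving nodes is at most $m/\ell$ plus $O(\ell\, m/\gamma)$.

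I expect the main obstacle to be getting the exact constant so that the recursion closes with the stated $\gamma$. The crude bound above gives $m_{j+1} \le O(\ell m_j/\gamma) + m_j/\ell$, and the $m_j/\ell$ term does not shrink fast enough. To fix this I would refine the charging in the way of \cite{CP19timeHierarchy}. Ruling-set nodes are placed only on long paths. A long path of length $p$ spanned at least $p$ unremoved nodes, but those nodes themselves formed a degree-2 chain. The point is that such a chain existing after $\gamma$ rakes forces each of its nodes to hang off branches whose structure I would track through the phases. Concretely, I would try to prove the invariant that, at the start of phase $j$, the remaining graph has at most $n(\ell/(2\gamma))^{j-1}\cdot$(constant) nodes. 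With $\gamma^k = n(\ell/2)^{k-1}$, after $k-1$ phases this leaves at most about $\gamma$ nodes, and the final $\gamma$ rakes then remove a tree with at most $\gamma$ nodes, hence of diameter at most $\gamma$. That final rake step, and the fact that the layer indices stay within $1\le j\le\gamma$, follow immediately once the invariant is established.
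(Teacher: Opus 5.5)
Your outline (phases of $\gamma$ rakes plus one compress, then a final $\gamma$ rakes) is the standard one, but the step that carries the lemma, the per-phase shrinkage of the survivors, is never proved. You only announce the invariant $m_j\le n(\ell/(2\gamma))^{j-1}\cdot(\text{const})$ as something you ``would try to prove''. The suggested refinement, that a degree-$2$ chain surviving $\gamma$ rakes forces its nodes to hang off branches, is not an argument: the interior of a long bare path survives $\gamma$ rakes with nothing hanging off it. The paper does not prove this lemma itself. The intended derivation is to apply \Cref{lem:rcremaining} from \cite{chang20} with $x=\gamma$, which gives at most $n(\ell/(2\gamma))^{k-1}$ survivors after $k-1$ phases. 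This equals $\gamma$ exactly because $\gamma^k=n(\ell/2)^{k-1}$, and the last $\gamma$ rakes then clear the rest. Your proposal neither proves nor invokes such a per-phase bound, so the lemma is not established.

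You have also misdiagnosed the obstacle. In Compress$(\ell,j)$ the ruling-set nodes are placed in $V^R_{j+1,1}$ in that very step, so they do not survive; even if kept, they would have degree at most $1$ and vanish in the first rake. The $m/\ell$ term is therefore spurious. Without it your own charging already closes:
\begin{itemize}
\item at most $m/\gamma$ nodes of degree at most $1$ remain after the rakes;
\item there are fewer nodes of degree at least $3$ than that;
\item there are fewer than $2m/\gamma$ maximal degree-$2$ chains, each with fewer than $\ell$ nodes if it is not compressed.
\end{itemize}
Hence fewer than $2\ell m/\gamma$ nodes survive a phase.

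The real issue is the constant. A per-phase factor $C\ell/(2\gamma)$ compounds to roughly $C^{k-1}\gamma$ survivors, which $\gamma$ rake sublayers need not clear. So ``at most about $\gamma$'' does not give a $(\gamma,\ell,k)$-decomposition for the stated $\gamma$, and your claim that the layer indices stay within $\gamma$ fails.

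Counting survivors cannot give $C=1$ either. Take a node with three arms, each consisting of $\ell-1$ path nodes followed by a node carrying a pendant path of $\gamma$ nodes. After $\gamma$ rakes and a compress it keeps $3\ell+1$ of its $3(\ell+\gamma)+1$ nodes, which is more than an $\ell/(2\gamma)$ fraction whenever $\gamma\ge\ell$. Copies with their centres joined into a path give arbitrarily large bounded-degree trees. So the factor $\frac{\ell}{2x}$ in \Cref{lem:rcremaining} does not hold literally, and the exact $\gamma$ cannot be reached along this route.

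What your corrected argument does prove is the lemma with $\gamma=n^{1/k}(2\ell)^{1-1/k}$, for which $n(2\ell/\gamma)^{k-1}=\gamma$. Since $\ell=O(1)$, this $\gamma$ is still $O(n^{1/k})$ and yields the $O(kn^{1/k})$ round bound.
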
  

\begin{lemma}[\cite{CP19timeHierarchy}]\label{lem:logDecomp}
For $\ell \in O(1)$ and $\gamma \in O(1)$ then by setting $L \in O(\log n)$ a $(\gamma, \ell , L)$-decomposition can be computed in $O(\log n)$ rounds.
\end{lemma}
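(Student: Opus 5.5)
The last statement is Lemma~\ref{lem:logDecomp}: for $\ell,\gamma\in O(1)$ and $L\in O(\log n)$, a $(\gamma,\ell,L)$-decomposition can be computed in $O(\log n)$ rounds. The approach is to run the layered procedure described above, with $\gamma$ rakes followed by one Compress$(\ell,j)$ per iteration, for $L-1$ iterations and then a final block of $\gamma$ rakes. The core claim is that a single iteration (rake block plus compress) removes a constant fraction of the remaining nodes. Since each iteration costs $O(\gamma+\ell)=O(1)$ rounds, plus a one-time $O(\log^*\maxID)$ precomputation of the distance-$\ell$ coloring, $O(\log n)$ iterations empty the tree.

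The key step is the shrinking argument. Fix the residual forest $\bar G$ at the start of an iteration, with $m$ nodes. Partition its nodes into three classes:
\begin{itemize}
\item $A$: nodes of degree $\le 1$;
\item $B$: nodes of degree $2$;
\item $C$: nodes of degree $\ge 3$.
\end{itemize}
In any forest, $|C|<|A|$, since summing $\deg-2$ over all nodes gives a negative total. The degree-$2$ nodes form maximal paths, and each such path attaches to nodes of $A\cup C$. Contracting these paths gives a forest on $A\cup C$ whose edges correspond to the paths, so there are at most $|A|+|C|\le 2|A|$ paths.

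Now split into cases. If $|A|\ge m/(4\ell+4)$, the first rake alone removes at least half of $A$ (only one endpoint of an isolated edge is raked), hence $\Omega(m/\ell)$ nodes. Otherwise $|A|$ is small and most nodes lie in degree-$2$ paths. Paths of length $<\ell$ contain at most $2|A|\cdot\ell<m/2$ nodes in total, so at least roughly $m/4$ nodes lie on paths of length $\ge\ell$. These paths survive the rake block up to boundary effects and are compressed. Removing the rake steps only helps, so counting just the compress step suffices.

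Nodes that go to ruling-set layers are also assigned a layer, so every node of a compressed path is removed from $\bar G$. In both cases, a $1/O(\ell)=\Omega(1)$ fraction of $\bar G$ disappears. Hence after $O(\log n)$ iterations, with a suitable constant in $L$, the graph is empty and the final rakes finish the decomposition.

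One also has to confirm that the output satisfies Definition~\ref{def:decomposition}:
\begin{itemize}
\item Compressed path segments between consecutive ruling-set nodes have length in $[\ell,2\ell]$ by the ruling-set property. Boundary segments need care, for example by merging short end pieces or by choosing ruling-set nodes at the endpoints.
\item Endpoints have exactly one higher-layer neighbor, namely a ruling-set node in $V^R_{j+1,1}$ or a not-yet-removed node.
\item Rake-layer nodes have at most one higher neighbor, because they had degree $\le 1$ when raked.
\end{itemize}

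I expect the main obstacle to be the boundary bookkeeping in the compress step: guaranteeing path lengths in $[\ell,2\ell]$ at the ends of maximal degree-$2$ paths while keeping the constant-fraction progress bound intact. I would handle it by always making the ruling set include the path's end nodes (adjusted to stay at mutual distance $\ge\ell$) and absorbing any leftover short pieces into neighboring segments.
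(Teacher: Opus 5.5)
Your plan is the standard argument behind this lemma: show that each iteration removes a $1/O(\ell)$ fraction of the residual forest, so $O(\log n)$ iterations of $O(1)$ rounds suffice, plus the one-time $O(\log^* I)$ coloring. The paper gives no proof of its own; it only cites \cite{CP19timeHierarchy} and describes the procedure of alternating $\gamma$ rakes with one compress. The bound $|C|<|A|$, the bound of at most $|A|+|C|$ maximal degree-$2$ paths, and your Case~1 are fine.

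However, the Case~2 justification ``removing the rake steps only helps, so counting just the compress step suffices'' is false. Take a maximal degree-$2$ path with exactly $\ell$ nodes that has a leaf at one end, and let $\gamma\ge 2$. The leaf is raked, then the path's end node. The surviving degree-$2$ piece can have fewer than $\ell$ nodes (for instance if the other end-neighbor keeps degree $\ge 3$), and then the compress skips it. So the iteration removes one node of this path where the compress alone would have removed all $\ell$. With the estimates you wrote down, the count does not close: a priori up to $2|A|$ paths can each lose about $\ell$ nodes, which is nearly $m/2$, while you only guarantee about $m/4$ nodes on long paths.

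You need an explicit loss bound instead. Let $P$ be a path with at least $\ell$ nodes in $\bar G$.
\begin{itemize}
\item The rakes only eat a prefix and a suffix of $P$.
\item Of what survives, at most two end nodes can have degree $\le 1$ after the last rake.
\item The rest lies on a maximal degree-$2$ path of the post-rake residual graph, which is compressed entirely unless it has fewer than $\ell$ nodes.
\end{itemize}
Hence $P$ leaves at most $\ell+1$ nodes unremoved. This happens only if one of its two end-neighbors was raked; otherwise $P$ is untouched and fully compressed. Since there are fewer than $2|A|$ such paths, the total loss is at most $2(\ell+1)|A|$.

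Now lower the Case~2 threshold to $|A|<m/(16(\ell+1))$. Then more than $7m/8$ nodes lie on paths with at least $\ell$ nodes, and at most $m/8$ of them survive. Case~1 still removes $\Omega(m/\ell)$ nodes, so the argument closes.

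Separately, your boundary fix needs the compress threshold raised to a constant multiple of $\ell$, say compressing only degree-$2$ paths with at least $3\ell$ nodes. A path of exactly $\ell$ nodes with both end nodes in the ruling set cannot be cut into compress pieces of length in $[\ell,2\ell]$, and there is no neighbouring segment to absorb a leftover. Raising the threshold only changes the constants in the count above.
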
  

However, \Cref{lem:polyDecomp} assumes that nodes know a linear upper bound on the number of nodes. For \Cref{lem:logDecomp} the only implicit assumption is that $\log^*(\maxID) \in O(\log n)$, where $\maxID$ denotes the largest ID.
We will soon see that by varying these assumptions we significantly change the complexity of computing a $(\gamma, \ell, L)$-decomposition.\\
This is significant, because $(\gamma, \ell, k)$-decompositions are in some sense $\Theta(n^{1/k})$-complete. This immediately implies that the complexity landscape of LCLs significantly depends on these assumptions. The completeness of these decompositions comes from the following result.

\begin{lemma}[\cite{chang20}]\label{lem:all-problems}
Let $k$ be a positive integer. Assume a $(\gamma, \ell , k)$-decomposition can be computed in $T_{dec}$. If an LCL $\Pi$ admits an $o(n^{1/(k-1)})$-round algorithm, then $\Pi$ can be solved in time $T_{dec} + k\gamma$. 
\end{lemma}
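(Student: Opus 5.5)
The plan is to follow the Chang--Pettie type-based approach, now driven by a given decomposition, and to split the argument into an existential part and an algorithmic part. The existential part uses the assumed $o(n^{1/(k-1)})$-round algorithm $\A$ for $\Pi$, together with a pumping argument, to extract a finite object. This object is a \emph{labeling scheme} that assigns to each ``type'' of rooted subtree, and to each type of bipolar path segment (a compress path with the two subtrees hanging from its endpoints), a set of extendable output labelings. The algorithmic part uses the decomposition to compute these types bottom-up and then fix labels top-down. Both passes spend one round per rake sublayer and $O(\ell)=O(1)$ rounds per compress layer, which gives the claimed $T_{dec}+O(k\gamma)$ bound.

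\textbf{Step 1 (types and pumping).} For a rooted tree $T$, define its type as the equivalence class under the relation ``the same set of root-boundary labelings extends to a valid solution of $\Pi$ inside $T$''. Define the analogous notion for bipolar trees with two boundary points. Since $\Delta$, $r$ and $\Sout$ are constants, there are only finitely many types. The standard pumping lemma for bipolar trees then shows that a path of length at least some constant $\ell_0$, carrying hanging subtrees of fixed types, can be replaced by a longer or shorter path without changing its type. I would choose $\ell\ge\ell_0$ here.

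\textbf{Step 2 (good labeling functions exist; main obstacle).} We need a \emph{restriction} on the allowed boundary labels at each of the $k$ levels. Concretely, we need sets $\Sigma_1\supseteq\dots\supseteq\Sigma_k$ of boundary labels such that three things hold:
\begin{itemize}
\item every rake step preserves nonemptiness of feasible labelings;
\item every compress path at level $i$ can be labeled using only labels in $\Sigma_{i+1}$ at its endpoints, and this can be done \emph{regardless} of the types of the hanging subtrees;
\item level $k$ requires no compress.
\end{itemize}
I expect this to be the hard step. The plan is to prove it by contradiction, mirroring the lower-bound instance of \Cref{fig:halfColoringGraph}. If at some level $i\le k-1$ no restriction is compatible with the compress step, then we can build a $(k-1)$-level hierarchical instance out of pumped paths of length $\Theta(n^{1/(k-1)})$. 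In this instance, a node running $\A$ for $o(n^{1/(k-1)})$ rounds cannot see far enough to distinguish pumped copies. An indistinguishability argument then forces $\A$ to commit to labels that are incompatible somewhere, so $\A$ fails on the instance, a contradiction. This is where the hypothesis $o(n^{1/(k-1)})$ is used: $k-1$ compress levels are exactly what the $(\gamma,\ell,k)$-decomposition offers.

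\textbf{Step 3 (algorithm).} Given the decomposition, process the layers in increasing order $V^R_{1,1}<\dots<V^R_{1,\gamma}<V^C_1<V^R_{2,1}<\dots$. Every node of a rake sublayer has all of its lower neighbors already processed, so in one round it can compute the type of its subtree from the types of its children. Every compress path has length in $[\ell,2\ell]$, so its bipolar type is computed in $O(\ell)=O(1)$ rounds. This upward pass takes $O(k\gamma+k\ell)$ rounds. In the downward pass, layers are processed in reverse order. Each node, knowing the label already fixed at its higher-layer neighbor, picks labels for itself and its subtree boundary from the good labeling function of Step 2; Step 2 guarantees that such labels exist. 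This pass also takes $O(k\gamma)$ rounds. Correctness follows because every local check either lies inside a single type computation, where it holds by the definition of types, or lies across a boundary, where it holds by the choice of $\Sigma_i$. Adding $T_{dec}$ for computing the decomposition gives the stated bound.
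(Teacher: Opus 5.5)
The paper takes this lemma from \cite{chang20} without proof, so I compare your proposal with that argument. Your outline has the right shape: types, pumping, using $\A$ to certify that good labelings exist, then an upward and a downward pass over the layers. However, Step 3 does not work as written. A path of $V^C_i$ has \emph{two} higher-layer neighbors, one at each end, and in your downward pass they fix their labels independently. The bipolar type of the path is a relation between the two end labelings and is in general not a product set, so nothing guarantees a completion.

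Step 2 tries to absorb this into global label sets $\Sigma_{i+1}$ such that the path can be completed for any end labels, ``regardless of the types of the hanging subtrees''. For the general LCLs covered here (inputs, radius $r$), the admissible boundary labelings depend on the class, and you give no argument that such type-independent sets exist. If you instead read Step 2 existentially, the downward pass breaks.

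The missing idea, from the Chang--Pettie framework that \cite{chang20} extends, is to \emph{decouple} each compress path during the upward pass. A labeling function of the path's bipolar class fixes the labels on a middle portion of the path. This cuts it into two rooted trees, each hanging from exactly one higher-layer node and contributing a single class to it. The downward pass then only needs rooted-tree extendibility. ``Good'' becomes a property of the finite family of classes produced at each level, not of label sets.

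Second, your existence argument for Step 2 is not yet an argument. ``No compatible restriction exists'' is a condition with alternating quantifiers over classes and labels, and you never say how it yields an instance on which $\A$ errs. The standard route is constructive:
\begin{enumerate}
\item Take a level-$i$ bipolar class. Pump the two sides of its middle to length $w$, and recursively pump the lower-level compress paths inside canonical representatives of the hanging trees.
\item This gives a virtual tree with at most $k-1$ levels of pumped paths, hence $N=\Theta(w^{k-1})$ nodes.
\item Run $\A$ on it with $n=N$ and copy the labels $\A$ places on the middle.
\end{enumerate}
Since $\A$ has locality $o(N^{1/(k-1)})=o(w)$, these labels depend only on a pumped neighborhood built canonically from the class, so they are a well-defined function of the class. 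Extendibility of the combined partial labeling in the real tree then follows from class equivalence and pumping, together with the validity of $\A$'s output on such virtual trees. This is exactly where the $o(n^{1/(k-1)})$ hypothesis enters. Your sketch identifies the right scaling, but not this mechanism.
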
 

Together with this result from \cite{CP19timeHierarchy} we completely settle the types of polynomial complexities that exist in the LOCAL model, where a linear bound on $n$ is given.
\begin{lemma}[\cite{CP19timeHierarchy}]\label{lem:all-problems2}
Let $k$ be a positive integer. Assume a $(\gamma, \ell , O(\log n))$-decomposition can be computed in $T_{dec}$. If an LCL $\Pi$ admits an $n^{o(1)}$ algorithm, then $\Pi$ can be solved in time $T_{dec} + \gamma \log n$. 
\end{lemma} 

We get as a corollary.
\begin{corollary}[\cite{CP19timeHierarchy,ChangHLPU20_treeLLL}]\label{cor:complexityClasses}
    Let $\Pi$ be an LCL on trees in the LOCAL model, where nodes know a linear upper bound on $n$, then one of the following is true. 
    \begin{itemize}
        \item There exists a $k \in \N$, such that $\Pi$ has complexity $\Theta(n^{1/k})$.
        \item $\Pi$ has complexity $O(\log n)$.
    \end{itemize}
\end{corollary}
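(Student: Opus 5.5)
The corollary follows by combining the decomposition lemmas with the completeness lemmas and the classical gap results on trees. The plan is a case split on how fast $\Pi$ can be solved at all. Throughout, we work in the LOCAL model where a linear upper bound on $n$ is known, so \Cref{lem:polyDecomp} and \Cref{lem:logDecomp} apply with $\ell\in O(1)$.

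\emph{Case 1: $\Pi$ admits an $n^{o(1)}$-round algorithm.} We use \Cref{lem:logDecomp} to compute a $(\gamma,\ell,O(\log n))$-decomposition with $\gamma,\ell\in O(1)$ in $T_{dec}=O(\log n)$ rounds. Then \Cref{lem:all-problems2} solves $\Pi$ in $T_{dec}+\gamma\log n=O(\log n)$ rounds, which is the second alternative.

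\emph{Case 2: $\Pi$ has no $n^{o(1)}$-round algorithm.} Every LCL on a bounded-degree tree is solvable in $O(n)$ rounds by gathering the whole tree, so $\Pi$ is solvable in $o(n^{1/(k-1)})$ rounds for $k=2$. Let $k\ge 2$ be the largest integer such that $\Pi$ admits an $o(n^{1/(k-1)})$-round algorithm.
\begin{itemize}
\item Such a largest $k$ exists. Otherwise $\Pi$ is solvable in $o(n^{1/(k-1)})$ rounds for every $k$. By \Cref{lem:all-problems} combined with \Cref{lem:polyDecomp}, it is then solvable in $O(n^{1/k})$ rounds for every $k$, i.e.\ in $n^{\epsilon}$ rounds for every fixed $\epsilon>0$. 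This by itself does not give a single $n^{o(1)}$ algorithm, because of the usual uniformity issue. Here we invoke the Chang--Pettie gap theorem, that no LCL on trees has complexity in $\omega(\log n)\cap n^{o(1)}$, in its constructive form: an $n^{1/k}$-round algorithm for sufficiently large $k$ (depending on $\Pi$) already yields an $O(\log n)$ algorithm. This contradicts the assumption of Case 2.
\item Upper bound. With this maximal $k$, \Cref{lem:all-problems} together with \Cref{lem:polyDecomp} gives runtime $O(kn^{1/k})+k\gamma=O(n^{1/k})$, since $\gamma=\Theta(n^{1/k})$ and $\ell$ is constant.
\item Lower bound. By maximality of $k$, $\Pi$ has no $o(n^{1/k})$-round algorithm, so its complexity is $\Omega(n^{1/k})$. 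Hence the complexity is exactly $\Theta(n^{1/k})$.
\end{itemize}

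The main obstacle is the boundary behaviour: guaranteeing that the chain of classes $\Theta(n^{1/k})$ terminates, and that an algorithm running in $n^{1/k}$ rounds for all $k$ collapses to $O(\log n)$. For this step we rely on the cited gap results of \cite{CP19timeHierarchy,ChangHLPU20_treeLLL} rather than reproving them. A second point to check is the measure used in the lower bound: the claim of $\Omega(n^{1/k})$ under the standard notion of complexity needs a little care, since "no $o(n^{1/k})$ algorithm" is weaker than Knuth's $\Omega$. Here we take the standard convention from the cited works.
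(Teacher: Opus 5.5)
Your route is the one the paper intends. The paper gives no separate proof and reads the corollary off by pairing \Cref{lem:polyDecomp} with \Cref{lem:all-problems}, and \Cref{lem:logDecomp} with \Cref{lem:all-problems2}, exactly as in your two cases.

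\textbf{The gap.} One step fails as written. You argue that the trivial $O(n)$ gathering bound makes $\Pi$ solvable in $o(n^{1/(k-1)})$ rounds for $k=2$. But $o(n^{1/(2-1)})=o(n)$, and an $O(n)$ bound does not give an $o(n)$ bound. LCLs with no $o(n)$-round algorithm exist, for instance $2$-coloring or $1$-hierarchical $2\frac{1}{2}$-coloring. For such a $\Pi$ the set of admissible indices $k\ge 2$ is empty, which breaks three things:
\begin{enumerate}
\item your ``largest $k$'' is undefined;
\item your existence argument only treats the alternative that the set is unbounded, so it does not cover this case;
\item since you force $k\ge 2$, your proof can never place $\Pi$ in the class $\Theta(n^{1/1})$, which the statement explicitly allows.
\end{enumerate}
The repair is a separate case. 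If $\Pi$ has no $o(n)$-round algorithm, then with the $O(n)$ upper bound its complexity is $\Theta(n)$, i.e.\ $k=1$. Only when an $o(n)$ algorithm exists should you define $k\ge 2$ as the largest admissible index.

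\textbf{Termination of the chain.} This is a smaller point. You invoke a ``constructive'' gap theorem: an $n^{1/k}$-round algorithm for some fixed large $k=k_\Pi$ already yields $O(\log n)$. That is stronger than \Cref{lem:all-problems2}, which assumes an $n^{o(1)}$ algorithm, and you do not need it. All nodes share the same linear upper bound $N$, so there is no uniformity issue. Suppose $\Pi$ had an $O(n^{1/k})$-round algorithm for every $k$. Let the nodes pick $k=k(N)$, growing so slowly that the $k(N)$-th algorithm already runs within $n^{2/k(N)}$ rounds on every instance compatible with $N$, and run it. This single algorithm takes $n^{o(1)}$ rounds. Then \Cref{lem:logDecomp} and \Cref{lem:all-problems2} put $\Pi$ in Case 1, contradicting the Case 2 assumption. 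The uniformity obstruction you have in mind only appears when $n$ is unknown, which is exactly what the rest of the paper exploits.

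Your reading of the lower bound (solvable in $O(n^{1/k})$ but not in $o(n^{1/k})$) matches the convention of the cited works and is fine.
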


Importantly, \Cref{cor:complexityClasses}, does not guarantee that these complexity classes actually exist. However, the complexity of $k$-hierarchical $2\frac{1}{2}$-coloring implies that such classes are non-empty. 

Given the importance of rake and compress decompositions, in our work we study the complexity of these problems. Unfortunately, the problem of computing a $(\gamma,\ell,k)$-decomposition cannot be expressed as an LCL problem. For this reason, we introduce the $k$-rake-and-compress family of LCL problems, which requires us to compute a decomposition where the value of $\gamma$ does not matter.
This idea was first introduced in \cite{fullversion-two}.

\begin{definition}[$k$-rake-and-compress]\label{def:decompLCL}
For any integer $k$, the output of the $k$-rake-and-compress problem is a partial orientation of the edges and an assignment of one label out of $\Sout = \{R_1, \dots, R_k, C_1, \dots, C_{k-1}\}$ to each node. The labels $R_1, \dots, R_{k}$ are called \emph{rake labels} and $C_1,\dots,C_{k-1}$ compress labels. Any legal labeling must satisfy the following rules, based on the ordering of the labels $R_1 < C_1 <R_2<C_2<R_3 <\dots<C_{k-1}< R_k$:
\begin{enumerate}
    \item All edges adjacent to at least one node labeled rake must be oriented, while the other edges must not be oriented. \label{rule2:orientRakes}
    \item Each node $v$ has at most one edge $e=(v,u)$ oriented outgoing, except for compress nodes that have two compress neighbors. Such compress nodes must not have any outgoing edge. \label{rule2:oneOutgoing}
    \item For all oriented edges $(u,v)$ the label of $v$ is larger than or equal to the label of $u$. \label{rule2:orderedOrientation}
    \item For all compress labels, the subgraph induced by the nodes of that label consists only of disjoint paths. \label{rule2:compressPaths}
    \item Two nodes that have a different compress label must not be adjacent.\label{rule2:pathsDisjoint}
\end{enumerate}
\end{definition}

Note that any $(\gamma, \ell, k)$-decomposition automatically also gives a solution to the $k$-rake-and-compress problem: simply have nodes output their respective rake or compress layer. However, the converse may not hold, since in a solution for $k$-rake-and-compress there is no bound on the diameter of the connected components induced by nodes belonging to the same rake layer.

We now provide some intuition on the rules used to define $k$-rake-and-compress.
\begin{itemize}
    \item Because of \cref{rule2:orientRakes,rule2:oneOutgoing} all components of nodes that output rake labels are consistently oriented.
    \item Because of \cref{rule2:compressPaths,rule2:pathsDisjoint} compress labels are only used in paths and two different compress paths are separated by at least one rake label.
    \item \Cref{rule2:orderedOrientation} means that we have to keep track of the number of compresses that we have already performed. To handle long paths we will simply let the nodes output a compress label and then have both endpoints pick a strictly larger rake label. Then we orient the edges connecting the endpoints towards the endpoints. Clearly this works only if we still have an available compress label.
\end{itemize}

\subsection{Background for the Unusual Complexities}
In computer science, the most common definition of the set $\Omega(f(n))$ is due to Knuth. However, there exists an alternative, but incompatible, definition due to Hardy and Littlewood. In \Cref{sec:LBpolyPromise} we prove a lower bound for the Hardy-Littlewood definition of $\Omega$, diverging from common notation. Surprisingly, we prove that this deviation to a strictly weaker version of this lower bound is necessary.

We have another case of unusual behavior that we encounter in this work. In \Cref{sec:rand}, we prove tight upper and lower bounds for a complexity that involves a function $f(n)$, such that $f(f(n)) = \log n$. Essentially, this function is the functional square root of the logarithm. In other words, it is obtained by taking half a logarithm. 

\subsubsection{Different Definitions of \texorpdfstring{$\Omega$}{Ω}}\label{subsec:omega_definition}
Two different definitions of $\Omega$ exist in the literature.
Let $f:\mathbb{N}\to\mathbb{N}$ and $g:\mathbb{N}\to\mathbb{N}$ be two functions.
The first is the Hardy-Littlewood definition:
\[
   f(n) \in \Omega(g(n)) \iff \exists k>0, \forall n_{0}, \exists n>n_{0}, |f(n)| \geq k\cdot |g(n)|.
\]
The second is the Knuth definition:
\[
    f(n) \in \Omega(g(n)) \iff \exists k>0, \exists n_{0}, \forall n>n_{0}, |f(n)| \geq k\cdot |g(n)|.
\]

The Hardy-Littlewood version indicates that $f$ is larger than $g$ infinitely often. On the other hand, Knuth's version is stronger and indicates that, after some starting point $n_0$, $f$ is \emph{always} bounded below by $g$.
For instance, for $f(n)=n^{(-1)^n}$ and $g(n)=n$, we have $f(n)=\Omega(g(n))$ according to Hardy-Littlewood's definition, but not according to Knuth's definition.

This difference will be key in some of our results, and we seem to have stumbled upon one of the rare cases where we cannot just rely on Knuth's definition.
Indeed, \Cref{thm:no-kuth-LB} proves that without using the Hardy-Littlewood definition, we cannot improve the lower bounds of any LCL with complexity $\Theta(n^{1/{k}})$ in the standard LOCAL model.

The only part in which the Hardy-Littlewood definition is used in this work is in \Cref{sec:LBpolyPromise}, in which we are very explicit about what definition we use. In all other parts of this work, we use the Knuth's definition.

\subsubsection{Taking Half a Logarithm}\label{subsec:halflog}
In \Cref{sec:rand}, where we study the randomized complexity of $2\frac{1}{2}$-coloring in the case where nothing is known about $n$, the complexity of the $3$-hierarchical $2\frac{1}{2}$-coloring problem turns out to not be easily expressible by analytic functions or generally by functions that we usually use to analyze the complexity of algorithms. The complexity of $3$-hierarchical $2\frac{1}{2}$-coloring turns out to essentially be of the order $n/f(n)$, where $f(x)$ is a function for which $f(f(x))=\ln x$. Such a function $f(x)$ is also known as a \emph{functional square root} of $\log x$~\cite{sqrtoflog}. The inverse function of $f(x)$ is a so-called half-exponential function $h(x)$ for which $h(h(x))=\exp(x)$. It was shown in \cite{halfexponential}, that there are infinitely many functions $h(x)$ that satisfy this identify and that are strictly monotonically increasing, continuous and continuously differentiable. The same is therefore also true for functions $f(x)$ satisfying $f(f(x))=\ln x$. In \cite{sqrtoflog}, it is shown that if $f(x)$ is restricted to a family of functions that can be expressed by some specific power series, then $f(x)$ is indeed unique. To the best of our knowledge such half-logarithmic and half-exponential functions have not been widely studied. They have however in particular appeared in the context of circuit complexity~\cite{halfexp_in_complexity}.

In the context of this paper, we assume for every $a>1$ a half-logarithmic function $f_a(x):\mathbb{R}\to\mathbb{R}$ that satisfies $f_a(f_a(x))=\log_a(x)$ (for $x\geq a$) and that has the following properties.
\begin{itemize}
    \item $f_a(x)$ is continuous and strictly monotonically increasing in $x$
    \item For all $1<a<b$ and all sufficiently large $x$, we have $f_b(x)< f_a(x)$.
\end{itemize}
Such a family of functions $f_a(x)$ might not be efficiently or explicitly constructible, but it definitely exists. Under those assumptions, we can show that $f_a(x)$ satisfies some useful additional properties. One can in particular show that for any fixed $1<a<b$, we have $f_a(x)=\Theta(f_b(x))$. To see this, let $c:=\log_a b$ and define $g_a(x)$ as $g_a(x):=(1/c)\cdot f_a(x)$. We have
\[
g_a(f_a(x)) \leq \frac{f_a(f_a(x))}{c} = \frac{\log_a x}{c} = \log_b x.
\]
Since $f_b(x)<f_a(x)$, we have $\log_b x = f_b(f_b(x)) < f_b(f_a(x))$ and thus $g_a(x) = O(f_b(x))$. As the base of the logarithm does not affect the asymptotic behavior of $f_a(x)$, we will use $f(x)$ instead of $f_a(x)$, wherever this is appropriate.

While we cannot specify the asymptotic behavior of the half-logarithmic function $f(x)$, we can sandwich $f(x)$ between some easily representable functions. For any constant $\epsilon>0$, one can for example verify that for $h(x)=\exp(\ln^\epsilon x)$, we have $h(h(x))=\exp(\ln^{\epsilon^2}x)=\omega(\log x)$. Moreover, if we use the notation $g^{(k)}(x)$ to denote the $k$-fold application of a function $g(x)$, we can define the following family of functions $h_{k,\epsilon}$ for every integer $k\geq 1$ and $\epsilon>0$:
\[
h_{k,\epsilon}(x) := \exp^{(k)}\left(\big[\ln^{(k)} x\big]^\epsilon\right).
\]
For all fixed $k$ and $\epsilon$, one can verify that $h_{k,\epsilon}^{(2)}(x)=\omega(\log n)$ and therefore $h_{k,\epsilon}(x)=\omega(f(x))$ for the half-logarithm function $f(x)$.
Note that for every $k\geq 1$, for every $\epsilon_1\in(0,1)$ and every $\epsilon_2>0$, we have $h_{k+1,\epsilon_1}(x)=o(h_{k,\epsilon_2}(x))$ even if $\epsilon_2$ is much smaller than $\epsilon_1$. For integers $k\geq 1$ and constant $c>0$, one can similarly build a family of functions $\ell_{k,c}$ as
\[
\ell_{k,c}(x) := \exp^{(k)}\left(\big[\ln^{(k+1)} x\big]^c\right).
\]
Here, one can verify that for every $k\geq 1$ and every $c>0$, $\ell_{k,c}^{(2)}(x)=o(\log x)$ and thus $\ell_{k,c}(x)=o(f(x))$. In this case, as long as $c>1$, for every $k$, we have $\ell_{k+1,c}(x)=\omega(\ell_{k,c'}(x))$ even if the constant $c'$ is chosen much larger than $c$.

\section{LOCAL Without Extra Knowledge}\label{sec:lb-no-input}
To answer our question about how the LOCAL model is influenced by the initial knowledge provided to the nodes, we first investigate the most restrictive setting. In this section nodes are restricted to deterministic computation and are given unique IDs that are arbitrary natural numbers. Nodes are not provided with any other input, so they do not know anything about $n$ and the IDs can be arbitrarily large. Note that without IDs we run into some trivial impossibilities and we do not want to restrict the model to the extent where problems become trivially unsolvable.

Even though we kept our model strong enough to be able to still solve all considered problems, we will prove that this model is nevertheless too restrictive. We first prove that $k$-hierarchical $2\frac{1}{2}$-coloring requires $\Omega(n)$ rounds. Note that in the LOCAL model $O(n)$ rounds is a trivial upper bound for any solvable problem, since in $O(n)$ rounds every node can learn the entire network with all inputs and brute force a solution through local computation.

We prove lower bounds for the two most extensively studied families of problems with polynomial complexities, namely $k$-hierarchical $2 \frac{1}{2}$-coloring and $k$-rake-and-compress.
We start by defining a family of graphs, that we call \emph{$k$-hierarchical lower bound graphs}.
\begin{definition}[$k$-hierarchical lower bound graph]\label{def:LowerBoundGraph}
    Let $\ell_1,\ldots,\ell_k$ be positive integers.
    A $k$-hierarchical lower bound graph with parameters $\ell_1,\ldots,\ell_k$ is defined recursively as follows.

    Start from a path $P_k$ of $\ell_k$ nodes. The path $P_k$ is called \emph{path of level $k$}, and its nodes are called \emph{nodes of level $k$}. Let $T_k = P_k$.
    Then, recursively, for $i = k-1,\ldots,1$ do the following. For each path $P_{i+1}$ of level $i+1$, for each node $v$ of $P_{i+1}$:
    \begin{itemize}
        \item if $v$ has degree $2$ in $T_{i+1}$, create a path $P_i$ of $\ell_{i}$ nodes, and connect one endpoint of $P_i$ to $v$;
        \item if $v$ has degree $1$ in $T_{i+1}$, create two paths $P_i$, both of $\ell_{i}$ nodes, and for each $P_i$ connect one endpoint of $P_i$ to $v$.
    \end{itemize}
    The paths $P_i$ are \emph{paths of level $i$} and their nodes are \emph{nodes of level $i$}. Let $T_i$ be the obtained graph.
\end{definition}

\begin{observation}\label{obs:nodes-k-hierarchical}
Let $L_i = \prod_{1 \le j \le i} \ell_j$.
For some constants $c_1 \ge 1$ and $c_2$ that depend solely on $k$, in a $k$-hierarchical lower bound graph with parameters $\ell_1,\ldots,\ell_k$, each component of the subgraph induced by nodes of level at most $i$ contains at least $c_1 \cdot L_i$ and at most $c_2 \cdot L_i$ nodes.
\end{observation}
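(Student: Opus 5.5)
The plan is a direct counting argument by induction on $i$, following the recursive construction in \Cref{def:LowerBoundGraph}. Fix $i$ and a component $K$ of the subgraph induced by nodes of level at most $i$. Level $i+1$ nodes (or, for $i=k$, nothing) are removed, so $K$ is exactly one path $P_i$ of level $i$ together with all lower-level paths hanging below it. I will write $N_i$ for the number of nodes in such a component and prove $L_i \le N_i \le 3^{i} L_i$. This gives the claim with $c_1 = 1$ and $c_2 = 3^k$, both depending only on $k$.

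For the base case $i=1$, $K$ is a single path $P_1$, so $N_1 = \ell_1 = L_1$.

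For the inductive step, $K$ consists of $P_i$ ($\ell_i$ nodes) and, for each node $v$ of $P_i$, either one or two attached components of level at most $i-1$. Each of these is a path $P_{i-1}$ with its descendants, hence has $N_{i-1}$ nodes; by symmetry of the construction all such components are isomorphic, and this is the one point where the construction must be checked.
\begin{itemize}
\item A node $v$ of $P_i$ gets one path if it has degree $2$ in $T_{i}$ at the moment of attachment, and two paths if it has degree $1$.
\item Since $v$ also has at least one neighbour in $P_i$ (when $\ell_i\ge 2$) or a parent attachment, this degree is $1$ or $2$. A degree-$0$ node would only arise for $\ell_k = 1$ at the top, and I would treat that edge case by allowing at most two attachments as well (or assume $\ell_i \ge 2$).
\end{itemize}
So each node receives between $1$ and $2$ subcomponents, and
\[
\ell_i + \ell_i N_{i-1} \le N_i \le \ell_i + 2\ell_i N_{i-1}.
\]
Using $N_{i-1}\ge L_{i-1}\ge 1$ gives $N_i \ge \ell_i L_{i-1} = L_i$ for the lower bound. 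For the upper bound, $N_i \le 3\ell_i N_{i-1} \le 3^{i} L_i$.

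The only mild obstacle is bookkeeping. First, the degree used in the definition is taken in $T_{i+1}$ before the level-$i$ paths are added, so endpoints of $P_{i+1}$ receive two paths. Second, one must confirm that the level structure of \Cref{sec:2.5Col} coincides with the constructed levels, which makes "component of nodes of level at most $i$" well defined. Both follow by reading off the construction, and the resulting constants depend only on $k$.
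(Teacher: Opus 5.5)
Your counting induction is correct, and it is the direct verification the paper leaves implicit: the paper states this as an observation with no proof. Each level-$i$ node receives one or two isomorphic level-$(i-1)$ subcomponents, which gives $L_i \le N_i \le 3^i L_i$.

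The one thing to tighten is the edge case $\ell_k = 1$. There the lone top node has degree $0$ and receives no attached paths, so the lower bound genuinely fails at $i=k$. ``Allowing at most two attachments'' therefore does not rescue it. You should instead assume $\ell_k \ge 2$, which holds in every application in the paper (e.g.\ $\ell_i \ge 100$).
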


\begin{observation}\label{obs:one-2-colored-path}
    Let $G$ be a $k$-hierarchical lower bound graph. Then, any solution for $k$-hierarchical $2\frac{1}{2}$-coloring on $G$ must satisfy that there exists an index $i \in \{1,\ldots,k\}$ and a path $P_i$ of level $i$ that is properly $2$-colored.
\end{observation}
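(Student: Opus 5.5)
The proof has two parts. First I check that in a $k$-hierarchical lower bound graph the levels $L_1,\ldots,L_k$ of the problem definition coincide with the construction levels of \Cref{def:LowerBoundGraph}. Then I run an induction over the levels, using the constraints of $k$-hierarchical $2\frac{1}{2}$-coloring, to show that if no path of level $<k$ is properly $2$-colored, the level-$k$ path must be.

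\textbf{Levels coincide.} I will show by induction on $i$ that after deleting all construction-level-$<i$ nodes, every node of construction level $i$ has degree at most $2$ and every node of construction level $>i$ has degree exactly $3$. It then follows that $L_i$ is exactly the set of construction-level-$i$ nodes.

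Take a node $v$ on a path $P_{j}$ with $j>i$. In $T_{j}$, $v$ has either degree $2$ and receives one attached path, or degree $1$ and receives two. Either way its degree counted only over nodes of level $\ge j-1$ is $3$, and all three of these neighbours survive the deletion of levels $<i\le j-1$. A node of $P_i$, on the other hand, has only its at most two path-neighbours in $P_i$, plus possibly its parent of level $i+1$; the endpoint attached to the parent lies on the path and has degree at most $2$.

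Here I need the degree-$1$ case of \Cref{def:LowerBoundGraph} to cover path endpoints. The degenerate case of a single-node path, which has degree $0$, I will either exclude by assuming $\ell_i\ge 2$ or handle in the same way. Consequently, the only lower-level neighbours of a level-$(i+1)$ node are nodes of level $i$, and each level-$i$ component is exactly one path $P_i$.

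\textbf{Key local fact.} Consider a path $P$ of one level in which no node outputs $E$. I claim $P$ is either properly $2$-colored or entirely $D$. By constraint 3, a node outputting $W$ or $B$ cannot have a same-level neighbour with output $D$ or with the same colour. With $E$ excluded, every neighbour along $P$ of a coloured node must therefore carry the opposite colour. Propagating this along the path, the whole path is properly $2$-colored as soon as a single node of $P$ is coloured.

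\textbf{Induction.} Suppose, for contradiction, that no path of any level $i<k$ is properly $2$-colored.

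For $i=1$: level-$1$ nodes cannot output $E$ by constraint 2. So by the key fact every $P_1$ is entirely $D$.

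For the step, assume all paths of level $i<k$ are entirely $D$. A level-$(i+1)$ node has lower-level neighbours only on level $i$, and all of them output $D$, so by constraint 4 it cannot output $E$. The key fact then says every $P_{i+1}$ is properly $2$-colored or all $D$. If $i+1<k$, the assumption rules out the first option.

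Carrying this up to level $k$, no node of $P_k$ may output $E$, and by constraint 2 none may output $D$. The key fact therefore forces $P_k$ to be properly $2$-colored, which gives the required index $i=k$.

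The only delicate point is the bookkeeping in the first part, where the endpoint degrees must come out right. Everything after that is a direct application of the constraints.
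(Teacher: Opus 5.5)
Your proof is correct and follows the paper's induction. If every path below level $i$ is entirely $D$, then no level-$i$ node can output $E$, so each level-$i$ path is either properly $2$-colored or all $D$, and level $k$ cannot decline; your check that the problem's levels coincide with the construction levels, and your explicit propagation argument, fill in details the paper takes for granted. One small correction: the degenerate case $\ell_k=1$ cannot be ``handled in the same way'', because for $k\ge 2$ the graph is then a single level-$1$ node that may legally output $D$, so the case must be excluded; this matches the paper, where all $\ell_i$ are large.
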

\begin{proof}    
    By induction on $i$, suppose that all nodes on all paths $P_j$ for $j < i$ are labeled $D$. By the definition of $k$-hierarchical $2\frac{1}{2}$-coloring, either there exists a path $P_i$ that is properly $2$-colored, or all nodes on all paths $P_i$ are also labeled $D$. 

    Since the definition of $k$-hierarchical $2\frac{1}{2}$-coloring does not allow nodes of $P_k$ to be labeled $D$, we get that, if there is no path $P_j$ for $j < k$ that is properly $2$-colored, then $P_k$ is properly $2$-colored.
\end{proof}

With this we can prove our first result.
\HalfColNoBounds*
\begin{proof}
Let $\mathcal{A}$ be an algorithm that solves $k$-hierarchical $2\frac{1}{2}$-coloring. Assume for a contradiction that for all $\epsilon > 0$, for all integers $n_0$, there exists some $n = f(\epsilon,n_0) > n_0$, such that the algorithm $\mathcal{A}$ terminates in at most $\epsilon n$ rounds on all $n$-node graphs. 

In the following, let $c_1$ and $c_2$ be the constants guaranteed to exist by \Cref{obs:nodes-k-hierarchical}.
Let $L_0 = 1$ and $L_i = \prod_{1 \le j \le i}\ell_j$. 
Let $N_{i+1} = f(\frac{1}{20 c_2 L_i}, 100 c_2 L_i )$
and $\ell_{i+1} = \lfloor N_{i+1} / (c_2 L_i) \rfloor$ for all $i \in \{0,\ldots,k\}$.  Observe that $\ell_i \ge 100$ for all $i \in \{1,\ldots,k\}$. By assumption, in instances of size $N_i$ the algorithm must terminate in at most $N_i / (20 c_2 L_{i-1})$.

Let $G$ be the $k$-hierarchical lower bound graph with parameters $\ell_1,\ldots,\ell_k$, labeled with an arbitrary ID assignment. Let $T_i = \ell_i / 10$. We prove by induction on $i$ that, by running $\mathcal{A}$ on $G$, for each path $P_i$ of level $i$ it holds that there are two nodes $u,v \in P_i$ satisfying the following properties:
\begin{enumerate}
    \item both $u$ and $v$ are at distance at least $T_i + 1$ from the endpoints of $P_i$;
    \item $u$ and $v$ are at distance at least $2 T_i + 1$ from each other;
    \item both $u$ and $v$ terminate in at most $T_i$ rounds;
    \item both $u$ and $v$ output $D$;
\end{enumerate}
The above properties, combined with the definition of $k$-hierarchical $2\frac{1}{2}$-coloring, imply that the whole graph $G$ is labeled $D$, which contradicts \Cref{obs:one-2-colored-path} and hence proves the theorem.

The base case $i=0$ trivially holds (by the definition of the $k$-hierarchical lower bound graph, paths $P_0$ do not exist).
Let $P_i$ be an arbitrary path of level $i$. Let $u$ and $v$ be the two nodes at distance exactly $T_i + 1$ from (at least) one of the endpoints of $P_i$. 
Nodes $u$ and $v$ clearly satisfy property 1. Since $P_i$ has length $\ell_i$, since $\ell_i \ge 100$, and since $u,v$ are at distance $\ell_i/10+1$ from the endpoints, property 2 is also clearly satisfied.
Assume for a contradiction that property 3 does not hold, and hence that $u$ or $v$ runs for strictly more than $T_i$ rounds. W.l.o.g., let this node be $v$. We consider the graph $G'$ obtained by taking the connected component containing $P_i$ in the subgraph of $G$ induced by all nodes of level at most $i$. By \Cref{obs:nodes-k-hierarchical}, $G'$ has at most $c_2 \cdot L_i$ nodes. Since $\ell_{i} \le N_{i} / (c_2 L_{i-1})$, we get that $G'$ has at most $N_i$ nodes. We modify $G'$ by taking an arbitrary path $P_1$ at distance strictly larger than $T_i$ from $v$ and adding nodes in order to get a graph $G''$ of exactly $N_i$ nodes. By assumption, the algorithm $\mathcal{A}$, when run on $G''$, must terminate in at most $N_i / (20 c_2 L_{i-1})$ rounds. Since  $\ell_{i} \ge  N_{i} / (c_2 L_{i-1}) - 1 \ge N_{i} / (2 c_2 L_{i-1})$, we get that $N_{i} \le 2 c_2 L_{i-1} \ell_{i}$. Hence, the algorithm must terminate in at most $N_i / (20 c_2 L_{i-1}) \le 2 c_2 L_{i-1} \ell_{i} / (20 c_2 L_{i-1}) = \ell_i / 10 = T_i$ rounds. Since within distance $T_i$ the view of node $v$ is the same on $G$ and $G''$, and no additional input is provided to $v$, then the runtime of $v$ must be at most $T_i$ also on $G$, contradicting the assumption that on $G$ node $v$ runs for strictly more than $T_i$ rounds, and hence proving property 3.
We now prove that property 4 holds. Suppose for a contradiction that $u$ or $v$ output a label different from $D$. By the definition of the problem, since all nodes of lower layers are outputting $D$, it cannot be $E$, and hence it must be a color. By the constraints of the problem we get that $P_i$ must be properly $2$-colored. Let $z$ be a node of $P_i$ that is at distance strictly larger than $T_i$ from both $u$ and $v$ and that lies on the path connecting $u$ to $v$ (which, by properties 1 and 2, must exist). Let $z_1$ and $z_2$ be the two neighbors of $z$ in $P_i$. Consider the graph $\hat{G}$ obtained by removing from $G$ the node $z$ and all the nodes of lower layers reachable from it by not passing from $z_1$ nor $z_2$, and then adding an edge connecting $z_1$ and $z_2$. Observe that the view of $u$ and $v$ in $\hat{G}$ is the same as their view on $G$, and hence they run for the same time and they produce the same output. However, in $\hat{G}$, the parity of the distance between $u$ and $v$ is different from the parity in $G$. Thus, the produced $2$-coloring cannot be proper, reaching a contradiction and hence proving property $4$.
\end{proof}

In the following, by \emph{rake node of level $i$}  (resp.\ compress node of level $i$) we denote a node with output label $R_i$ (resp.\ $C_i$). By \emph{type} of a node we denote its output label. 
Before proving a result about $k$-rake-and-compress, we observe some useful properties about rake-and-compress decompositions.
\begin{lemma}\label{obs:rc-path}
    Let $v$ be a rake node of level $i$. Let $t(u)$ be the type of node $u$ in the $k$-rake-and-compress decomposition.
    Then, there must exist a path $v_1,\ldots, v_l$ satisfying the following:
    \begin{itemize}
        \item $v_1$ is a node of degree $\le 2$;
        \item $v_l = v$;
        \item $t(v_j) \le t(v_{j+1})$ for all $j$, where the ordering is the one from \Cref{def:decompLCL};
        \item if $v_j$ is a rake node, it points to $v_{j+1}$.
    \end{itemize}
\end{lemma}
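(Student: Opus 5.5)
We build the path backwards from $v$ by a walk that moves against the orientation, which keeps the types non-increasing. The walk must stop because the tree is finite and we never revisit a node. We then argue that the node where it stops has degree at most $2$. Finally we reverse the sequence to obtain $v_1,\ldots,v_l$.

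\textbf{The step rule.} Start with $w_0 = v$. Given the current node $w_t$, we choose a predecessor $w_{t+1} \neq w_{t-1}$ with $t(w_{t+1}) \le t(w_t)$, as follows.
\begin{itemize}
\item If $w_t$ is a rake node, then by \cref{rule2:oneOutgoing} it has at most one outgoing edge. The edge to $w_{t-1}$ is that outgoing edge, since by construction $w_t$ points to $w_{t-1}$. If $\deg(w_t) \ge 3$, at least two other incident edges are incoming; we pick one, $w_{t+1} \to w_t$. By \cref{rule2:orderedOrientation}, $t(w_{t+1}) \le t(w_t)$, and if $w_{t+1}$ is a rake node it points to $w_t$, as required.
\item If $w_t$ is a compress node $C_j$ with $\deg(w_t) \ge 3$, then by \cref{rule2:compressPaths} it has at most two neighbors labeled $C_j$. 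By \cref{rule2:pathsDisjoint} no neighbor carries a different compress label, so at least one neighbor other than $w_{t-1}$ is a rake node, and the edge to it is oriented by \cref{rule2:orientRakes}. We need some non-previous neighbor that is either a same-level compress node or a rake node pointing into $w_t$.
\end{itemize}

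\textbf{The obstacle: outgoing edges of compress nodes.} The difficulty is the compress case, since a compress node may have outgoing edges. By \cref{rule2:oneOutgoing}, a compress node with two compress neighbors has no outgoing edge. Otherwise it has at most one outgoing edge, so among its at least $\deg(w_t)-1 \ge 2$ non-previous neighbors at least one is either a compress neighbor of the same label (unoriented edge, equal type) or a rake node pointing in. Either choice is a valid predecessor with type $\le t(w_t)$. If we step to an equal-label compress neighbor, the edge is unoriented, and the lemma imposes no pointing condition for compress nodes, so this is fine.

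\textbf{Termination.} In a tree a walk that never immediately backtracks never revisits a node, so the sequence is finite. It can only stop at a node of degree $\le 2$, because every node of degree $\ge 3$ admits a continuation by the two cases above. Reversing the sequence gives the path $v_1,\ldots,v_l$ with all four properties.

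Apart from the compress-node case, the argument is routine bookkeeping.
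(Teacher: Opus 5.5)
Your proof is correct and follows essentially the same route as the paper: walk backwards from $v$, show that every node of degree $\ge 3$ admits a non-backtracking predecessor of no larger type, and stop at a node of degree $\le 2$. The paper handles the compress case more simply by noting that every node of degree $\ge 3$, rake or compress, always has an incoming edge. So its walk uses only oriented edges and gives a path entirely oriented toward $v$, which makes your extra option of stepping along unoriented same-label compress edges valid but unnecessary.
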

\begin{proof}
    We prove that each node $u$ of degree $\ge 3$ must have at least one incoming edge. Since the graph is acyclic, and edges cannot be oriented from a layer to a strictly lower one, by recursively applying this idea starting from $v$, we obtain the lemma.

    If $u$ is a rake node of degree $\ge 3$, since all edges incident to $u$ are oriented and at most one is outgoing, then the claim follows.

    If $u$ is a compress node of degree $\ge 3$, since at most $2$ edges incident to $u$ can be unoriented (because incident to other compress nodes), and all the other cannot be outgoing, then $u$ has at least one incoming edge.
\end{proof}

By the fact that a rake node cannot have two outgoing edges, and the fact that an edge cannot be oriented from a larger level to a lower level, we observe the following.
\begin{observation}\label{obs:no-2-outgoing}
    Let $G$ be a graph. Any $k$-rake-and-compress decomposition of $G$ must satisfy the following. Let $P$ be a path that is a subgraph of $G$. Assume that all nodes of $P$ are rake nodes (of possibly different levels). Then, there exists one node $v$ in $P$ such that all edges of $P$ are oriented towards $v$. Moreover, for all nodes $u \in P$ it must hold that $t(u) \le t(v)$.
\end{observation}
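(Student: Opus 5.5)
The plan is a direct argument from \Cref{def:decompLCL}, using Rules~\ref{rule2:orientRakes}, \ref{rule2:oneOutgoing} and \ref{rule2:orderedOrientation}. Write $P = u_1, u_2, \ldots, u_m$. Every node of $P$ is a rake node, so every edge $\{u_j,u_{j+1}\}$ of $P$ has at least one rake endpoint. By Rule~\ref{rule2:orientRakes}, every edge of $P$ is therefore oriented. By Rule~\ref{rule2:oneOutgoing}, each rake node has at most one outgoing edge in $G$, hence also at most one outgoing edge among the edges of $P$.

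The first step is to locate $v$. Each edge of $P$ is outgoing for exactly one of its endpoints. The $m-1$ edges of $P$ are therefore distributed among $m$ nodes, with each node receiving at most one. Consequently exactly one node $v = u_s$ has no outgoing edge within $P$, and every other node has exactly one.

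Next we show all edges point towards $v$.
\begin{itemize}
\item Consider $u_1$ (assume $s\neq 1$). Its only $P$-edge must be its outgoing one, so it points to $u_2$.
\item Inductively, suppose $u_j$ with $j<s$ received an incoming edge from $u_{j-1}$. Its one outgoing $P$-edge must then be $\{u_j,u_{j+1}\}$, directed towards $u_{j+1}$.
\item The symmetric argument from $u_m$ handles the nodes after $v$.
\end{itemize}
Hence all edges of $P$ are oriented towards $v$.

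For the second claim, take any $u\in P$. The subpath from $u$ to $v$ is a directed path. Rule~\ref{rule2:orderedOrientation} says that the label does not decrease along each oriented edge. Chaining this along the subpath gives $t(u)\le t(v)$.

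The only mildly delicate step is the counting: it must be done within $P$ rather than in $G$, so that edges leaving $P$ play no role. No real obstacle is expected.
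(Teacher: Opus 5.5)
Your proposal is correct and follows essentially the same route as the paper, which justifies the observation in one sentence: rake nodes have at most one outgoing edge (Rule~\ref{rule2:oneOutgoing}), and labels cannot decrease along an oriented edge (Rule~\ref{rule2:orderedOrientation}). Your counting of $m-1$ edges among $m$ nodes, together with the induction from both ends of $P$, just makes explicit why these two rules force a unique sink $v$ and give $t(u)\le t(v)$ for every $u\in P$.
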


This lets us prove our lower bound for computing a rake and compress decomposition.
\RcNoBounds*
\begin{proof}
The proof of this theorem is similar to the proof of \Cref{thm:2half-no-bounds}. 
    Let $\mathcal{A}$ be an algorithm that solves $k$-rake-and-compress.  Assume for a contradiction that for all $\epsilon > 0$, for all integers $n_0$, there exists some $n = f(\epsilon,n_0) > n_0$, such that the algorithm $\mathcal{A}$ terminates in at most $\epsilon n$ rounds on all $n$-node graphs. 
    Let $L_i$, $\ell_i$, and $T_i$ be defined as in the proof of \Cref{thm:2half-no-bounds}. 
    Let $G$ be the $k$-hierarchical lower bound graph with parameters $\ell_1,\ldots,\ell_k$, labeled with an arbitrary ID assignment. We prove by induction on $i$ that, by running $\mathcal{A}$ on $G$, for each path $P_i$ of level $i$ the following holds.
\begin{enumerate}
    \item There is at least one node with a compress label of level $\ge i$.
    \item If $1 \le i < k$, let $v$ be the endpoint of $P_i$ that is connected to a node of $P_{i+1}$. Then $v$ is a rake node of level $\ge i+1$, or a compress node of level $\ge i$.
\end{enumerate}
Observe that property 1 would imply that $P_k$ contains at least one node labeled compress of level $k$, but by the definition of the problem, in a $k$-rake-and-compress decomposition, there are no compress nodes of level $k$, reaching a contradiction. 

The base case $i=0$ of property $1$ and $2$ trivially holds (by the definition of the $k$-hierarchical lower bound graph, paths $P_0$ do not exist).
We start by proving that property 1 holds for $i \ge 1$, assuming that property 2 holds for $i-1$.
Suppose for a contradiction, that there is no node labeled compress of level $\ge i$. 
By property $2$, all nodes connected to $P_i$ from lower layers are rake nodes of layer $\ge i$ or compress nodes of layer $\ge i-1$.
Since the only nodes of degree $\le 2$ are nodes of some $P_1$, we get that:
\begin{itemize}
    \item Nodes of $P_i$ cannot be compress nodes of layer $i-1$;
    \item By \Cref{obs:rc-path}, nodes of $P_i$ are rake nodes of level $\ge i$ or compress nodes of level $\ge i$.
\end{itemize}
Note that the above statement is trivial for $i=1$.
By \Cref{obs:no-2-outgoing}, there must exist exactly one node $v$ of $P_i$ satisfying that all edges of $P_i$ are oriented towards $v$. Similarly as in the proof of \Cref{thm:2half-no-bounds}, we can modify $G$ and create a different graph $\hat{G}$ such that we can pick some nodes in $G$ that: they run in $T_i$ rounds; they have the same view in $G$ and in $\hat{G}$ and hence they produce the same output in $G$ and in $\hat{G}$; their output cannot be completed into a valid solution in $\hat{G}$. This contradicts the correctness of $\mathcal{A}$. 

We now prove that, assuming property 1 holds for $i$ and property 2 holds for $i-1$,  property 2 holds also for $i$. Consider the subgraph $G'$ of $G$ induced by all nodes belonging to paths of level strictly larger than $i$. 
By property $2$, all nodes connected to $P_i$ from lower layers are either rake nodes of layer $\ge i$ or compress nodes of layer $ \ge i-1$. Since the only nodes of degree $\le 2$ are nodes of some $P_1$, we get that:
\begin{itemize}
    \item Nodes of $P_i$ cannot be compress nodes of layer $i-1$;
    \item By \Cref{obs:rc-path}, nodes of $P_i$ are rake nodes of level $\ge i$ or compress nodes of level $\ge i$.
\end{itemize}
Note that the above statement is trivial for $i=1$.
Moreover, by the definition of $k$-hierarchical lower bound graph, all nodes of $G'$ are nodes of degree $\ge 3$ in $G$. Let $v$ be an arbitrary node in $V(G) \setminus V(G')$ that is connected to a node $u \in V(G')$. We prove that, either:
\begin{itemize}
    \item node $v$ is a rake node of level $\ge i+1$ or a compress node of level $\ge i$, or
    \item node $v$ is a rake node of level $i$ and the edge $\{u,v\}$ is oriented towards $v$. (We will later prove that this case cannot apply.)
\end{itemize}
Suppose node $v$ is not a rake node of level $\ge i+1$ nor a compress node of level $\ge i$. This implies that $v$ is a rake node of level exactly $i$. Let $P_i$ be the path containing $v$, and let $z$ be the compress node of layer $i$ belonging to $P_i$ that is the nearest to $v$, which, by property $1$ must exist. Let $w$ be the neighbor of $z$ that is nearer to $v$. Let $P$ be the subpath of $P_i$ that starts at $v$ and ends at $w$. If $w$ is a rake node of level $\ge i +1$, by \Cref{obs:no-2-outgoing} we get that the path $P$ is not rooted at $v$. Since $v$ cannot have two outgoing edges, we get that the edge $\{u,v\}$ is oriented towards $v$. Otherwise, if $w$ is a rake node of level exactly $i$, then the edge $\{w,z\}$ is oriented towards $z$. By \Cref{obs:no-2-outgoing}, the path $P$ is rooted on $w$. Again, since $v$ cannot have two outgoing edges, we get that the edge $\{u,v\}$ is oriented towards $v$.
Let us summarize what we have observed so far.
\begin{itemize}
    \item Since all nodes in $G'$ have degree $\ge 3$, there are no leaves in $G'$.
    \item Let $S$ be the set of nodes $v \in V(G) \setminus V(G')$ that have a neighbor $u \in V(G')$. Every $v \in S$ is an endpoint of some $P_i$ and is a rake node of level $\ge i$ or a compress node of level $\ge i$. If $v$ is a rake node of level exactly $i$, it must hold that the edge $\{u,v\}$ is oriented towards $v$. 
\end{itemize}
We prove that each node $v \in S$ is either a rake node of level $\ge i+1$ or a compress node of level $\ge i$, establishing property $2$.
Suppose for a contradiction that there exists a node $v \in S$ that is a rake node of level exactly $i$. By \Cref{obs:rc-path}, and the fact that $v$ cannot have two outgoing edges, there must exist a path $v_1,\ldots, v_l$ that starts from a node of degree $\le 2$, contains at least one node of $G'$, and ends at $v$, such that $t(v_j) \le t(v_{j+1})$ for all $j$ and such that if $v_j$ is a rake node, then it points to $v_{j+1}$. However, by the above observations, in order for such a path to start from a node of degree $\le 2$ (and hence a node of some $P_1$) and reach a node of $G'$, it must contain some rake node of level $\ge i+1$ or some compress node of level $\ge i$. Since $v$ is a rake node of level $i$, this contradicts the ordering on the path.
\end{proof}

\section{Polynomial Upper Bound \texorpdfstring{$N$}{N} Given}\label{sec:ub-polyn-given}
In this section, we consider the model in which nodes are provided with some integer $N$ and the promise that $n \leq N \leq n^c$ for some positive integer $c$. Nodes do not know the value of $n$, but they know $c$.

For some value of $\gamma$ to be specified later, we provide an upper bound of $O(\gamma)$ for computing a $(\gamma,\ell,k)$-decomposition. By \Cref{lem:all-problems}, such an upper bound applies also to all problems that have complexity $o(n^\frac{1}{k-1})$ in the standard LOCAL model.
The value of $\gamma$ will depend on the result of a nontrivial optimization problem. Hence, we now provide an informal statement, and we will provide the precise statement in \Cref{cor:optSolution}.

\begin{lemma}\label{lem:}
    For any positive integer $k$, there exists some constant $\alpha < \frac{1}{c}$, such that a $(O(n^{c\alpha}),\ell,k)$-decomposition can be computed in $O(n^{c\alpha})$ rounds. 
\end{lemma}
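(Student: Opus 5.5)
We will prove the lemma by running the standard rake-and-compress procedure of \Cref{lem:polyDecomp}, but with a phase-dependent rake budget expressed in terms of the known value $N$. Fix exponents $\alpha_1 \le \alpha_2 \le \dots \le \alpha_k$, all at most $\frac{1}{c}$, to be chosen later. First, in $O(\log^* \maxID)$ rounds, we precompute a distance-$\ell$ $O(1)$-coloring, so that every Compress$(\ell,j)$ costs $O(1)$ rounds. Phase $i$, for $i<k$, then performs $\gamma_i := N^{\alpha_i}$ rake operations followed by Compress$(\ell,i)$. The last phase performs $\gamma_k := N^{\alpha_k}$ rakes. As soon as the residual graph is empty, every node stops, which each node detects locally since it has been assigned a layer. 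The output is a $(\max_i \gamma_i,\ell,k)$-decomposition, provided the last rake phase empties the graph. We must therefore bound both the runtime and the number of remaining nodes.

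The key counting tool is the standard fact from \cite{CP19timeHierarchy}. If the residual tree has $m$ nodes and we perform $\gamma$ rakes followed by one compress with $\ell=O(1)$, then at most $O(m\ell/\gamma)$ nodes survive. The reason is that the surviving nodes of degree $\ge 3$ and the long degree-2 chains are charged to disjoint rooted subtrees of depth $\ge\gamma$ that were raked. Applying this iteratively gives $n_{i+1} \le O(n_i/N^{\alpha_i})$ with $n_1=n$. Hence, after $k-1$ phases, $O(n/N^{\alpha_1+\dots+\alpha_{k-1}})$ nodes remain. The final phase then needs $N^{\alpha_k}$ to exceed this count, because rakes remove a tree of $m$ nodes within $m$ rounds.

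We also exploit early termination. After phase $i$ has run for real $N^{\alpha_i}$ rounds, the graph is already empty if the current residual size is at most $N^{\alpha_i}$. The actual runtime is therefore $\sum_i \min(N^{\alpha_i}, n_i)$. Writing $N=n^{\beta}$ with $\beta\in[1,c]$, the runtime exponent in phase $i$ is $\min(\beta\alpha_i,\ 1-\beta\sum_{j<i}\alpha_j)$. Correctness requires $1-\beta\sum_{j<k}\alpha_j \le \beta\alpha_k$, which is tightest at $\beta=1$: $\sum_{j\le k}\alpha_j\ge 1$. We want the maximum, over $\beta\in[1,c]$ and over $i$, of these exponents to be at most $c\alpha_1$. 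Choosing $\alpha_1$ as small as possible subject to this gives the optimization problem. Phase 1 is fine, since its exponent $\beta\alpha_1\le c\alpha_1$. For later phases we set $\alpha_i$ greedily as the largest value satisfying $\min(\beta\alpha_i, 1-\beta\sum_{j<i}\alpha_j)\le c\alpha_1$ for all $\beta$. The worst $\beta$ is where the two terms cross, which produces a recursion in which the partial sums grow geometrically by the factor $\frac{1}{1-c\alpha_1}$. For the first roughly $i_0=\lfloor 1/(c\alpha_1)\rfloor$ phases, the alternative bound "everything already removed" is what binds.

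The main obstacle is solving this recursion and showing that some $\alpha_1<\frac1c$ makes $\sum_j \alpha_j = 1$ achievable. This is the step yielding the closed form $1=(\frac{1}{1-c\alpha_1})^{k-i_0}i_0\alpha_1$ of \Cref{cor:optSolution}. For the informal lemma, however, existence suffices. With $\alpha_1$ close to $1/c$, we can take every $\alpha_i=\alpha_1$, so $\sum\alpha_i = k\alpha_1 \ge 1$ once $\alpha_1\ge 1/k$, and the runtime is at most $kN^{\alpha_1}\le kn^{c\alpha_1}$. For $k\ge 2$ and $c\ge 1$ we may pick $\alpha_1=\max(1/k,\text{optimum})$, with the strict inequality $\alpha_1<1/c$ needing $c<k$ in this naive choice. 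Otherwise we rely on the growing $\alpha_i$ from the recursion, whose monotone dependence on $\alpha_1$ gives, by continuity, a unique root below $1/c$. Finally, the resulting decomposition has $\gamma=\max_i N^{\alpha_i}$ rakes per layer, but only $O(n^{c\alpha_1})$ of them are actually executed. We therefore report $\gamma=O(n^{c\alpha_1})$, relabeling unused sublayers.
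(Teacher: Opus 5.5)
Your algorithm, the per-phase shrinkage bound, the $\min\{N^{\alpha_i},n_i\}$ accounting and the resulting min--max problem are the same as in the paper. The gap is in the step that carries the actual content of this lemma, namely exhibiting an admissible $\alpha<\frac1c$: you only do this for $c<k$. For $c\ge k$ you assert that the recursion's ``monotone dependence on $\alpha_1$ gives, by continuity, a unique root below $1/c$.'' That is exactly the nontrivial part of the proof of \Cref{cor:optSolution}: continuity across the jumps of $i_0=\lfloor\frac{1}{c\alpha_1}\rfloor$, monotonicity on each piece, and a sign change at the two ends of $(0,\frac1c)$. You give none of it.

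Worse, for $k\le c$ your own setup admits no valid choice at all. You require every $\alpha_i\le\frac1c$ (with $\alpha_1<\frac1c$) and let the last phase perform only $N^{\alpha_k}$ rakes. Your correctness condition $\alpha_1+\dots+\alpha_k\ge 1$ then clashes with $\alpha_1+\dots+\alpha_k<\frac kc\le 1$. For example, for $k=2$, $c=3$ the optimum is $\alpha_1=\frac14$, but your fixed final budget would need $\alpha_2\ge\frac34>\frac13$. A smaller point: your description of which term binds is reversed. For $i\le i_0$ it is the budget term $N^{\alpha_i}$ at $N=n^c$ that binds and forces $\alpha_i=\alpha_1$. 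The ``everything already removed'' term is what lets $\alpha_i$ grow for $i>i_0$.

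The repair is simple and is what the paper does: let phase $k$ rake until the graph is empty. Then there is no $\alpha_k$, correctness is automatic, and phase $k$ costs $O(n/N^{A_{k-1}})=O(n^{1-A_{k-1}})$. This also frees you from matching a fixed final budget to the hidden constant of the shrinkage bound.

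Now take the uniform choice $\alpha_1=\dots=\alpha_{k-1}=\frac{1}{c+1}$. Phases $1,\dots,k-1$ cost at most $N^{1/(c+1)}\le n^{c/(c+1)}$. Phase $k$ costs at most $n^{1-(k-1)/(c+1)}\le n^{c/(c+1)}$ for all $k\ge 2$ and $c\ge 1$. This proves the lemma with $\alpha=\frac{1}{c+1}<\frac1c$ and $\gamma=O(n^{c/(c+1)})$; it is \Cref{cor:betterThanGlobal}. The continuity and intermediate-value argument is needed only to identify the optimal $\alpha_1$ of \Cref{cor:optSolution}, not for the existence statement.
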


Our algorithm starts by doing $N^\alpha$ rakes and then a compress. The main idea is that we may do a lot of work upfront if our bound $N$ is bad. That is if $N$ is close to $n^c$, we might already spend a lot of time doing this first set of rakes. However, this also means that we make a lot of progress and so we might be able to be more aggressive afterwards. We push this idea to the limit, by starting with a fully parameterized algorithm and choosing the parameters based on an optimization problem.

\subsection{The Decomposition Algorithm}\label{sec:NAlgo}
Fix positive integers $k$ and $c$. The algorithm is parameterized by some parameters $\alpha_1, \ldots, \alpha_{k-1}$ that can be computed solely as a function of $k$ and $c$, and computes a $(\gamma,\ell,k)$-decomposition for $\gamma \in O(n^{c \alpha_1})$. Recall that an integer $N$ satisfying $n \le N \le n^c$ is provided to the nodes.
Recall that a rake operation is the removal of all nodes of degree $0$ or $1$, and that a compress operation consists of the removal of all connected components containing at least $\ell$ nodes and consisting of nodes of degree exactly 2.
The algorithm consists of the following $k$ phases. 
\begin{itemize}
    \item \textbf{Phase $1 \leq i < k$:} Perform $N^{\alpha_i}$ rakes and then a single compress.
    \item \textbf{Phase $k$:} Perform rakes until the graph becomes empty.
\end{itemize}

In order to analyze this algorithm, we use the following lemma, which has been proven in \cite{chang20}. This lemma upper bounds the number of nodes that remain after performing some number of rake operations followed by a single compress operation.
\begin{lemma}[\cite{chang20}]\label{lem:rcremaining}
Given a forest with $n$ nodes, by performing $x$ rakes and $1$ compress with minimum path length $\ell$, the number of remaining nodes is at most $\frac{\ell}{2 x} n$.
\end{lemma}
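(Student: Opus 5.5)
The plan is to charge the non-removed nodes to the long degree-$2$ paths that survive the rakes. Let $F$ be the forest on $n$ nodes, and let $F'$ be the forest remaining after the $x$ rakes. Every node of $F'$ that has degree exactly $2$ in $F'$ and lies in a maximal degree-$2$ path of at least $\ell$ nodes is removed by the compress, so the survivors are of three kinds: nodes of degree $\le 1$ in $F'$, nodes of degree $\ge 3$ in $F'$, and nodes on maximal degree-$2$ paths of fewer than $\ell$ nodes.

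First I would bound the survivors in terms of the leaves of $F'$. Fix a component of $F'$, contract each maximal degree-$2$ path to an edge, and call the leaves $\lambda$. There are fewer than $\lambda$ nodes of degree $\ge 3$, and fewer than $2\lambda$ contracted edges. A short path contributes fewer than $\ell$ nodes per contracted edge. Isolated nodes and paths with no branch node (which become a single path) I handle separately. Summing over components, the number of survivors is at most roughly $\ell\cdot\lambda(F')$ up to a small constant; the precise constant $\ell/2$ will need careful accounting.

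Next I would show that every leaf of $F'$ certifies many raked nodes. Let $u$ be a leaf of $F'$ with neighbour $w$ in $F'$. Since $u$ survived $x$ rakes but all its other original neighbours were raked, each of those neighbours had a rake round. Unrolling this, I claim there is a disjoint set of at least $x$ raked nodes hanging off $u$, namely a chain of raked nodes of rake times $x, x-1, \dots, 1$. The argument is that a node raked in round $t>1$ must have a neighbour raked in round $t-1$ lying on the far side from $u$; otherwise it would have had degree $\le 1$ earlier. Moreover, $u$ itself must have had such a neighbour raked at round $x$, since otherwise $u$ would have been raked. The subtrees behind distinct leaves of $F'$ are disjoint because $F'$ is connected within each component. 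Hence $\lambda(F')\cdot x \le n$, and in fact there are at least $2$ such leaves per component, which is what gives the extra factor.

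Combining the two steps gives at most $c\,\ell\, n/x$ survivors. The main obstacle will be tightening the constant to exactly $\ell/(2x)$ and handling the degenerate components (isolated nodes, bare paths, and the tie-breaking for adjacent degree-$1$ pairs). For those I would argue that a component surviving all $x$ rakes has at least $2x$ raked nodes behind its two extremal leaves, and charge its at most $\ell$ surviving path nodes to them. If the exact constant resists, an $O(\ell n/x)$ bound suffices for all later uses.
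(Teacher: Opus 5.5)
The step you flag as the main obstacle, tightening to exactly $\frac{\ell}{2x}$, cannot be completed, because that constant is false under the paper's definitions. Here a rake removes all nodes of degree $\le 1$, and a compress, as in \Cref{sec:NAlgo}, removes every node of a maximal degree-$2$ path with at least $\ell$ nodes. For a counterexample, take a node $a$ with three legs, each consisting of $\ell-1$ nodes, then a node $u_i$, then a pendant path of $x$ nodes, so $n=3\ell+1+3x$. In each of the $x$ rounds the only nodes of degree $\le 1$ are the current far ends of the pendant paths, so the rakes remove exactly these paths. Afterwards every maximal degree-$2$ path has $\ell-1<\ell$ nodes, so the compress removes nothing. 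Thus $3\ell+1$ nodes remain, whereas $\frac{\ell}{2x}n=\frac{3\ell}{2}+\frac{\ell(3\ell+1)}{2x}$; for $\ell=1$, $x=2$ this is four survivors against a bound of $2.5$. The same construction on a tree with $\lambda$ leaves whose internal nodes all have degree $3$ (every edge subdivided by $\ell-1$ nodes, a pendant $x$-path at every leaf) leaves $2\lambda-2+(2\lambda-3)(\ell-1)\approx 2\ell\lambda$ nodes out of $n\approx(2\ell+x)\lambda$. So your per-leaf count (fewer than $\lambda$ branch nodes, fewer than $2\lambda$ contracted edges, fewer than $\ell$ nodes per short path, i.e.\ about $2\ell$ survivors per leaf) is already tight. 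Having at least $2$ leaves per component cannot give an extra factor, since the charging is per leaf, not per component.

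What your argument does prove, and what you should state instead, is that at most $\frac{2\ell}{x}n$ nodes remain. The survivors number at most $2\ell\lambda(F')$ plus the isolated survivors, and each leaf of $F'$ and each isolated survivor has at least $x$ raked nodes behind it. By the example this is tight up to lower-order terms, and it is all the paper needs: \Cref{cor:remainAfterPhase} and \Cref{lem:decomp-no-knuth} only use $O(n/x)$ for constant $\ell$. In the cited \Cref{lem:polyDecomp}, the factor $(\ell/2)^{1-1/k}$ would simply become $(2\ell)^{1-1/k}$. One step also needs a better justification. Disjointness of the raked sets behind different leaves does not follow from ``$F'$ is connected within each component''; that only covers leaves of the same component. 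Argue instead that, thanks to the tie-break, no two adjacent nodes are raked in the same round, so every raked node has at most one neighbour that outlives it. Following these pointers assigns each raked node to at most one survivor, which gives both your chain $w_x,\dots,w_1$ and disjointness across all of $F'$.
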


As a simple corollary of \Cref{lem:rcremaining}, we get bounds on the number of nodes that are still present after phase $i$.
\begin{corollary}\label{cor:remainAfterPhase}
    For any $1\leq i <k$, let $A_i := \sum_{j= 1}^i \alpha_j$. Then, after phase $i$ of the execution of the algorithm, at most $O\left(\frac{n}{N^{A_i}}\right)$ nodes remain.
\end{corollary}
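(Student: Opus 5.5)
We argue by induction on $i$, applying \Cref{lem:rcremaining} once per phase. Let $n_i$ be the number of nodes still present after phase $i$, with $n_0 := n$. The claim to prove is the slightly sharper bound
\[
n_i \le \left(\tfrac{\ell}{2}\right)^{i} \frac{n}{N^{A_i}} .
\]
Since $\ell \in O(1)$ and $i < k$ with $k$ a constant, the prefactor $(\ell/2)^i$ is a constant, and this gives $O\!\left(n/N^{A_i}\right)$.

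\textbf{Step 1: what phase $i$ acts on.} By the algorithm's description, phase $i$ consists of exactly $x = N^{\alpha_i}$ rake operations followed by one compress operation with minimum path length $\ell$. These operations act on the residual graph left after phase $i-1$. That residual graph is an induced subgraph of a tree, hence a forest, and it has $n_{i-1}$ nodes.

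\textbf{Step 2: one phase shrinks the graph by a factor $\ell/(2N^{\alpha_i})$.} \Cref{lem:rcremaining} applies to an arbitrary forest and does not need the forest to be connected. Applying it to the residual forest with $x = N^{\alpha_i}$ gives
\[
n_i \le \frac{\ell}{2N^{\alpha_i}}\, n_{i-1}.
\]
Unrolling this recurrence from $n_0 = n$ and using $\sum_{j \le i} \alpha_j = A_i$ yields the claimed bound.

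\textbf{Technicalities to check.} First, $N^{\alpha_i}$ need not be an integer, so the number of rakes is really $\lceil N^{\alpha_i} \rceil \ge N^{\alpha_i}$. This only helps, because the bound in \Cref{lem:rcremaining} is decreasing in $x$. Second, the compress step in the algorithm removes maximal degree-2 paths of length at least $\ell$ and places ruling-set nodes into the next rake layer. I would verify that this matches the compress operation assumed in \Cref{lem:rcremaining}, that is, that the ruling-set nodes also count as removed from the residual graph. The main (mild) obstacle is making sure this operation really matches, so that the lemma can be invoked on the residual forest at every phase and not only on the original tree.
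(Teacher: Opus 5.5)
Your proposal is correct and is the same argument the paper intends: it calls this corollary a direct consequence of \Cref{lem:rcremaining}, i.e., apply that lemma once per phase to the residual forest with $x = N^{\alpha_i}$, multiply the per-phase factors, and absorb the constant $(\ell/2)^i$ into the $O(\cdot)$. Your two technical checks are also consistent with the paper. Rounding the number of rakes up to $\lceil N^{\alpha_i} \rceil$ only helps. The compress step in \Cref{sec:NAlgo} is defined as removing the entire qualifying degree-2 paths, so the ruling-set nodes are removed as well.
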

By using \Cref{cor:remainAfterPhase},  we can derive an upper bound on the runtime of a given phase.
\begin{lemma}\label{lem:runtimeOfPhases}
    For any $1\leq i <k$, phase $i$ of the algorithm takes at most $O\left(\min\set{\frac{n}{N^{A_{i-1}}}, N^{\alpha_i}}\right)$ rounds.
    Furthermore, phase $k$ takes at most $O\left(\frac{n}{N^{A_{k-1}}}\right)$ rounds.
\end{lemma}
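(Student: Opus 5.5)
The plan is to bound each phase by two independent quantities and take the minimum. The first bound is the trivial one coming from the algorithm's description. Phase $i<k$ consists of $N^{\alpha_i}$ rake operations followed by a single compress. Each rake takes one round. A compress takes $O(1)$ rounds, since $\ell\in O(1)$ and the distance-$\ell$ coloring is precomputed once at the start; the $O(\log^* \maxID)$ precomputation is charged separately and is not part of the phase. Hence phase $i$ takes $O(N^{\alpha_i})$ rounds.

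The second bound comes from the size of the residual graph. By \Cref{cor:remainAfterPhase}, at the start of phase $i$ (that is, after phase $i-1$) at most $O(n/N^{A_{i-1}})$ nodes remain; for $i=1$ we use $A_0=0$, so the bound is simply $n$. The key structural fact is that a rake operation on a nonempty forest removes at least one node from every nonempty component, because every finite tree has a node of degree $\le 1$. The rule that only one of two adjacent degree-1 nodes participates still removes a node from that component. So after at most $m$ consecutive rakes, where $m$ is the current number of remaining nodes, the graph is empty.

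A sharper version of this fact is also available: a tree of diameter $d$ is emptied by $O(d)$ rakes, and $d\le m$. Once the graph is empty, the remaining rakes and the compress are vacuous. A node can detect that it has been removed, so it simply terminates; we should state explicitly that nodes stop once they are assigned a layer. Hence phase $i$ effectively lasts at most $O(n/N^{A_{i-1}})+O(1)$ rounds. Combining the two bounds gives the claimed $O(\min\{n/N^{A_{i-1}},N^{\alpha_i}\})$.

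Phase $k$ is handled the same way. After phase $k-1$, at most $O(n/N^{A_{k-1}})$ nodes remain, again by \Cref{cor:remainAfterPhase}. Phase $k$ rakes until the graph is empty, and by the argument above this happens within that many rake rounds.

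The only delicate point is making the round-counting rigorous in the distributed setting. Each node must know locally when a phase starts and ends. Because all nodes know $N$, $c$, $k$ and the $\alpha_i$, the schedule of $N^{\alpha_i}$ rakes per phase is globally synchronized, so this is not a problem. We are measuring the time until each node outputs its layer, and a node that is removed early outputs immediately. It is therefore legitimate to say that phase $i$ "takes" only as long as some node is still unremoved.

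I expect this observation, that the runtime of a node is the round in which it is removed rather than the nominal length of the schedule, to be the main thing to state carefully. The rest is a direct application of \Cref{cor:remainAfterPhase} together with the fact that a rake removes at least one node per component.
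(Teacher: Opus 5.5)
Your proposal is correct and follows the paper's argument: the $O(N^{\alpha_i})$ bound comes from each rake and the single compress costing $O(1)$ rounds, and the $O(n/N^{A_{i-1}})$ bound comes from \Cref{cor:remainAfterPhase}. You additionally make explicit what the paper leaves implicit, namely that every rake removes at least one node from each nonempty component and that nodes terminate once assigned a layer, so $m$ remaining nodes are exhausted within $m$ rakes.
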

\begin{proof}
    By \Cref{cor:remainAfterPhase}, at the beginning of phase $2 \leq i \leq k$, only $O\left(\frac{n}{N^{A_{i-1}}}\right)$ nodes remain. Also, every phase, except phase $k$, can run for at most $O(N^{\alpha_i})$ rounds, because each rake operation, and the compress operation, can be performed in a constant number of rounds.
\end{proof}
The runtime of our algorithm depends on the parameters $\alpha_1,\ldots,\alpha_{k-1}$. In order to determine the correct choice for these parameters, we introduce and analyze an optimization problem in the next section.

\subsection{The Optimization Problem}\label{ssec:optimization-problem}
\Cref{lem:runtimeOfPhases} gives rise to the following optimization problem, where we want to optimize the parameters $\alpha_1, \ldots, \alpha_{k-1}$ to make the overall runtime as small as possible.

\paragraph{Parameters.}
The following parameters are constants of the problem instance:
\begin{itemize}
    \item The number of nodes $n$. This parameter is not known by the nodes, and hence the values of $\alpha_i$ cannot be computed as a function of it.\item The integer $k$. This parameter depends on the problem that the nodes need to solve, and hence it is known by the nodes. 
\end{itemize}
The following parameters are given adversarially:
\begin{itemize}
    \item The exponent $c$ and the integer $N$. These parameters are known by the nodes, and it is guaranteed that $n \le N \le n^c$.
 \end{itemize}   
We need to determine the values of the following parameters: 
\begin{itemize}
    \item The integers $\alpha_1,\ldots, \alpha_{k-1}$. Recall that these parameters govern the number of rakes performed during each phase. That is, at phase $i$, nodes perform $N^{\alpha_i}$ rakes.
\end{itemize}
In the following, recall that $A_i$ is defined as $A_i := \sum_{j= 1}^i \alpha_j$. Moreover, by $\bar{\alpha}$ we denote the vector $[\alpha_1,\ldots,\alpha_{k-1}]$.

\paragraph{Optimization Problem.}
Since the number of phases of our algorithm is $k$, and since $k$ is constant, the runtime of our algorithm is given by the runtime of its slowest phase. Hence, for $1 \le i \le k$, let $T_i$ be the runtime of the $i$th phase, as a function of all the other parameters. The goal is to minimize the largest $T_i$, for the worst-case choice of $n$ and $N$. Hence, we get that the overall runtime of the algorithm is bounded by the optimal solution to the following optimization problem.
\begin{align*}
\min_{\bar{\alpha}}\max_{N \in [n,n^c]}\left\{\begin{array}{lr}
    T_1 := N^{\alpha_1}\\
    T_2 := \min \left\{\frac{n}{N^{\alpha_1}}, N^{\alpha_2}\right\}\\
    \ldots \\
    T_i := \min \left\{\frac{n}{N^{A_{i-1}}}, N^{\alpha_i}\right\}\\
    \ldots\\
    T_k := n^{1-A_{k-1}}
    \end{array}\right\}
\end{align*}

Just from the statement alone we can immediately see that we get an algorithm with complexity at least polynomially faster than $O(n)$.

\begin{corollary}\label{cor:betterThanGlobal}
    Let $\bar{\alpha}$ be an optimal solution to the optimization problem then for $1 \leq i\leq k-1$ it holds that $\alpha_i < \frac{1}{c}$ and also all $T_i \in O(n^\varepsilon)$ for some $\varepsilon \leq \frac{c}{c+1}$. 
\end{corollary}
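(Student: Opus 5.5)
The plan is to exhibit one explicit feasible choice of $\bar{\alpha}$, compute its worst-case objective value, and then use optimality. The optimal value is at most the value of any feasible choice. The natural candidate is the uniform schedule $\alpha_1 = \cdots = \alpha_{k-1} = \beta$ for a $\beta$ to be tuned, and I will bound every $T_i$ by $n^{\varepsilon}$ with $\varepsilon \le \frac{c}{c+1}$.

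For the uniform schedule, phase $1$ costs $T_1 = N^{\beta} \le n^{c\beta}$. For each $i \ge 2$ we have $T_i \le N^{\alpha_i} = N^{\beta} \le n^{c\beta}$. The last phase costs $T_k = n / N^{(k-1)\beta} \le n^{1-(k-1)\beta}$, using $N \ge n$. I balance $c\beta = 1-(k-1)\beta$, which gives $\beta = \frac{1}{c+k-1}$. Every phase is then $O(n^{c/(c+k-1)})$, and since $k \ge 2$ this is at most $n^{c/(c+1)}$. For $k=1$ there are no $\alpha$'s, and the single phase takes $O(n) = O(n^{c/(c+1)}\cdot n^{1/(c+1)})$. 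So the claim is only meaningful for $k \ge 2$, and I would state that restriction explicitly.

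Hence the min-max value is at most $n^{c/(c+k-1)}$, so an optimal $\bar{\alpha}$ has all $T_i \in O(n^{\varepsilon})$ with $\varepsilon \le \frac{c}{c+1}$.

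For $\alpha_i < \frac{1}{c}$, the first case is $i=1$. Taking the adversarial choice $N = n^c$ gives $T_1 = n^{c\alpha_1}$. If $\alpha_1 \ge \frac1c$, this is at least $n$, which exceeds the feasible value $n^{c/(c+1)}$, contradicting optimality.

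For $i \ge 2$ the statement is subtler, because $T_i$ is capped by $n / N^{A_{i-1}}$, so a large $\alpha_i$ need not hurt. Here I would take $N = n$. Then $T_i = \min\{n^{1-A_{i-1}}, n^{\alpha_i}\}$, and I would argue as follows. If some $\alpha_i \ge \frac1c$ could be lowered to just below $\frac1c$ without increasing any $T_j$, then among the optimal solutions we may choose one with all $\alpha_i < \frac1c$. Lowering $\alpha_i$ can only increase later terms through $A_{i}$, though, so instead of adjusting the $\alpha$'s I expect to argue via the upper bound directly. If $A_{i-1} \le 1-\frac1c$, then $T_i \ge n^{1/c}$, and I need to compare this with the optimum. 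This is where the main obstacle lies: the precise claim $\alpha_i < \frac1c$ for later $i$ may require interpreting "optimal" as a canonical choice, namely capping each $\alpha_i$ at the value needed. I would handle it by noting that replacing $\alpha_i$ with $\min\{\alpha_i, \alpha_1\}$ preserves the value. The reason is that phase $i$'s cost is dominated by $T_1$-type terms, and the total progress $A_{k-1}$ only matters when $N$ is near $n$, where the cap $n/N^{A_{i-1}}$ already binds. So optimal solutions can be normalized to satisfy $\alpha_i \le \alpha_1 < \frac1c$.
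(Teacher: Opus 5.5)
Your bound on the $T_i$ and your argument for $\alpha_1<\frac1c$ are correct, and they are essentially the paper's proof. The paper plugs in the uniform point $\alpha_i=\frac{1}{c+1}$ and concludes by optimality; your $\frac{1}{c+k-1}$ gives the sharper value $n^{c/(c+k-1)}$ but is the same idea. Taking $N=n^c$, so that $T_1=n^{c\alpha_1}$, then yields $\alpha_1<\frac1c$, and hence $\alpha_i=\alpha_1<\frac1c$ for $i\le i_0$ by \Cref{lem:iZero}. That is all the paper's one-line argument actually delivers, and all that is used later (in \Cref{lem:lbLevel1}).

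Your final paragraph, however, has a genuine gap: replacing $\alpha_i$ by $\min\{\alpha_i,\alpha_1\}$ does not preserve the value.
\begin{itemize}
\item It lowers $A_{k-1}$ and thereby raises $T_k=n^{1-A_{k-1}}$. That term is evaluated at $N=n$ and has no cap, so your justification runs the wrong way.
\item The optimum pushes every $\alpha_i$ to the largest value compatible with $T_i\le n^{c\alpha_1}$, because $T_k$ needs $A_{k-1}\ge 1-c\alpha_1$. This is why \Cref{lem:largeAlphas} and (\ref{eq:alphaIncrease}) give $\alpha_i>\alpha_1$ for $i>i_0$.
\item Concretely, take $k=3$, $c=6$. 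The unique optimum is $\alpha_1=\frac19$, $\alpha_2=\frac29$, with value $n^{2/3}$. Here $i_0=1$ and $1=\bigl(\frac{1}{1-c\alpha_1}\bigr)^{2}\alpha_1$ holds. $T_2=\max_N\min\{n/N^{1/9},N^{2/9}\}$ peaks at $N=n^3$ with value $n^{2/3}$. Capping $\alpha_2$ at $\frac19$ raises $T_3$ to $n^{7/9}$.
\end{itemize}
The same example gives $\alpha_2=\frac29>\frac16=\frac1c$. So the claim $\alpha_i<\frac1c$ for $i>i_0$ is false in general, and no normalization can establish it. What you can prove is $\alpha_i<\frac1c$ for $i\le i_0$, together with the bound on the $T_i$; you should restrict that part of the statement accordingly.
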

\begin{proof}
    We simply give a valid assignment to $\bar{\alpha}$, that is not optimal but still gives us a (rough) upperbound on the optimal solution.\\
    Let $\frac{1}{c+1} = \alpha_1 = \alpha_2 = \ldots = \alpha_{k-1}$. Then all of the $T_i$, except $T_k$ are bounded by 
    \[
    N^{\alpha_i} \leq n^{c \cdot \frac{1}{c+1}} = n^{\frac{c}{c+1}}
    \]
    For $T_k$ we have 
    \[
    T_k =  n^{1-A_{k-1}} =  n^{1-(k-1) \frac{1}{c+1}} <  n^{1-\frac{1}{c+1}} = n^{\frac{c}{c+1}}
    \]
    Since any optimal solution must be at least as good as this one, the statement follows.
\end{proof}

The rest of this section is devoted to finding a precise solution to the optimization problem. We try to reduce the complexity of the problem one step at a time, by first eliminating the minimization terms inside of the $T_i$ terms.

\begin{lemma}\label{lem:iZero}
 Let $i_0 := \left\lfloor\frac{1}{c \alpha_1}\right\rfloor$. Then, the optimal values of $\alpha_i$ satisfy $\alpha_1 =  \ldots = \alpha_{i_0}$. Moreover, as a result, it holds that $A_i = i \cdot \alpha_1$ for all $1\leq i \leq i_0$.
\end{lemma}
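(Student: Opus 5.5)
Fix an optimal $\bar\alpha$ and track the worst-case exponent of each $T_i$ as a function of $N$. Write $N = n^{x}$ with $x\in[1,c]$, so every term becomes $n^{(\cdot)}$ and we can compare exponents. The objective $T_1$ then has exponent $c\alpha_1$, attained at $x=c$; this is a lower bound on the value $V$ of any solution. For $2\le i\le i_0$, the exponent of $T_i$ is $\max_{x}\min\{1-xA_{i-1},\, x\alpha_i\}$. The plan is an exchange argument: any deviation from $\alpha_1=\dots=\alpha_{i_0}$ can be repaired without increasing the maximum, so some (indeed every strictly improving-relevant) optimum has this form.

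First I show by induction on $i\le i_0$ that $\alpha_i\le \alpha_1$ may be assumed. The point is that $\min\{1-xA_{i-1}, x\alpha_i\}$ is maximized where the two exponents cross, at $x^*=1/(A_{i-1}+\alpha_i)$. If $\alpha_i>\alpha_1$, I will show that the crossing value already exceeds $c\alpha_1$, unless I can lower $\alpha_i$ to $\alpha_1$. Lowering $\alpha_i$ only decreases $T_i$; its effect on later phases is an increase in $n/N^{A_j}$ for $j\ge i$.

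Second, I show $\alpha_i\ge\alpha_1$ for $i\le i_0$. Suppose $A_{i-1}=(i-1)\alpha_1$ and $\alpha_i<\alpha_1$. I claim raising $\alpha_i$ to $\alpha_1$ does not hurt $T_i$. For every $x\le c$ we have $x\alpha_1\le c\alpha_1$, so with $\alpha_i=\alpha_1$ the exponent of $T_i$ is at most $\min\{1-x(i-1)\alpha_1,\, x\alpha_1\}\le c\alpha_1$, which is already paid by $T_1$. Raising $\alpha_i$ also only helps later phases, since the terms $n/N^{A_j}$ shrink and only $N^{\alpha_j}$ for $j>i$ is unchanged. Hence setting $\alpha_i=\alpha_1$ weakly improves the solution.

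Here the condition $i\le i_0=\lfloor 1/(c\alpha_1)\rfloor$ enters. It gives $1-c(i-1)\alpha_1\ge c\alpha_1>0$, so at $x=c$ the term $n/N^{A_{i-1}}$ is still at least $N^{\alpha_1}$. Consequently, if $\alpha_i>\alpha_1$, then at $x=c$ the exponent of $T_i$ is $\min\{1-c(i-1)\alpha_1,\, c\alpha_i\}$, and both entries are $> c\alpha_1$ (the first one $\ge$, and strictly in the generic case). That strictly exceeds the value $c\alpha_1$ that $T_1$ forces, and the later phases cannot compensate.

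The main obstacle is this last step, making ``$\alpha_i>\alpha_1$ is suboptimal'' rigorous. The first attempt is to argue that replacing $\alpha_1$ by a slightly larger value (and shifting mass) lowers the bottleneck. More simply, I will argue that the optimum equalizes $T_1$'s exponent with the binding constraints, so the max equals $c\alpha_1$. The $T_i$-exponent at $x=c$ exceeding $c\alpha_1$ then contradicts optimality, or else forces $\alpha_i\le\alpha_1$. One must handle the boundary case $1-c(i-1)\alpha_1=c\alpha_1$ separately; there increasing $\alpha_i$ gains nothing for $T_i$, so we may normalize to $\alpha_i=\alpha_1$.

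Finally, $A_i=i\alpha_1$ for $i\le i_0$ is immediate from the equalities by summation.
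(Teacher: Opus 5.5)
\textbf{Gap: you never show that $T_1$ is binding.} Your Step~2 is fine. Raising an $\alpha_i<\alpha_1$ up to $\alpha_1$ keeps the exponent of $T_i$ at most $c\alpha_1$, and it can only lower the later exponents, which are non-increasing in $A_{j-1}$. Comparing each $T_i$ with $T_1=n^{c\alpha_1}$ through the worst case $N=n^c$ is also the idea the paper uses. The problem is Step~1, which you flag as the main obstacle and then only announce. The term $T_1$ gives just the lower bound $V\ge c\alpha_1$ on the optimal value. Showing that $\alpha_i>\alpha_1$ pushes the exponent of $T_i$ above $c\alpha_1$ therefore only shows $V>c\alpha_1$. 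That contradicts nothing unless you already know the optimal value equals $c\alpha_1$, i.e.\ that $T_1$ is binding. Your sentence ``the optimum equalizes $T_1$'s exponent with the binding constraints'' restates this claim rather than proving it. The boundary case has a second problem. You propose to normalize $\alpha_i$ \emph{down} to $\alpha_1$, but, as you noted yourself, lowering $\alpha_i$ increases $n/N^{A_j}$ for every later $j$, so that move is not free.

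\textbf{How to repair it.} First, if $c\alpha_1<V$, raise $\alpha_1$ to $V/c$. Every other exponent, namely $\max_x\min\{1-xA_{j-1},x\alpha_j\}$ and $1-A_{k-1}$, is non-increasing in $A_{j-1}$. So you still have an optimum, now with $c\alpha_1=V$, and your contradiction then goes through. Second, the boundary case is actually strict. For $\alpha_i>\alpha_1$ the crossing point $x^*=1/((i-1)\alpha_1+\alpha_i)$ lies strictly below $1/(i\alpha_1)=c$. When $x^*\ge 1$, the exponent there is $\alpha_i/((i-1)\alpha_1+\alpha_i)>1/i=c\alpha_1$. You only lost strictness by evaluating at $x=c$.

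\textbf{Remaining issue: this gives only \emph{some} optimum.} With these fixes your exchange argument shows that some optimal $\bar\alpha$ has $\alpha_1=\dots=\alpha_{i_0}$. If you want it for every optimum, which is how the later lemma pins down $\alpha_1$ uniquely, argue greedily. Fix a target value $V$. The constraint on $T_j$ caps $\alpha_j$ by $\max\{V/c,\,VA_{j-1}/(1-V)\}$ as long as $A_{j-1}<1-V$. The map $A\mapsto A+\max\{V/c,VA/(1-V)\}$ is strictly increasing. Hence $1-A_{k-1}\le V$ at the optimal $V$ can only hold if every $\alpha_j$ sits at its cap, and for $j\le i_0$ that cap is exactly $V/c=\alpha_1$. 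The paper's own proof simply asserts that the $\alpha_i$ in the regime $b_i=c$ are all equal, so a completed version of your argument would be more careful than the original.
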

\begin{proof}
If we fix $\bar{\alpha}$, then the runtime $T_i$ of phase $i$ is maximized if $N=n^{b_i}$ for some worst case $1\leq b_i \leq c$. We solve for $b_i$, by setting the terms in the minimization equal to each other.
\begin{align*}
    T_i &:= \max_{n \leq N \leq n^c} \min \left\{\frac{n}{N^{A_{i-1}}}, N^{\alpha_i}\right\} = \max_{1 \leq b_i \leq c} \min \left\{\frac{n}{n^{b_i \cdot A_{i-1}}}, n^{b_i \cdot \alpha_i}\right\} \\
    &\iff \frac{n}{n^{b_i \cdot A_{i-1}}} = n^{b_i \cdot \alpha_i}\\
    &\iff 1- b_i \cdot A_{i-1} = b_i \cdot \alpha_i\\
    &\iff b_i = \frac{1}{A_{i-1} + \alpha_i} = \frac{1}{A_i}
\end{align*}
However, $b_i$ cannot be larger than $c$ and since all $\alpha_i$ are strictly positive, the $A_i$ terms are strictly increasing. As a result we might have to set $b_i = c < 1/A_i$ which leads to $\frac{n}{n^{b_i A_{i-1}}}> n^{b_i \alpha_i}$. So because $T_i$ is the minimum of these two terms we obtain the following:
\begin{align*}
    b_i &= \min \left\{ c, \frac{1}{A_i}\right\}\\
    T_i &= \begin{cases}
        n^{b_i\alpha_i} & \text{if } b_i = c\\
        \frac{n}{n^{b_i A_{i-1}}} = n^{b_i\alpha_i}, & \text{if } b_i = \frac{1}{A_i}\\
        \end{cases}
\end{align*}
In the former case, i.e., when $b_i = c$, the runtime $T_i$ becomes $n^{c\alpha_i}$. Hence, we get that for all $i$ such that $b_i = c$, the runtime $T_i$ is $n^{c \alpha_i}$ and hence, for all such $i$, the values of $\alpha_i$ are all the same. Now we only need to show that the $\alpha_i$ for which $b_i = c$ are exactly $\alpha_1  = \alpha_2 = \ldots = \alpha_{i_0}$.

Because the $A_i$ are strictly increasing by increasing $i$, there is some cutoff point at which $\frac{1}{A_i}$ becomes less than $c$. Let $i_0$ be that index, and more specifically, let $i_0$ be the index satisfying that $b_{i_0} = c$ and $b_{i_0 + 1} = \frac{1}{A_{i_0 + 1}}$. We get that $\alpha_1  = \alpha_2 = \ldots = \alpha_{i_0}$, and hence for all $i \leq i_0$ we get that $A_i = i \cdot \alpha_1$.
Moreover, since $b_i = \min \set{c, \frac{1}{A_i}}$, we get that $b_i = c \iff c \le \frac{1}{A_i} = \frac{1}{A_{i-1} + \alpha_i}$. We thus get that $b_i = c \iff c\alpha_1 \leq 1-c(i-1) \alpha_1$ (the exponents in the definition of $T_i$). By solving for $i$, we obtain the following:
\begin{align*}
    c\alpha_1 &\leq 1-c(i-1) \alpha_1\\
    \iff i &\leq \frac{1}{c\alpha_1}
\end{align*}
This implies that
\[
i_0 = \left\lfloor\frac{1}{c \alpha_1}\right\rfloor,
\]
since $i_0$ must be an integer.
\end{proof}

\begin{lemma}\label{lem:largeAlphas}
    Let $i_0 := \left\lfloor\frac{1}{c \alpha_1}\right\rfloor$. Then, for each $i_0< i \leq k-1$, the optimal value of $\alpha_i$ satisfies $\alpha_i = \frac{c\alpha_1}{1-c\alpha_1} A_{i-1}$.
    Moreover, for $i_0 < i \leq k-1$, it holds that $A_i = \frac{1}{1-c \alpha_1}A_{i-1}$.
\end{lemma}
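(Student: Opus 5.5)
Building on the analysis in the proof of \Cref{lem:iZero}, for every phase $i$ the worst-case choice of $N$ is $N = n^{b_i}$ with $b_i = \min\set{c, 1/A_i}$. For $i > i_0$ we are in the regime $b_i = 1/A_i < c$, and there $T_i = n^{b_i \alpha_i} = n^{\alpha_i/A_i}$. The plan is to fix the optimal value of the objective as $n^{c\alpha_1}$, which is the cost $T_1$ already forced by phase $1$ with $N = n^c$. We then argue that, for $i > i_0$, it is optimal to choose each $\alpha_i$ as large as possible subject to $T_i \le n^{c\alpha_1}$, i.e.\ so that $\alpha_i/A_i = c\alpha_1$.

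The first step is a monotonicity observation. Enlarging some $\alpha_i$ increases every later $A_j$, and this can only decrease the later terms $n/N^{A_{j-1}}$ and in particular $T_k = n^{1-A_{k-1}}$. So every $\alpha_i$ with $i_0 < i \le k-1$ should be pushed up until its own phase cost matches the bottleneck $n^{c\alpha_1}$. Pushing beyond that point would make $T_i$ the new maximum. Stopping below it would leave $T_k$ (or a later $T_j$) larger than necessary. This yields the exchange argument: in an optimal solution, $\alpha_i/A_i \ge c\alpha_1$ is suboptimal, and $\alpha_i/A_i < c\alpha_1$ can be strictly improved. Hence equality holds.

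The second step is algebra. Writing $A_i = A_{i-1} + \alpha_i$, the equation $\alpha_i = c\alpha_1 (A_{i-1}+\alpha_i)$ gives $\alpha_i(1-c\alpha_1) = c\alpha_1 A_{i-1}$. Therefore $\alpha_i = \frac{c\alpha_1}{1-c\alpha_1}A_{i-1}$, which is well defined since $c\alpha_1<1$ by \Cref{cor:betterThanGlobal}. Then
\[
A_i = A_{i-1}\left(1+\frac{c\alpha_1}{1-c\alpha_1}\right) = \frac{1}{1-c\alpha_1}A_{i-1}.
\]

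We must also check consistency: once $i > i_0$, the regime stays $b_i = 1/A_i$. This holds because the $A_i$ are increasing, so $1/A_i < c$ persists for all later indices.

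The main obstacle is making the exchange argument rigorous. The cost of phase $i$ depends on $\alpha_i$ through the ratio $\alpha_i/A_i$, and that ratio is increasing in $\alpha_i$. So $T_i \le n^{c\alpha_1}$ is equivalent to $\alpha_i \le \frac{c\alpha_1}{1-c\alpha_1}A_{i-1}$. One also has to show that raising $\alpha_i$ to this cap never increases any other $T_j$. For $j>i$ the caps scale up with $A_{j-1}$, so the constraint set only becomes looser, and $T_k$ strictly decreases. Finally, $\alpha_1$ itself should be chosen to balance against $T_k$, but that choice is deferred to a later step and is not needed here.
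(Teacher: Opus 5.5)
Your proposal is correct and follows the paper's proof: in the regime $b_i = 1/A_i$ you set $T_i = n^{\alpha_i/A_i}$ equal to $T_1 = n^{c\alpha_1}$ and solve $\alpha_i = c\alpha_1(A_{i-1}+\alpha_i)$, which gives both identities exactly as in the paper. Your monotonicity argument (raising $\alpha_i$ never increases any later $T_j$ and strictly lowers $T_k$) justifies the balancing step, which the paper simply asserts; note that the case to rule out is $\alpha_i/A_i > c\alpha_1$, not $\alpha_i/A_i \ge c\alpha_1$.
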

\begin{proof}
    By \Cref{lem:iZero} and the arguments used in its proof, we know that, for all $1 \le i \le i_0$, the worst-case value of $T_i$ is obtained by using $b_i = c$, and by using such a value we obtain that all $T_i$, for $1 \le i \le i_0$, are equal to $T_1 = n^{c \alpha_1}$.
    Hence, in order to optimize the values of $T_i$ for $i_0 < i< k$, we set $T_i$ equal to $T_1$ and derive $\alpha_i$.     Recall that, for all $i_0 < i< k$, we proved that $T_i = n^{\alpha_i / A_i}$. By setting $T_1$ equal to $T_i$ for each $i_0 < i< k$, we thus get the following.
    \begin{alignat*}{2}
         &&n^{c \alpha_1} &= n^{\alpha_i / A_i}\\
         \iff&& c\alpha_1 &= \frac{\alpha_i}{A_i}\\ 
         \iff&& \alpha_i &= c\alpha_1(\alpha_i + A_{i-1})\\
         \iff&& \alpha_i &= \frac{c\alpha_1}{1-c\alpha_1} A_{i-1}
    \end{alignat*}
    From this it follows that, for all $i_0 < i \leq k-1$,
    \begin{align*}
        A_i - A_{i-1} = \alpha_i &= \frac{c\alpha_1}{1-c\alpha_1} A_{i-1}\\
        \iff A_i &= \frac{1}{1-c \alpha_1}A_{i-1}
    \end{align*}
\end{proof}

\begin{lemma}\label{lem:optSolution}
The optimal value of $\alpha_1$ satisfies $1 = \left(\frac{1}{1-c\alpha_1}\right)^{k - i_0} i_0 \alpha_1$.
\end{lemma}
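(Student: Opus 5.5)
The plan is to combine the structural descriptions of the optimal $\bar{\alpha}$ from \Cref{lem:iZero} and \Cref{lem:largeAlphas} into a closed form for $A_{k-1}$ as a function of $\alpha_1$ alone. Then I would argue that at the optimum the last phase must be exactly balanced with the first phase, i.e.\ $T_k = T_1$.

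\textbf{Step 1: closed form for $A_{k-1}$.} First I would show that at the optimum $i_0 \le k-1$. Suppose instead that $i_0 \ge k$, so $c\alpha_1 \le 1/k$. Then \Cref{lem:iZero} gives $A_{k-1} = (k-1)\alpha_1$. Hence
\[
1 - A_{k-1} \ge 1 - \frac{k-1}{ck} > \frac{1}{k} \ge c\alpha_1 ,
\]
so $T_k$ is the strict maximum. Increasing $\alpha_1$ (and with it $\alpha_2,\dots,\alpha_{k-1}$) would then lower the objective, contradicting optimality.

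So $i_0 \le k-1$. By \Cref{lem:iZero}, $A_{i_0} = i_0\alpha_1$. Applying the recursion $A_i = \frac{1}{1-c\alpha_1}A_{i-1}$ of \Cref{lem:largeAlphas} for $i = i_0+1,\dots,k-1$ gives
\[
A_{k-1} = \left(\frac{1}{1-c\alpha_1}\right)^{k-1-i_0} i_0\,\alpha_1 .
\]
Call this $g(\alpha_1)$. All of $T_1,\dots,T_{k-1}$ equal $n^{c\alpha_1}$ by the two lemmas, and $T_k = n^{1-g(\alpha_1)}$. The objective is therefore $\max\{c\alpha_1,\ 1-g(\alpha_1)\}$ in the exponent.

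\textbf{Step 2: balancing via monotonicity.} The term $c\alpha_1$ is strictly increasing in $\alpha_1$. I would show that $g$ is continuous and strictly increasing on $(0,1/c)$. Between consecutive integer values of $1/(c\alpha_1)$ the integer $i_0$ is constant, and $g$ is visibly a product of increasing positive factors.

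The expected obstacle is continuity at the jump points. There $1/(c\alpha_1) = m$ is an integer, and $i_0$ drops from $m$ to $m-1$ as $\alpha_1$ increases past this point. With $c\alpha_1 = 1/m$, the $i_0 = m$ expression gives $m\alpha_1\,(1-1/m)^{-(k-1-m)}$. The $i_0 = m-1$ expression gives
\[
(m-1)\alpha_1\,(1-1/m)^{-(k-m)} = (m-1)\alpha_1\cdot\frac{m}{m-1}\,(1-1/m)^{-(k-1-m)} = m\alpha_1\,(1-1/m)^{-(k-1-m)} .
\]
The two agree, so $g$ is continuous.

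Since $1-g$ is then continuous and strictly decreasing, it crosses $c\alpha_1$ exactly once. At that crossing the min–max is attained. To one side $T_k$ dominates and increasing $\alpha_1$ helps; to the other $T_1$ dominates and decreasing $\alpha_1$ helps. This also yields the uniqueness of $\alpha_1$ claimed in the theorems. Existence of the crossing below $1/c$ follows because as $\alpha_1 \to 1/c$ we have $g \to \infty$.

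\textbf{Step 3: algebra.} Setting $1 - g(\alpha_1) = c\alpha_1$ gives
\[
\left(\frac{1}{1-c\alpha_1}\right)^{k-1-i_0} i_0\alpha_1 = 1 - c\alpha_1 .
\]
Dividing by $1-c\alpha_1$ yields $1 = \left(\frac{1}{1-c\alpha_1}\right)^{k-i_0} i_0\alpha_1$, as claimed.
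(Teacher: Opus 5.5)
Your proposal is essentially the paper's proof. You take the closed form $A_{k-1}=\left(\frac{1}{1-c\alpha_1}\right)^{k-1-i_0} i_0\alpha_1$ from \Cref{lem:iZero,lem:largeAlphas}, set $T_1=T_k$, and do the same rearrangement. Your Step~2 additionally justifies the balancing and the uniqueness. The paper only asserts the balancing here, and handles uniqueness separately in \Cref{cor:optSolution} with the same continuity-at-jump-points computation.

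There are, however, two boundary slips you should repair.

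\textbf{The case $c=1$ in Step~1.} The strict inequality $1-\frac{k-1}{ck}>\frac1k$ holds only for $c>1$. For $c=1$ the optimum is $\alpha_1=\frac1k$, where $i_0=k$ and $T_k=T_1$. So your claim that $i_0\le k-1$ at the optimum is false there. In the same case, the ``exactly one crossing in $(\frac{1}{ck},\frac1c)$'' of Step~2 also fails, since the curves meet only at the excluded endpoint. The lemma still holds in this case, because the identity reads $1=\left(\frac{1}{1-c\alpha_1}\right)^{0}\cdot k\cdot\frac1k$. The fix is to note that on the region $i_0\ge k$ one has $1-A_{k-1}\ge c\alpha_1$, with equality only when $c=1$ and $\alpha_1=\frac1k$, and then verify that point directly.

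\textbf{The limit for $k=2$ in Step~2.} Your justification ``$g\to\infty$ as $\alpha_1\to\frac1c$'' is wrong for $k=2$. There the exponent $k-1-i_0$ is $0$ and $g\to\frac1c$. The crossing still exists, since $1-g\to 1-\frac1c<1=\lim c\alpha_1$.

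\textbf{Where monotonicity is needed.} Monotonicity of $g$ is only ``visible'' on $(\frac{1}{ck},\frac1c)$, where $k-1-i_0\ge 0$, and that is also the only interval you need. On $(0,\frac{1}{ck})$ the exponent is negative, and $g$ no longer equals $A_{k-1}=(k-1)\alpha_1$. So you should restrict the claim to that interval rather than to all of $(0,\frac1c)$.
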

\begin{proof}
    We use \Cref{lem:iZero,lem:largeAlphas} to express $A_{k-1}$ in terms of $i_0$ and $\alpha_1$.
    \begin{align*}
        A_{k-1} &= \left(\frac{1}{1-c\alpha_1}\right)^{k-1 - i_0} A_{i_0} \\&= \left(\frac{1}{1-c\alpha_1}\right)^{k-1 - i_0} i_0 \alpha_1
    \end{align*}
    By setting $T_1 = T_k$, we get that $n^{c\alpha_1} = n^{1- A_{k-1}}$. Hence, we obtain the following.
    \begin{alignat*}{2}
        && n^{c\alpha_1} &= n^{1- A_{k-1}} \\
        \iff&& c\alpha_1 &= 1-A_{k-1} \\
       \iff&& 1-c\alpha_1 &= A_{k-1} = \left(\frac{1}{1-c\alpha_1}\right)^{k-1 - i_0} i_0 \alpha_1\\
        \iff&& 1 &= \left(\frac{1}{1-c\alpha_1}\right)^{k - i_0} i_0 \alpha_1
    \end{alignat*}
\end{proof}

By combining \Cref{lem:iZero,lem:largeAlphas,lem:optSolution}, and by observing that in each phase the algorithm performs at most $O(n^{c\alpha_1})$ rakes, we obtain the following.
\begin{corollary}\label{cor:optSolution}
    The algorithm described in \Cref{sec:NAlgo} computes a $(\gamma, \ell, k)$-decomposition, for some $\gamma \in O(n^{c\alpha_1})$, in $O(n^{c\alpha_1})$ rounds, where $\alpha_1$ is the unique value satisfying $1 = \left(\frac{1}{1-c\alpha_1}\right)^{k - i_0} i_0 \alpha_1$, and $i_0 := \left\lfloor\frac{1}{c \alpha_1}\right\rfloor$.
\end{corollary}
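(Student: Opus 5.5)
The plan is to instantiate the algorithm of \Cref{sec:NAlgo} with the parameters dictated by \Cref{lem:iZero,lem:largeAlphas,lem:optSolution}. Concretely, set $\alpha_1=\dots=\alpha_{i_0}$ equal to the root $\alpha_1$ and $\alpha_i=\frac{c\alpha_1}{1-c\alpha_1}A_{i-1}$ for $i_0<i\le k-1$. This gives $A_i=i\alpha_1$ for $i\le i_0$ and $A_{k-1}=(1-c\alpha_1)^{-(k-1-i_0)}i_0\alpha_1=1-c\alpha_1$. Then bound each phase by $O(n^{c\alpha_1})$ using \Cref{lem:runtimeOfPhases}, for every admissible $N\in[n,n^c]$ (not just the worst case). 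Correctness of the output as a $(\gamma,\ell,k)$-decomposition is immediate from the construction, since every phase consists of rakes and a compress exactly as in \cite{CP19timeHierarchy}. The sublayer count $\gamma$ is the maximum number of rakes performed in any phase, so it equals the maximal phase runtime up to constants. The additive $O(\log^*\maxID)$ preprocessing for the ruling sets is charged separately.

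The phases are bounded in three groups. For $i\le i_0$, phase $i$ costs at most $N^{\alpha_i}=N^{\alpha_1}\le n^{c\alpha_1}$. For $i_0<i\le k-1$, write $N=n^b$ with $b\in[1,c]$; the phase costs $\min\{n^{1-bA_{i-1}},n^{b\alpha_i}\}$. The first term decreases in $b$ and the second increases, so the minimum is at most the value at the crossing point $b=1/A_i$, which equals $n^{\alpha_i/A_i}$. Substituting $\alpha_i=c\alpha_1(\alpha_i+A_{i-1})$ shows $\alpha_i/A_i=c\alpha_1$. If the crossing point lies outside $[1,c]$, the bound only improves. For phase $k$, the cost is $O(n/N^{A_{k-1}})\le n^{1-A_{k-1}}=n^{c\alpha_1}$.

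The step needing real care is that $\alpha_1$ is well-defined and unique with $\alpha_1<1/c$. Define $g(x)=\left(\frac{1}{1-cx}\right)^{k-\lfloor 1/(cx)\rfloor}\lfloor\frac{1}{cx}\rfloor x$ on $(0,1/c)$. First, $g$ is continuous: at a breakpoint $x=1/(cm)$, the floor jumps from $m$ to $m-1$. The identity $\frac{1}{1-cx}\,(m-1)x = \frac{m-1}{m-1}\cdot\frac{1}{c}\cdot\frac{1}{m}\cdot m = mx$ holds when $cx=1/m$, so both one-sided values agree. Second, $g$ is strictly increasing: between breakpoints it is a product of increasing positive factors. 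Third, $g\to 0$ as $x\to 0$ (since $\lfloor 1/(cx)\rfloor x\le 1/c$ while the other factor tends to $1$, with care when $i_0\ge k$, where the exponent becomes negative). Fourth, $g\to\infty$ as $x\to 1/c$. The intermediate value theorem then gives a unique root.

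If $i_0\ge k$ the exponent is non-positive and the construction of \Cref{lem:largeAlphas} is vacuous. I would check that the root always has $i_0\le k-1$: at $x=1/(ck)$ we get $i_0=k$ and $g=k\cdot\frac{1}{ck}=\frac{1}{c}\le 1$, so the root satisfies $x\ge 1/(ck)$. I expect this monotonicity and continuity analysis across breakpoints to be the main obstacle. The remaining steps are bookkeeping with \Cref{cor:remainAfterPhase} and the $O(1)$ number of phases.
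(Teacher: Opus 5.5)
Your route is the paper's. You instantiate the $\alpha_i$ via \Cref{lem:iZero,lem:largeAlphas,lem:optSolution} and bound every phase by $O(n^{c\alpha_1})$; your crossing-point argument at $b=1/A_i$ is exactly the computation in the proof of \Cref{lem:iZero}. You then get existence and uniqueness of $\alpha_1$ from continuity across the breakpoints $1/(cm)$, monotonicity, and the intermediate value theorem. The phase bounds and the breakpoint continuity check are fine. However, two claims in the existence/uniqueness part, the step you yourself flag as delicate, are false as written.

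\textbf{The limit at $0$.} The claim $g\to 0$ as $x\to 0^+$ is wrong. The factor $i_0x$ tends to $1/c$. The factor $\bigl(\frac{1}{1-cx}\bigr)^{k-i_0}=(1-cx)^{i_0-k}$ is a $1^\infty$ form with exponent $\approx 1/(cx)$, so it tends to $e^{-1}$, not $1$. Hence $g\to 1/(ce)$, which is what the paper computes.

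\textbf{Monotonicity.} ``Product of increasing positive factors'' fails on the pieces where $i_0>k$, since there $(1-cx)^{i_0-k}$ is decreasing in $x$. The paper handles this by differentiating on each piece:
$g_t'(x)=t(1-cx)^{t-k-1}\bigl(1-cx(1+t-k)\bigr)$.
This is positive because $cx<1/t$ on that piece, so $cx(1+t-k)<1$ even when $t>k$. The paper then applies the intermediate value theorem between $\lim_{0^+}g=1/(ce)<1$ and $+\infty$.

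\textbf{A repair using your own computation.} Your value at $x=1/(ck)$ is enough to fix both points.
\begin{itemize}
\item On $(0,1/(ck)]$ you have $i_0\ge k$, hence $g(x)\le i_0x\le 1/c\le 1$. Equality holds only when $c=1$ and $x=1/k$, so there is no root strictly below $1/(ck)$.
\item On $[1/(ck),1/c)$ you have $i_0\le k$, where your product argument is valid. So $g$ is strictly increasing there.
\item The intermediate value theorem on this interval gives existence and uniqueness, using $g(1/(ck))=1/c\le 1$ and $g\to\infty$ as $x\to 1/c^-$. The latter needs $k\ge 2$.
\end{itemize}

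Finally, this only yields $i_0\le k$, not $i_0\le k-1$ as you claim. For $c=1$ the root is $\alpha_1=1/k$ with $i_0=k$. This is harmless, since then $A_{k-1}=(k-1)/k=1-c\alpha_1$ still holds.
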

\begin{proof}
    The algorithm described in \Cref{sec:NAlgo} satisfies the desired runtime when using the parameters provided by the optimization problem.\\
    What is left to show is that there exists a unique real solution. For this we will first prove, that the function 
	\[
		f(\alpha_1) = \left(\frac{1}{1-c\alpha_1}\right)^{k - i_0} i_0 \alpha_1 -1
	\]
	is continuous on the interval $(0,\frac{1}{c})$ and then argue, that it is monotonically increasing. The only part of $f$ that is not continuous, is $i_0$. Since $i_0 := \left\lfloor\frac{1}{c \alpha_1}\right\rfloor$ , $i_0$ makes jumps at $ \alpha_1 = \frac{1}{tc} =: \beta_t$, for all $t \in \N_{> 1}$. Hence, if we prove $f$ is continuous in these points, then $f$ is continuous on all of $(0, \frac{1}{c})$.
    Let us compute the left and right limits of $f$ at $\beta_t$.
    First, notice that the limit of $\frac{1}{1-c\alpha_1}$ exists at $\beta_t$ and is equal to $\frac{1}{1-c/(tc)}=\frac{t}{t-1}$.
    Therefore we have the following:
    \[
        \lim_{\alpha_1\to \beta_t^+} f(\alpha_1)= \left(\frac{t}{t-1}\right)^{k-(t-1)} (t-1) \beta_t - 1 = \frac{1}{c} \left(\frac{t}{t-1}\right)^{k-(t-1)}\frac{t-1}{t}-1 = \frac{1}{c} \left(\frac{t}{t-1}\right)^{k-t}-1
    \]
    and
    \[
        \lim_{\alpha_1\to \beta_t^-} f(\alpha_1)= \left(\frac{t}{t-1}\right)^{k-t} t \beta_t - 1 = \frac{1}{c} \left(\frac{t}{t-1}\right)^{k-t}-1.
    \]
    The two limits are equal, therefore $f$ is continuous at $\beta_t$ for every $t$, and hence $f$ is continuous on $(0, \frac{1}{c})$.
	
    Now, let us prove that $f$ is monotonically increasing.
    Since $f$ is continuous and differentiable on the intervals $(\beta_{t+1},\beta_t)$, we will just differentiate $f$ on these intervals and show that the derivative is nonnegative.
    Moreover, notice that on the interval $(\beta_{t+1},\beta_t)$, $i_0 = t$.
    Let $g_t(x) = \left(\frac{1}{1-cx}\right)^{k-t}tx - 1$.
    Then its derivative $g'$ is, by the standard rules of differentiation:
    \[
    g_t'(x) = t (1-cx)^{t-k} + tx (t-k)(-c)(1-cx)^{t-k-1} = t(1-cx)^{t-k-1}(1-cx+cx(k-t)).
    \]
    Let us show that $g_t'(x)>0$ for $x\in (\beta_{t+1},\beta_t)\cap (0,1/c)$ (remember that we require $\alpha_1 < 1/c$).
    As $x<1/c$, we have that $t(1-cx)^{t-k-1} > 0$.
    Moreover,
    \[
        1-cx+cx(k-t)> 0 \Leftrightarrow 1 > cx (1-k+t).
    \]
    We know that $cx< c/c = 1$ and $t\leq k$, i.e. $1-k+t\leq 1$.
    Therefore, $g'$ is nonnegative on $(\beta_{t+1},\beta_t)\cap (0,1/c)$ and only one solution to $f(\alpha_1)=0$ can exist.
    Therefore, $\alpha_1$ is unique. Is it know left to show that $\alpha_1$ exists. For this, we use the Intermediate Value Theorem.
    Let us compute the limits of $f(\alpha_1)$ towards $0$ and $1/c$.
    Let us first compute $\lim_{\alpha_1\to 0^+}f(\alpha_1)$.
    Let $i_0(\alpha_1) = \left\lceil\frac{1}{c\alpha_1}\right\rceil$.
    There is some function $\delta(\alpha_1)$ such that for all $\alpha_1$, $\delta(\alpha_1)\in[0,1)$ and such that $i_0(\alpha_1) = \frac{1}{c\alpha_1} - \delta(\alpha_1)$.
    First, 
    \[
        \lim_{\alpha_1\to 0^+} i_0(\alpha_1)\alpha_1
        = \lim_{\alpha_1\to 0^+} \left(\frac{1}{c\alpha_1}-\delta(\alpha_1)\right)\alpha_1
        = \lim_{\alpha_1\to 0^+} \frac{1}{c}-\delta(\alpha_1)\alpha_1
        = \frac{1}{c}.
    \]
    Now, we focus on computing $\lim_{\alpha_1\to 0^+} (1-c\alpha_1)^{i_0(\alpha_1)-k}$.
    Rewrite 
    \[
    (1-c\alpha_1)^{i_0(\alpha_1)-k} = \exp\left((i_0(\alpha_1)-k)\ln(1-c\alpha_1)\right).
    \]
    We have that
    \[
        (i_0(\alpha_1)-k)\ln(1-c\alpha_1)\sim_{\alpha_1\to 0^+} \left(\frac{1}{c\alpha_1}-\delta(\alpha_1)-k\right)(-c\alpha_1) \longrightarrow_{\alpha_1\to 0^+} -1 + \lim_{\alpha_1\to 0^+}(\delta(\alpha_1)+k)\alpha_1 = -1
    \]
    as $k$ and $\delta(\alpha_1)$ are bounded.
    Finally, we get that $\lim_{\alpha_1\to 0^+} (1-c\alpha_1)^{i_0(\alpha_1)-k} = \frac{1}{e}$ and that
    \[
        \lim_{\alpha_1\to 0^+}f(\alpha_1) = \frac{1}{ce}-1 < 0.
    \]
    Now, let us compute $ \lim_{\alpha_1\to {1/c}^-} f(\alpha_1)$.
    First, notice that $\lim_{\alpha_1\to {1/c}^-} i_0(\alpha_1) = 1$.
    Therefore, if it exists,
    \[
        \lim_{\alpha_1\to {1/c}^-}f(\alpha_1) = -1 + \frac{1}{c} \lim_{\alpha_1\to {1/c}^-} (1-c\alpha_1)^{1-k}.
    \]
    As $k> 1$ and $(1-c\alpha_1) \longrightarrow_{\alpha_1\to {1/c}^-} 0^+$, we get $\lim_{\alpha_1\to {1/c}^-} (1-c\alpha_1)^{1-k} = +\infty$, and
    \[
        \lim_{\alpha_1\to {1/c}^-}f(\alpha_1) = + \infty.
    \]
    Intermediate Value Theorem, as $0\in\left(\frac{1}{ce}-1,+\infty\right)$, and as $f$ is continuous, $\alpha_1$, which is the solution to $f(\alpha_1)=0$, exists.
    This finishes the proof.
\end{proof}

By invoking \Cref{lem:all-problems} we get that all LCLs can be solved in that round complexity, note that the $O(\log^* I)$ term comes from the compress precomputation (see \Cref{sec:decAndLandscape}).
\RcUpperBound*

With the optimal values determined, we prove some additional properties of the optimal values, that will be needed in the following sections.
\begin{lemma}\label{lem:Utilities}
    For $i_0 < i \leq k-1$, the following statements are true for any optimal solution to the optimization problem.
    \begin{align}
        \frac{\alpha_i}{A_i} &= \alpha_1 c \label{eq:setEq}\\
        \alpha_i &= \frac{1}{1-c\alpha_1} \alpha_{i-1} \label{eq:alphas}\\
        \alpha_{i_0} & < \alpha_{i_0 + 1} < \ldots < \alpha_{k-1} \label{eq:alphaIncrease}\\
        A_i &\geq \frac{1}{c} \label{eq:boundAi}
    \end{align}
\end{lemma}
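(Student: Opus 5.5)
We derive the four claims directly from \Cref{lem:iZero} and \Cref{lem:largeAlphas}, treating them in the order (\ref{eq:setEq}), (\ref{eq:alphas}), (\ref{eq:alphaIncrease}), (\ref{eq:boundAi}). Throughout we use $0<c\alpha_1<1$, which is guaranteed by \Cref{cor:betterThanGlobal}.

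For (\ref{eq:setEq}), fix $i_0<i\le k-1$. \Cref{lem:largeAlphas} gives $\alpha_i = \frac{c\alpha_1}{1-c\alpha_1}A_{i-1}$ and $A_i=\frac{1}{1-c\alpha_1}A_{i-1}$. Dividing the first identity by the second yields $\alpha_i/A_i = c\alpha_1$. This is just the defining equation $T_i=T_1$ from the proof of \Cref{lem:largeAlphas}, read backwards.

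For (\ref{eq:alphas}), we distinguish two cases.
\begin{itemize}
\item If $i-1>i_0$, both $\alpha_i$ and $\alpha_{i-1}$ are given by \Cref{lem:largeAlphas}. Then $\alpha_i/\alpha_{i-1} = A_{i-1}/A_{i-2} = \frac{1}{1-c\alpha_1}$.
\item If $i-1=i_0$, we have $A_{i_0}=i_0\alpha_1$ by \Cref{lem:iZero}, so $\alpha_{i_0+1} = \frac{c\alpha_1}{1-c\alpha_1}\, i_0\alpha_1$. The claimed identity would instead require $\alpha_{i_0+1}=\frac{1}{1-c\alpha_1}\alpha_1$, which holds only if $c\,i_0\alpha_1=1$. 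In general we only know $c\,i_0\alpha_1\le 1$.
\end{itemize}
This boundary index is the main obstacle. We will prove (\ref{eq:alphas}) as stated for $i>i_0+1$. At $i=i_0+1$ we expect only the inequality $\alpha_{i_0+1}\le\frac{1}{1-c\alpha_1}\alpha_{i_0}$, and we will either state the identity with the range $i_0+1< i\le k-1$ or note this relation at the boundary.

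For (\ref{eq:alphaIncrease}), the strict increase among indices $>i_0$ follows from (\ref{eq:alphas}), since $\frac{1}{1-c\alpha_1}>1$. For the first step, (\ref{eq:setEq}) gives $\alpha_{i_0+1} = c\alpha_1 A_{i_0+1}$. From $b_{i_0+1}=1/A_{i_0+1}<c$ (the definition of $i_0$ in the proof of \Cref{lem:iZero}) we get $A_{i_0+1}>1/c$, hence $\alpha_{i_0+1}>\alpha_1=\alpha_{i_0}$.

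For (\ref{eq:boundAi}), at $i=i_0+1$ the inequality $A_{i_0+1}>1/c$ was just shown. For larger $i$, the sequence $A_i$ is increasing because all $\alpha_i>0$, so $A_i\ge 1/c$ for all $i_0<i\le k-1$. Alternatively, one can argue directly from the maximality of $i_0=\lfloor 1/(c\alpha_1)\rfloor$: $(i_0+1)c\alpha_1>1$ combined with $A_{i_0+1}\ge (i_0+1)\alpha_1$, which holds since $\alpha_{i_0+1}\ge\alpha_1$.
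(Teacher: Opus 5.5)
Your proposal is correct. Where the statement is actually true, it follows the paper's route. The paper also reads (\ref{eq:setEq}) off the equation $T_i=T_1$ from \Cref{lem:largeAlphas}. It obtains (\ref{eq:alphas}) from $\alpha_i=\frac{c\alpha_1}{1-c\alpha_1}A_{i-1}$ by substituting $c\alpha_1=\alpha_{i-1}/A_{i-1}$, and then derives (\ref{eq:alphaIncrease}) from (\ref{eq:alphas}). Finally it gets (\ref{eq:boundAi}) from $A_i=\alpha_i/(c\alpha_1)$ together with $\alpha_i>\alpha_1$.

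At the boundary index $i=i_0+1$ your diagnosis is right and the paper's is not. The substitution $c\alpha_1=\alpha_{i-1}/A_{i-1}$ is only justified for $i-1>i_0$. At $i-1=i_0$ the ratio is $\alpha_{i_0}/A_{i_0}=1/i_0\ge c\alpha_1$, with equality only if $1/(c\alpha_1)\in\N$.

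This is a failure of the stated identity itself, not just of its proof. Take $k=c=3$, the example from the introduction. Then $c\alpha_1=(7-\sqrt{13})/6\approx 0.566$ and $i_0=1$. \Cref{lem:largeAlphas} gives $\alpha_2=\frac{c\alpha_1}{1-c\alpha_1}\alpha_1$, whereas (\ref{eq:alphas}) would require $\alpha_2=\frac{1}{1-c\alpha_1}\alpha_1$.

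Consequently, the paper's derivation of the first inequality $\alpha_{i_0}<\alpha_{i_0+1}$ in (\ref{eq:alphaIncrease}) also has a hole, and your direct argument closes it. Your repair is the right one: restrict (\ref{eq:alphas}) to $i_0+1<i\le k-1$ and record $\alpha_{i_0+1}\le\frac{1}{1-c\alpha_1}\alpha_{i_0}$ at the boundary.

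Two small points of rigor remain. First, the paper's description of $i_0$ in the proof of \Cref{lem:iZero} only yields $1/A_{i_0+1}\le c$, not the strict inequality you quote. It is cleaner to compute $A_{i_0+1}=\frac{i_0\alpha_1}{1-c\alpha_1}$ from \Cref{lem:iZero,lem:largeAlphas}. Then $A_{i_0+1}>1/c$ is equivalent to $(i_0+1)c\alpha_1>1$, which holds because $i_0=\lfloor 1/(c\alpha_1)\rfloor$. Equivalently, $\alpha_{i_0+1}/\alpha_1=\frac{i_0c\alpha_1}{1-c\alpha_1}>1$.

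Second, your ``alternative'' route to (\ref{eq:boundAi}) uses $\alpha_{i_0+1}\ge\alpha_1$, which you had obtained from $A_{i_0+1}>1/c$. As written it is therefore circular, but the closed form above makes it independent.
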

\begin{proof}
    
    Fact~(\ref{eq:setEq}) follows from the prove of \Cref{lem:largeAlphas}, where we set $n^{c\alpha_1} = n^{\alpha_i/A_i}$
    \newline
    For Fact~(\ref{eq:alphas}): 
    \begin{align*}
        \alpha_i = \frac{c\alpha_1}{1-c\alpha_1} A_{i-1} = \frac{1}{1-c\alpha_1} \cdot \frac{\alpha_{i-1}}{A_{i-1}} \cdot A_{i-1} = \frac{1}{1-c\alpha_1} \alpha_{i-1}
    \end{align*}
    \newline
    Then Fact~(\ref{eq:alphaIncrease}) follows directly from Fact~(\ref{eq:alphas}), by observing that $c\alpha_1 < 1$
    \newline
    For Fact (\ref{eq:boundAi}) we start with (\ref{eq:setEq}) and get 
    \begin{align*}
         & \frac{\alpha_i}{A_i} = \alpha_1 c \\
        \iff & A_i = \frac{1}{c} \frac{\alpha_i}{\alpha_1}
    \end{align*}
    By \Cref{lem:iZero} and Fact~(\ref{eq:alphaIncrease}) it follows that $\alpha_i > \alpha_1$ and so (\ref{eq:boundAi}) follows.\\
    
\end{proof}

\section{Lower Bound for the Case in Which a Polynomial Upper Bound on \texorpdfstring{$n$}{n} is Given}\label{sec:LBpolyPromise}
Next we show that our algorithm from \Cref{sec:NAlgo} is actually optimal. This is interesting as it essentially implies that our algorithm exploits the given information as much as possible. So the tradeoff between upfront work and exploiting the knowledge obtained during earlier rounds is optimal. Note that we have essentially captured the inner workings of this tradeoff in the optimization problem. As such, we will also refer to it often when proving our lower bounds.

In this section, we prove lower bounds for the setting in which nodes are provided with some integer $N$ and the promise that:
\begin{itemize}
    \item $n \le N \le n^c$ for some positive integer $c$, and nodes do not know $n$ but they know $c$.
    \item IDs are from $\{1,\ldots,N\}$.
\end{itemize} While the lower bounds of \Cref{sec:lb-no-input} hold using Knuth's definition of $\Omega$, in this section we prove lower bounds using Hardy-Littlewood's definition of $\Omega$. For more information on those definitions, refer to \Cref{subsec:omega_definition}. We start by proving that such weaker statements are necessary, since no improved lower bounds can be obtained for Knuth's definition of $\Omega$.

\subsection{An Algorithm That Requires \texorpdfstring{$O(n^{1/k})$ Rounds}{O(n1/k) Rounds} for Infinitely Many Values of $n$}\label{ssec:ubknuth}
We prove that, for the setting considered in this section, for infinitely many values of $n$, it is possible to compute a $(\gamma,\ell,k)$-decomposition in $O(\gamma)$ rounds for $\gamma \in O(n^{1/k})$.
This implies that, for Knuth's definition of $\Omega$, the $(\gamma,\ell,k)$-decomposition problem is not in $\Omega(T)$, for any $T$ that is asymptotically strictly larger than $n^{1/k}$.  By \Cref{lem:all-problems}, the same statement will then hold for all LCLs that in the standard LOCAL model have complexity $O(n^{1/k})$, and in particular for $k$-hierarchical $2\frac{1}{2}$-coloring.

\begin{lemma}\label{lem:decomp-no-knuth}
    Let $\Pi$ be the problem of computing a $(\gamma(n),\ell,k)$-decomposition satisfying that for all $n$, $\gamma(n) = O(n)$, but for infinitely many values of $n$, $\gamma(n) = O(n^{1/k})$.
    For Knuth's definition of $\Omega$, the problem $\Pi$ is not in $\Omega(T)$, for any $T$ that is asymptotically strictly larger than $n^{1/k}$.
\end{lemma}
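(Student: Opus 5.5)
The plan is to give an explicit algorithm $\mathcal{A}$ that is correct on every admissible instance and that, along an infinite sequence of sizes $n$, finishes in $O(n^{1/k})$ rounds. For Knuth's $\Omega$ this is enough: if $\Pi \in \Omega(T)$ with $T = \omega(n^{1/k})$, then every algorithm would need at least $\kappa\, T(n)$ rounds for all $n \ge n_0$. On the sequence, however, $\mathcal{A}$ uses only $O(n^{1/k}) = o(T(n))$ rounds, which is a contradiction.

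Following the sketch in the introduction, I define the sequence $s_1 = 2$ and $s_{i+1} = s_i^{2c}$. First I check that for every $N$ at most one $s_i$ lies in $\{N^{1/c}, \ldots, N\}$. If $s_i < s_j$ both lie there, then $s_j \ge s_i^{2c} \ge (N^{1/c})^{2c} = N^2 > N$, which is impossible. So the rule ``let $N'$ be the unique $s_i \in [N^{1/c}, N]$ if it exists'' is a deterministic function of the common inputs $(N, c)$. Every node therefore computes the same $N'$, or every node learns that no $N'$ exists.

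The algorithm $\mathcal{A}$ has two branches.
\begin{itemize}
\item If $N'$ exists, run the algorithm of \Cref{lem:polyDecomp} with $N'$ in place of $n$, that is, with $\gamma = \Theta(N'^{1/k})$.
\item If $N'$ does not exist, or the first branch fails, fall back to a safe procedure such as the $O(n)$-round brute force. Every node learns the whole tree and computes a $(\gamma,\ell,k)$-decomposition with $\gamma = O(n)$, for instance by raking until the tree is empty.
\end{itemize}

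Correctness in the first branch is the delicate point, and I expect it to be the main obstacle. When $N' \ne n$ the rake budget may be wrong, so after the final rake phase the graph might not be empty. Making $\mathcal{A}$ correct in this case needs care:
\begin{itemize}
\item Every node runs the fixed schedule of $(k-1)$ phases of $\gamma$ rakes plus one compress, followed by $\gamma$ final rakes. Locally it knows whether it was removed.
\item Nodes still unremoved at the end continue raking inside the last rake layer $V^R_k$ until the graph is empty. This takes at most $O(n)$ further rounds.
\item Already-assigned nodes keep their layers. Legality only requires that later-removed nodes receive higher layers, and extending the last rake layer with additional sublayers preserves this.
\end{itemize}
This yields a valid decomposition whose $\gamma$ is the total number of rake sublayers in the last layer, which is $O(n)$ in general. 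That matches the hypothesis that only $\gamma(n) = O(n)$ is required for all $n$.

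When $n = s_i$, the promise $n \le N \le n^c$ gives $s_i \in [N^{1/c}, N]$ for every admissible $N$, so $N' = n$. In that case \Cref{lem:polyDecomp} guarantees the graph is empty after the scheduled phases. The run then takes $O(k\, n^{1/k})$ rounds, plus the $O(\log^* I)$ distance-coloring precomputation for compress, which is negligible for polynomial IDs. It also outputs $\gamma(n) = O(n^{1/k})$ on the infinite set $\{s_i\}$, as the statement demands.

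A secondary point to check is termination detection. A node must know when to stop adding rake sublayers, which in a tree it does simply once it is itself raked. Its output is fixed at that moment, so the extension is purely local and needs no global knowledge of $n$.
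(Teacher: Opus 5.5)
Your proposal is correct and follows essentially the same argument as the paper. It uses the same sequence $s_{i+1}=s_i^{2c}$ to guess $n$ from $N$, then runs $k-1$ phases of $\Theta(N'^{1/k})$ rakes plus a compress and rakes until the graph is empty, giving $O(n^{1/k})$ rounds whenever $n=s_i$ and $O(n)$ otherwise; the only cosmetic difference is that when no $s_i$ lies in $[N^{1/c},N]$ you fall back to brute force, whereas the paper runs the same schedule with the guess $X=N$.
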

\begin{proof}
We prove that for infinitely many values of $n$, it is possible to compute a $(\gamma(n),\ell,k)$-decomposition in $O(\gamma(n))$ rounds, for $\gamma(n) \in O(n^{1/k})$. For all the other values of $n$, $\gamma(n)$ will be in $O(n)$.
Let $S = s_1, s_2, \ldots$ be the sequence defined as follows.
    \begin{align*}
        s_i &= 2 \text{ if }i = 1\\
        s_{i} &= s_{i-1}^{2c} \text{ otherwise }
    \end{align*}

The algorithm receives as input $N$, which is guaranteed to satisfy $n \le N \le n^c$. Recall that $c$ is known by the algorithm, and that all nodes receive the same value of $N$. The nodes compute $S^* = S \cap \{\lfloor N^{1/c} \rfloor,\ldots,N\}$. Observe that, by the definition of $S$, it either holds that $S^*$ is empty or that it contains a single element $s$. 

Let $X = N$ if $S^*$ is empty, and let $X = s$ otherwise. 
The algorithm, for $k-1$ times, performs $X^{1/k}$ rakes and a single compress. Then, it performs rakes until the graph becomes empty. Each of the $k-1$ phases costs $O(X^{1/k})$ rounds. After the $k-1$ phases, by \Cref{lem:rcremaining}, the number of remaining nodes is $O(\frac{n}{X^{(k-1)/k}})$. Hence, the runtime of the algorithm is 
$O\left(\max\set{\frac{n}{X^{(k-1)/k}}, X^{1/k}}\right)$.
While the runtime of the algorithm is clearly $O(n)$ and the algorithm clearly computes a $(O(n),\ell,k)$-decomposition, observe that, if $n = s_i$ for some $i$, then $X = n$, and hence for infinitely many values of $n$ it holds that $X = n$, the runtime is $O(n^{1/k})$, and the result is a $(O(n^{1/k}),\ell,k)$-decomposition.
\end{proof}
By combining \Cref{lem:decomp-no-knuth} with \Cref{lem:all-problems}, we obtain our theorem.

\NoKnuthLB*
Observe that \Cref{thm:no-kuth-LB} applies to $k$-hierarchical $2\frac{1}{2}$-coloring as well.

\subsection{Lower Bound for $k$-Hierarchical $2\frac{1}{2}$-coloring}\label{sec:Lowerbound}
We now prove that, according to the Hardy-Littlewood's definition of $\Omega$, the algorithm for $k$-hierarchical $2\frac{1}{2}$-coloring guaranteed to exist by \Cref{cor:optSolution} and \Cref{lem:all-problems} is tight. More specifically, we devote the rest of the section in proving the following theorem.

\HalfColLBPoly*

We start by proving that, similarly as in the proof of \Cref{thm:2half-no-bounds}, if all nodes below some layer $i$ output $D$, then also nodes at layer $i$ output $D$, unless they spend some large runtime.
\begin{lemma}\label{lem:onlyDecl}
Let $\mathcal{A}$ be an algorithm for $k$-hierarchical $2\frac{1}{2}$-coloring, let $G$ be a $k$-hierarchical lower bound graph with parameters $\ell_1,\ldots,\ell_k$ and an arbitrary ID assignment, and let $1 \le i \le k$ be an integer parameter.
Suppose $\mathcal{A}$, on $G$, outputs $D$ on all nodes of level $< i$. Then, either:
\begin{itemize}
    \item $\mathcal{A}$ outputs $D$ on all nodes of level $i$, or
    \item at least one node $z$ of level $i$ outputs $B$ or $W$, runs for strictly more than $\lceil \ell_i / 8 \rceil$ rounds, and is at distance strictly larger than $\lceil \ell_i / 8 \rceil$ from the endpoints of the path containing $z$.
\end{itemize}
\end{lemma}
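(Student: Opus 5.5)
We argue by contraposition, reusing the parity-switching trick from the proof of \Cref{thm:2half-no-bounds}. Assume all nodes of level $<i$ output $D$ and some node of level $i$ does not output $D$; we show the second alternative holds. Let $T := \lceil \ell_i/8\rceil$. Since lower-level neighbors all output $D$, constraint 4 forbids any level-$i$ node from outputting $E$. Therefore, if one node of a level-$i$ path $P_i$ outputs a color, constraint 3 (a colored node cannot be adjacent to a same-level $D$ node) forces the whole path $P_i$ to be properly $2$-colored with $B,W$. Fix such a path $P_i$, of $\ell_i$ nodes, all of which output colors.

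Next we pick two witnesses on $P_i$. We take $u$ and $v$ to be the nodes at distance exactly $T+1$ from the two endpoints of $P_i$, respectively. Assuming $\ell_i$ is large enough (e.g.\ $\ell_i \ge 100$, as in the earlier proof; for small $\ell_i$ we handle the statement separately or absorb it by the choice of parameters later), the distance between $u$ and $v$ is about $\ell_i - 2T - 2 \ge 3\ell_i/4 - O(1)$, which is well above $2T+1$. Hence the path between them contains a node $z'$ at distance strictly greater than $T$ from both $u$ and $v$. Both $u$ and $v$ output $B$ or $W$ and lie at distance strictly greater than $T$ from the endpoints. It therefore suffices to show that at least one of them runs for strictly more than $T$ rounds; that node is the required $z$.

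Suppose instead that both $u$ and $v$ terminate within $T$ rounds. We build $\hat G$ exactly as in the proof of \Cref{thm:2half-no-bounds}: delete $z'$ together with the lower-level subtrees hanging off it (those reachable without passing through its two path-neighbors $z_1,z_2$), and add the edge $\{z_1,z_2\}$. The result is still a tree of the same type, with the same level structure near $u$ and $v$. We keep the IDs of the remaining nodes, so they stay unique and stay within the allowed range (we also keep $N$ unchanged, checking that the promise $n\le N\le n^c$ still holds for the slightly smaller $\hat G$; if needed, we instead add a node far away to preserve $n$). The radius-$T$ views of $u$ and $v$ are identical in $G$ and $\hat G$, so they produce the same outputs in $\hat G$, while the parity of the distance between them has flipped. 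In $\hat G$, levels below $i$ near the path are unchanged. Two cases then arise. If $\mathcal A$ outputs $D$ on all lower levels of $\hat G$ as well, the argument above forces the modified path to be properly $2$-colored, which is inconsistent with the fixed colors of $u$ and $v$ given the flipped parity, a contradiction. If instead some lower-level nodes of $\hat G$ output colors, the path could be split by $E$.

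This last point is the main obstacle, because the nodes that might now output colors are far from $u$ and $v$. To handle it, we make the path modification on $G'$ restricted to the component of levels $\le i$ (as done in the theorem), so that only level-$\le i$ structure matters. We then note that the lower-level paths in $\hat G$ are copies of those in $G$: each has the same local view up to the relevant radius, or else we choose $z'$ so that all lower-level subtrees kept in $\hat G$ are at distance greater than their own runtime from the modification. They therefore still output $D$, no $E$ can appear, and the parity contradiction goes through.
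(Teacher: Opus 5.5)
The gap is in your last paragraph, and the whole argument depends on that step. The hypothesis only says that nodes of level $<i$ output $D$ on $G$. It says nothing about how long they run, and they may well collect the entire tree before answering. Lower-level subtrees hang off every node of the $u$--$v$ segment, including $z_1$ and $z_2$ themselves. So no choice of $z'$ keeps the splice outside their views, and ``the same local view up to the relevant radius'' has no justification. Restricting to the component of levels $\le i$ changes nothing, since these nodes belong to it. In $\hat G$ such nodes may output $B$ or $W$, a level-$i$ node next to the splice may then output $E$, and the parity flip is absorbed without any violation. (Your deferral of small $\ell_i$ cannot be discharged either: for $\ell_i\le 3$ no node is at distance $>\lceil \ell_i/8\rceil$ from both endpoints, so that case must be excluded by hypothesis rather than handled.)

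This cannot be repaired under the stated hypotheses, because the statement already fails for $k=i=2$. Give the level-$2$ path of $G$ the IDs $1,2,\dots,\ell_2$ in order, and let $\mathcal{A}$ work as follows:
\begin{itemize}
\item After $O(1)$ rounds, every level-$2$ node whose level-$2$ neighbours all have ID parity opposite to its own outputs $W$ if its ID is even and $B$ otherwise. Two adjacent such nodes therefore get different colors.
\item Remainder nodes output $D$.
\item Every other node collects the whole tree. If some level-$2$ node has not yet output, it 2-colors every level-$1$ path and lets every such level-$2$ node output $E$; this is allowed, since every level-$2$ node has a level-$1$ neighbour. Otherwise all level-$1$ nodes output $D$.
\end{itemize}
This algorithm is correct on every tree. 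Yet on $G$ all level-$1$ nodes output $D$ while every level-$2$ node finishes after $O(1)$ rounds. So for large $\ell_2$ neither alternative holds, and in particular neither $u$ nor $v$ is slow.

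The paper's proof uses essentially your splice; it re-attaches $z$ to a distant node $x$ to keep $n$ and the ID set unchanged. It then simply asserts that $\mathcal{A}$ fails on $G$ or on $G'$, so it has the same hole: in the example, $z_1$ and $z_2$ carry IDs of equal parity in $G'$, do not stop early, and output $E$.

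What is missing is a certificate that the lower levels still decline after the modification, as in the induction in the proof of \Cref{thm:2half-no-bounds}. Suppose every lower-level path below the $u$--$v$ segment contains a node whose $D$ output is determined within a radius that does not reach the splice. Then, level by level, every node below the segment outputs $D$ in $\hat G$, no $E$ can appear on the segment, and your parity contradiction goes through. Without such an assumption your argument is valid only for $i=1$ (with $\ell_1$ large enough), where $E$ is not available.
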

\begin{proof}
Suppose that at least one node of level $i$ does not output $D$. Then, there must exist a path $P$ of level $i$ containing a node $w$ that does not output $D$. Since all nodes of levels $< i$ output $D$, the output of $w$ cannot be $E$, and hence it must be $B$ or $W$. By the constraints of the problem we get that $P$ must be properly $2$-colored. Suppose for a contradiction that all nodes of $P$ run for at most $T = \lceil \ell_i / 8 \rceil$ rounds. We modify $G$ to create a new instance $G'$ as follows.
Let $u$ and $v$ be two nodes of $P$ satisfying the following:
\begin{itemize}
    \item both $u$ and $v$ are at distance at least $T+1$ from the endpoints of $P$;
    \item $u$ and $v$ are at distance at least $2T+1$ from each other.
\end{itemize}
Let $z$ be an arbitrary node that lies between $u$ and $v$ in $P$, and such that it is at distance at least $T+1$ from both $u$ and $v$.
Let $z_1$ and $z_2$ be the two neighbors of $z$ in $P$. We create $G'$ as follows:
\begin{itemize}
    \item remove the edges $\{z,z_1\}$ and $\{z,z_2\}$;
    \item add the edge $\{z_1,z_2\}$;
    \item add an edge $\{x,z\}$ for an arbitrary node $x$ that is at distance at least $T+1$ from both $u$ and $v$.
\end{itemize}
We obtain that the parity of the distances between $u$ and $v$ is different in $G$ and in $G'$, but their view in $T$ rounds is the same. Hence, $\mathcal{A}$ must fail either in $G$ or in $G'$ in producing a $2$-coloring of the path containing them, reaching a contradiction.

\end{proof}

We now prove \Cref{thm:lb-known-N}. 
For a contradiction, assume that there exists a deterministic algorithm $\A$ with runtime $T(n) \in o(n^{c\alpha_1})$. As a result, for any $\varepsilon > 0$, there must exist an integer $n_0 = f(\varepsilon)$ such that, for all $n \geq n_0$, it holds that $T(n) \leq \varepsilon n^{c\alpha_1}$. 


Let $\beta$ be the constant $c_2$ guaranteed to exist by \Cref{obs:nodes-k-hierarchical}. Let $\varepsilon = \frac{1}{100\beta \sigma}$ for some constant $\sigma \ge 1$ to be fixed later, and let $n_0 = f(\varepsilon)$.
We first create an instance $G$ of the $k$-hierarchical lower bound graph with parameters $\ell_1 = \lceil n_1^{\alpha_1} \rceil, \ell_2 = \lceil n_1^{\alpha_2} \rceil, \ldots, \ell_{k-1} = \lceil n_1^{\alpha_{k-1}} \rceil, \ell_k = \lceil n_1^{1-A_{k-1}} \rceil$, where $n_1 = n_0^c$ and the values of $\alpha_i$ are given by \Cref{lem:iZero,lem:largeAlphas}. Let $n$ be the number of nodes of $G$. By \Cref{obs:nodes-k-hierarchical}, $n_1 \le n \le \beta n_1$.
We set $N = n$ and assign IDs $1, \ldots, n$ to the nodes of $G$ arbitrarily.

We prove by induction on $i$ that, on $G$, for all levels $1 \le i \le k-1$, all nodes of layer $\le i$ output $D$. We will later prove that this implies a contradiction for the nodes at level $k$. We start by considering the base case $i=1$.

\begin{claim}\label{lem:lbLevel1}
    All level $1$ nodes of $G$ output $D$.
\end{claim}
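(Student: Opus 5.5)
Since no node has level $<1$, the hypothesis of \Cref{lem:onlyDecl} holds vacuously for $i=1$. So either all level-$1$ nodes output $D$, or some level-$1$ node $z$ outputs $B$ or $W$, runs for more than $\lceil \ell_1/8\rceil$ rounds, and lies at distance more than $\lceil \ell_1/8\rceil$ from both endpoints of its path $P$. The plan is to rule out the second case with an indistinguishability argument that uses a different value of $N$.

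Assume such a $z$ exists and set $T=\lceil \ell_1/8\rceil = \Theta(n_1^{\alpha_1})$. I build a small instance $H$ in which $z$ has the same $T$-round view as in $G$. Take the ball $B_T(z)$ in $G$; it is a subpath of $P$ of about $2T+1$ nodes, together with the (level $\ge 2$) neighbours hanging off it, which are at distance $\ge 1$. Since $z$ is more than $T$ from the endpoints of $P$, the ball contains at most one higher-level attachment point. I then complete this ball into a valid tree, with the same IDs and degrees inside radius $T$, padded with fresh nodes so that it has exactly $n' := n_0$ nodes. Padding is possible because $|B_T(z)| = O(n_1^{\alpha_1}) = O(n_0^{c\alpha_1})$ and $c\alpha_1<1$, so for $n_0$ large enough the ball is much smaller than $n_0$.

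In $H$ the nodes receive the same input $N=n$. This is legal because $n' = n_0 \le n$ and $n \le \beta n_1 = \beta n_0^c$; after absorbing $\beta$, this is at most $(n')^{c}$ by slightly adjusting $n_0$ (for example, taking $n_1$ a bit smaller than $n_0^c$ or choosing $\sigma$ appropriately). The IDs in $H$ must lie in $\{1,\dots,n'^{c}\}$; the IDs copied from $G$ are at most $n \le n'^c$, and the fresh IDs can be chosen unused. Because $z$ sees the same radius-$T$ view and the same $N$ in $H$ and in $G$, it runs for more than $T$ rounds in $H$. But $H$ has $n_0$ nodes, so the runtime bound gives $T(n_0)\le \varepsilon n_0^{c\alpha_1} = \frac{1}{100\beta\sigma} n_0^{c\alpha_1} < \lceil n_1^{\alpha_1}\rceil/8 \le T$. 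This is a contradiction, so every level-$1$ node outputs $D$.

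The main obstacle is the bookkeeping in the second step. I have to make sure $H$ is a legitimate instance under the promise $n'\le N\le n'^c$ and the ID range, and that the constant $\beta$ coming from $n\le\beta n_1$ does not break the inequality $N\le n'^c$. I would handle this by choosing $n_1$ as $n_0^c/\beta$ or similar; the freedom in $\sigma$ and $\varepsilon$ absorbs the resulting constant factors. Everything else follows directly from \Cref{lem:onlyDecl}.
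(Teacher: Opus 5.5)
Your route is the paper's route. You apply \Cref{lem:onlyDecl} (vacuously for $i=1$) to get a slow node $z$ deep inside a level-$1$ path. You then transplant its radius-$T$ view into a smaller instance that receives the same $N=n$, and contradict the $o(n^{c\alpha_1})$ runtime assumption.

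The step that fails as written is the legality of $H$ when $n' = n_0$.
\begin{itemize}
\item In the construction of $G$ we have $n_1 = n_0^c$, and in fact $n > n_1$, since the level-$1$ nodes alone number at least $\prod_j \ell_j \ge n_1$. So $N = n > n_0^c = (n')^c$, and the promise $N \le (n')^c$ is violated.
\item This is an exact inequality. No choice of $\sigma$ or $\varepsilon$ can absorb the factor $\beta$ into it.
\item Your alternative, redefining $n_1 := n_0^c/\beta$, changes the graph $G$ that the claim is about. Claim~\ref{lem:lbLeveli} and the final level-$k$ argument use that same $G$. So this is not a proof of the claim as stated.
\item That alternative also breaks for $c=1$. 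There $n \le \beta n_1 = n_0 = n'$, so the requirement $n' \le N$ fails unless $n = n_0$ exactly.
\end{itemize}

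The fix is to decouple the size of $H$ from $n_0$ and take $n' := \lceil n^{1/c}\rceil$, as the paper does.
\begin{itemize}
\item The promise $n' \le n = N \le (n')^c$ then holds automatically, and reusing IDs from $\{1,\dots,n\}$ is legal.
\item Since $n' \ge \lceil n_1^{1/c}\rceil = n_0$, the runtime assumption applies to $H$.
\item Padding is possible because $2T+1 = O(n^{\alpha_1})$ and $\alpha_1 < 1/c$.
\item Since $c\alpha_1 < 1$, we get $\varepsilon (n')^{c\alpha_1} \le \varepsilon\,(n^{\alpha_1}+1) \le \varepsilon\,(\beta n_1^{\alpha_1}+1)$. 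With $\varepsilon = 1/(100\beta\sigma)$ this is below $n_1^{\alpha_1}/8 \le T$, which gives the contradiction without modifying $G$.
\end{itemize}

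A minor point: $z$ is more than $T$ from both endpoints of $P$, and only an endpoint of a level-$1$ path is attached to level $2$. So $B_T(z)$ is just a subpath of $P$ on $2T+1$ nodes and contains no higher-level nodes at all.
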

\begin{proof}
    Assume for a contradiction that at least one node $v$ of level $1$ does not output $D$. Let $P$ be the path containing $v$, and let $T = \lceil \ell_1 / 8 \rceil + 1$.
    By \Cref{lem:onlyDecl}, this implies that there is at least one node $v$ satisfying the following:
    \begin{itemize}
        \item Node $v$ is at distance at least $T$ from the endpoints of $P$;
        \item Node $v$ runs for at least $T$ rounds.
    \end{itemize} 
    We construct a new instance $G'$ of $\bar{n} = \lceil n^{1/c} \rceil$ nodes as follows:
    \begin{itemize}
        \item Start from the radius-$T$ neighborhood of $v$, which is a path $P'$ of $2T+1 < \frac{1}{2} n^{\alpha_1} \le \frac{1}{2} \hat{n}$ nodes; ($\alpha_1 < \frac{1}{c}$ by \Cref{cor:betterThanGlobal})
        \item Connect a path $P''$ of $\bar{n} - (2T+1)$ nodes to an arbitrary endpoint of $P'$;
        \item Assign to the nodes of $P''$ arbitrary IDs from $\{1,\ldots,n\}$ that are not used in $P'$.
    \end{itemize}
    We now run $\mathcal{A}$ on $G'$ by giving $N = n$ to the nodes. Observe that this is an allowed input, since:
    \begin{itemize}
        \item The instance $G'$ has $\bar{n}$ nodes, and it holds that $\bar{n} \le N = n \le \lceil n^{1/c}\rceil^c = \bar{n}^c$.
        \item IDs are in $\{1,\ldots, N\}$.
    \end{itemize}
    Moreover, observe that the radius-$T$ neighborhood of $v$ is the same in both $G$ and $G'$, and the given $N$ is the same in both instances. Hence, node $v$ runs for at least $T$ rounds also on $G'$.

    Observe that $\bar{n} = \lceil n^{1/c} \rceil \ge \lceil n_1^{1/c} \rceil = \lceil (n_0^c)^{1/c} \rceil = n_0$. Hence, by assumption, node $v$ must terminate in at most $T \le \frac{1}{100\beta \sigma} \bar{n}^{c \alpha_1}$ rounds, for $\sigma \ge 1$. We thus get that the runtime $T$ of $v$ satisfies:
    \begin{itemize}
        \item $T \ge \lceil \ell_1 / 8 \rceil +1 \ge \lceil n_1^{\alpha_1} / 8 \rceil +1 \ge \lceil \frac{1}{8\beta} n^{\alpha_1}  \rceil +1 \ge \frac{1}{8\beta} n^{\alpha_1}  +1 \ge \frac{1}{50\beta} n^{\alpha_1} +1$;
        \item $T \le \frac{1}{100\beta} \bar{n}^{c \alpha_1} = \frac{1}{100\beta} \lceil n^{1/c} \rceil^{c \alpha_1} \le \frac{1}{50\beta} ( n^{1/c} )^{c \alpha_1} = \frac{1}{50\beta} n^{\alpha_1}$.
    \end{itemize}
    Hence, we reach a contradiction.

\end{proof}

We now consider the inductive step. That is, assuming that all nodes of levels $< i$ output $D$, we prove that all nodes of level $i$ output $D$ as well.
\begin{claim}\label{lem:lbLeveli}
    Let $i$ be an integer satisfying $2\leq i \leq k-1$.
    Assume that, on $G$, all nodes of level $< i$ output $D$. Then,  all nodes of level $i$ output $D$.
\end{claim}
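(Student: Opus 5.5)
We argue exactly as in \Cref{lem:lbLevel1}, by contradiction. The one change is that the smaller instance $G'$ must now be sized according to the full radius-$T$ view of a level-$i$ node, which includes the lower-level subtrees hanging from the level-$i$ path.

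Suppose some level-$i$ node does not output $D$. All nodes of level $<i$ output $D$ by hypothesis, so \Cref{lem:onlyDecl} applies. It gives a node $z$ on a level-$i$ path $P$ that outputs $B$ or $W$, runs for at least $T := \lceil \ell_i/8\rceil + 1$ rounds, and is at distance at least $T$ from the endpoints of $P$. Let $B$ be the radius-$(T+1)$ ball around $z$ in $G$; taking one extra hop avoids subtleties about seeing degrees at the boundary.

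This ball contains at most $2T+3$ nodes of $P$. Each of them carries the lower-level subtree attached to it, and by \Cref{obs:nodes-k-hierarchical} each such subtree has at most $\beta L_{i-1}$ nodes. Hence $|B| \le 3\beta T L_{i-1} = O(n_1^{\alpha_i} n_1^{A_{i-1}}) = O(n_1^{A_i})$.

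We build $G'$ as follows. Take $B$ and attach a path to some node of $B$ at distance exactly $T+1$ from $z$, so that the total size becomes $\bar n := \max\{|B|, \lceil n^{1/c}\rceil\}$. The new nodes get unused IDs from $\{1,\dots,n\}$, and all nodes receive $N=n$. This is a legal input:
\begin{itemize}
\item $\bar n \le n$ because $A_i<1$;
\item $n \le \bar n^{c}$ because $\bar n \ge n^{1/c}$;
\item IDs lie in $\{1,\dots,N\}$.
\end{itemize}
Moreover $\bar n \ge n^{1/c} \ge n_1^{1/c} = n_0$. Since $z$ has the same radius-$T$ view and the same $N$ in $G$ and $G'$, it still runs for at least $T \ge \frac{1}{8\beta} n^{\alpha_i}$ rounds on $G'$. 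On the other hand, the runtime assumption forces it to stop within $\varepsilon \bar n^{c\alpha_1}$ rounds.

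We now split into two cases according to $i_0$.

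If $i \le i_0$, then \Cref{lem:iZero} gives $\alpha_i=\alpha_1$ and $A_i = i\alpha_1 \le 1/c$. So $|B| = O(n^{1/c})$, hence $\bar n = O(n^{1/c})$, and $\varepsilon \bar n^{c\alpha_1} = O(\varepsilon n^{\alpha_1})$. This is the same computation as in \Cref{lem:lbLevel1}.

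If $i > i_0$, then \eqref{eq:boundAi} gives $A_i \ge 1/c$, so $\bar n = O(n^{A_i})$. Using \eqref{eq:setEq}, $c\alpha_1 A_i = \alpha_i$, so $\varepsilon \bar n^{c\alpha_1} = O(\varepsilon n^{c\alpha_1 A_i}) = O(\varepsilon n^{\alpha_i})$.

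In both cases the upper bound is $C\varepsilon n^{\alpha_i}$, where $C$ depends only on $\beta$, $c$, $k$. Since $\varepsilon = 1/(100\beta\sigma)$, choosing $\sigma$ large enough makes this strictly smaller than $\frac{1}{8\beta} n^{\alpha_i} \le T$, which is a contradiction.

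The main obstacle is the bookkeeping in the second case. We must ensure the size of the view, and not merely the number of level-$i$ nodes, is $\Theta(n^{A_i})$ up to constants. We must also ensure the constant blow-up $C^{c\alpha_1}$ from $\bar n$ is independent of $n$, so that it can be absorbed by the free constant $\sigma$ fixed at the start of the proof. The ceilings in $\ell_j = \lceil n_1^{\alpha_j}\rceil$ and the relation $n_1 \le n \le \beta n_1$ only cost further constant factors, which $\sigma$ also absorbs.
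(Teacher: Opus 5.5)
Your proof is correct and follows essentially the same route as the paper. It argues by contradiction via \Cref{lem:onlyDecl}, transplants the view of the slow level-$i$ node into an instance of roughly $n^{\max\{1/c,A_i\}}$ nodes while keeping $N=n$, and splits at $i_0$ using \Cref{lem:iZero} and \Cref{lem:Utilities}. Your uniform choice $\bar n=\max\{|B|,\lceil n^{1/c}\rceil\}$ and your radius-$(T+1)$ ball are only cosmetic variants of the paper's choices $\min\{n,\sigma\lceil n^{1/c}\rceil\}$ and $n^*$; note also that $\bar n\le n$ holds trivially because $B\subseteq V(G)$, not because $A_i<1$.
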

\begin{proof}
By assumption, all nodes of level $< i$ output $D$. Assume for a contradiction that at least one node $v$ of level $i$ does not output $D$. Let $P$ be the path containing $v$, and let $T = \lceil \ell_i / 8 \rceil + 1$. By \Cref{lem:onlyDecl}, this implies that there is at least one node $v$ of level $i$ satisfying the following:
\begin{itemize}
    \item Node $v$ is at distance at least $T$ from the endpoints of $P$;
    \item Node $v$ runs for at least $T$ rounds.
\end{itemize}
Similarly as in the proof of \Cref{lem:lbLevel1}, we will construct a new instance $G'$. However, this time, the value of $\bar{n}$ will depend on $i$. We start by proving an upper bound of $n^* = \min\{n,\sigma \lceil n^{A_i} \rceil\}$ on the number of nodes in the radius-$T$ neighborhood of $v$, where $\sigma \ge 1$ is a large enough constant (which depends on $k$) to be fixed later.
By \Cref{obs:nodes-k-hierarchical}, and the fact that $v$ is at distance at least $T$ from the endpoints of $P$, the nodes in the radius-$T$ neighborhood of $v$ are at most:
\[
(2T+1) \prod_{1 \le j \le i-1} \ell_j \le (2T+1)2^{k-1} n_1^{A_{i-1}} \le (2T+1)2^{k-1} n^{A_{i-1}} \le \sigma n^{A_i} \text{, for large enough }\sigma.
\]

In order to construct $G'$, we consider two separate cases, namely the case in which $i \le i_0$ and the case in which $i > i_0$.
\begin{itemize}
    \item Case $i \le i_0$. By \Cref{lem:iZero}, it holds that $A_i = i \cdot \alpha_1$ and $\alpha_i = \alpha_1$. We thus obtain the following.
    \begin{align*}
        n^* &\le \sigma \lceil n^{i \cdot \alpha_1} \rceil \leq \sigma \lceil n^{i_0 \cdot \alpha_i} \rceil < \sigma  \left\lceil n^{\left \lfloor \frac{1}{c\alpha_1} \right \rfloor \cdot \alpha_i} \right\rceil \\
        &< \sigma \lceil n^{ \frac{1}{c\alpha_1} \cdot \alpha_i} \rceil  = \sigma \lceil n^{\frac{1}{c}} \rceil
    \end{align*}
    We choose $\bar{n} = \min\{n,\sigma \lceil n^{1/c} \rceil\}$, and we obtain $G'$ by taking the radius-$T$ neighborhood of $v$ and adding nodes in order to get exactly $\bar{n}$ nodes, in  such a way that the radius-$T$ neighborhood of $v$ does not change (i.e., we connect a path to an arbitrary node at distance exactly $T$ from $v$, which exists in $P$). To the added nodes we assign unused IDs from $\{1,\ldots,n\}$.
    We now run $\mathcal{A}$ on $G'$ by giving $N = n$ to the nodes. Observe that this is an allowed input, since:
    \begin{itemize}
        \item The instance $G'$ has $\bar{n}$ nodes, and it holds that $\bar{n} \le N = n \le \min\{n,(\sigma \lceil n^{1/c}\rceil)^c\} \le \bar{n}^c$.
        \item IDs are in $\{1,\ldots, N\}$.
    \end{itemize}
    Moreover, observe that the radius-$T$ neighborhood of $v$ is the same in both $G$ and $G'$, and the given $N$ is the same in both instances. Hence, node $v$ runs for at least $T$ rounds also on $G'$.

    Observe that $\bar{n} \ge n_0$. Hence, by assumption, node $v$ must terminate in at most $T \le \frac{1}{100\beta \sigma} \bar{n}^{c \alpha_1}$ rounds. We thus get that the runtime $T$ of $v$ satisfies:
    \begin{itemize}
        \item $T \ge \lceil \ell_i / 8 \rceil +1 \ge \lceil n_1^{\alpha_i} / 8 \rceil +1 \ge \lceil \frac{1}{8\beta} n^{\alpha_i}  \rceil +1 \ge \frac{1}{8\beta} n^{\alpha_i}  +1 \ge \frac{1}{50\beta} n^{\alpha_1} +1$;
        \item $T \le \frac{1}{100\beta\sigma} \bar{n}^{c \alpha_1} \le \frac{1}{100\beta\sigma} (\sigma \lceil n^{1/c} \rceil)^{c \alpha_1} \le \frac{1}{50\beta}  ( n^{1/c} )^{c \alpha_1} = \frac{1}{50\beta} n^{\alpha_1}$.
    \end{itemize}
    Hence, we reach a contradiction.
    \item Case $i > i_0$. By \Cref{lem:Utilities}, we know that $A_i \ge 1/c$ and that $\frac{\alpha_i}{A_i} = \alpha_1 \cdot c$. We choose $\bar{n} = n^*$, and similarly as before we obtain $G'$ by taking the radius-$T$ neighborhood of $v$ and adding nodes in order to get exactly $\bar{n}$ nodes, using IDs from $\{1,\ldots,n\}$, in a way that satisfies that the radius-$T$ neighborhood of $v$ does not change. Again, we run $\mathcal{A}$ on $G'$ by giving $N = n$ to the nodes. Observe that this is an allowed input, since:
    \begin{itemize}
        \item The instance $G'$ has $\bar{n}$ nodes, and it holds that $\bar{n} \le N = n \le \min\{n,(\sigma \lceil n^{A_i}\rceil)^c\} \le \bar{n}^c$, where the second inequality holds because $A_i \ge 1/c$.
        \item IDs are in $\{1,\ldots, N\}$.
    \end{itemize}
    Moreover, observe that the radius-$T$ neighborhood of $v$ is the same in both $G$ and $G'$, and the given $N$ is the same in both instances. Hence, node $v$ runs for at least $T$ rounds also on $G'$.

    Observe that $\bar{n} = \min\{n,\sigma \lceil n^{A_i} \rceil\} \ge \min\{n,\lceil n^{1/c} \rceil\} \ge \min\{n,\lceil (n_0^c)^{1/c} \rceil\} \ge n_0$. Hence, by assumption, node $v$ must terminate in at most $T \le \frac{1}{100\beta \sigma} \bar{n}^{c \alpha_1}$ rounds. We thus get that the runtime $T$ of $v$ satisfies:
    \begin{itemize}
        \item $T \ge \lceil \ell_i / 8 \rceil +1 \ge \lceil n_1^{\alpha_i} / 8 \rceil +1 \ge \lceil \frac{1}{8\beta} n^{\alpha_i}  \rceil +1 \ge \frac{1}{8\beta} n^{\alpha_i}  +1 \ge \frac{1}{50\beta} n^{\alpha_i} +1$;
        \item $T \le \frac{1}{100\beta\sigma} \bar{n}^{c \alpha_1} = \frac{1}{100\beta\sigma} \bar{n}^{\frac{\alpha_i}{A_i}}
        \le \frac{1}{100\beta\sigma} (\sigma \lceil n^{A_i} \rceil)^{\frac{\alpha_i}{A_i}} \le  \frac{1}{50\beta} n^{\alpha_i}$.
    \end{itemize}
    Hence, we reach a contradiction.
\end{itemize}

\end{proof}

By combining \Cref{lem:lbLevel1} with \Cref{lem:lbLeveli} we obtain that, on $G$, all nodes in levels $1, \ldots, k-1$ output $D$.
Consider the nodes of $G$ at level $k$. They form a path $P$ of length $\ell_k$. Since $D$ is not allowed on nodes of level $k$, by \Cref{lem:onlyDecl} we obtain that at least one node $v$ in $P$ must spend at least $\lceil \ell_k / 8 \rceil$ rounds.
We thus get that the runtime $T$ of $v$ satisfies:
\begin{itemize}
    \item $T \ge \lceil \ell_1 / 8 \rceil +1 \ge  \lceil n_1^{1 - A_{k-1}} / 8 \rceil +1 = \lceil n_1^{c \alpha_1} / 8 \rceil + 1
    \ge \lceil \frac{1}{8\beta} n^{c \alpha_1}  \rceil +1 \ge \frac{1}{8\beta} n^{c \alpha_1}  +1 \ge \frac{1}{100\beta} n^{c \alpha_1} +1$, where $1 - A_{k-1} = \alpha_1 c$ is given by the optimization problem (\Cref{lem:optSolution});
    \item $T \le \frac{1}{100\beta\sigma } n^{c \alpha_1} \le \frac{1}{100\beta } n^{c \alpha_1} $.
\end{itemize}
Hence, we reach a contradiction. So the Theorem is proven.

\subsection{Lower Bound for $k$-Rake-and-Compress Decomposition}\label{sec:Lowerbound2}
We now prove that the lower bound of \Cref{thm:lb-known-N} holds for $k$-rake-and-compress decomposition as well.

\RcLBPoly*
We devote the rest of the section to proving \Cref{thm:lb-known-N-rc}. The proof of such theorem will borrow ideas from the proofs of \Cref{thm:decomposition-no-bounds,thm:lb-known-N}. We start by proving a lemma similar to \Cref{lem:onlyDecl}. 

\begin{lemma}\label{lem:only-rakes-is-hard}
    Let $i \ge 1$, and let $P_i$ be a path of level $i$ in a $k$-hierarchical lower bound graph. Assume that all nodes of lower layers connected to nodes of $P_i$ are rake nodes of level $\ge i$ or compress nodes of level $\ge i-1$. Then, either:
    \begin{itemize}
        \item there is at least one node of $P_i$ labeled compress node of level $\ge i$, or 
        \item at least one node $z$ of $P_i$ runs for strictly more than $\lceil \ell_i / 8 \rceil$ rounds, and is at distance strictly larger than $\lceil \ell_i / 8 \rceil$ from the endpoints of $P_i$.
    \end{itemize}
\end{lemma}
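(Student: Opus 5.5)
We argue by contradiction, mirroring the proof of \Cref{lem:onlyDecl}. Assume that no node of $P_i$ is labeled compress of level $\ge i$, and that every node of $P_i$ at distance more than $T := \lceil \ell_i/8 \rceil$ from the endpoints of $P_i$ terminates within $T$ rounds.

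First we pin down the labels on $P_i$. By hypothesis, every lower-layer neighbor of $P_i$ is a rake node of level $\ge i$ or a compress node of level $\ge i-1$. As in the proof of \Cref{thm:decomposition-no-bounds}, \Cref{obs:rc-path} then forces every node of $P_i$ to be a rake node of level $\ge i$ or a compress node of level $\ge i$. The ordering constraint forbids a compress node of level $i-1$ on $P_i$, since degree-$\le 2$ nodes exist only in level-$1$ paths. Combined with our assumption, all nodes of $P_i$ are rake nodes. By \Cref{obs:no-2-outgoing}, there is then a unique node $r \in P_i$ towards which every edge of $P_i$ is oriented. In particular, each node of $P_i$ other than $r$ has its outgoing edge along $P_i$, pointing towards $r$.

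Next we choose two nodes $u,v$ on $P_i$, each at distance at least $T+1$ from the endpoints of $P_i$ and at distance at least $2T+1$ from each other. This is possible because $\ell_i$ is large compared with $T$. Every node of $P_i$ strictly between $u$ and $v$ has its outgoing edge pointing towards $r$, and $u$ and $v$ themselves output orientations of their incident $P_i$-edges after only $T$ rounds. If $r$ lies between $u$ and $v$, then $u$'s outgoing path edge points towards $v$ and $v$'s points towards $u$. If $r$ does not lie between them, both point in the same direction along the path.

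We then apply the surgery of \Cref{lem:onlyDecl}. Take a node $z$ between $u$ and $v$ at distance more than $T$ from both, splice it out by adding the edge $\{z_1,z_2\}$, and reattach $z$ together with its pendant subtree elsewhere, far from $u$ and $v$. This preserves the radius-$T$ views of $u$ and $v$, and $P_i$ remains a path of the same level structure. That surgery alone does not change the relative direction of $u$ and $v$, though.

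A better modification is to exploit symmetry. Pick a node $w$ far from both $u$ and $v$, and mirror the instance so that the roles of "towards $v$" and "towards the far endpoint" are exchanged while both views are kept intact. Concretely, we reconnect the segment between $u$ and $v$ to a different side. This produces an instance $\hat G$ in which the outputs of $u$ and $v$ imply either two roots on one rake path or a node with two outgoing edges, violating \Cref{obs:no-2-outgoing} or rule~\ref{rule2:oneOutgoing}.

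The main obstacle is engineering this modification. It must keep the $T$-views of $u$ and $v$ identical, keep the hypothesis on the lower-layer neighbors intact, and force an inconsistent orientation. We would try the following first. Choose three far-apart nodes $u,v,w$, all finishing within $T$ rounds. Among their orientations, two must point "the same way" relative to each other. We then cut and reglue the path so that those two now point towards each other, with another of the three lying between them. This would force two roots on a single all-rake path, contradicting \Cref{obs:no-2-outgoing}.
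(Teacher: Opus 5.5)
Your reduction to an all-rake $P_i$ oriented towards a single root matches the paper's argument. The paper then only points back to \Cref{thm:decomposition-no-bounds} for the contradicting instance, so everything rests on the step you yourself leave open. The configuration you aim for there is the wrong one. If $u$ lies left of $v$ and both point towards each other, the root simply lies between them, and a third node in between may point either way, or be the root, without violating anything. The forbidden pattern on an all-rake subpath is the left node pointing left and the right node pointing right, since some node in between would then need two outgoing edges. Two nodes with disjoint $T$-views already suffice to produce this pattern: you may mirror the segment carrying one view together with its pendants. Interior nodes of $P_i$ all carry isomorphic pendants, so mirroring only permutes IDs. If one of the two nodes outputs ``root'', place the other on the side it does not point to.

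The real gap is that even this corrected surgery gives no contradiction. \Cref{obs:no-2-outgoing} needs the subpath between $u$ and $v$ to consist of rake nodes \emph{in $\hat G$}, but your assumption of no compress node on $P_i$ concerns $G$. Nodes near the seam of $\hat G$ have views that never occur in $G$. Nothing prevents them from forming a compress path of some level $j$ with $C_j<t(u),t(v)$ whose endpoints point outward, and then ``$u$ points left, $v$ points right'' is legal. The lower-layer hypothesis does not help: even if it survived in $\hat G$, it does not exclude compress nodes on $\hat P_i$ itself.

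This cannot be fixed from the stated hypotheses, because for $k\ge 2$ the lemma as written (arbitrary IDs, no runtime bound) is false. Consider the following algorithm. A node whose radius-$5$ ball is a path with strictly monotone IDs outputs $R_k$ and points to its larger-ID neighbour. Every other node collects the whole tree and outputs a completion, preferring an all-rake one. A completion always exists: give each run's target node label $R_k$ and no outgoing edge. In every remaining component, send all boundary edges outward using $C_1$-paths with outward-pointing endpoints, $R_2$ hubs at branchings, and $R_1$ pendants. Now take a lower bound graph in which one level-$1$ path has monotone IDs and all other level-$1$ paths have zigzag IDs. Orienting everything towards that run's target is a valid all-rake solution, so that path gets no compress node, yet its middle nodes stop after $5$ rounds. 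A correct proof therefore has to use some extra freedom, for example an adversarial choice of IDs. Neither your indistinguishability step nor the paper's sketch supplies it.
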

\begin{proof}
    The proof follows by applying the same ideas used in the proof of \Cref{thm:decomposition-no-bounds}.
    Suppose for a contradiction that there is no node labeled compress of level $\ge i$ and that all nodes of $P_i$ (and far enough from the endpoints) run for at most  $\lceil \ell_i / 8 \rceil $ rounds.
    By assumption, all nodes connected to $P_i$ from lower layers are rake nodes of layer $\ge i$ or compress nodes of layer $\ge i-1$. Observe that no node of $P_i$ can be a compress node of layer $i - 1$, since, in the subgraph induced by $P_i$ and all nodes connected to at least one node of $P_i$, nodes of $P_i$ have degree $\ge 3$. Hence, by \Cref{obs:rc-path}, all nodes of $P_i$ must be rake nodes of layer $\ge i$.
    By \Cref{obs:no-2-outgoing}, there must exist exactly one node $v$ of $P_i$ satisfying that all edges of $P_i$ are oriented towards $v$. 
    As argued in \Cref{thm:decomposition-no-bounds} and in \Cref{thm:decomposition-no-bounds}, if all nodes of $P_i$ (far enough from the endpoints) have a runtime that is sufficiently smaller compared with the length of $P_i$, then we can create a new instance that gives a contradiction with the correctness of the algorithm.
\end{proof}

We now prove \Cref{thm:lb-known-N-rc}.
For a contradiction, assume that there exists a deterministic algorithm $\A$ with runtime $T(n) \in o(n^{c\alpha_1})$. As a result, for any $\varepsilon > 0$, there must exist an integer $n_0 = f(\varepsilon)$ such that, for all $n \geq n_0$, it holds that $T(n) \leq \varepsilon n^{c\alpha_1}$. 

Let $\beta$ be the constant $c_2$ guaranteed to exist by \Cref{obs:nodes-k-hierarchical}. Let $\varepsilon = \frac{1}{100\beta \sigma}$ for some constant $\sigma \ge 1$ to be fixed later, and let $n_0 = f(\varepsilon)$.
We first create an instance $G$ of the $k$-hierarchical lower bound graph with parameters $\ell_1 = \lceil n_1^{\alpha_1} \rceil, \ell_2 = \lceil n_1^{\alpha_2} \rceil, \ldots, \ell_{k-1} = \lceil n_1^{\alpha_{k-1}} \rceil, \ell_k = \lceil n_1^{1-A_{k-1}} \rceil$, where $n_1 = n_0^c$ and the values of $\alpha_i$ are given by \Cref{lem:iZero,lem:largeAlphas}. Let $n$ be the number of nodes of $G$. By \Cref{obs:nodes-k-hierarchical}, $n_1 \le n \le \beta n_1$.
We set $N = n$ and assign IDs $1, \ldots, n$ to the nodes of $G$ arbitrarily.

We prove by induction on $i$ that, on $G$, for all levels $1 \le i \le k-1$, for all $P_i$, if $v$ is the endpoint of $P_i$ that is connected to a node of $P_{i+1}$, then $v$ is a rake node of level $\ge i+1$ or a compress node of level $\ge i$. We will later show that this gives a contradiction on $P_k$. We start by considering the base case $i=1$.
\begin{claim}\label{claim:lb2-base}
    For all $P_1$, if $v$ is the endpoint of $P_1$ that is connected to a node of $P_{2}$, then $v$ is a rake node of level $\ge 2$ or a compress node of level $\ge 1$.
\end{claim}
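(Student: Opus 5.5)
The plan has two steps. First, I will use \Cref{lem:only-rakes-is-hard} together with the instance-shrinking argument from \Cref{lem:lbLevel1} to show that every path $P_1$ contains a compress node of level $\ge 1$. Then I will use the orientation constraints, as in the proof of \Cref{thm:decomposition-no-bounds}, to pin down the label of the endpoint $v$ of $P_1$ that is attached to $P_2$.

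\textbf{Step 1.} For $i=1$ the hypothesis of \Cref{lem:only-rakes-is-hard} holds trivially, since there are no lower-layer nodes attached to $P_1$. So either $P_1$ contains a compress node of level $\ge 1$, or some node $z$ of $P_1$ runs for more than $T=\lceil \ell_1/8\rceil$ rounds and lies at distance more than $T$ from the endpoints of $P_1$. I rule out the second case exactly as in \Cref{lem:lbLevel1}. The radius-$T$ neighborhood of $z$ is a path of $2T+1 < \frac12 n^{\alpha_1}$ nodes; here I use $\alpha_1<1/c$ from \Cref{cor:betterThanGlobal}. I extend this path by a long path to obtain an instance $G'$ with $\bar n=\lceil n^{1/c}\rceil$ nodes, with fresh IDs from $\{1,\ldots,n\}$, and I keep $N=n$. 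This is a valid input, since $\bar n\le n\le \bar n^c$, and $\bar n\ge n_0$. Hence $z$ must stop within $\frac{1}{100\beta\sigma}\bar n^{c\alpha_1}\le \frac{1}{50\beta}n^{\alpha_1}$ rounds. But $z$ has the same view in $G'$ as in $G$, so it runs for more than $\frac{1}{8\beta}n^{\alpha_1}$ rounds, which is a contradiction. Therefore every $P_1$ contains a compress node, and it must have level $\ge 1$.

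\textbf{Step 2.} Now fix $P_1$, let $v$ be its endpoint adjacent to $u\in P_2$, and suppose for contradiction that $v$ is a rake node of level exactly $1$; the only labels below $R_2$ and $C_1$ are $R_1$. Let $z$ be the compress node of $P_1$ closest to $v$, and let $w$ be the neighbor of $z$ on the side of $v$. I repeat the case analysis from the proof of \Cref{thm:decomposition-no-bounds}. If $w$ has rake level $\ge 2$, then \Cref{obs:no-2-outgoing} shows that the subpath from $v$ to $w$ is not rooted at $v$. If $w$ has level $1$, then $w$ points to $z$, so the subpath is rooted at $w$. In either case $v$'s single outgoing edge points into $P_1$, so the edge $\{u,v\}$ is oriented towards $v$.

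It remains to show that $v$ (level $R_1$) cannot receive an incoming edge from $u$. Such an edge would require $t(u)\le R_1$, so $u$ would also be a rake node of level $1$. I then follow \Cref{obs:rc-path} backwards from $u$ along nondecreasing labels, eventually reaching a degree-$\le 2$ node, which lies in some $P_1'$. Every node on this chain has label $R_1$ and must avoid $C_1$, yet the chain must exit some $P_1'$ through its endpoint attached to the higher levels. So I apply the same argument globally, as in the proof of \Cref{thm:decomposition-no-bounds}. Let $S$ be the set of endpoints of level-$1$ paths that are attached to $G'$, the subgraph of nodes of level $\ge 2$. Every endpoint in $S$ that has label $R_1$ has its edge into $G'$ oriented towards itself. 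So no chain of $R_1$ nodes can enter $G'$ from a leaf. Since $G'$ has no leaves, \Cref{obs:rc-path} applied to $u$ then yields a contradiction.

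The main obstacle is this last global step. Care is needed so that the \Cref{obs:rc-path} path from a degree-$\le2$ node to $u$ really has to pass through such an endpoint in the wrong direction. Step 1 is essentially routine, being a copy of \Cref{lem:lbLevel1}.
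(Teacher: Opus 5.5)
Your proposal is correct and follows the paper's proof. Step 1 is exactly the paper's combination of \Cref{lem:only-rakes-is-hard} with the instance-shrinking argument of \Cref{lem:lbLevel1}, and Step 2 spells out the orientation argument that the paper only cites from the proof of \Cref{thm:decomposition-no-bounds}. The remark that ``$G'$ has no leaves'' is loose: what you actually need is that every node of level $\ge 2$ has degree $3$ in $G$, so the start of the \Cref{obs:rc-path} chain lies in some level-$1$ path.
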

\begin{proof}
    We start by proving that $P_1$ contains at least one node labeled compress node of level $\ge 1$.
    Assume for a contradiction that $P_1$ contains only rake nodes. By \Cref{lem:only-rakes-is-hard}, this implies that there is at least one node $z$ of $P_1$ satisfying the following:
    \begin{itemize}
        \item Node $z$ is at distance at least $T$ from the endpoints of $P_1$;
        \item Node $z$ runs for at least $T$ rounds.
    \end{itemize}
    From these assumptions, we can reach a contradiction in the exact same way as in the proof of \Cref{lem:lbLevel1}. Hence, $P_1$ contains at least one node labeled compress node of level $\ge 1$. As shown in the proof of \Cref{thm:decomposition-no-bounds}, this implies that $v$ is a rake node of level $\ge 2$ or a compress node of level $\ge 1$.
\end{proof}
We now consider the inductive step.
\begin{claim}\label{claim:lb2-ind}
    Let $1 \le i \le k-1$. Assume that, for all $j < i$, it holds that, for all $P_j$, if $v$ is the endpoint of $P_j$ that is connected to a node of $P_{j+1}$, then $v$ is a rake node of level $\ge j+1$ or a compress node of level $\ge j$. Then, for all $P_i$, if $v$ is the endpoint of $P_i$ that is connected to a node of $P_{i+1}$, then $v$ is a rake node of level $\ge i+1$ or a compress node of level $\ge i$.
\end{claim}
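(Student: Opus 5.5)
The plan is to mirror the proof of \Cref{claim:lb2-base}, replacing the level-1 runtime argument with the level-$i$ argument from \Cref{lem:lbLeveli}. Fix a path $P_i$ of level $i$. By the inductive hypothesis applied to $j = i-1$, every node of a lower layer attached to $P_i$ is the endpoint of some $P_{i-1}$, and is therefore a rake node of level $\ge i$ or a compress node of level $\ge i-1$. This is exactly the hypothesis of \Cref{lem:only-rakes-is-hard}. The proof then has two steps: first show that $P_i$ contains a compress node of level $\ge i$, and then transfer this to the endpoint $v$.

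\textbf{Step 1 (a compress node exists).} Suppose, for a contradiction, that $P_i$ contains no compress node of level $\ge i$. By \Cref{lem:only-rakes-is-hard}, some node $z$ of $P_i$ runs for at least $T = \lceil \ell_i/8\rceil + 1$ rounds and lies at distance at least $T$ from both endpoints of $P_i$. I then repeat the construction of \Cref{lem:lbLeveli} verbatim, since it uses only the runtime of $z$ and the size of its radius-$T$ ball, and not the output labels.

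The radius-$T$ ball of $z$ has at most $(2T+1)\prod_{j<i}\ell_j \le \sigma n^{A_i}$ nodes. I embed this ball into an instance $G'$ of $\bar n$ nodes, with IDs drawn from $\{1,\ldots,n\}$, and give the same $N = n$. The choice of $\bar n$ depends on $i$:
\begin{itemize}
\item if $i \le i_0$, take $\bar n = \min\{n, \sigma\lceil n^{1/c}\rceil\}$;
\item if $i > i_0$, take $\bar n = \min\{n, \sigma\lceil n^{A_i}\rceil\}$, which is valid because $A_i \ge 1/c$ by \Cref{lem:Utilities}.
\end{itemize}
In both cases the promise $\bar n \le N \le \bar n^c$ holds and $\bar n \ge n_0$. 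Hence $z$ must terminate within $\frac{1}{100\beta\sigma}\bar n^{c\alpha_1} \le \frac{1}{50\beta} n^{\alpha_i}$ rounds. For $i > i_0$ this uses $\alpha_i/A_i = c\alpha_1$; for $i \le i_0$ it uses $\alpha_i = \alpha_1$. This contradicts $T \ge \frac{1}{8\beta}n^{\alpha_i} + 1$. For $i = 1$ the same argument is exactly \Cref{claim:lb2-base}.

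\textbf{Step 2 (transfer to the endpoint).} Knowing that $P_i$ contains a compress node of level $\ge i$, I reuse the second half of the proof of \Cref{thm:decomposition-no-bounds}. All nodes of $P_i$ are rake nodes of level $\ge i$ or compress nodes of level $\ge i$, because compress of level $i-1$ is impossible by degree reasons. Suppose the endpoint $v$ were a rake node of level exactly $i$. Take the nearest compress node $z$ of level $\ge i$ and apply \Cref{obs:no-2-outgoing} to the subpath from $v$ to $z$. This forces the edge between $v$ and its neighbor $u$ on $P_{i+1}$ to be oriented towards $v$. Then \Cref{obs:rc-path} gives a monotone path into $v$ that starts at a $P_1$ node and passes through the subgraph $G'$ of levels $>i$. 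By the inductive hypothesis for all $j \le i$, any such path must pass a rake node of level $\ge i+1$ or a compress node of level $\ge i$ before reaching $G'$, which contradicts monotonicity at $v$. One subtlety is that this step needs the Step 1 conclusion for all level-$i$ paths simultaneously, which we have since $P_i$ was arbitrary.

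I expect the main obstacle to be checking that \Cref{lem:only-rakes-is-hard} really applies at level $i$, i.e., that its hypothesis is exactly what the induction provides. A second concern is that the global monotone-path argument from \Cref{thm:decomposition-no-bounds} needs the ``endpoint'' property for all lower levels, not just $i-1$; the claim's hypothesis over all $j<i$ covers this. The runtime computation itself is a direct copy of \Cref{lem:lbLeveli}.
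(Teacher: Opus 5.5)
Your proposal is correct and follows the paper's proof: you first use \Cref{lem:only-rakes-is-hard}, whose hypothesis is supplied by the inductive assumption for $j=i-1$, together with the instance-shrinking runtime argument of \Cref{lem:lbLeveli} to force a compress node of level $\ge i$ on every $P_i$, and you then transfer this to the endpoint via the orientation and monotone-path argument from the proof of \Cref{thm:decomposition-no-bounds}. One small precision: \Cref{obs:no-2-outgoing} should be applied to the all-rake subpath from $v$ to the neighbor of $z$, excluding the compress node $z$ itself, which is how the paper sets it up.
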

\begin{proof}
    We start by proving that $P_i$ contains at least one node labeled compress node of level $\ge i$.
    Similarly as in the proof of \Cref{thm:decomposition-no-bounds}, by the assumptions we obtain that each node of $P_i$ is either a rake node of level $\ge i$ or a compress node of level $\ge i$. If $P_i$ does not contain any compress node of layer $\ge i$, then $P_i$ contains only rake nodes. By \Cref{lem:only-rakes-is-hard}, this implies that there is at least one node $z$ of $P_i$ satisfying the following:
    \begin{itemize}
        \item Node $z$ is at distance at least $T$ from the endpoints of $P_i$;
        \item Node $z$ runs for at least $T$ rounds.
    \end{itemize}
    From these assumptions, we can reach a contradiction in the exact same way as in the proof of \Cref{lem:lbLeveli}. Hence, $P_i$ contains at least one node labeled compress node of level $\ge i$. As shown in the proof of \Cref{thm:decomposition-no-bounds}, this implies that $v$ is a rake node of level $\ge i+1$ or a compress node of level $\ge i$.
\end{proof}
By applying \Cref{claim:lb2-base} and \Cref{claim:lb2-ind} inductively, we obtain that all nodes of lower layers connected to nodes of $P_k$ are rake nodes of level $\ge k$ or compress nodes of level $\ge k-1$, and hence by \Cref{lem:only-rakes-is-hard} and by the assumption on the runtime we obtain that there is at least one node of $P_k$ labeled compress node of level $\ge k$, which is a contradiction, since in a $k$-rake-and-compress decomposition nodes cannot be labeled compress node of level $\ge k$.

\section{Only Bounded IDs}\label{sec:promiseIds}
Consider the model where nodes do not know $n$, but are given the promise that all IDs are integers between $1$ and $n^c$ for some known $c$. We stress that nodes do not know anything else about $n$. They are only given a unique ID and a number $c$, together with the promise that the IDs are in $\{1,\ldots,n^c\}$.

Since this model is more restricted than the model of \Cref{sec:LBpolyPromise}, the lower bounds of that section still apply.

We show that the guarantees of this setting are good enough to get the same complexities as in our lower bounds, by giving an algorithm for $k$-hierarchical $2\frac{1}{2}$-coloring, that matches the complexity of \Cref{thm:lb-known-N}.

\IdAlgo*

So it seems that, for LCLs, it is already good enough to know a bound on the IDs. In fact, on a high level, our strategy is locally estimating $n$, based on the observed IDs. In the first phase, nodes will only continue their execution if they have observed a large enough ID. Afterwards, nodes adjust their behavior based on the results of this first phase.

\subsection{The Algorithm}
The algorithm consists of rules that every node checks in every round. These first two rules are generic and handle the remainder and label $E$.
\begin{enumerate}[label=(\alph*)]
    \item All nodes in the remainder immediately output $D$.
    \item A level $1 <j\le k$ node becomes active once all of its lower-level neighbors have decided on an output. Before that, it is inactive.
    \item When a node $v$ becomes active and one of its lower-level neighbors has as output one of $\set{B,W,E}$, then $v$ immediately outputs $E$.\label{rule:excempt}
\end{enumerate}
Next, is our rule for outputting consistent 2 colorings using labels $B,W$. Only active nodes will participate and only continuous paths of active nodes will be considered. To that end, all nodes will keep track of the maximal continuous level $j$ path that they can see.

\begin{definition}[maximal active level $j$ subpath of $v$]
Every active level $1 < j \le k$ node keeps track of $P_v$, the maximal active level $j$ subpath of $v$. Initially $P_v = (v)$, and then in every round $P_v$ tries to add the (at most) 2 level $j$ nodes that are adjacent to $P_v$. It only adds a node $u$ to $P_v$, if $u$ is also active and did not output a label yet.
\end{definition}

\begin{enumerate}[label=(\alph*)]
    \setcounter{enumi}{3}
    \item \label{rule:colorPaths}For any active node $v$ of some level $1\le j\le k-1$, if $P_v$ is adjacent to a level $j$ node that has output $D$, then $v$ immediately outputs $D$. \\
    If on both sides of $P_v$, $P_v$ either ends in a node of degree 1, or that side is adjacent to a node that outputs $E$, then we will aim to 2-color $P_v$. For this, $v$ waits for an additional $|P_v|$ rounds. After waiting an additional $|P_v|$ rounds, all nodes in $P_v$ can see all of $P_v$ and output a consistent 2-coloring with colors $B,W$.
\end{enumerate}
In very long paths, this will result in nodes exploring their path for a very long time. To still get a fast algorithm, we have some nodes decide to output $D$ earlier.\\
For each $1\le j<k$ we define a condition, such that if this condition is verified for some level $j$ node $v$, then $v$ immediately outputs $D$. These conditions will depend on the results from previous levels and so we keep track of the number of observed nodes.

\begin{definition}[size of a decline]
    For every node $v$, we store the size $n^{(j)}_v$ of the largest level $1 \le j <k$ decline that this node has seen so far. Initially, this value is 0 for all nodes. When a node is assigned the output $D$, our algorithm updates this value. The details of this are given below.
\end{definition}
The idea behind this is that if the size of the previous decline was large, then these nodes must have had a good reason to run for a long time. So as a result, also in the next phase we will be able to spend a lot of time.\\

Before stating the conditions, we fix the behavior of nodes that are adjacent to another node that outputs $D$, as this is important to make sure the size of decline values propagates properly.
\begin{enumerate}[label=(\alph*)]
    \setcounter{enumi}{4}
    \item When a level $j$ node $v$ outputs $D$ as a result of Rule~\ref{rule:colorPaths}, then we set $n^{(j)}_v = n^{(j)}_u$, where $u$ is the node that did output $D$.
\end{enumerate}
We are now ready to give the decline conditions.
\paragraph{Decline Conditions:}
These conditions will fix the behavior of nodes that are in very long paths, where nodes cannot see the endpoints.\\
Only the condition for level-1 nodes will depend on the actual IDs that can be seen in the graph. For all future levels, we will use the computed sizes of the smaller level declines.
\begin{enumerate}[label=(\arabic*)]
    \item \textbf{Condition for level 1:} Let $\maxIDi$ be the maximum ID that $v$ has seen until round $i$, then $v$ will output $D$, if
    \[
    i > (\maxIDi)^{\alpha_1}.
    \]\label{cond:Ids}\\
    We then set $n_v^{(1)} = i$.
    \item \textbf{Condition for level \boldmath{$1<j<k$}:} Let $P_v$ be the maximal active subpath for $v$. For every node $u\in P_v$ and $1 \le i<j$, let $n^{(i)}_u$ be the maximum value of $n_{w}^{(i)}$ for any level $j-1$ neighbor $w$ of $u$. Then $v$ will output $D$, if
    \[
    |P_v|^{(A_j/ \alpha_j)} > \sum_{u \in P_v} n^{(j-1)}_u.
    \]
    Additionally we set 
    \begin{align*}
    n^{(j)}_v &= \sum_{u \in P_v} n_u^{(j-1)}. 
    \end{align*}
    \label{cond:ratioNodes}
\end{enumerate}
This finishes the description of our algorithm.\\

\noindent Note that, because of \Cref{lem:iZero}, for $1 < j \le i_0$ the second condition reduces to 
\[
|P_v|^{(A_j/ \alpha_j)} = |P_v|^{(j \cdot \alpha_1)/ \alpha_1} = |P_v|^j > \sum_{u \in P_v} n_u^{(j-1)}.
\]
And for $i_0<j<k$, because of \Cref{lem:largeAlphas} the condition reduces to 
\[
|P_v|^{(A_j/ \alpha_j)} = |P_v|^{c\alpha_1} > \sum_{u \in P_v} n_u^{(j-1)}.
\]

We start by showing that our algorithm actually produces a correct solution.

\begin{lemma}\label{lem:IdAlgoCorrect}
    The algorithm produces a correct solution to the $k$-hierarchical $2\frac{1}{2}$-coloring problem.
\end{lemma}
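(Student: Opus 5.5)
The plan is to verify the four constraints of \Cref{sec:2.5Col} one by one, plus termination, for the algorithm exactly as stated. Rules (a)--(e) and the decline conditions are evaluated in every round. A node's output is final once assigned. $P_v$ is the maximal active subpath as maintained in the definition.

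\textbf{Purely syntactic constraints.}
\begin{itemize}
\item Constraint 1 is Rule (a).
\item Constraint 2 holds because Rule (c) only concerns level $j>1$ nodes, so level-1 nodes never output $E$. Rule (d) and the decline conditions are only stated for levels $j\le k-1$, so level-$k$ nodes never output $D$.
\item Constraint 4 is immediate: $E$ is only produced by Rule (c), which requires a lower-level neighbor with output in $\set{B,W,E}$.
\end{itemize}

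\textbf{Termination.} I argue by induction on the level.
\begin{itemize}
\item For a level-1 node, the maximum ID it can ever see is at most $n^c$, so Condition~(1) fires by round $n^{c\alpha_1}+1$ unless the node has already decided. It may have decided via Rule (d), after seeing its whole path end in degree-1 nodes.
\item For level $j>1$, once all lower-level neighbors have decided, every node is active. Each side of $P_v$ then grows until it reaches one of three things: a degree-1 end, a node outputting $E$, or a node outputting $D$. In the last case Rule (d) gives $D$; otherwise the coloring branch of Rule (d) applies after $|P_v|$ further rounds.
\item Since the tree is finite, every node decides.
\item At level $k$ no $D$ is ever produced, so every level-$k$ segment ends in endpoints or $E$ nodes and is colored.
\end{itemize}

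\textbf{Constraint 3, the main obstacle.} I fix a maximal segment $S$ of a level-$j$ path bounded on both sides by endpoints or $E$-nodes. I must show that $S$ is either entirely $D$ or consistently 2-colored, and that no colored node is adjacent to a $D$ of the same level.
\begin{itemize}
\item \emph{Consistency of colorings.} A coloring node only colors after its $P_v$ equals all of $S$ and it has waited $|S|$ further rounds. Then every colorer has the same full view of $S$ with IDs, so a fixed deterministic rule (e.g.\ color relative to the minimum-ID endpoint) makes all colorings of $S$ agree.
\item \emph{No mixing of $D$ and colors.} Suppose some node $w\in S$ outputs $D$ at round $r_0$, chosen minimal. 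This $D$ comes from a decline condition, since Rule (d) needs a prior $D$ adjacent to $P_w$. Let $u$ be any node that would color at round $t_u+|S|$, where $t_u$ is the first round its $P_u$ covers $S$.
\begin{itemize}
\item If $r_0\le t_u+|S|-\mathrm{dist}(u,w)$, the $D$ at $w$ reaches $u$'s view before $u$ commits. Then Rule (d)'s first clause (adjacency of $P_u$ to a declined node) fires, and $D$ propagates along $S$ hop by hop. Here $P_v$ is recomputed each round and excludes decided nodes.
\item Otherwise $w$ itself saw all of $S$ before $r_0$. Its $P_w$ was then already closed on both sides, so $w$ was in the waiting branch of Rule (d) and could not trigger a decline condition.
\end{itemize}
\end{itemize}
The delicate part is making the priority between the two clauses of Rule (d) and the decline conditions explicit, and checking the timing inequality above. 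This is where I expect most of the work.

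Finally, colored nodes adjacent to a lower-level $D$ are irrelevant, because constraint 3 only concerns same-level neighbors. Constraint 3 then follows, completing the proof.
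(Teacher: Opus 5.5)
\textbf{Verdict: the overall plan is right, but the key step of your Case 2 is unjustified, and its natural justification fails at levels $j\ge 2$.}

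Your plan (termination, then the four constraints one by one) is the same as the paper's. The paper's proof simply asserts that a colored $P_v$ is consistently colored and has no same-level $D$ neighbour. Your handling of constraints 1, 2, 4 and of consistency of colorings is fine. You are also right that the real issue is whether $D$ and colors can mix inside one segment $S$, which the paper does not argue.

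\textbf{Where Case 2 breaks.} There you infer from $r_0> t_u+|S|-\mathrm{dist}(u,w)$ that $w$ had already closed $P_w$. At level $1$ this holds: every node of $S$ is active from round $0$, so $t_w\le |S|-1$, while $t_u\ge \mathrm{dist}(u,w)$. At levels $j\ge 2$, nodes of $S$ become active at different rounds, and $t_w$ can exceed $t_u+|S|-\mathrm{dist}(u,w)$.

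Example: let $S=(0,1,\dots,9)$ with degree-$1$ ends. Node $1$ is active from round $100$, node $9$ from round $96$, all others from round $0$. Take $u=2$, $w=9$.
\begin{itemize}
\item $t_u=103$, so $u$ commits at $113$; $t_w=109$; and $t_u+|S|-\mathrm{dist}(u,w)=106$.
\item $P_w$ becomes $\{1,\dots,9\}$ only at round $108$, and $u$ never holds this interval.
\item If $w$ were the first to decline there, its $D$ would reach $u$ at round $115$, after $u$ has colored.
\end{itemize}
What rules this out is not that $w$ is closed. It is minimality of $r_0$ applied to \emph{other} nodes: node $3$ already holds $\{1,\dots,9\}$ at round $102$ (node $8$ at $107$), and Condition~(2) depends only on the interval.

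\textbf{The missing argument.} Let $\tau_v(z)=\max_{y\text{ between }v,z}a_y+\mathrm{dist}(v,z)$, where $a_y$ is the activation round. While no node of $S$ has decided, this is the round in which $P_v$ first contains $z$. Let $X=[x,r]=P_w(r_0)$ with $X\neq S$, and $u<w$ (the other side is symmetric).
\begin{itemize}
\item If $r_0=\tau_w(r)$, or the maximizing $y$ in $\tau_w(x)$ lies in $[u,w]$, a direct computation gives $r_0+\mathrm{dist}(u,w)\le t_u+|S|-1$.
\item Otherwise, the neighbour $w'$ of $w$ towards $u$ reaches $x$ at round $r_0-1$ and cannot yet contain $x-1$ or $r+1$. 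By minimality, $w'$ reaches $r$ only at round $\ge r_0$. This forces one of two outcomes:
  \begin{itemize}
  \item $\tau_w(r)=r_0-1$, whence $r_0+\mathrm{dist}(u,w)\le\tau_u(r)+1\le t_u+1$; or
  \item $2\,\mathrm{dist}(w,x)\le |X|$, whence $r_0+\mathrm{dist}(u,w)=\tau_u(x)+2\,\mathrm{dist}(u,w)<t_u+|S|$.
  \end{itemize}
\end{itemize}
So the first $D$ always reaches every node of $S$ before that node commits. Your Case 2 never arises, rather than arising with $w$ closed.

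\textbf{The priority rule is required, not a clarification.} You assume a node in the waiting branch no longer tests Conditions~(1)/(2). Under the rules as literally stated, this fails. Take a level-$1$ path of $2m+1$ nodes, $m\ge 2$, whose IDs all satisfy $3m+1\le\mathrm{ID}^{\alpha_1}<4m$, with no larger IDs within distance $4m$. The middle node colors at round $3m+1$, while an endpoint, still waiting, fires Condition~(1) by round $4m$. You should state this priority explicitly as part of the algorithm.
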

\begin{proof}
    We first argue that all nodes eventually output a label and then argue that the labeling satisfies the constraints.
    
    All nodes in the remainder immediately output a label and for any node with a level, it will either decide to output a label because of one of the other rules, or Rule~\ref{rule:colorPaths} will eventually produce a 2 coloring of all paths. Therefore, all nodes do terminate.\\
    We go through all of the conditions for a solution to be correct in order.

    \begin{itemize}
        \item All nodes in the remainder give the only valid output $D$.
        \item None of our rules allow for a node in level $k$ to output $D$, so all nodes in level $k$ output one of $\set{B,W,E}$. On the other hand, in the algorithm, nodes only output $E$ if they have a lower level neighbor that outputs one of $\set{B,W,E}$. Since level-1 nodes do not have lower level neighbors, they never output $E$.
        \item If a node $v$ of any level $j$ outputs one of $B,W$, then all nodes of some level $j$ subpath $P_v$ containing $v$ output a consistent 2-coloring. So if $v$ is inside the path, its neighbors neither output the same color, nor $D$. If instead $v$ is an endpoint of $P_v$, then either $v$ has only one same level neighbor, which is then in $P_v$, or the neighbor not in $P_v$ outputs $E$. Both of which are fine.
        \item The nodes that output $E$ do so exactly based on the rule for a correct solution, so also all $E$ outputs are correct.
    \end{itemize}
\end{proof}
With the correctness proven, what remains to show is that our algorithm achieves the desired complexity. This will be significantly more work.

\subsection{Analyzing the Algorithm}
We will from now on refer to this algorithm as $\A$. To analyse $\A$ we make a connection between the execution of $\A$ and some graph gadgets. Whenever $\A$ makes some nodes output $D$, then such a gadget must be present. By then upper bounding the number of such gadgets, we will see that our algorithm makes sufficient progress, fast enough.

The main aim of these gadgets is to capture the answer to the following question. How must a path look like for some node $v$ to decide to output $D$? We want that if $\A$ is run on such a gadget, then one of its nodes will output $D$. We capture this property by using a simulation argument.

\begin{definition}[Algorithm Simulation]
    For any graph $G=(V,E)$ with some ID-Assignment $\phi:V \rightarrow \N$ to the nodes. The algorithm's output on the pair $(G, \phi)$ is the (partial) output assignment $\sigma: V \rightarrow \set{B,W,D,E}$, that is obtained when running $\A$ on $G$ with ID-assignment $\phi$.
\end{definition}

We show how we use this simulation idea in the following construction.

\begin{definition}
    A level-1 decline gadget $\mathcal{D}= (id_1, \ldots, id_l)$ of length $l$ is an ordered list of IDs, such that if we simulate the algorithm on a length $l$ path $P = (v_1, \ldots, v_l)$ with ID assignments $(id_1, \ldots, id_l)$, all nodes will output label $D$.\\
    We call the path $P$ of $l$ nodes with ID assignment $(id_1, \ldots, id_l)$ the realization $R_\mathcal{D}$ of $\mathcal{D}$.
\end{definition}

Next we show that this definition is not just arbitrary, but rather that whenever some level-1 node $v$ actually decides to output $D$, it is precisely because there is a decline gadget.

\begin{lemma}\label{lem:lvl1DG}
    Consider any execution of the algorithm on a graph $G$. If a level-1 node $v$ outputs $D$, then $v$ is in a path $P$ of $G$ such that the ID assignment of $P$ constitutes a level-1 decline gadget.
\end{lemma}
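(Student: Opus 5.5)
The plan is a locality (indistinguishability) argument: I will show that the part of $G$ that $v$ has seen when it declines is itself a level-1 path whose ID sequence is a decline gadget. Let $v$ be a level-1 node that outputs $D$. I first determine which rule causes this. Level-1 nodes never output $E$, and they never become active in the sense of rule (b). So $v$ outputs $D$ either by the level-1 condition~\ref{cond:Ids} in some round $i$, or by Rule~\ref{rule:colorPaths}, i.e.\ its maximal subpath $P_v$ reaches a level-1 node that has already output $D$. In the second case I follow the chain of such propagations back. Each step moves to a node that declined strictly earlier, or by a tie-breaking order within the same round. So the chain ends at a node $w$ on the same level-1 path that declined by condition~\ref{cond:Ids} at some round $i$. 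It then suffices to handle $w$ and to check that propagation along the path also happens inside the candidate gadget.

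For the triggering node $w$, I take $P$ to be the maximal subpath of $w$'s level-1 path that $w$ has seen by round $i$. This is the ball of radius $i$ around $w$ intersected with the level-1 path; if the path is shorter, $P$ is simply the whole path. I also extend $P$ so that it contains $v$ and the whole propagation chain from $w$ to $v$. For level-1 nodes, the only information that affects their behaviour is their position in the path, the IDs they see, and the round number. In particular, condition~\ref{cond:Ids} depends only on the maximum ID seen so far. The key step is to compare $G$ with the realization $R_{\mathcal D}$ of $P$ as an isolated path. Here I must be careful: in $G$ the level-1 path has endpoints attached to higher-level nodes, while in $R_{\mathcal D}$ those endpoints have degree 1. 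Degree and level classification matter only through the fact that a level-1 node sees it is on a level-1 path and whether its $P_v$ ends. So I should choose $P$ to include all nodes within distance $i$ of every node on the chain. Then every node on the chain sees exactly the same IDs and the same path structure up to its decision round in both graphs, and $w$ fires condition~\ref{cond:Ids} at the same round in both. This yields at least the chain nodes outputting $D$ in the simulation.

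The definition of a gadget, however, requires that \emph{all} nodes of $R_{\mathcal D}$ output $D$. I would argue this in two parts. Once $w$ outputs $D$, Rule~\ref{rule:colorPaths} propagates $D$ along every maximal active subpath touching it. A node of $R_{\mathcal D}$ could only avoid $D$ by $2$-colouring, which requires its $P_v$ to end at degree-1 nodes on both sides before it hears of a decline. But a $2$-colouring of the whole component would conflict with the $D$ already output by $w$, contradicting the correctness established in \Cref{lem:IdAlgoCorrect}. So every node of $R_{\mathcal D}$ outputs $D$. I expect this last step, together with the precise choice of the extent of $P$ so that the views coincide despite the differing endpoint degrees, to be the main obstacle. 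If necessary, I would shrink the claimed gadget to the radius-$i$ window around $w$, which is the ball of radius $i$ around $w$ in the path, and argue directly that in $R_{\mathcal D}$ the window's maximum ID is the same, so $w$ declines there at round $i$. Since $w$ outputs $D$ and the remaining outputs on $R_{\mathcal D}$ must be valid, correctness forbids a proper $2$-colouring on this single path, so all of its nodes output $D$.
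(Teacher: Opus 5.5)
Your proposal is correct and follows essentially the paper's argument: take the window seen by the node that triggers Condition~\ref{cond:Ids}, compare it with its isolated copy, and conclude that the whole copy declines. You also trace Rule~\ref{rule:colorPaths} propagation back to such a node and use correctness of the algorithm (no $E$ on level~1) to force an all-$D$ output, which is cleaner than the paper's one-line remark.

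One small fix: if the window reaches an end of the level-1 path, the view in $G$ also contains IDs of higher-level nodes, so the views are not identical; as the paper does, use that the isolated copy shows only a subset of these IDs, so $\maxIDi$ is no larger and Condition~\ref{cond:Ids} fires no later. Also keep $v$ in the window rather than falling back to $w$'s radius-$i$ ball, since the statement asks for $v\in P$.
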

\begin{proof}
    Let $P$ be the maximal path of level-1 nodes that contains $v$ and $i$ be the round in which $v$ outputs $D$. Let $P_v$ be the subpath that $v$ has seen until round $i$. Then the Ids assigned to the nodes of $P_v$ constitutes a valid level-1 decline Gadget. To see this, it is sufficient to verify that Condition~\ref{cond:Ids} will also be satisfied after $i$ rounds when running the algorithm on an isolated copy of $P_v$. This is true, since in an isolated copy $v$ will see a subset of the IDs it could see in $G$ and so if Condition~\ref{cond:Ids} was satisfied in $G$ then it is also satisfied in $P_v$. Since we have a level-1 path, none of the nodes in $P_v$ ever output $E$. So Rule~\ref{rule:colorPaths} can never apply, since one node between the two endpoints is already labeled $D$. Therefore, all nodes in the simulation will eventually output $D$.
\end{proof}

Since the way $\A$ behaves in the graphs gives us decline gadgets, we can also argue about the behavior of $\A$, by giving an upper bound on the number of decline gadgets. Intuitively, we need to make sure that there are not too many small paths that decide to decline. If this were the case, then our algorithm would not make progress sufficiently fast. To show that this is not the case and our algorithm is well (enough) behaved, we give a bound on the number of decline gadgets that can exist, if we limit the number of available IDs. As a result, we also bound the number of short declining level-1 paths.

\begin{lemma}\label{lem:upDG1}
    For any positive integer $I$, let $\DS_1(I)$ be the maximum number of ID-disjoint level $1$ decline gadgets that exist using IDs $1, 2, \ldots ,I$.
    Then there exists some constant $C$, such that for any $I$,
    $\DS_1(I) \leq C \cdot I^{1-\alpha_1}$.
\end{lemma}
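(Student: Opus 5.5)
The plan is a dyadic charging argument over ID scales. The key is a structural property of every level-1 decline gadget $\mathcal{D}=(id_1,\ldots,id_l)$: in its realization, some node $v$ is the first to output $D$, and this can only happen through Condition~\ref{cond:Ids}. Indeed, Rule~\ref{rule:colorPaths} needs a neighbour that has already output $D$. Moreover, $v$ cannot have finished by 2-colouring, because all nodes output $D$.

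Let $i$ be the round in which $v$ fires. Then $i > (\maxIDi)^{\alpha_1}$. Up to that round, $v$ has not seen both endpoints of the path while being able to 2-colour it; otherwise it would have coloured instead of declining, and we may also take $i$ to be the first firing time. So the radius-$(i-1)$ ball of $v$ contains at least roughly $i$ nodes, all of them from $\mathcal{D}$. Every one of these IDs is at most $\maxIDi < i^{1/\alpha_1}$.

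Writing $M$ for the largest ID in that ball, we get: the gadget contains at least $\Omega(M^{\alpha_1})$ IDs, all of value at most $M$. Equivalently, if the largest ID among these $\Theta(i)$ seen nodes lies in $[2^t,2^{t+1})$, then the gadget uses at least $c\,2^{t\alpha_1}$ distinct IDs from $\{1,\ldots,2^{t+1}\}$.

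Now fix a family of ID-disjoint gadgets using IDs in $\{1,\ldots,I\}$. Assign each gadget the scale $t$ above, where $t \le \log_2 I$. For a fixed $t$, the gadgets of scale $t$ are disjoint and each consumes at least $c\,2^{t\alpha_1}$ IDs from a pool of size $2^{t+1}$. Hence their number is at most $2^{t+1}/(c\,2^{t\alpha_1}) = O(2^{t(1-\alpha_1)})$.

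Summing over $t=0,\ldots,\lfloor\log_2 I\rfloor$ gives a geometric series with ratio $2^{1-\alpha_1}>1$, because $\alpha_1<1/c\le 1$ by \Cref{cor:betterThanGlobal}. The series is therefore dominated by its last term, and the total is at most $C\cdot I^{1-\alpha_1}$ with $C$ depending only on $\alpha_1$.

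The main obstacle is making the first step rigorous: showing that the first decliner has indeed seen $\Omega(i)$ distinct nodes of the gadget. This requires care about short gadgets. If $l$ is small, $v$ may see the whole path early, and then it will 2-colour rather than decline by Rule~\ref{rule:colorPaths}. So a decline forces $i-1$ to be less than the distance from $v$ to the farther endpoint, which gives at least $i$ nodes seen.

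I would first handle the degenerate cases $i=1$ and $\alpha_1$-rounding with an additive constant absorbed into $C$. Since $i\ge1$, we have $M< i^{1/\alpha_1}$, so the bound $i\ge M^{\alpha_1}$ is all that is needed. A gadget with $i$ tiny still uses at least one ID, and the number of such gadgets at scale $t$ is bounded by the same computation, since $2^{t\alpha_1}\le i$ holds by definition.
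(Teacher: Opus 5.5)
Your proposal is correct and is essentially the paper's argument: ID-disjoint gadgets are bucketed at dyadic scales, each bucket is bounded by (size of the pool of small IDs) divided by (IDs consumed per gadget), and the resulting geometric series is dominated by the top scale, giving $O(I^{1-\alpha_1})$. The only differences are that you bucket by the maximum-ID scale of the first decliner's view rather than by gadget length, and that your first-decliner argument proves the step the paper only asserts, namely that a gadget of length below $L$ uses IDs up to about $L^{1/\alpha_1}$; this holds only for the portion seen before Condition~\ref{cond:Ids} fires, which is exactly the set of IDs you charge.
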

\begin{proof}
    Suppose we are using IDs $1, 2, \ldots ,I$.
    We will count the number of short ID assignments that can result in an output label $D$. The longest such path can be of length at most 
    \[
    i \leq (\maxIDi)^{\alpha_1} \leq I^{\alpha_1}.
    \]
    Let us define $\Lmax = I^{\alpha_1}$. We will first derive an upper bound on the number $p_j$ of paths of length between $L_j = \frac{\Lmax}{2^j}$ and $L_{j-1} - 1$ , for any $0 \leq j \leq \alpha_1\log I$. Since those paths have length less than $L_{j-1}$ , the maximum ID that can be used is
    \[
    L_{j-1}^{(1/\alpha_1)} = \left(\frac{\Lmax}{2^{j-1}}\right)^{(1/\alpha_1)} = \frac{I}{2^{(j-1)/\alpha_1}}.
    \]
    Now assuming that all of these IDs are actually at our disposal and using that each path has length at least $L_j$ we get
    \begin{align*}
    p_j &\leq \frac{\text{\#Ids}}{\text{\#Ids per DG}} = \frac{L_{j-1}^{(1/\alpha_1)}}{L_j} = \left(\frac{I}{2^{(j-1)/\alpha_1}}\right)/\left(\frac{\Lmax}{2^j}\right) = I^{1-\alpha_1} \cdot \frac{2^j}{2^{(j-1)/\alpha_1}}\\ &= I^{1-\alpha_1} \cdot 2 \cdot \frac{2^{(j-1)}}{2^{(j-1)/\alpha_1}} = I^{1-\alpha_1} \cdot 2 \cdot \left(2^{1-(1/\alpha_1)}\right)^{j-1}.
    \end{align*}
    Since $\alpha_1 < 1$, we get that $1-(1/\alpha_1) < 0$. The sum of all $p_j$ can be computed as follows, because the geometric series converges.
    \[
    \sum_{j=0}^{\alpha_1\log(I)} p_j \leq 2 I^{1-\alpha_1} \cdot \sum_{j=0}^{\alpha_1\log(I)} \left(2^{1-(1/\alpha_1)}\right)^{j-1} \in O(I^{1-\alpha_1}).
    \]
\end{proof}

If two nodes $u$ and $v$ both output $D$ and are not on the same path $P$, then by \Cref{lem:lvl1DG} there must also be two separate level-1 decline gadgets. The number of such decline gadgets is upper bounded by \Cref{lem:upDG1}. Note that because Condition~\ref{cond:Ids} produces smaller gadgets when using smaller IDs, we get the maximum number of such gadgets, by using the smallest possible IDs ($1, \ldots, n$).
\begin{corollary}\label{cor:fewLvl1Paths}
    In any execution of the algorithm on an $n$ node graph $G$, there are at most $O(n^{1-\alpha_1})$ decline gadgets and hence also at most $O(n^{1-\alpha_1})$ disjoint level-1 paths that output $D$ as a result of Condition~\ref{cond:Ids}.
\end{corollary}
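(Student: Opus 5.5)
The plan is to obtain \Cref{cor:fewLvl1Paths} by combining \Cref{lem:lvl1DG} with \Cref{lem:upDG1}. The latter is applied with $I = n^c$, or better with $I = n$ after a monotonicity argument.

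\textbf{Step 1: from declining paths to gadgets.} Fix an execution of $\A$ on an $n$-node graph $G$ with IDs in $\{1,\ldots,n^c\}$. Let $P^{(1)},\ldots,P^{(m)}$ be the pairwise disjoint maximal level-1 paths that contain a node declining by Condition~\ref{cond:Ids}. For each $P^{(t)}$, pick such a node $v_t$ and apply \Cref{lem:lvl1DG}. The subpath $P_{v_t}$ seen by $v_t$ up to its decision round carries an ID sequence that is a level-1 decline gadget. Since the $P_{v_t}$ are contained in disjoint paths, these gadgets are pairwise ID-disjoint. Hence $m$ is at most the number of ID-disjoint level-1 decline gadgets realizable with the IDs present in $G$.

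\textbf{Step 2: reduce to small IDs.} The IDs in $G$ are $n$ distinct values, but possibly as large as $n^c$. Applying \Cref{lem:upDG1} with $I=n^c$ directly would only give $O(n^{c(1-\alpha_1)})$, which is too weak, so I would argue that replacing IDs by smaller ones only helps declining. Take the order-preserving relabeling $\pi$ of the IDs used in $G$ onto $\{1,\ldots,n\}$. For each gadget $\mathcal{D}$ found in Step~1, apply $\pi$ and check that $\pi(\mathcal{D})$ is still a decline gadget. On the realization path, in every round each node sees a maximum ID that is no larger after relabeling. So the threshold $(\maxIDi)^{\alpha_1}$ is no larger, Condition~\ref{cond:Ids} fires no later, and some node still outputs $D$. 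Afterwards all nodes output $D$ by Rule~\ref{rule:colorPaths}, exactly as in the proof of \Cref{lem:lvl1DG}. The relabeled gadgets remain ID-disjoint and use only IDs in $\{1,\ldots,n\}$. Therefore \Cref{lem:upDG1} with $I=n$ yields $m \le C n^{1-\alpha_1}$, and also bounds the number of gadgets by the same quantity.

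\textbf{Main obstacle.} The delicate point is the monotonicity claim in Step 2. Decreasing IDs could change the timing of other nodes, which might, for example, produce a $2$-coloring via Rule~\ref{rule:colorPaths} before some node declines. I would handle this in two ways. First, on an isolated path no node ever sees an $E$-adjacent or degree-1 end within its view before the round in which $v_t$ originally fired, unless $P_{v_t}$ is the whole path; in that case the gadget can simply be truncated to the seen part. Second, I would observe that the gadget property only requires that some node triggers Condition~\ref{cond:Ids} before anything else happens, and the earliest trigger time is monotone in the IDs. If this argument becomes messy, the fallback is to redo the counting of \Cref{lem:upDG1} directly. Only $n$ IDs exist, and a declining path of length $L$ must contain an ID at most about $L^{1/\alpha_1}$ while consuming $L$ of the $n$ IDs. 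The same dyadic-bucket geometric sum, with the number of available IDs capped by $\min\{n, L_{j-1}^{1/\alpha_1}\}$, then gives $O(n^{1-\alpha_1})$.
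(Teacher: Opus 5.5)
Your proposal is correct and follows the paper's route. Each declining maximal level-1 path yields an ID-disjoint level-1 decline gadget via \Cref{lem:lvl1DG}, and \Cref{lem:upDG1} is then applied with $I=n$, justified by the fact that smaller IDs can only make Condition~\ref{cond:Ids} fire earlier. The paper states this monotonicity in a single sentence; your order-preserving relabeling onto $\{1,\ldots,n\}$, together with the observation that every node's trigger time can only decrease, is a more explicit version of the same step.
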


We will now need to generalize the same notion beyond level-1 decline gadgets.
\begin{definition}[Decline Gadget]
    For all $1<j<k$, a level $j$ decline gadget $D= (P,\phi)$ of length $l$, is a tuple consisting of the following.
    \begin{itemize}
        \item $P = (d_1, d_2, \ldots, d_l)$ is a tuple of level $j-1$ decline gadgets.
        \item $\phi$ is an assignment of IDs to the path $P_R = (v_1, v_2, \ldots, v_{l})$.
    \end{itemize}
    The realization of $D$ is a graph $R_D$ with an ID assignment $\Phi$. It is obtained by starting with $P_R$, with ID assignment $\phi$, and attaching to every node at position $i$ the realisation of decline gadget $d_i$ (which already have IDs assigned).\\
    Then, $D$ is a level $i$ decline gadget if by simulating the algorithm on $(R_D, \Phi)$, all nodes in $P_R$ output $D$.
\end{definition}

Clearly, because of the way we defined Condition~\ref{cond:ratioNodes}, the size of these gadgets will be important.
\begin{definition}[size of a decline gadget]\label{def:cardinalityOfGadgets}
    For any level $1 \le j <k$ decline gadget $D=(P, \phi)$ we define $|D| = |R_D|$, where $R_D$ is the realisation of $D$.
\end{definition}
Note that this is also exactly the value assigned to $n_v^{(j)}$ for all nodes on the main path of the gadget.

We now want to prove a similar result as before, where if some level $j$ node outputs $D$ during the execution of $\A$, then it is because there is some level $j$ decline gadget. To make our arguments a bit cleaner we define the notion of \emph{below}.

\begin{definition}[below]
    For any level $j$ node $v$ in $G$, we say that another node $u$ is \emph{below} $v$ if and only if there exists a decreasing path of lower level nodes that connects $u$ and $v$. Formally there exists $P_u = (v = v_0, v_1, \ldots, v_l = u)$, such that the level of $v_i$ is greater than or equal to the level of $v_{i+1}$, for all $1\le i < l$.\\
    We say $u$ is below some level $j$ path $P$, if there exist some node $v\in P$, such that $u$ is below $v$.
\end{definition}

Now we prove the same connection as in \Cref{lem:lvl1DG}, namely that if a level $j$ node outputs $D$, then it is precisely because there is a level $j$ decline gadget.
\begin{lemma}\label{lem:AlgoImpliesGadgets}
    For all $1\le j<k$, in any execution of the algorithm on a graph $G$, let $P$ be a maximal level $j$ path, such that all nodes of $P$ output $D$. Then there exists a set of nodes $A$ that are in or below $P$, and a level $j$ decline Gadget $D$, such that $|A| = R_D$ and such that the IDs used in $A$ and $R_D$ are the same.
\end{lemma}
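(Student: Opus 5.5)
We argue by induction on $j$. The base case $j=1$ is \Cref{lem:lvl1DG}. There we take $A$ to be the subpath $P_v$ that the declining node $v$ had explored when it triggered Condition~\ref{cond:Ids}. A decline spreads along the path via Rule~\ref{rule:colorPaths}, and by Rule (e) every node of $P$ carries the size of that decline, so one triggering node suffices.

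For the inductive step, fix a maximal level-$j$ path $P$, $1<j<k$, all of whose nodes output $D$. Nodes of level $j$ output $D$ only through Rule~\ref{rule:colorPaths}, which copies an existing decline, or through Condition~\ref{cond:ratioNodes}. Following the copying chain, we obtain a node $v\in P$ that declined by Condition~\ref{cond:ratioNodes} in some round $r$. At that moment $P_v$ is a maximal active level-$j$ subpath with $|P_v|^{A_j/\alpha_j} > \sum_{u\in P_v} n^{(j-1)}_u$.

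Every $u\in P_v$ is active, so all of its lower-level neighbours have already decided. None of them output $B$, $W$ or $E$: otherwise $u$ would have output $E$ by Rule~\ref{rule:excempt} and could not belong to $P_v$. Hence every lower-level neighbour output $D$ and lies on a maximal level-$(j-1)$ path that fully declined. For each $u\in P_v$ we choose the neighbour $w_u$ attaining the maximum $n^{(j-1)}_u$ and apply the induction hypothesis to its path. This yields a set $A_u$ in or below that path and a level-$(j-1)$ decline gadget $d_u$ with $|A_u|=|R_{d_u}|$ and the same IDs.

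We then define $D=(P',\phi)$, where $P'$ is the sequence $(d_u)_{u\in P_v}$ in path order and $\phi$ copies the IDs of $P_v$. We set $A = P_v \cup \bigcup_u A_u$. The sets $A_u$ are pairwise disjoint and disjoint from $P_v$: in a tree the subtrees hanging below distinct nodes of $P_v$ through distinct lower-level neighbours are disjoint. Hence $|A| = |R_D|$ and the ID multisets agree.

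It remains to check that $D$ is genuinely a level-$j$ decline gadget, that is, that simulating $\mathcal A$ on $(R_D,\Phi)$ makes every node of $P_R$ output $D$. This is the main obstacle, since it is an indistinguishability argument across two different graphs. The plan has three parts.
\begin{enumerate}
\item By the inductive hypothesis, whose proof we strengthen to also assert that the simulation reproduces the same decline sizes, the gadget copies $R_{d_u}$ decline in the simulation and assign $n^{(j-1)}$ values at least as large as in $G$. They are exactly equal if we define sizes via realizations, using the note after \Cref{def:cardinalityOfGadgets}.
\item Consequently each copy of $u$ becomes active, with the same $n^{(j-1)}_u$ as in $G$.
\item The active subpath of $v$ in the simulation is all of $P_R$, whose length is $|P_v|$, because $P_R$ has exactly those nodes and ends at degree-1 endpoints. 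Condition~\ref{cond:ratioNodes} therefore fires at $v$ at the latest when $P_R$ is fully explored.
\end{enumerate}

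The danger is that Rule~\ref{rule:colorPaths} fires first: the endpoints have degree~1, so the path might be 2-coloured. To rule this out, we use that the condition depends only on $|P_v|$ and the sum, both of which are monotone in what has been seen. In $G$, $v$ saw exactly $P_v$ with these values, so in the simulation the condition is met no later than the moment $P_R$ is fully seen. That moment precedes the extra $|P_v|$ waiting rounds of Rule~\ref{rule:colorPaths}, so $v$ outputs $D$ first. The remaining nodes of $P_R$ then output $D$ through the first clause of Rule~\ref{rule:colorPaths}, because they cannot complete a colouring adjacent to a $D$.

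For this monotonicity we compare views round by round and formalise it as in \Cref{lem:lvl1DG}: the simulation's view is a subset of the view in $G$ with identical relevant quantities.
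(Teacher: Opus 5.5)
Your skeleton matches the paper's proof. You locate a node $v\in P$ that fired Condition~\ref{cond:ratioNodes} and observe that every $u\in P_v$ has only $D$-outputting lower-level neighbours, since otherwise Rule~\ref{rule:excempt} would have made it output $E$. You then apply the induction hypothesis to get $d_u$ and $A_u$, glue along $P_v$, and verify the gadget by simulation. Your explicit handling of the disjointness of the $A_u$, and of the race against the colouring clause of Rule~\ref{rule:colorPaths}, is more careful than the paper's.

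The gap is in the one quantitative step the simulation rests on: comparing the decline sizes $n^{(j-1)}$ in the simulation with those in $G$. To transfer $|P_v|^{A_j/\alpha_j} > \sum_{u\in P_v} n^{(j-1)}_u$ from $G$ to $P_R$, you need the simulated values to be \emph{at most} those in $G$. Your item~1 (``at least as large as in $G$'') states the inequality in the direction that does not help. Your repair, strengthening the induction to ``the simulation reproduces the same decline sizes'', is false in general. In the isolated realization a node sees only part of what it saw in $G$: no higher-level nodes and none of their subtrees, which may carry large IDs. So Condition~\ref{cond:Ids} can fire at an earlier round and record a smaller $n^{(1)}$. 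At higher levels, Condition~\ref{cond:ratioNodes} can fire on a shorter active subpath and record a smaller sum. The note after \Cref{def:cardinalityOfGadgets} does not give equality either, since a smaller sub-gadget may trigger the decline first.

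The fix is to carry the one-sided statement through the induction: every decline size recorded in the simulation is no larger than the corresponding one in $G$. Your view-inclusion remark is the right tool for this.
\begin{itemize}
\item At level $1$, by round $i$ every node of the copy has seen only IDs that $v$ had seen by its firing round $i$ in $G$. So all of them fire by round $i$ and record values at most $i$.
\item At higher levels, the firing subpath in the simulation is a subpath of the gadget's main path. So its recorded sum is at most the full sum, which by induction is at most the sum recorded in $G$.
\end{itemize}
This requires building the gadget below $w_u$ from the firing node whose value $w_u$ actually inherited via Rule~(e).

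Once the inequality runs in this direction, you no longer need $w_u$ to be the maximising neighbour; any $D$-neighbour works. In $G$ the value $n^{(j-1)}_u$ is a maximum over all lower-level neighbours, so it dominates the no-larger simulated value of the single neighbour present in $R_D$. This is how the paper concludes that the values on $P_R$ are at most those on $P_v$.
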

\begin{proof}
    We prove the lemma by induction on $j$. The base case is already handled in \Cref{lem:lvl1DG}, so suppose the lemma statement is true for $j-1$, and let us show that it is true for level $j$.\\
    Let $P$ be a maximal path of nodes that output $D$ in round $t$. This means that for some node $v$ and $P_v \subset P$ the maximal active subpath in some round $t$, Condition~\ref{cond:ratioNodes} holds.
    \begin{enumerate}[label=\ref{cond:ratioNodes}]
        \item Let $n^{(j-1)}_u$ be the maximum value of $n_{w}^{(j-1)}$ for any level $j-1$ neighbor $w$ of $u$. Then $v$ will output $D$ if
    \[
    |P_v|^{(A_j/ \alpha_j)} > \sum_{u \in P_v} n^{(j-1)}_u.
    \]
    \end{enumerate}
    Every node $u \in P_v$ has a level $j-1$ neighbor (otherwise $u$ would be level $j-1$), and since $u$ did not output $E$, that level $j-1$ neighbor cannot output any of $\set{B,W,E}$ -- otherwise $u$ would immediately have output $E$. So, this level $j-1$ neighbor has $D$ as output. By induction hypothesis, we get a level $j-1$ decline gadget $d_u$, together with a set $A_u$ for every node in $P_v = (u_1, \ldots, u_{|P_v|})$.\\

    We then construct $D=((d_1, \ldots, d_{|P_v|}), \phi)$, where $\phi$ just assigns the IDs of $P_v$.
    Similarly 
    \[
    A = P_v \cup \bigcup_{u \in P_v} A_u.
    \]
    Clearly, $|A| = |R_D|$, so what is left to show is that $D$ is indeed a decline gadget.\\
    Consider the simulation of $\A$ on $D$.
    Since all of the $d_i$'s are proper level $j-1$ decline gadgets, all of the level $j-1$ neighbors that are adjacent to the main path $P_R$ output $D$.
    As a result, none of the nodes in $P_R$ output $E$ and all of them eventually become active. 
    Note that for the path $P_R$ it is true that 
    \[
    |P_R|^{(A_j/ \alpha_j)} = |P_v|^{(A_j/ \alpha_j)} > \sum_{u \in P_v} n^{(j-1)}_u \ge \sum_{u \in P_R} n^{(j-1)}_u,
    \]
    since $P_R$ has the exact same length as $P_v$ and the values $n^{(j-1)}_u$, are taken as the maximum values over all neighbors. Note that for any $u \in P_R$, only one of its level $j-1$ neighbors is present in $R_D$. Since $n^{(j-1)}_u$ is defined as the maximum value over all neighbors, these values in $P_R$ will be at most as large as the values in $P_v$.\\
    
    Therefore, after at most $t$ rounds of the simulation for $v$, $P_v = P_R$ and Condition~\ref{cond:ratioNodes} is satisfied, as argued above. So $v$ does output $D$. As a result, all of the other nodes also eventually output $D$ as a result of Rule~\ref{rule:colorPaths}, as desired.
\end{proof}

It is now clear that the behavior of our algorithm is closely intertwined with the existence of decline gadgets. Again, we have to argue and get an upper bound on the number of decline gadgets.
This time however, the decline gadgets will be of higher levels.
For this we will have to build larger decline gadgets from smaller ones.
Let us first introduce the notion of minimal decline gadgets. 

\begin{definition}[minimal Decline Gadget]
    We say that a Decline Gadget $D = (P, \phi)$ is minimal if and only if there does not exist a subset $P' \subset P$, such that $D'= (P', \phi')$ is also a Decline Gadget. Here $\phi'$ is the restriction of $\phi$ to $P'_R$. 
\end{definition}

Clearly, any bound on the number of minimal decline gadgets implies a bound on the number of decline gadgets. Given a set $\mathcal{D}$ of decline gadgets, we can transform it into a set $\mathcal{D}'$ of minimal decline gadgets, without decreasing the cardinality. To do this, just take any non-minimal gadget $D=(P, \phi)$ and turn it into a minimal one, by taking a subset $P'$ of $P$, such that $D'= (P', \phi')$ is a minimal decline gadget. 

Consider any level $j$ decline gadget $D = (P, \phi)$.
Using \Cref{def:cardinalityOfGadgets} about the size of a decline gadget, we get that in the simulation of $\A$ on $R_D$, Condition~\ref{cond:ratioNodes} reduces to 
\[
|P|^{(A_j/\alpha_j)} > \sum_{d \in P} |d|.
\]
Indeed, the values $n_v^{(j-1)}$ are exactly equal to $|d|$.

We will now start proving the bound on the number of higher level decline gadgets.
Note that we only aim for an asymptotic bound.
In order to keep our proofs simple, we introduce a series of constants $\kappa_1, \kappa_2, \ldots$, each hiding as many constant terms as possible, and in doing so, reducing the number of terms we have to keep track of.

\Cref{claim:boundGadgets} is the main reason why we want to argue about minimal decline gadgets.

\begin{claim}\label{claim:boundGadgets}
    Let $D = (P, \phi)$ be a minimal level $j$-decline gadget of length $L$. Then there exists some constant $\kappa_2>0$, such that for all $d \in P$ it holds that $|d| \leq \kappa_2 \cdot L^{(A_j/\alpha_j)-1}$.
\end{claim}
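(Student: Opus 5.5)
The plan is to exploit minimality directly: removing a single sub-gadget $d$ from $P$ must destroy the decline property. That loss can only come from the defining inequality of Condition~\ref{cond:ratioNodes} failing for the shortened path. Set $\rho := A_j/\alpha_j$. By the preceding discussion, $D$ being a decline gadget means, in the simulation on $R_D$, that
\[
L^{\rho} > \sum_{d' \in P} |d'|.
\]
Fix $d \in P$ and let $P'$ be $P$ with $d$ removed. I will treat $P'$ as the length-$(L-1)$ tuple (or, more safely, as one of the two contiguous pieces on either side of $d$, which is what the simulation actually sees) and show that if $|d|$ were large, then $D' = (P',\phi')$ would still satisfy the condition. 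That would contradict minimality.

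The key computation is the following. For $D'$ to fail to be a gadget, we need
\[
(L-1)^{\rho} \le \sum_{d' \in P'} |d'| = \sum_{d'\in P}|d'| - |d| < L^{\rho} - |d|.
\]
Hence $|d| < L^{\rho} - (L-1)^{\rho} \le \rho\, L^{\rho-1}$ by the mean value theorem, using $\rho \ge 1$. Since $\rho = A_j/\alpha_j$ depends only on $j,k,c$, we may take $\kappa_2 := \rho$ (or $\max_j A_j/\alpha_j$), which gives the claimed bound $|d| \le \kappa_2 L^{\rho-1}$.

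The main obstacle is justifying that failure of the inequality is the \emph{only} way $D'$ can stop being a gadget. I would argue this as follows. In the simulation on $R_{D'}$, every attached level-$(j-1)$ gadget still declines, so the main path never sees $E$ and all its nodes become active; the active path eventually equals the whole main path; and the values $n^{(j-1)}_u$ are exactly the sizes $|d'|$. Thus once $v$ sees all of $P'_R$, the decline condition holds whenever the inequality does, and the remaining nodes decline via Rule~\ref{rule:colorPaths}. A subtlety is that removing a middle element disconnects the path. I would handle this by taking $d$ to be an endpoint, or by applying the same calculation to the longer of the two resulting segments, of length $L' \ge (L-1)/2$. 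If contiguity forces that route, the resulting bound is still $O(L^{\rho-1})$, with the constant enlarged by a factor depending only on $\rho$. Absorbing these constants into $\kappa_2$ completes the plan.
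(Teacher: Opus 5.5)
Your main route is the paper's argument. By minimality, deleting one $d$ from the tuple must break the decline inequality, so $(L-1)^{\rho} \le \sum_{d'\in P}|d'| - |d| < L^{\rho}-|d|$, and hence $|d| < L^{\rho}-(L-1)^{\rho}$. The paper bounds this difference by a first-order Taylor expansion with an unspecified constant. Your mean value theorem step (valid since $\rho = A_j/\alpha_j \ge 1$) gives the explicit choice $\kappa_2=\rho$ and is slightly cleaner.

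You should drop the contiguity fallback, however, because it is both wrong and unnecessary. Suppose $P'$ is the longer contiguous piece, of length $L' \approx L/2$. Then the same computation only gives $|d| < L^{\rho}-L'^{\rho} = \Theta(L^{\rho})$ (note $\rho>1$ for $j\ge 2$), which is no better than the trivial bound $|d|<L^{\rho}$. The constant is not merely enlarged by a factor depending on $\rho$; the exponent is lost.

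In fact, at the level of the counting inequality, the claim would be false if minimality only ranged over contiguous pieces. Take all non-middle sub-gadgets of size about $((L-1)/2)^{\rho-1}$ and the middle one of size about $L^{\rho}-2^{1-\rho}(L-1)^{\rho} = \Theta(L^{\rho})$. Then the full tuple satisfies the inequality, but no proper contiguous piece does.

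What makes the argument work is that minimality is defined over arbitrary sub-tuples $P'\subset P$. The realization $R_{D'}$ is built afresh on a path $P'_R$ of $|P'|$ nodes. Deleting an interior $d$ therefore yields a connected candidate of length $L-1$, and your calculation applies to every $d\in P$, interior or not.
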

\begin{proof}
    Since $D$ is a valid decline gadget, we get
    \[
        L^{(A_j/\alpha_j)} = |P_R|^{(A_j/\alpha_j)} > \sum_{d \in P} |d|.
    \]
    Suppose that $D$ is still a decline gadget after having one of the $d$ removed from $P$. Then the path would get 1 node shorter and the new bound would be:
    \[
    (L-1)^{(A_j/\alpha_j)} > \sum_{d' \in P\setminus\{d\}} |d'|.
    \]
    By a first order Taylor expansion with respect to $L$, for some second order error term $R_2(x) = O(x^2)$ we get
    \begin{align*}
    (L-1)^{(A_j/\alpha_j)} &= L^{A_j/\alpha_j} - \frac{A_j}{\alpha_j} L^{(A_j/\alpha_j) -1} + (L-1)^{(A_j/\alpha_j)}\cdot R_2(1/L).
    \end{align*}
    So there is some constant $\kappa_2>0$, that hides the constants ($A_j$ and $\alpha_j$ are constant) and the error term, such that
    \[
        L^{A_j/\alpha_j} - \frac{A_j}{\alpha_j} L^{(A_j/\alpha_j) -1} + (L-1)^{(A_j/\alpha_j)}\cdot R_2(1/L) \ge L^{A_j/\alpha_j} - \kappa_2 L^{(A_j/\alpha_j) -1}.
    \]
    Now suppose that for some $d \in P$, $|d| > \kappa_2 L^{(A_j/\alpha_j) -1}$. We get
    \begin{align*}
        (L-1)^{(A_j/\alpha_j)} &\ge L^{A_j/\alpha_j} - \kappa_2 L^{(A_j/\alpha_j) -1} > \left(\sum_{d' \in P} |d'| \right) - |d| = \sum_{d' \in P\setminus\{d\}} |d'|.
    \end{align*}
    So $D' = (P\setminus \{d\}, \phi')$ is valid and hence $D$ not minimal. So, for all $d\in P$ it must be that $|d| \le \kappa_2 L^{(A_j/\alpha_j) -1}$.
\end{proof}

The last Claim we need for our bound on the size of a decline gadget correlates the size of a decline gadget with its length. Our lemma argues about decline gadgets of a certain length, so in order to use \Cref{claim:boundGadgets} which talks about the size of gadgets, we need some glue: \Cref{claim:sizeImpliesLength}.
Yet again it introduces some constant -- this time hidden in the big O notation.

\begin{claim}\label{claim:sizeImpliesLength}
    Any minimal level $j$ decline gadget of size at most $S$ has length at most $O(S^{(\alpha_j/A_j)})$.
\end{claim}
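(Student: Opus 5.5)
The plan is to argue directly from the decline condition, \Cref{cond:ratioNodes}, applied to the main path of the gadget. Let $D=(P,\phi)$ be a minimal level-$j$ decline gadget of length $L$ and size $|D|=|R_D|\le S$. The realization $R_D$ consists of the main path $P_R$ with $L$ nodes, together with the realizations of the level-$(j-1)$ gadgets $d\in P$. Hence
\[
S \ge |R_D| = L + \sum_{d\in P}|d| .
\]
So it suffices to show that $\sum_{d\in P}|d|$ is at least of order $L^{A_j/\alpha_j}$. That gives $S\ge \kappa L^{A_j/\alpha_j}$, and therefore $L\le \kappa^{-\alpha_j/A_j}S^{\alpha_j/A_j}=O(S^{\alpha_j/A_j})$.

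The main obstacle is that the decline condition $L^{A_j/\alpha_j}>\sum_{d}|d|$ gives an \emph{upper} bound on the sizes of the children, whereas we need a lower bound. Here minimality has to do the work, in the reverse direction from \Cref{claim:boundGadgets}.

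The approach I would try first is to drop one endpoint of the path. Let $P'$ be $P$ with one extremal gadget $d$ removed, together with its main-path node. Since $D$ is minimal, $D'=(P',\phi')$ is not a decline gadget. I would need to argue that the only way the simulation on $R_{D'}$ can fail to make everyone output $D$ is that \Cref{cond:ratioNodes} is never triggered. The level-$(j-1)$ children are still decline gadgets, so all main-path nodes become active and none can output $E$. Nothing other than \Cref{cond:ratioNodes} makes them output $D$, since \Cref{rule:colorPaths} only propagates an existing $D$ or colors the path. Failure therefore means $(L-1)^{A_j/\alpha_j}\le \sum_{d'\in P'}|d'| \le \sum_{d\in P}|d|$.

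A subtlety is that the condition is evaluated on the growing subpath $P_v$ over time, not only on the full path. Because $L^{A_j/\alpha_j}$ is superadditive when $A_j/\alpha_j\ge 1$, I would check that firing on any subpath of $P'$ is impossible if firing on the whole of $P'$ is. If this is not literally true, I would instead use that whichever subpath triggers the condition already yields a smaller gadget, which again contradicts minimality. From $(L-1)^{A_j/\alpha_j}\le\sum_d|d|$ and $L\ge 2$ we get $\sum_d|d|\ge 2^{-A_j/\alpha_j}L^{A_j/\alpha_j}$, which completes the argument.

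For the degenerate case $L=1$ the bound is trivial. If the subpath-firing issue resists, a fallback is a weaker form of the bound. Every child is a level-$(j-1)$ gadget, so $|d|\ge 1$, which gives $S\ge L$. Combined with \Cref{claim:boundGadgets}, one could then handle the case $A_j/\alpha_j$ close to $1$ separately. I expect the minimality-based lower bound to go through, however.
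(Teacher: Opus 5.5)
Your proof is correct and follows essentially the paper's route. Minimality lets you delete one child gadget, so the decline condition must fail on the remaining $(L-1)$-node path, giving $(L-1)^{A_j/\alpha_j}\le\sum_{d}|d|\le S$; the paper then applies its first-order Taylor bound $(L-1)^{A_j/\alpha_j}\ge L^{A_j/\alpha_j}-\kappa_2L^{A_j/\alpha_j-1}$, where you use the simpler $L-1\ge L/2$. The subpath issue needs no superadditivity (which would in fact fail in general): as your fallback says, any firing during the simulation of $D'$ already makes $D'$ a decline gadget, so non-minimality directly forces the condition to fail on the full path of $D'$.
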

\begin{proof}
    Let $D = (P, \phi)$ be a minimal level $j$ decline gadget and let $\kappa_2$ be as in \Cref{claim:boundGadgets}. If 
    \[
    (L-1)^{(A_j/\alpha_j)} > S - \kappa_2 L^{(A_j/\alpha_j) -1},
    \]
    then by the same argument as in \Cref{claim:boundGadgets}, $D$ cannot be minimal.
    So we get 
    \begin{align*}
    &S - \kappa_2 L^{(A_j/\alpha_j) -1} \ge (L-1)^{A_j/\alpha_j} \ge  L^{A_j/\alpha_j} - \kappa_2 L^{(A_j/\alpha_j) -1}\\
    &\Rightarrow S \ge L^{A_j/\alpha_j}\\
    &\Rightarrow S^{\alpha_j/A_j} \ge L \in O(S^{\alpha_j/A_j}).
    \end{align*}
\end{proof}

With these results in place, we can prove our bound on the number of higher level decline gadgets.
\begin{lemma}\label{lem:lengthBoundsNumber}
    For any length $L \in \N$, the number of minimal level $j$ decline gadgets $\DS_j(L)$ of length at most $L$ is bounded by $O(L^{(1-A_j)/\alpha_j})$, where $A_j = \sum_{i=1}^j \alpha_i$.
\end{lemma}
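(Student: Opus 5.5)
Note that the statement is stated as "the number of ID-disjoint minimal gadgets," in the same sense as $\DS_1(I)$ of \Cref{lem:upDG1}. The plan is induction on $j$. The base case is $j=1$, where the count is over ID-disjoint gadgets built from the available IDs. I first restate \Cref{lem:upDG1} in terms of length. A level-1 gadget of length at most $L$ only uses IDs up to roughly $L^{1/\alpha_1}$, since Condition~\ref{cond:Ids} fires only once $i>(\maxIDi)^{\alpha_1}$. Applying \Cref{lem:upDG1} with $I = O(L^{1/\alpha_1})$ then gives $O(L^{(1-\alpha_1)/\alpha_1}) = O(L^{(1-A_1)/\alpha_1})$ ID-disjoint level-1 gadgets of length at most $L$. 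This is exactly the claimed form for $j=1$.

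For the inductive step $j\ge 2$, fix $L$ and a family of ID-disjoint minimal level-$j$ gadgets of length at most $L$. I would again use dyadic length classes $[L/2^{t+1}, L/2^t]$, as in the proof of \Cref{lem:upDG1}, and bound each class separately. Take a gadget $D=(P,\phi)$ of length $\ell\in[L/2^{t+1},L/2^t]$. By \Cref{claim:boundGadgets}, every sub-gadget $d\in P$ has size at most $\kappa_2 \ell^{A_j/\alpha_j-1}$. By \Cref{claim:sizeImpliesLength} applied at level $j-1$ (taking the sub-gadgets minimal, which only decreases sizes), each $d$ then has length at most
\[
\ell' = O\big((\ell^{A_j/\alpha_j-1})^{\alpha_{j-1}/A_{j-1}}\big) = O\big(\ell^{\alpha_{j-1}/\alpha_j}\big),
\]
using $A_j/\alpha_j - 1 = A_{j-1}/\alpha_j$. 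Since the gadgets are ID-disjoint, so are all their sub-gadgets. The induction hypothesis therefore bounds the total number of level-$(j-1)$ sub-gadgets available by
\[
O\big(\ell'^{(1-A_{j-1})/\alpha_{j-1}}\big) = O\big(\ell^{(1-A_{j-1})/\alpha_j}\big).
\]
Each level-$j$ gadget of length $\ell$ consumes $\ell$ of these sub-gadgets. Hence the number of such gadgets is $O(\ell^{(1-A_{j-1})/\alpha_j - 1}) = O(\ell^{(1-A_{j-1}-\alpha_j)/\alpha_j}) = O(\ell^{(1-A_j)/\alpha_j})$, as desired.

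Finally, I sum over the dyadic classes $\ell = L/2^t$. The exponent $(1-A_j)/\alpha_j$ is positive for $j<k$, because $A_{k-1}<1$ by \Cref{lem:optSolution}. The sum is therefore a convergent geometric series dominated by its $t=0$ term, which gives $O(L^{(1-A_j)/\alpha_j})$. The constants $\kappa_i$ stay bounded because $j<k$ is constant.

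I expect the main obstacle to be the step that feeds sub-gadgets into the induction hypothesis. The hypothesis has to be phrased as a count of ID-disjoint minimal gadgets, and the sub-gadgets of a minimal level-$j$ gadget need not themselves be minimal. I would first try to replace each sub-gadget by a minimal one contained in it. Then I would argue that this replacement keeps $D$ a decline gadget: sizes only shrink, which only helps Condition~\ref{cond:ratioNodes}. The remaining concern is that minimality of $D$ itself might be lost, but only the size bound from \Cref{claim:boundGadgets}, established before the replacement, is actually needed.
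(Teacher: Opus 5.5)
Your proof is correct and has the same skeleton as the paper's: induction on $j$, the base case via \Cref{lem:upDG1}, dyadic length classes, \Cref{claim:boundGadgets} and \Cref{claim:sizeImpliesLength} applied to the sub-gadgets, the induction hypothesis, division by the number of sub-gadgets per gadget, and a geometric sum.

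The real difference is how you resolve the exponent $(A_j/\alpha_j-1)(\alpha_{j-1}/A_{j-1})(1-A_{j-1})/\alpha_{j-1}-1$. The paper splits into the cases $j\le i_0$ and $j>i_0$ and substitutes the structure of the optimal parameters from \Cref{lem:iZero,lem:Utilities}. You instead use the identity $A_j/\alpha_j-1=A_{j-1}/\alpha_j$, which collapses the exponent to $(1-A_j)/\alpha_j$ for any positive $\alpha_1,\dots,\alpha_{k-1}$. The lemma therefore does not depend on the optimization at all, except through $A_{k-1}=1-c\alpha_1<1$. You correctly identify this as exactly what the geometric sum needs to be dominated by its top term, a point the paper leaves implicit.

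Your route is also more robust at the boundary $j=i_0+1$. There the paper's second case uses $\alpha_{j-1}/A_{j-1}\le c\alpha_1$ and $\alpha_j=\alpha_{j-1}/(1-c\alpha_1)$. In fact $\alpha_{i_0}/A_{i_0}=1/i_0\ge c\alpha_1$ and $\alpha_{i_0+1}\le\alpha_{i_0}/(1-c\alpha_1)$. The paper's final exponent comes out right only because these two discrepancies cancel, and that cancellation is precisely your identity.

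Finally, the paper applies \Cref{claim:sizeImpliesLength} and the induction hypothesis to sub-gadgets that need not be minimal, without comment. Your fix of replacing each sub-gadget by a minimal one inside it fills that step. As you note at the end, this needs only ID-disjointness and the size bound, so you do not need the replaced $D$ to remain a decline gadget.
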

\begin{proof}
    We prove the lemma by induction over $j$. Consider the base case.
    If the length of a level-1 decline gadget is at most $L$, then because of Condition~\ref{cond:Ids}, the largest ID that can be used for such a gadget is $L^{1/\alpha_1}$.
    So by \Cref{lem:upDG1}, we get 
    \[
    \DS_1(L^{1/\alpha_1}) \in O(L^{(1/\alpha_1)(1-\alpha_1)}) = O(L^{(1-A_1)/\alpha_1}).
    \]
    Now, let $1<j<k$ and suppose that the statement of the lemma is true for $j-1$.
    
    Similarly as in the proof of \Cref{lem:upDG1}, we compute a geometric sum over the lengths. For all $i$, let $L_i = \frac{L}{2^i}$. 
    First, we derive a bound on $n_i$, the number of decline gadgets of length at least $L_{i}$ and length at most $L_{i+1}$.
    Any level $j$ decline gadget $D = (P, \phi)$ of length at least $L_i$ consists of at least $L_i$ level $j-1$ decline gadgets. By \Cref{claim:boundGadgets} and since the longest gadget has length $L_{j+1}$, all of these $j-1$ decline gadgets have size at most $\kappa_2' L_{i+1}^{(A_j/\alpha_j - 1)}$ for some constant $\kappa_2'>0$. We will continue with another constant $\kappa_2 >0$ that additionally hides the +1 in the subscript.
    \[
    \kappa_2' L_{i+1}^{(A_j/\alpha_j - 1)} = \kappa_2' (L_{i}/2)^{(A_j/\alpha_j - 1)} = \frac{\kappa_2'}{2^{(A_j/\alpha_j - 1)}} L_{i}^{(A_j/\alpha_j - 1)} = \kappa_2 L_i^{(A_j/\alpha_j - 1)}.
    \]
    
    So, what is the maximum number of minimal level $j-1$ decline gadgets of size at most $\kappa_2 L_i^{(A_j/\alpha_j - 1)}$? By \Cref{claim:sizeImpliesLength}, there exists a constant $\kappa_3>0$, such that any level $j-1$ decline gadget of size at most $\kappa_2 L_i^{(A_j/\alpha_j - 1)}$ has length at most 
    \[
    \kappa_3 \left(\kappa_2 L_i^{(A_j/\alpha_j - 1)}\right)^{(\alpha_{j-1}/A_{j-1})} = \kappa_4 L_i^{(A_j/\alpha_j -1)(\alpha_{j-1}/A_{j-1})}.
    \]
    For a suitable constant $\kappa_4$.\\
    By applying our induction hypothesis, and choosing suitable $\kappa_5,\kappa_6>0$, we get that the number of such level $j-1$ decline gadgets is upper bounded by 
    \[
        \kappa_5 \left(\kappa_4 L_i^{(A_j/\alpha_j - 1)(\alpha_{j-1}/A_{j-1})} \right)^{(1-A_{j-1})/\alpha_{j-1}} = \kappa_6 L_i^{(A_j/\alpha_j -1)(\alpha_{j-1}/A_{j-1})(1-A_{j-1})/\alpha_{j-1}}.
    \]
    Before finally resolving the big exponent, we lastly bound the number of level $j$ decline gadget of length between $L_i$ and $L_{i+1}$, by observing that each of them requires at least $L_{i+1}$ of such level $j-1$ gadgets. Therefore, the total number of such level $j$ decline gadgets is bounded by
    \[
    \frac{\kappa_6 L_i^{(A_j/\alpha_j -1)(\alpha_{j-1}/A_{j-1})(1-A_{j-1})/\alpha_{j-1}}}{L_{i+1}} = 2 \kappa_6 L_i^{(A_j/\alpha_j -1)(\alpha_{j-1}/A_{j-1})(1-A_{j-1})(1/\alpha_{j-1}) - 1}
    \]
    Now we will take care of resolving the exponent $(A_j/\alpha_j -1)(\alpha_{j-1}/A_{j-1})(1/\alpha_{j-1})(1-A_{j-1}) - 1$.
    \paragraph{Case $j\le i_0$:} Then by \Cref{lem:iZero} it holds that $(A_j/\alpha_j) = j$, $A_{j-1}/\alpha_{j-1} = j-1$ and $\alpha_j = \alpha_{j-1} = \alpha_1$. So we get
    \begin{align*}
        &{(A_j/\alpha_j -1)(\alpha_{j-1}/A_{j-1})(1/\alpha_{j-1})(1-A_{j-1}) - 1} = (j-1)\left(\frac{1}{j-1}\right) \left(\frac{1}{\alpha_{j-1}} - \frac{A_{j-1}}{\alpha_{j-1}}\right) -1\\
        &=\frac{1}{\alpha_{j-1}} - (j-1) - 1 = \frac{1}{\alpha_{j}} - j = \frac{1}{\alpha_j} - \frac{A_j}{\alpha_j} = \frac{1}{\alpha_j}(1-A_j)
    \end{align*}
    as desired.
    \paragraph{Case $i_0 < j$:} Then by \Cref{lem:Utilities} the following hold
    \begin{align*}
        &\frac{\alpha_j}{A_j} = \alpha_1 c,\\
        &\alpha_j = \frac{1}{1-c\alpha_1}\alpha_{j-1},\\
        &\frac{\alpha_{j-1}}{A_{j-1}} \le\alpha_1 c.
    \end{align*}
    In the last inequality we don't get an exact equality, as in the case of $j-1 = i_0$ we have
    \[
    \frac{\alpha_{i_0}}{A_{i_0}} = \frac{\alpha_1}{i_0 \cdot \alpha_1} = \lfloor c\alpha_1\rfloor \le c\alpha_1.
    \]
    Using these we obtain
    \begin{align*}
        (A_j/ & \alpha_j-1)(\alpha_{j-1}/A_{j-1})(1/\alpha_{j-1})(1-A_{j-1}) - 1\\
        &\le \left(\frac{1}{c\alpha_1} -1\right) (c\alpha_1)(1/\alpha_{j-1})(1-A_{j-1}) -1\\
        &=(1-c\alpha_1)(1/\alpha_{j-1})(1-A_{j-1}) - 1 \\
        &= (1/\alpha_j)(1-A_{j-1}) - \frac{\alpha_j}{\alpha_j} \\
        &= \frac{1}{\alpha_j} - \frac{A_{j-1}}{\alpha_{j}} - \frac{\alpha_j}{\alpha_j} \\
        &= \frac{1}{\alpha_j} - \frac{A_{j}}{\alpha_{j}} = \left(\frac{1}{\alpha_j}\right)(1-A_j).
    \end{align*}
    So in either case it holds that 
    \[
    2 \kappa_6 L_i^{(A_j/\alpha_j -1)(\alpha_{j-1}/A_{j-1})(1/\alpha_{j-1})(1-A_{j-1}) - 1} \le 2\kappa_6 L_i^{(1/\alpha_j)(1-A_j)}.
    \]
    The desired bound then follows from computing a geometric sum over all $i$ in the same way as in the proof of \Cref{lem:upDG1} (remember that $L_i = L/2^i$).
\end{proof}

\begin{lemma}\label{lem:boundLen}
    For any $1 \le j <k$, the maximum length of a level $j$-decline gadget is at most $O(n^{c \alpha_j})$ and for any $v$, the maximum value of $n^{(j)}_v$ is at most $n^{cA_j}$.
\end{lemma}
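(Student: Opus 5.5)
I would prove both claims together by induction on $j$, using the decline conditions directly rather than the counting lemmas. Throughout I use that all IDs lie in $\{1,\ldots,n^c\}$, so any $\maxIDi$ a node can observe is at most $n^c$.

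\textbf{Base case $j=1$.} A level-1 node $v$ outputs $D$ by Condition~\ref{cond:Ids} in the first round $i$ with $i > (\maxIDi)^{\alpha_1}$. Since $\maxIDi \le n^c$, this happens by round $\lceil n^{c\alpha_1}\rceil + 1$ at the latest, and in that round we set $n^{(1)}_v = i = O(n^{c\alpha_1})$. The realization of a level-1 decline gadget is the path of nodes seen before some node declines, which has length $O(i)$. It therefore has length $O(n^{c\alpha_1}) = O(n^{cA_1})$. Nodes that decline via Rule~\ref{rule:colorPaths} only copy a neighbor's value, so the bound on $n^{(1)}_v$ holds for every $v$. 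Condition~\ref{cond:Ids} gives the strict bound $i \le n^{c\alpha_1}+1$, so the stated $n^{cA_1}$ holds up to the additive constant; I would absorb it, or read the statement asymptotically.

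\textbf{Inductive step $1<j<k$.} Assume every $n^{(j-1)}_u \le n^{cA_{j-1}}$ up to constants. A level-$j$ node declines once
\[
|P_v|^{A_j/\alpha_j} > \sum_{u\in P_v} n^{(j-1)}_u .
\]
The right-hand side is at most $|P_v|\, n^{cA_{j-1}}$. Hence the condition certainly holds once
\[
|P_v|^{A_j/\alpha_j - 1} > n^{cA_{j-1}}, \quad\text{that is,}\quad |P_v| > n^{cA_{j-1}\alpha_j/(A_j-\alpha_j)} = n^{c\alpha_j},
\]
using $A_j - \alpha_j = A_{j-1}$. Since $|P_v|$ grows by at least one node per round while the path has not ended, the decline is triggered by the time $|P_v| \approx n^{c\alpha_j}$. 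For \emph{minimal} gadgets, \Cref{claim:boundGadgets} and \Cref{claim:sizeImpliesLength} give the length bound directly: the length is $O(S^{\alpha_j/A_j})$ where $S$ is the size.

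For the value, at the moment of declining $n^{(j)}_v = \sum_{u\in P_v} n^{(j-1)}_u$. The condition was false one step earlier, so this sum is essentially bounded by $(|P_v|+O(1))^{A_j/\alpha_j}$. Plugging in $|P_v| = O(n^{c\alpha_j})$ gives
\[
n^{(j)}_v = O\bigl(n^{c\alpha_j \cdot A_j/\alpha_j}\bigr) = O(n^{cA_j}).
\]
A cruder route avoids this timing argument: bound the sum by $|P_v|\cdot n^{cA_{j-1}} = n^{c\alpha_j + cA_{j-1}} = n^{cA_j}$. This is cleaner, so I would use it.

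\textbf{Main obstacle.} The subtle point is synchronization. Path growth can stall while neighbors are inactive, and the condition is evaluated on growing paths. I therefore need to argue that the length when the condition first fires is at most $n^{c\alpha_j}+O(1)$, i.e.\ that the path cannot overshoot by more than a constant per round. I expect this to follow because $P_v$ grows by at most two nodes per round and the condition is checked every round. The same one-step-overshoot reasoning covers the gadget realizations, since by \Cref{lem:AlgoImpliesGadgets} their main path is exactly such a $P_v$.
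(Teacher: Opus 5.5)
Your proof is correct and follows the paper's argument. It proceeds by induction on $j$, with the base case coming from the ID bound $n^c$. The inductive step shows that the condition must fire once $|P_v|^{A_j/\alpha_j-1}$ exceeds $n^{cA_{j-1}}$, i.e.\ at length $O(n^{c\alpha_j})$ via $A_j-\alpha_j=A_{j-1}$, which gives $n^{(j)}_v = O(n^{cA_j})$.

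The differences are cosmetic. For the value you use $|P_v|\cdot n^{cA_{j-1}}$, while the paper uses $n^{(j)}_v < |P_v|^{A_j/\alpha_j}$; that inequality holds simply because the condition is true when it fires, not because it was false a step earlier as your discarded remark says, since that would give a lower bound. You also track the $O(1)$ overshoot and constants more carefully than the paper does.
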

\begin{proof}
    For the base case, since the largest ID is $n^c$, Condition~\ref{cond:Ids} is true for all nodes after round $n^{c\alpha_1}$. So, the maximum value set for $n_v^{(1)}$ on any node is at most $n^{c\alpha_1}$.
    
    Suppose now that $j>1$. Let $D$ be a decline gadget $(P, \phi)$, of length $L$.
    For the inductive step, we use the fact that Condition~\ref{cond:ratioNodes} is satisfied together with the induction hypothesis to get 
    \begin{align*}
    L^{A_j/\alpha_j} > \sum_{d \in P} |d| \ge \sum_{d \in P} O(n^{cA_{j-1}})  = O(L \cdot n^{cA_{j-1}}).
    \end{align*}
    So we get that Condition~\ref{cond:ratioNodes} is always satisfied when
    \begin{align*}
    &L^{(A_j/\alpha_j) -1} \in O(n^{cA_{j-1}}) \\
    &\Rightarrow L \in O\left(n^{\frac{cA_{j-1}}{(A_j/\alpha_j) -1}}\right)
    \end{align*}
    and the exponent is
    \begin{align*}
        \frac{cA_{j-1}}{(A_j/\alpha_j) -1} = \frac{c A_{j-1} \alpha_j}{A_j - \alpha_j} = \frac{c A_{j-1}\alpha_j}{A_{j-1}} = c \alpha_j
    \end{align*}
    as desired. 
    The bound on the size then immediately follows
    \[
    n^{(j)}_v  = |D| = L + L^{(A_j/\alpha_j)} \in O(L^{(A_j/\alpha_j)}) = O\left(\left(n^{c\alpha_j}\right) ^{(A_j/\alpha_j)}\right) = O(n^{cA_j}).
    \]
\end{proof}

Using exactly the same argument, we get the following bounds when restricting to only using IDs $1 ,\ldots, n$. Note that by using the smallest possible set of $n$ IDs, we get the largest number of level-1 decline gadgets and they also are the smallest possible, so we also get the largest number of higher level decline gadgets.

\begin{lemma}\label{lem:fewGadgets}
    The maximum number of decline gadgets possible using only $n$ IDs, is obtained by using IDs $1,  \ldots, n$.
    When using only IDs $1,2,\ldots, n$, for any $1 \le j <k$, the maximum length of a level $j$-decline gadget is at most $O(n^{\alpha_j})$ and for any $v$, the maximum value of $n^{(j)}_v$ is at most $n^{A_j}$.
\end{lemma}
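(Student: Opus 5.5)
The plan has two parts. The first part is a monotonicity argument showing that using the smallest IDs maximizes the number of gadgets. The second part reruns the induction of \Cref{lem:boundLen} with the ID bound $n^c$ replaced by $n$.

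\textbf{Monotonicity.} Take any family of ID-disjoint decline gadgets using $n$ distinct IDs. Replace these IDs by $1,\ldots,n$ through the order-preserving bijection. The algorithm's only dependence on IDs is Condition~\ref{cond:Ids}, which compares the round number $i$ with $(\maxIDi)^{\alpha_1}$. This map never increases any ID, so it never increases $\maxIDi$. Hence, in the simulation on the relabeled realization, every level-1 node satisfies Condition~\ref{cond:Ids} no later than before.

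I would then argue by induction on the level that a level-$j$ gadget remains a level-$j$ gadget after relabeling. At level 1 this follows directly, since the declining node still declines and Rule~\ref{rule:colorPaths} propagates $D$ along the path, as in \Cref{lem:lvl1DG}. For higher levels, the realization's size $|d|$ is purely structural and does not depend on IDs. Condition~\ref{cond:ratioNodes} therefore compares $|P|^{A_j/\alpha_j}$ with $\sum_{d\in P}|d|$, which is exactly the same inequality as before. The only thing to check is that the lower-level gadgets still output $D$, and this is the induction hypothesis.

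I also need the recorded values $n^{(1)}_v=i$ to not increase. This holds because the decline round can only get earlier. Those values feed Condition~\ref{cond:ratioNodes} only on the right-hand side, so making them smaller makes the condition easier to satisfy. This establishes the first claim: any configuration of gadgets achievable with $n$ IDs is achievable with IDs $1,\ldots,n$.

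\textbf{Bounds with IDs $1,\ldots,n$.} I would copy the proof of \Cref{lem:boundLen} with the largest ID equal to $n$ instead of $n^c$. In the base case, Condition~\ref{cond:Ids} holds at every level-1 node by round $n^{\alpha_1}$. So every level-1 gadget has length $O(n^{\alpha_1})$ and $n^{(1)}_v\le n^{\alpha_1}=n^{A_1}$.

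For the inductive step, let a level-$j$ gadget have length $L$. Using $|d|=O(n^{A_{j-1}})$, Condition~\ref{cond:ratioNodes} is guaranteed once $L^{A_j/\alpha_j-1}\gtrsim n^{A_{j-1}}$. This gives $L=O\bigl(n^{A_{j-1}\alpha_j/(A_j-\alpha_j)}\bigr)=O(n^{\alpha_j})$. The size is then $L+L\cdot O(n^{A_{j-1}})=O(n^{A_j})$, which is the same computation with the factor $c$ removed.

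\textbf{Main obstacle.} The delicate step is the maximality of path lengths. Condition~\ref{cond:ratioNodes} is evaluated on the maximal active subpath seen so far, so the length bound should be read as follows: a node declines as soon as $|P_v|$ exceeds the threshold. Hence any minimal gadget has length at most the threshold plus one, and I would restrict to minimal gadgets for this reason. The second subtle point is the monotonicity claim in the first part, namely that earlier declines never destroy gadget-hood. There I would make sure that a node declining earlier still sees a subpath whose realization certifies the decline, by the same restriction argument as in \Cref{lem:AlgoImpliesGadgets}.
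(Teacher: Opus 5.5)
Your proposal is correct and takes the same route as the paper. The paper justifies the first claim by noting that the smallest $n$ IDs yield the most (and smallest) level-1 gadgets and hence the most higher-level ones, and it gets the bounds by rerunning \Cref{lem:boundLen} with largest ID $n$ in place of $n^c$, exactly as you do. Your relabeling-plus-induction and the restriction to minimal gadgets make this more explicit. One side claim needs correcting: that the recorded values $n^{(1)}_v$ cannot increase does not follow from earlier decline rounds, because Rule (e) copies the value of whichever decliner reaches $v$ first, and relabeling can change which one that is. That claim is unnecessary, though, since your structural comparison of $|P|^{A_j/\alpha_j}$ with $\sum_{d\in P}|d|$ already suffices, as in the paper.
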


We now use these bounds on the number of decline gadgets to argue on the progress our algorithm makes.
Here, we earn the reward for all our technical work we did before, as we just need to invoke our lemmas.

\begin{lemma}\label{lem:boundAdjacent}
    For all $1 < j \le k$, during the execution of $\A$, the number of level $j$ nodes that are adjacent to some level $j-1$ node that outputs $D$ is bounded by $O(n^{1-A_{j-1}})$.
\end{lemma}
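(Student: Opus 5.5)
Each level $j$ node adjacent to a level $j-1$ node that outputs $D$ will be charged to the maximal declining level $j-1$ path it is attached to, and those paths will be counted through decline gadgets. In the lower bound graph and more generally, a level $j-1$ path meets only $O(1)$ level $j$ neighbours: its endpoints, or, for an isolated level $j-1$ node, its at most $\Delta$ neighbours. So it suffices to show that the number of maximal level $(j-1)$ paths whose nodes all output $D$ is $O(n^{1-A_{j-1}})$. (A level $j-1$ path is either entirely declined or not, by Rule~\ref{rule:colorPaths}. Subpaths separated by $E$ nodes are handled the same way, each declined segment being its own maximal declining path.)

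\textbf{Step 1: pass to gadgets.} By \Cref{lem:AlgoImpliesGadgets} (and \Cref{lem:lvl1DG} for $j-1=1$), every maximal declining level $(j-1)$ path $P$ yields a level $(j-1)$ decline gadget realized on a node set $A_P$ that lies in or below $P$. For distinct maximal paths these sets are disjoint, since being below a path follows decreasing-level paths in a tree. Hence the gadgets are ID-disjoint, and they use only IDs actually present in $G$, which number $n$.

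\textbf{Step 2: count the gadgets.} I first replace each gadget by a minimal one, which keeps the gadgets ID-disjoint. By \Cref{lem:fewGadgets}, the worst case is when the IDs are exactly $1,\dots,n$. Then every level $(j-1)$ gadget has length at most $L=O(n^{\alpha_{j-1}})$. Applying \Cref{lem:lengthBoundsNumber} with this $L$ bounds their number by
\[
O\bigl((n^{\alpha_{j-1}})^{(1-A_{j-1})/\alpha_{j-1}}\bigr)=O(n^{1-A_{j-1}}).
\]
For $j-1=1$ this matches \Cref{cor:fewLvl1Paths} directly.

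\textbf{Main obstacle.} The delicate part is justifying that ID-disjoint gadgets built from an arbitrary subset of $n$ IDs from $\{1,\dots,n^c\}$ are no more numerous than those built from $\{1,\dots,n\}$. This relies on Condition~\ref{cond:Ids} being monotone in the IDs: relabelling the used IDs order-preservingly into $\{1,\dots,n\}$ only makes declines happen earlier, so a gadget stays a gadget or can be shortened. That fact is exactly the content of \Cref{lem:fewGadgets}, which I will invoke. I also need to check that \Cref{lem:lengthBoundsNumber} counts ID-disjoint gadgets; its proof divides available IDs or subgadgets by the size per gadget, so it does. Combining this with the $O(1)$ attachment factor gives the claimed $O(n^{1-A_{j-1}})$.
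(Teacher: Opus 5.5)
Your route is the paper's. You charge each such level-$j$ node to one of the at most two higher-level neighbours of a declining level-$(j-1)$ path, extract a gadget via \Cref{lem:AlgoImpliesGadgets}, cap its length via \Cref{lem:fewGadgets}, and count via \Cref{lem:lengthBoundsNumber}.

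One step is false as written, though. In Step~1 you claim that the sets $A_P$ of distinct maximal declining level-$(j-1)$ paths are disjoint. This holds for $j-1=1$, but not for $j-1\ge 2$. A level-$(j-2)$ path $R$ can have its two endpoints attached to nodes $q\in Q$ and $q'\in Q'$ lying on two different level-$(j-1)$ paths. For $j=3$, take a level-1 path whose ends hang off interior degree-3 nodes of two level-2 paths; those two level-2 paths then lie in different components of $G^{(2)}$. Now $R$ is below both $Q$ and $Q'$. If $R$ is the only level-$(j-2)$ neighbour of $q$ and of $q'$, the construction in \Cref{lem:AlgoImpliesGadgets} is forced to put $R$'s gadget into both $A_Q$ and $A_{Q'}$. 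So the extracted gadgets need not be ID-disjoint, and your justification (``being below a path follows decreasing-level paths in a tree'') does not rule this out. As you note yourself, \Cref{lem:lengthBoundsNumber} is a resource-division count over disjoint gadgets, so it cannot be invoked verbatim on this family.

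The repair is cheap. A lower-level path has at most two neighbours of higher level, one per endpoint. By acyclicity it cannot be attached to two nodes of the same higher-level path. Hence each sub-gadget is used by at most two gadgets of the next level. Rerun the induction of \Cref{lem:lengthBoundsNumber} for families with this bounded sharing. The step ``available sub-gadgets divided by sub-gadgets needed per gadget'' then loses a factor of $2$ per level, at most $2^{k}$ overall, which is absorbed by the $O(\cdot)$. The paper's own proof is silent on this point too: it just multiplies by $2$ and cites $\DS_{j-1}$, so it needs the same fix. With that change, your argument proves the lemma.
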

\begin{proof}
    By \Cref{lem:AlgoImpliesGadgets}, for each level $j-1$ neighbor $u$ of some level $j$ node $v$, it holds that if $u$ outputs $D$, then there is some level $j-1$ decline gadget $D_u$. Since $D_u$ consists of a main path $P_{D_u}$, which has only 2 endpoints (one of which is $u$), only two nodes not in $D_u$ can be adjacent to $D_u$. One of those is $v$.\\
    Therefore, the number of nodes adjacent to a level $j-1$ decline gadget is bounded by 2 times the number of level $j-1$ decline gadgets $\DS_{j-1}(L)$. By \Cref{lem:fewGadgets}, the largest length $L$ of a level $j-1$ decline gadget is at most $n^{\alpha_{j-1}}$, so by \Cref{lem:lengthBoundsNumber}, the number of level $j$ nodes adjacent to a level $j-1$ node that outputs $D$ in any execution is bounded by 
    \[
    2 \cdot (n^{\alpha_{j-1}})^{(1-A_{j-1})/\alpha_{j-1}} = 2 \cdot n^{(1-A_{j-1})},
    \]
    as desired.
\end{proof}

The last step is to argue that all of this actually happens fast enough. However, in the definition of maximal active level $j$ subpath of $v$, we potentially wait for a long time for nodes to become active. The following lemma shows that we will never wait for too long.

\begin{lemma}\label{lem:nodesBecomeActive}
    All level $1 \le j \le k$ nodes become active after at most $(j-1) \cdot n^{c\alpha_1}$ rounds.
\end{lemma}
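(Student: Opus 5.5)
We argue by induction on $j$, with the stronger invariant that every level-$j$ node has output a label by round $j\cdot n^{c\alpha_1}$ (plus a constant). This implies the statement, since a level-$(j+1)$ node becomes active as soon as all of its lower-level neighbors have decided.

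\textbf{Base case.} Level-1 nodes are active from round $0$, because they have no lower-level neighbors. Since every ID is at most $n^c$, we have $\maxIDi \le n^c$ in every round $i$. Hence Condition~\ref{cond:Ids} is met at the latest in round $\lfloor n^{c\alpha_1}\rfloor+1$. So every level-1 node that has not already produced a coloring (via Rule~\ref{rule:colorPaths}) or been forced to $D$ by a declining neighbor outputs by round $O(n^{c\alpha_1})$.

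\textbf{Inductive step.} Assume all nodes of level $<j$ have decided by round $(j-1)n^{c\alpha_1}$, so every level-$j$ node is active by then. Take a level-$j$ node $v$ with $j<k$ that has not yet decided. In each further round, its maximal active subpath $P_v$ grows by one node on every side that is not blocked. A side can be blocked in three ways:
\begin{itemize}
\item it hits a node that output $D$, and then $v$ declines immediately;
\item it hits a node that output $E$;
\item it ends at a degree-1 node.
\end{itemize}
If both sides are blocked in the second or third way, Rule~\ref{rule:colorPaths} finishes within $|P_v|$ additional rounds. Otherwise $P_v$ keeps growing, and we apply Condition~\ref{cond:ratioNodes}. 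Every node of $P_v$ that did not output $E$ has a declined level-$(j-1)$ neighbor. By \Cref{lem:boundLen}, each value $n^{(j-1)}_u$ is $O(n^{cA_{j-1}})$. So the right-hand side of the condition is $O(|P_v|\,n^{cA_{j-1}})$, and the computation in \Cref{lem:boundLen} shows the condition triggers once $|P_v|=\Omega(n^{c\alpha_j})$. For $j\le i_0$ we have $\alpha_j=\alpha_1$, so this takes $O(n^{c\alpha_1})$ rounds.

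\textbf{Main obstacle.} For $i_0<j$ we have $\alpha_j>\alpha_1$, so the crude bound $n^{c\alpha_j}$ is too weak. I would first try to bound the length of a surviving undecided path by the number of remaining constrained level-$j$ nodes. Every node of $P_v$ that is not exempt is adjacent to a declined level-$(j-1)$ path, so by \Cref{lem:boundAdjacent} the path has length $O(n^{1-A_{j-1}})$. Combining this with $|P_v|^{A_j/\alpha_j}>\sum n^{(j-1)}_u$ and the equality $\alpha_j/A_j=c\alpha_1$ from \Cref{lem:Utilities} should give a threshold of $O(n^{c\alpha_1})$. The delicate point is that exempt nodes, and nodes whose neighboring declines are small, stop the exploration rather than slowing it down, so in each case at most $O(n^{c\alpha_1})$ extra rounds elapse.

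There is also a worry that, during waiting, a node stays undecided because its $P_v$ is still growing into nodes that are not yet active. The induction hypothesis handles this: all such nodes are active by round $(j-1)n^{c\alpha_1}$, so growth from then on is unobstructed. If the constants do not fit exactly into $(j-1)n^{c\alpha_1}$, I would settle for $O(j\,n^{c\alpha_1})$, which suffices for the runtime analysis.
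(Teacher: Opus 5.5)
\textbf{Genuine gap for levels $i_0<j<k$.} Your treatment of levels $j\le i_0$ matches the paper's proof. Once the lower levels have decided, \Cref{lem:boundLen} bounds each $n^{(j-1)}_u$ by $O(n^{cA_{j-1}})$, so Condition~\ref{cond:ratioNodes} fires once $|P_v|=\Omega(n^{c\alpha_j})=\Omega(n^{c\alpha_1})$.

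You are also right that this crude bound fails for $i_0<j<k$. The paper's own proof gets past that case only by asserting $n^{c\alpha_1}\ge n^{c\alpha_{j-1}}$. That is the wrong direction by Fact~(\ref{eq:alphaIncrease}) of \Cref{lem:Utilities}: for $k=c=3$ one has $i_0=1$ and $c\alpha_2\approx 0.74>c\alpha_1\approx 0.57$.

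However, your proposed repair does not close the gap.
\begin{itemize}
\item \Cref{lem:boundAdjacent} only gives $|P_v|=O(n^{1-A_{j-1}})$. Since $A_{j-1}<A_{k-1}=1-c\alpha_1$ for $j<k$ (\Cref{lem:optSolution}), this exponent is strictly larger than $c\alpha_1$.
\item Taking the minimum with the $O(n^{c\alpha_j})$ threshold does not help, because both exponents exceed $c\alpha_1$. The two bounds are worst cases for different ID regimes (IDs $1,\dots,n$ versus IDs near $n^c$), so they never bind simultaneously.
\item Feeding these global bounds into $|P_v|^{A_j/\alpha_j}\le\sum_{u\in P_v}n^{(j-1)}_u$ still leaves a polynomially worse threshold.
\end{itemize}
So your claim that the combination ``should give $O(n^{c\alpha_1})$'' is unsupported. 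It is exactly the step needed for levels above $i_0+1$ to become active in time.

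\textbf{The missing ingredient} is a local bound on the right-hand side of Condition~\ref{cond:ratioNodes}, namely $\sum_{u\in P_v}n^{(j-1)}_u=O(n)$.

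Prove by induction on the level that the value $n^{(j-1)}_w$ recorded at a level-$(j-1)$ neighbor $w$ of $u$ is at most a constant times the size of the component containing $w$ after deleting the edge $\{u,w\}$.
\begin{itemize}
\item For level $1$, the trigger round is at most about twice the length of the level-$1$ path. Either the explored subpath has grown to that length, or the whole path was seen and the node is in its waiting phase.
\item For higher levels, the value is a sum, over a subpath, of values coming from pairwise disjoint subtrees.
\end{itemize}
Because $G$ is a tree, these components are pairwise disjoint for distinct $u\in P_v$.

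Hence, as long as $v$ has not declined, $|P_v|^{A_j/\alpha_j}=O(n)$. With $A_j/\alpha_j=1/(c\alpha_1)$ from Fact~(\ref{eq:setEq}), this gives $|P_v|=O(n^{c\alpha_1})$ directly, with no appeal to ID magnitudes or to \Cref{lem:boundAdjacent}. With this bound, and $O(j\,n^{c\alpha_1})$ in place of the exact constant, your induction goes through.
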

\begin{proof}
    We prove the statement by induction over $j$. The base case where all level-1 nodes are active immediately is easy.\\
    Let $1<j\le k$. By the induction hypothesis, all level $j-1$ nodes became active after at most round number $(j-2) n^{c\alpha_1}$. We argue that all of them choose an output after at most an additional $n^{c\alpha_1}$ rounds. Consider some level $j-1$ node $v$, which is in some maximal path $P$ of level $j-1$ nodes. Since all level $j-1$ nodes are active, $P_v$ will grow by at least 1 in every round after round $(j-2) n^{c\alpha_1}$. If $|P| < n^{c\alpha_1}$, then either $v$ outputs a label for some other reason, or eventually Rule~\ref{rule:colorPaths} will apply, because $v$ will see the endpoints of $P$. If instead $|P| > n^{c\alpha_1}$, then if $v$ does not output another label we get that $|P_v| > n^{c\alpha_1} \ge n^{c\alpha_{j-1}}$ (by \Cref{lem:iZero,lem:Utilities}). In the proof of \Cref{lem:boundLen}, we argue that for nodes of level $j-1$, Condition~\ref{cond:ratioNodes} is always satisfied if $|P_v| \ge n^{c\alpha_{j-1}}$. Therefore, $v$ will output $D$. Since $v$ was an arbitrary level $j-1$ node, after at most $(j-2) n^{c\alpha_1} + n^{c\alpha_1} = (j-1) n^{c\alpha_1} $ rounds, all level $j-1$ nodes have chosen an output, this means that all level $j$ nodes must now be active.
\end{proof}
 
We can finally prove \Cref{thm:2.5ColIdsUB}.

\begin{proof}[Proof of \Cref{thm:2.5ColIdsUB}]
    By \Cref{lem:nodesBecomeActive}, after at most $(k-1) n^{c\alpha_1}$ many rounds, all level $k$ nodes become active. This means that all nodes of lower levels (and all nodes in the remainder) have chosen their outputs. What is left to show, is that these nodes do not have to spend too much time.
    
    By \Cref{lem:boundAdjacent} the number of level $k$ nodes adjacent to level $k-1$ nodes that output $D$ is at most $O(n^{(1-A_{k-1})})$. All other level $k$ nodes must be adjacent to lower level neighbors that output one of $\set{B,W,E}$ and hence by Rule~\ref{rule:excempt} these nodes output $E$, the moment they became active, so before or in round $(k-1) n^{c\alpha_1}$.
    
    Therefore, the number of level $k$ nodes that did not yet output a label in round $(k-1) n^{c\alpha_1}$ is at most $O(n^{(1-A_{k-1})})$. So the longest path of level $k$ nodes that did not yet output a label has length at most $O(n^{(1-A_{k-1})})$.
    
    Rule~\ref{rule:colorPaths} will 2-color all of these level $k$ nodes after another at most $O(n^{(1-A_{k-1})})$ rounds.
    
    By \Cref{lem:optSolution}, we get that $n^{(1-A_{k-1})} \in O(n^{c\alpha_1})$. So all nodes decided on an output after at most 
    \[
    (k-1) n^{c\alpha_1} +  O(n^{c\alpha_1}) =  O(n^{c\alpha_1})
    \]
    rounds.
    
    By \Cref{lem:IdAlgoCorrect} the output is correct and hence the theorem is proven.
\end{proof}

\section{Randomized LOCAL Model}\label{sec:rand}
The randomized complexity of $2$-hierarchical $2\frac{1}{2}$-coloring without knowledge of $n$ is, surprisingly, $\Theta\bigl(\frac{n}{\log n}\bigr)$.
\subsection{Upper Bound}
\begin{theorem}\label{thm:randomized_k_2_upper}
    There is an algorithm that solves $2$-hierarchical $2\frac{1}{2}$-coloring with locality $O\bigl(\frac{n}{\log n}\bigr)$ without knowledge of $n$.
\end{theorem}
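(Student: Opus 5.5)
The algorithm follows the marking idea sketched in the high-level overview. Each level-1 node independently marks itself with probability $1/2$. A level-1 node $v$ explores its level-1 path, growing its view by one hop per round in both directions. If the explored region reaches both endpoints of the path without containing a marked node, all nodes of the path can see the whole path and output a consistent $2$-coloring with $B,W$. This is the path-coloring part of Rule~\ref{rule:colorPaths}, carried out in a number of rounds linear in the path length. As soon as $v$ sees a marked node on its path, $v$ outputs $D$. This propagates, so the whole path outputs $D$, which is consistent since a marked node on the path is seen by every node of the path. Concretely, a node's decision is determined by the first marked node in each direction together with the endpoints, so all nodes of a path agree on "colored" versus "declined".

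Level-2 nodes become active once all their level-1 neighbors have decided. A level-2 node adjacent to a colored level-1 path outputs $E$. The remaining level-2 nodes form subpaths separated by $E$-nodes or ending at degree-1 nodes, and they $2$-color these subpaths by exploring them fully. Correctness then follows exactly as in \Cref{lem:IdAlgoCorrect}.

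Every level-1 path that declines does so by the time its nodes have seen a marked node. A level-1 path of length $m$ that colors takes $O(m)$ rounds, and its probability is $2^{-m}$. By a union bound over the at most $n$ level-1 paths, with high probability no path of length more than $C\log n$ is marked-free. Each node of a path declines within the distance to its nearest marked node, which is $O(\log n)$ w.h.p. Hence level 1 finishes in $O(\log n)$ rounds w.h.p.

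The main step is to bound the length of any maximal level-2 subpath $Q$ all of whose attached level-1 paths declined; call such a subpath \emph{unexempt}. Each node $u\in Q$ has an attached level-1 path $P_u$ containing a marked node. These paths are disjoint. The bound should be that, w.h.p., no unexempt subpath $Q$ has length $t > C n/\log n$.

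Fix a candidate $Q$ of length $t$. There are at most $n^2$ choices, as a subpath is determined by its endpoints. Among the $t$ attached paths, at most $n/(\varepsilon\log n)$ have length at least $\varepsilon \log n$, since the total node count is $n$. So if $t \ge 2n/(\varepsilon\log n)$, at least $t/2$ attached paths have length below $\varepsilon\log n$. Each such path is unmarked independently with probability at least $2^{-\varepsilon\log n}=n^{-\varepsilon}$. The probability that all of them are marked is therefore at most $(1-n^{-\varepsilon})^{t/2}\le \exp(-t n^{-\varepsilon}/2)$.

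With $t=\Theta(n/\log n)$ this is $\exp(-n^{1-\varepsilon-o(1)})$, which easily survives the $n^2$ union bound. This naive choice, however, gives only an $n/\log n$ threshold with $\varepsilon$ constant, and that is exactly what we want: choose $\varepsilon=1/2$. Then any unexempt $Q$ has length $O(n/\log n)$ w.h.p.

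Level-2 nodes become active within $O(\log n)$ rounds and then need $O(|Q|)$ rounds to color their subpath. The locality is therefore $O(n/\log n)$ w.h.p. The algorithm terminates with a correct output always, since the colorings are exact, and only the running time is probabilistic.

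The delicate point is the conditioning. Whether a subpath is unexempt depends on the markings, but the union bound over all $O(n^2)$ fixed subpaths together with the independence of marks on disjoint paths handles this cleanly.
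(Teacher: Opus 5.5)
Your proposal is correct and follows the paper's proof: the same marking algorithm, level-1 paths resolved in $O(\log n)$ rounds w.h.p., independence over the disjoint attached level-1 paths of a fixed candidate level-2 subpath, and a union bound over subpaths. The only difference is the estimate for one fixed subpath. The paper bounds $\prod_i (1-2^{-l_i}) \le \exp\bigl(-\sum_i 2^{-l_i}\bigr)$ and uses the fact that the sum is minimized when the lengths are balanced under $\sum_i l_i \le n$, whereas you observe that at least half of the attached paths have length below $\frac12\log n$. Both routes give the same $\exp\bigl(-\Theta(\sqrt{n}/\log n)\bigr)$ bound.
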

\begin{proof}
    We give the following algorithm:
    Every node in a level-$1$ path will mark itself with probability $\frac{1}{2}$.
    The path will be declined if at least one node is marked.
    Given a level-$1$ path of length $l$, the probability that it is declined is therefore $1-2^{-l}$. Further, with high probability, each level-$1$ path is either of length $O(\log n)$ or every node on the path sees a marked node on the path within distance $O(\log n)$. The level-$1$ paths can therefore be labeled within $O(\log n)$ rounds.
    
    To bound the round complexity of handling level-$2$ nodes, let us now compute the probability $p$ that the algorithm needs to $2$-color some fixed level-$2$ path of some length $k$.
    Suppose that level-$1$ paths of length $l_1,l_2,\dots,l_k$ are attached to this level-$2$ path.
    We get
    \[
        p   \leq \prod_{i=1}^k (1-2^{-l_i})
            \leq \prod_{i=1}^k \exp\left({-2^{-l_i}}\right)
            = \exp\left({-\sum_{i=1}^k \frac{1}{2^{l_i}}}\right)
    \]
    For $a+b$ constant and $a,b\in \N$, one can verify that $1/2^a + 1/2^b$ is minimal whenever $a=b$, if possible, or $|a-b|=1$ otherwise.
    Therefore, because $\sum_{i=1}^k l_i \leq n$ and by letting $k=2n/(\log n)$, we get that:
    \[
        p \leq \exp\left({-\sum_{i=1}^k \frac{1}{2^{n/k}}}\right)
          = \exp\left({-\frac{2\sqrt{n}}{\log n}}\right)
    \]
    By union bound, we thus get that with high probability, there is no level-$2$ path of length $\omega\bigl(\frac{n}{\log n}\bigr)$ that $\A$ needs to $2$-color. The level-$2$ nodes can therefore be labeled in $O\bigl(\frac{n}{\log n}\bigr)$ rounds.
    This completes the proof.
\end{proof}

\subsection{Lower Bound}
\begin{theorem}\label{thm:randomized_k_2_lower}
    $2$-hierarchical $2\frac{1}{2}$-coloring requires $\Omega\bigl(\frac{n}{\log n}\bigr)$ locality to be solved by a randomized algorithm without knowledge of $n$.
\end{theorem}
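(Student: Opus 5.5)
Fix a randomized algorithm $\A$ that, for some function $T(n)$, solves $2$-hierarchical $2\frac{1}{2}$-coloring w.h.p.\ within locality $T(n)$ without knowing $n$. The plan follows the high-level outline from the introduction: first show that level-$1$ nodes in the middle of long paths must decline quickly with constant probability, then build an instance where all level-$1$ paths decline, so a long level-$2$ path must be $2$-colored.

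\textbf{Step 1: a short level-$1$ path declines with constant probability.} Consider a node $v$ on a path graph, in particular on a level-$1$ path, and a radius $r$. Since nodes have no IDs and no knowledge of $n$, the distribution of $v$'s behaviour in the first $r$ rounds depends only on its radius-$r$ view, which is a bare path with random bits. I claim there is a constant $r_0$ and a constant $q>0$ such that, with probability at least $q$, a node far from the path endpoints terminates within $r_0$ rounds and outputs $D$.

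To prove the claim, first apply the argument of \Cref{lem:onlyDecl} on path instances. A node that outputs a color within $r$ rounds, in the interior of a path, yields a coloring whose consistency at distance $>2r$ is violated with constant probability. The reason is that the parity-switching surgery does not change the views, and independent randomness on far-apart nodes makes the colors uncorrelated. So on long paths, interior nodes must w.h.p.\ either decline or run for long.

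Next, running long is expensive. If, with probability $\ge 1-q$, $v$ runs more than $r$ rounds whenever its view is a bare path, then on an $n$-node path instance with $n$ chosen so that $r=\Theta(n)$, the locality is $\Omega(n)$. Hence, if $T(n)=o(n)$, there is a fixed constant $r_0$ (independent of $n$) with $\Pr[v \text{ declines within } r_0 \text{ rounds}]\ge q$.

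\textbf{Step 2: long level-$1$ paths decline w.h.p.} A level-$1$ path of length $\ell = C\log n$ contains $\Omega(\log n)$ interior nodes at pairwise distance $>2r_0$. Their decisions within $r_0$ rounds are independent, so the probability that none of them declines is at most $(1-q)^{\Omega(\log n)}\le n^{-3}$ for $C$ large. Whenever one node declines, the rules force the whole level-$1$ path to $D$.

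\textbf{Step 3: the instance.} Take the $2$-hierarchical lower bound graph with $\ell_1=C\log n$ and $\ell_2=\Theta(n/\log n)$. By a union bound over the $O(n/\log n)$ level-$1$ paths, w.h.p.\ all of them decline. Then no level-$2$ node may output $E$, so the level-$2$ path of length $\Theta(n/\log n)$ must be properly $2$-colored. By the parity argument of \Cref{lem:onlyDecl}, applied to views of radius $\ell_2/8$ with the same randomness, some node must run for $\Omega(n/\log n)$ rounds. Otherwise the algorithm fails with constant probability on either $G$ or the modified $G'$, where the modification happens away from those views.

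\textbf{Main obstacle.} The hard part is Step 1 in the randomized, $n$-oblivious setting: turning ``the algorithm has sublinear locality'' into a constant radius $r_0$ that does not depend on $n$. The delicacy is that the algorithm's success probability guarantee is only w.h.p.\ in the unknown $n$, while the decline probability at small radius must be uniform across instance sizes. I would handle this by contradiction. If for every $r$ the decline-within-$r$ probability tends to $0$, pick $r$ large and a path of size $n=\Theta(r)$. There the node either colors locally, which fails with constant probability by the parity argument, or runs $\Theta(n)$ rounds with constant probability, which contradicts $T(n)=o(n)$ in expectation or whp.
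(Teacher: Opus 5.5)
Your proposal is correct and follows essentially the same route as the paper. The argument is a dichotomy: either interior level-$1$ nodes run for $\Omega(n)$ rounds on long paths, or they stop within some constant radius with constant probability. In the latter case, level-$1$ paths of length $\Theta(\log n)$ decline w.h.p.\ and force a level-$2$ path of length $\Theta(n/\log n)$ to be properly $2$-colored. The paper states the dichotomy via termination times on the infinite path and leaves the parity/coupling step implicit, while you state it directly in terms of declining and spell that step out.
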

\begin{proof}
    Let $\A$ be a randomized algorithm solving $2\frac{1}{2}$-coloring without knowledge of $n$, in time $o\bigl(\frac{n}{\log n}\bigr)$.
    Let $\P$ be an infinite path on $\Z$ and with edges $\{j,j+1\}$ for all $j\in\Z$.
    Let $T'_i$ be the random variable measuring the time for the node with ID $i$ to stop on $\P$.
    Note that because $\A$ does not know $n$, this $T'_i$ will be independent of $n$.
    If, for all $n_0$, there exists $i$ such that $\Pr(T'_i\geq n_0) \geq 1/2$, then one can take arbitrarily long subpaths of $\P$ on which the expected round complexity of $\A$ is $\Omega(n)$ and thus too large.
    Otherwise, there is some $n_0$ such that for every $i$, $\Pr(T'_i\leq n_0)\geq 1/2$.
    
    Now, we can build a large level-1 path $P$ by joining up $\Omega(\log n)$ disjoint subpaths of length at least $2n_0+1$ of $\P$. The center of each subpath will terminate within $n_0$ rounds with probability at least $1/2$ and therefore, the whole level-1 path $P$ cannot be $2$-colored with high probability. Recall that to solve $2\frac{1}{2}$-coloring, a level-$1$ path either needs to be properly $2$-colored or all nodes on the path need to output $D$ (decline). If two nodes at distance more than $2n_0$ both terminate within time $n_0$, it is not possible to $2$-color the path (we do not guarantee that the joined short segments of $\P$ together form a long segment of $\P$. With high probability, all nodes on the level-$1$ path therefore have to output $D$.

    If we build $\Theta(n/\log n)$ disjoint such level-$1$ paths of length $\Theta(\log n)$ each and join the end nodes on one side of those paths to a level-$2$ path, we create a level-$2$ path of length $\Omega(n/\log n)$ for which each node is a adjacent to a level-$1$ path that outputs decline. The definition of $2\frac{1}{2}$-coloring then forces this path to be $2$-colored, which takes $\Omega(n/\log n)$ rounds even with randomization.
\end{proof}

\subsection{\texorpdfstring{$3$}{3}-hierarchical \texorpdfstring{$2\frac{1}{2}$}{2.5}-coloring}

\begin{definition}[Interesting and friendly nodes]\label{def:friendly}
    Consider some graph $G$ and let each node be assigned the same level as in the $3$-hierarchical $2\frac{1}{2}$-coloring problem.
    Assume that a partial output for $3$-hierarchical $2\frac{1}{2}$-coloring is given that assigns an output label to exactly those nodes that are in level $1$.
    We call a node of level $2$ \emph{interesting} if all its neighbors of level $1$ output $D$.
    Similarly for a partial output for $3$-hierarchical $2\frac{1}{2}$-coloring given to all nodes in levels $1$ and $2$, we call a node of level $3$ \emph{interesting} if all its neighbors of level $1$ and $2$ output $D$.

    For any positive integer $i$ and any level-$2$ node $v \in V(G)$, let $Q_i(v)$ be the set of all level-$1$ nodes that can be reached from $v$ via a path of length at most $i$ that contains only level-$1$ nodes except for $v$.
    Let $f:\mathbb{R}_{\geq 1} \rightarrow \mathbb{R}_{\geq 0}$ be a monotonically increasing function.
    We call a path $P$ consisting of $i$ level-$2$ nodes \emph{$f$-friendly} if
    $|\bigcup_{v \in V(P)} Q_i(v)| < i \cdot f(i)$, and \emph{$f$-unfriendly} otherwise.

\end{definition}

\begin{theorem}\label{thm:randomized_k_3_upper}
    Let $g:\mathbb{R}_{\geq 1} \rightarrow \mathbb{R}_{\geq 0}$ be a monotonically increasing function satisfying
    $g(g(x)) = \log x$ for any real number $x \geq 2$ (cf.\ \Cref{subsec:halflog})
    Then there exists a randomized Las Vegas algorithm that w.h.p.\ solves $3$-hierarchical $2\frac{1}{2}$-coloring with locality $O(n/g(n))$ without knowledge of $n$.
\end{theorem}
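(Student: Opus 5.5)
Level 1 is handled exactly as in the proof of \Cref{thm:randomized_k_2_upper}. Every level-$1$ node marks itself with probability $\frac12$, and a level-$1$ path declines iff it contains a marked node. W.h.p.\ every level-$1$ path is labeled within $O(\log n)$ rounds, and every level-$1$ path of length $\ge C\log n$ declines.

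For level $2$ I use friendliness with respect to $g$ as the stopping rule. A level-$2$ node $v$ explores its maximal active level-$2$ path. At round $i$ it looks at the level-$2$ subpath $P$ of $i$ nodes it sees, together with the sets $Q_i(u)$ for $u\in P$. If all attached level-$1$ paths are short enough to see completely, $v$ can compute $|\bigcup_{u\in P}Q_i(u)|$. As soon as $P$ is $g$-unfriendly, i.e.\ $|\bigcup Q_i|\ge i\,g(i)$, the node outputs $D$, and the decline propagates along the path as in Rule~\ref{rule:colorPaths} of \Cref{sec:promiseIds}. If the path is exempt-split into a piece whose endpoints are both visible, the piece is $2$-colored. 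Level-$3$ nodes become active, output $E$ if possible, and otherwise $2$-color their remaining path.

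The intuition is this. An interesting level-$2$ path of length $i$ that is still friendly has at most $i\,g(i)$ level-$1$ nodes attached, so its level-$1$ paths have average length $g(i)$. The probability that all of them declined is at most $\exp(-\sum_u 2^{-l_u})\le \exp(-i\,2^{-g(i)})$ by the convexity argument of \Cref{thm:randomized_k_2_upper}. Hence w.h.p.\ no interesting friendly level-$2$ path of length $i$ exists once $i\,2^{-g(i)}\ge C\log n$, roughly once $g(i)\lesssim \log i-\log\log n$. This fails only when $i$ is polylogarithmic. I would make the rule a bit more careful so that this holds for all large $i$: the decline test is applied only after the path length exceeds, say, $\log^2$ of the observed level-$1$ mass, which still keeps the runtime polylogarithmic.

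Consequently, every declining interesting level-$2$ path of length $i$ carries at least $i\,g(i)$ level-$1$ nodes. The attached masses of declining paths are disjoint up to constant overlap, since $Q_i$ sets only look along level-$1$ paths hanging below. The number of interesting level-$3$ nodes, which are exactly those adjacent to declining level-$2$ paths, is therefore at most $2\sum 1\le 2\sum_P |\bigcup Q|/(i_P g(i_P))$.

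The runtime is bounded as follows.

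\begin{itemize}
\item \emph{Short level-$2$ paths.} If a level-$2$ path of length $i\ge n/g(n)$ were still being explored, friendliness would force its total mass $i\,g(i)\le n$, so $g(i)\le g(n)$ and $i\ge n/g(n)$. I would cap the exploration at the round when unfriendliness must hold. Since the mass seen is at most $n$, a path of length $i$ with $i\,g(i)>n$ is always unfriendly, so the level-$2$ exploration stops within $i^*$ rounds where $i^*g(i^*)=n$, giving $i^*=O(n/g(n))$.
\item \emph{Level-$3$ path.} I split declining level-$2$ paths into lengths $i<g(n)$ and $i\ge g(n)$.
\begin{itemize}
\item Long ones ($i\ge g(n)$) each use mass at least $i\,g(g(n))=i\log n$. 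There are at most $n/(g(n)\log n)$ of them, which is negligible.
\item Short ones ($i<g(n)$) must w.h.p.\ have level-$1$ paths of length $\Omega(\log n)$ attached (this is where the marking bound and the composition $g(g(n))=\log n$ enter), since friendly short paths with few attached nodes fail to decline by the bound above. They therefore consume mass $\Omega(i\log n)$. Here I would have to argue more carefully by a Chernoff-type bound over the number of disjoint candidate paths, possibly using $\Omega(\log n/g(n))$ per node.
\end{itemize}
Balancing the two cases should give $O(n/g(n))$ interesting level-$3$ nodes, hence $O(n/g(n))$ rounds for the final $2$-coloring.
\end{itemize}

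The main obstacle is this counting step for level $3$. I must show w.h.p.\ that interesting level-$3$ nodes number $O(n/g(n))$ simultaneously for all length scales of declining level-$2$ paths, with a union bound over polynomially many candidate paths. The intended threshold balances the two regimes exactly at $i=g(n)$, using $g(g(n))=\log n$. I would first attempt a dyadic decomposition over $i$ with failure probabilities $\exp(-i2^{-g(i)})$, then sum masses, and adjust the rule with constants or $\log$ factors wherever the small-$i$ regime breaks the w.h.p.\ guarantee.
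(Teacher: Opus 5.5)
There is a genuine gap. Your level-2 rule points the wrong way, and your accounting treats $g(n)$ as at most $\log n$.

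\textbf{Why the rule fails.} Since $g$ is increasing with $g(g(x))=\log x$, one gets $\log x<g(x)<x$, and in fact $g$ outgrows every power of $\log x$. So the target $n/g(n)$ is \emph{smaller} than $n/\log n$. Your rule declines a level-2 path as soon as its visible subpath is $g$-unfriendly (heavy), and this makes short paths decline. Take the $3$-hierarchical lower bound graph with $\ell_1=5\log n$, $\ell_2$ a large constant (or $\Theta((\log\log n)^2)$ to clear your $\log^2$-of-mass threshold), and $\ell_3=\Theta(n/\log n)$.
W.h.p.\ all level-1 paths decline. At scale $i$ each visible level-2 node carries about $i$ level-1 nodes, which exceeds $g(i)$ after $O(1)$ rounds. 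So the middle node of every level-2 path declines before seeing an endpoint. Then every level-3 node is interesting, and a level-3 path of length $\Theta(n/\log n)=\omega(n/g(n))$ must be $2$-colored.

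\textbf{Where your own count shows it.} Your ``short'' declining paths consume only $\Theta(\log n)$ mass each. That bounds the interesting level-3 nodes by $O(n/\log n)$, and ``balancing'' this to $O(n/g(n))$ would need $g(n)\le\log n$. The same misconception breaks two further steps:
\begin{itemize}
\item $i\,2^{-g(i)}\to 0$, so $i\,2^{-g(i)}\ge C\log n$ fails for all large $i$, not only polylogarithmic ones.
\item A path with $i\,g(i)>n$ is always $g$-\emph{friendly}, not unfriendly, so that step gives no cap on level-2 exploration under your rule.
\end{itemize}

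\textbf{The paper's route.} The paper uses the opposite rule with $f=g/3$: an interesting level-2 path declines iff it contains an $f$-friendly (light) subpath. Any subpath longer than $i^*$, where $i^*f(i^*)=n$, is automatically friendly, which caps level-2 exploration at $O(n/f(n))$ rounds.

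For a level-3 node $v'$, let $j$ be the smallest length of a friendly subpath in the adjacent level-2 path, read from $v'$. There are two cases.
\begin{itemize}
\item If $f(j)<\frac12\log n$, declining needs all $j$ nodes of a light subpath to be interesting. This has probability at most $(1-2^{-f(j)})^j\le 1-n^{-1/2}$, independently across level-3 nodes since their subtrees are disjoint. So a level-3 path of length $2n/f(n)$ with more than $4\sqrt n\log n$ such nodes contains a non-interesting node w.h.p.
\item Otherwise, at least $n/f(n)$ of its nodes have $f(j)\ge\frac12\log n$. For each of them, the prefix of length $j'-1$, where $j'$ is minimal with $f(j')\ge\frac12\log n$, is unfriendly. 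That puts roughly $f^{-1}(\frac12\log n)$ nodes below each of them. This forces $f^{-1}(\frac12\log n)\lesssim f(n)$, i.e.\ $\frac12\log n\lesssim f(f(n))\le f(g(n))=\frac13\log n$, a contradiction.
\end{itemize}

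That is the missing idea: a declining level-2 path must be either improbable or carry mass about $f^{-1}(\Theta(\log n))>f(n)$, not merely $\Theta(\log n)$ as in your scheme.
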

\begin{proof}
    Define a function $f:\mathbb{R}_{\geq 1} \rightarrow \mathbb{R}_{\geq 0}$ by setting $f(x) := 1/3 \cdot g(x)$, for each real number $x \geq 1$.
    Consider the following algorithm $\fA$, that we provide from a global perspective.
    We will show later how to implement it in the distributed setting.
    
    Each node starts by computing its level.
    Then, each level-$1$ node marks itself with probability $1/2$.
    Consider some arbitrary node $v$.
    
    If $v$ is of level $1$, do the following.
    Let $P_v$ denote the level-$1$ path containing $v$.
    If $P_v$ contains a marked node, then $v$ outputs $D$; if $P_v$ does not contain a marked node, then the nodes of $P_v$ output a correct $2$-coloring.

    If $v$ is of level $2$, do the following.
    If $v$ is not interesting, then $v$ outputs $E$.
    If $v$ is interesting, then let $P'_v$ denote the maximal level-$2$ path containing $v$ that consists of interesting nodes.
    If $P'_v$ contains an $f$-friendly subpath, then $v$ outputs $D$.
    If $P'_v$ does not contain an $f$-friendly subpath, then the nodes of $P'_v$ output a correct $2$-coloring.
    
    If $v$ is of level $3$, do the following.
    If $v$ is not interesting, then $v$ outputs $E$.
    If $v$ is interesting, then the nodes on the maximal path of interesting nodes containing $v$ output a correct $2$-coloring.

    If $v$ is of level $4$ (i.e., the remaining nodes), the definition of $2\frac{1}{2}$-coloring in \Cref{sec:2.5Col} does not make any requirements. All nodes of level $4$ can therefore output $D$.
    
    For the distributed implementation, each node simply does the following: in each round it sends around all the information it has received so far (and in the very first round the information it has initially) until it has received sufficient information itself to determine its output, upon which it sends its output to its neighbors and terminates.
    More precisely, if a node is supposed to output $D$ or $E$, it will do so as soon as it has gathered sufficient information to determine that it satisfies one of the respective criteria for outputting $D$ or $E$ outlined in the algorithm description.
    Similarly, if a node is supposed to output a color, then it does so as soon as it has gathered sufficient information to determine that it satisfies one of the respective criteria \emph{and} either complete knowledge of its connected component of nodes that output a color or knowledge of a path to a node in this component and of the color that this node outputs.
    We remark that for each connected component of nodes to be colored, a $2$-coloring can be fixed via the random bits of the nodes (with probability $1$), which is why ``seeing'' the whole component suffices for a node to determine its color.

    In the following, we argue that it holds w.h.p.\ that each node $v$ has received the aforementioned sufficient information after $O(n/f(n))$ rounds.
    To this end, we consider the different types of nodes one by one.
    Note that each node can determine its level in a constant number of rounds, determining the output of level-$4$ nodes (which is always $D$) and deciding whether level-$1$ a node marks itself does not require any communication, which is why we ignore these two steps in the following. 

    If $v$ is a level-$1$ node, we consider two cases.
    If $v$ is in a level-$1$ path of length at most $5 \log n$, then it terminates after at most $5 \log n$ rounds.
    If it is in a level-$1$ path of length greater than $5 \log n$, then it has a marked node in distance at most $5 \log n$ with probability at least $1 - 1/2^{5 \log n} = 1 - 1/n^5$, which implies that it terminates after at most $5 \log n$ round with probability at least $1 - 1/n^5$.

    Next, consider the case that $v$ is a level-$2$ node.
    From the discussion of the level-$1$ nodes, we obtain by a union bound that with probability at least $1 - 1/n^4$, each level-$2$ node knows after $5 \log n + 1$ rounds whether it is interesting or not.
    Hence, if $v$ is not interesting it terminates with probability at least $1 - 1/n^4$ after $5 \log n + 1$ rounds.
    Now consider the case that $v$ is interesting.
    From the definition of an $f$-unfriendly path, we know that no $f$-unfriendly path in the input tree $G$ can be longer than $i$, where $i$ is the real number satisfying $i \cdot f(i) = n$.
    The function $f(n)$ is growing slowly enough such that $f(n) = O(f(n/f(n)))$.
    We therefore obtain that the longest $f$-unfriendly path in $G$ is of length $O(n/f(n))$.
    Hence, after each level-$2$ node learned the output of its level-$1$ neighbors, node $v$ can determine in $O(n/f(n))$ rounds whether the path $P'_v$ from the algorithm description contains an $f$-friendly subpath.
    Note that we use here that for a level-$2$ node to determine whether a level-$2$ path of length $j$ it is contained in is friendly takes only $O(j)$ rounds (provided that each level-$2$ nodes knows the outputs of its level-$1$ neighbors).
    In conclusion, $v$ terminates with probability at least $1 - 1/n^4$ after $O(n/f(n) + 5 \log n + 1) = O(n/f(n))$ rounds.

    Next, consider the case that $v$ is a level-$3$ node.
    From the discussion of the lower-level nodes, we obtain by a union bound that with probability at least $1 - 1/n^3$, each level-$3$ node knows after $O(n/f(n))$ rounds whether it is interesting or not.
    If $v$ is not interesting, it terminates with probability at least $1 - 1/n^3$ after $O(n/f(n))$ rounds.
    Now consider the case that $v$ is interesting.

    Consider an arbitrary (not necessarily maximal) level-$3$ path $P''$, and let $v'$ be an arbitrary node on $P''$.
    Since $v'$ is a level-$3$ node it must have at least one adjacent level-$2$ node.
    Let $u$ be such a node and let $P''_u$ be the maximal level-$2$ path containing $u$ (in which $u$ is necessarily a node of degree $1$).   
    Let $u'$ be the node on $P''_u$ that is closest to $u$ with the property that the subpath $P^*$ of $P''_u$ from $u$ to $u'$ contains an $f$-friendly subpath.
    Let $j$ be the smallest positive integer such that $P^*$ contains an $f$-friendly subpath consisting of precisely $j$ nodes.
    In the following, we prove an upper bound for the probability of $v'$ being an interesting node (as a function of $j$).

    Let $\hat{P} = (u_1, \dots, u_j)$ be an $f$-friendly subpath of $P^*$ consisting of precisely $j$ nodes.
    For each $1 \leq r \leq j$, let $w_r$ be a level-$1$ node adjacent to $u_r$ (which exists by the definition of a level-$2$ node) and let $P_r$ be the maximal level-$1$ path containing $w_r$ and consisting of at most $j$ nodes.
    For each $1 \leq r \leq j$, the probability that $u_r$ is interesting is at most $1-1/2^{|V(P_r)|}$.
    This implies that the probability that \emph{all} $u_r$ are interesting is at most $\prod_{1 \leq r \leq j} (1-1/2^{|V(P_r)|})$ (due to the independence of the events that an individual $u_r$ is interesting).
    Moreover, we know that
    \begin{equation}\label{eq:friendlysum}
        \sum_{1 \leq r \leq j} |V(P_r)| \leq j \cdot f(j)
    \end{equation}
    since $P^*$ is $f$-friendly.
    Observe that $\prod_{1 \leq r \leq j} (1-1/2^{|V(P_r)|})$ is maximized under~\Cref{eq:friendlysum} if $|V(P_r)| = f(j)$ for each $1 \leq r \leq j$.
    To see this, assume that this is not the case.
    Then in the maximizing choice for the $|V(P_r)|$, there must be two members $|V(P_{r_1})|$ and $|V(P_{r_2})|$ satisfying $|V(P_{r_1})| + 1 \leq f(j) \leq |V(P_{r_2})| - 1$.
    But then $\prod_{1 \leq r \leq j} (1-1/2^{|V(P_r)|})$ can be increased further by replacing $|V(P_{r_1})|$ by $|V(P_{r_1})| + 1$ and $|V(P_{r_2})|$ by $|V(P_{r_1})| - 1$ since
    \begin{align*}
        \left(1-1/2^{|V(P_{r_1})|}\right)\left(1-1/2^{|V(P_{r_2})|}\right) &< 1-1/2^{|V(P_{r_1})|}\\
        &< \left(1-1/2^{|V(P_{r_1})| + 1}\right)^2\\
        &\leq \left(1-1/2^{|V(P_{r_1})| + 1}\right)\left(1-1/2^{|V(P_{r_2})| - 1}\right),
    \end{align*}
    yielding a contradiction.

    Hence, the probability that all $u_r$ are interesting is at most $\left(1-1/2^{f(j)}\right)^{j}$.
    Observe that, by the definition of $u'$ (and the design of $\fA$), node $u$ outputs $D$ only if all nodes on $P^*$ output $D$, and the latter condition can only be satisfied if all $u_r$ output $D$, which in turn can only happen if all $u_r$ are interesting.
    Thus, $u$ outputs $D$ with probability at most $\left(1-1/2^{f(j)}\right)^{j}$.
    It follows that $v'$ is interesting with probability at most $\left(1-1/2^{f(j)}\right)^{j}$.
    In particular, if $f(j) < 1/2 \cdot \log n$, then $v'$ is interesting with probability at most $1-1/\sqrt{n}$.

    Now, let $n$ be sufficiently large, and consider an arbitrary path $P$ of level-$3$ nodes of length precisely $2n/f(n)$.
    If there are more than $4 \log n \cdot \sqrt{n}$ nodes $v'$ on $P$ for which the respective $j$ in the above calculations satisfies $f(j) < 1/2 \cdot \log n$, then with probability at least $1 - 1/n^4$, path $P$ contains a node that is not interesting.
    
    Hence, consider the case that there are at most $4 \log n \cdot \sqrt{n}$ such nodes $v'$ on $P$.
    It follows that there are at least $2n/f(n) - 4 \log n \cdot \sqrt{n} \geq n/f(n)$ nodes on $P$ for which the respective $j$ in the above calculations satisfies $f(j) \geq 1/2 \cdot \log n$.    
    Let $V''$ denote the set of these nodes, and consider some node $v'' \in V''$.
    Let $u''$ be a level-$2$ neighbor of $v''$.
    From the above discussion (and definitions) and the definition of an $f$-unfriendly path, we obtain that, if $v''$ is interesting, then the number of nodes that can be reached from $v''$ via edge $\{ v'', u'' \}$ is at least $(j' - 1) \cdot f(j' - 1)$, where $j'$ is the smallest positive integer $j'$ satisfying $f(j') \geq 1/2 \cdot \log n$.
    This implies that the number of nodes that can be reached from $v''$ via edge $\{ v'', u'' \}$ is at least $f^{-1}(1/2 \cdot \log n) - 1$ (as $f(j'  - 1) \geq 1$ for sufficiently large $n$).

    Therefore, if all nodes in $V''$ are interesting, we obtain that the input tree contains at least $f^{-1}(1/2 \cdot \log n) \cdot n/f(n)$ nodes, which implies $f^{-1}(1/2 \cdot \log n) \leq f(n)$ (as the input tree contains $n$ nodes), which in turn implies $1/2 \cdot \log n \leq f(f(n)) \leq f(g(n)) = 1/3 \cdot g(g(n)) = 1/3 \cdot \log n$, yielding a contradiction.
    Hence, in the considered case, there must be at least one node in $V''$ that is not interesting, concluding the consideration of that case.

    We obtain that the probability that $P$ contains a node that is not interesting is at least $1 - 1/n^4$.
    Since the input tree contains at most $n$ level-$3$ paths of length precisely $2n/f(n)$, it follows by a union bound that the probability that there exists a (not necessarily maximal) level-$3$ path of length precisely $2n/f(n)$ that consists only of interesting nodes is at most $1/n^3$.
    
    Now let us come back to our consideration of the interesting level-$3$ node $v$.
    By the above discussion, we conclude that, with probability at least $1 - 2/n^3 \geq 1 - 1/n^2$ (for sufficiently large $n$), node $v$ terminates after $O(n/f(n))$ rounds.
    
    Hence, each node of the input tree terminates with probability at least $1 - 1/n^2$ after $O(n/f(n))$ rounds, which implies that, with probability at least $1 - 1/n$, Algorithm $\fA$ terminates after $O(n/f(n))$ rounds.
    Since $g(n) = \Theta(f(n))$, the lemma statement follows.\
\end{proof}

\begin{theorem}\label{thm:randomized_k_3_lower}
    Let $g:\mathbb{R}_{\geq 1} \rightarrow \mathbb{R}_{\geq 0}$ be a monotonically increasing function satisfying
    $g(g(x)) = \log x$ for any real number $x \geq 2$. 
    Then any a randomized Las Vegas algorithm that w.h.p.\ solves $3$-hierarchical $2\frac{1}{2}$-coloring requires locality $\Omega(n/(\log n \cdot g(n))$ without knowledge of $n$.
\end{theorem}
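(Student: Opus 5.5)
Let $\A$ be a Las Vegas algorithm that w.h.p.\ solves $3$-hierarchical $2\frac12$-coloring in time $T(n)=o(n/(\log n\cdot g(n)))$; the plan is to derive a contradiction by building an instance from three nested families of paths. The approach mirrors \Cref{thm:randomized_k_2_lower}, with one extra layer of adversarial choice at level~$2$.

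\emph{Level 1.} Run $\A$ on the infinite path $\P$. Since $T$ is sublinear, there is a constant $n_0$ with $\Pr(T'_i\le n_0)\ge 1/2$ for every node. Concatenate $m$ segments of length $2n_0+1$ of $\P$. Each center then stops within $n_0$ rounds with probability at least $1/2$, and these events are independent. The parity trick of \Cref{lem:onlyDecl} shows that two early-stopping centers rule out a $2$-coloring. Hence a level-$1$ path built from $m$ segments declines with probability at least $1-2^{-\Omega(m)}$, and a path of length $\Theta(\log n)$ declines w.h.p. So level-$1$ paths of length $\Theta(\log n)$ are effectively forced to decline.

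\emph{Level 2.} Consider an infinite level-$2$ path in which every node carries a pendant level-$1$ path of length $\lambda$, for a parameter $\lambda$. By the previous step, once $\lambda=\Theta(\log N)$ all level-$1$ paths decline with probability $1-1/\mathrm{poly}(N)$. The level-$2$ nodes then all see interesting neighbors. The algorithm must decide how many rounds $r(\lambda)$ a central level-$2$ node runs before declining with constant probability. This view is independent of the true $n$, because the algorithm does not know $n$.

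We then split into two cases, playing the same trade-off as in \Cref{thm:lb-known-N}.
\begin{itemize}
\item \emph{Case (a).} For the relevant $\lambda\approx \log n$, suppose $r(\lambda)\ge j$ where $j\cdot\lambda\approx n$, i.e.\ $j\approx n/\log n$. Then we embed one such long level-$2$ path, with its pendant paths of length $\lambda$, into an $n$-node instance. There the locality must be $\Omega(n/\log n)$, which already suffices.
\item \emph{Case (b).} Otherwise level-$2$ nodes stop after $r(\lambda)$ rounds w.h.p.\ far inside. Using the same parity surgery, a level-$2$ path of length about $r(\lambda)\cdot\log n$, with pendant level-$1$ paths of length $\lambda$, declines w.h.p.
\end{itemize}

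\emph{Level 3.} Attach $M$ such declining level-$2$ gadgets, each of size $s=r\lambda\log n$, to a level-$3$ path of length $M=n/s$. All level-$3$ nodes are then interesting and must $2$-color a path of length $n/s$, which takes $\Omega(n/s)$ rounds. It remains to choose the scales. Define $\rho$ via the algorithm's behaviour: level-$2$ paths of length $\rho$ (times the level-$1$ size $\log n$) decline. The adversary picks $n$ so that either $\rho$ is large relative to $n$ (case (a) gives $\Omega(\rho)$) or small (case (b) gives $\Omega(n/(\rho\log n\cdot\log n))$). Balancing $\rho$ against $n/(\rho\log n)$ is exactly the functional equation: the worst case arises when $\rho\approx g(n)$ and $g(\rho)\approx\log$-scale. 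This yields $\Omega(n/(g(n)\log n))$, where the extra $\log n$ is the loss from needing $\Theta(\log n)$ early-stopping witnesses per path.

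\emph{Main obstacle.} The hardest step is making this balancing rigorous. The algorithm's threshold $r$ may depend on $\lambda$ in an arbitrary, non-monotone way, and only constant-probability statements hold per node. I would first fix $\lambda$ as a function of $n$ and define $r(\lambda)$ as the median stopping time. I would then use monotonicity of $g$ together with $g(g(n))=\log n$ to show that, for infinitely many $n$ (and then all $n$, by choosing $n$ as a function of $r$), one of the two cases produces the bound. Independence across disjoint segments will supply the w.h.p.\ amplification.
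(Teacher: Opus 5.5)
Your skeleton matches the paper's. Level-1 paths of length $\Theta(\log n)$ are forced to decline. The level-2 decline threshold $r(\lambda)$ for pendant paths of length $\lambda$ is independent of $n$. Then either level-2 nodes must wait long, or they decline quickly and a level-3 path must be $2$-colored; the extra $\log n$ comes from needing $\Theta(\log n)$ independent early decliners per level-2 path.

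The gap is exactly the step you flag as the main obstacle, which is where $g(g(n))=\log n$ enters, and your sketch of it does not work. Your case (a) keeps the same $n$ with $\lambda\approx\log n$, so it only covers $r(\lambda)\gtrsim n/\log n$. Case (b) at that $n$ gives $\Omega(n/(r(\lambda)\log^2 n))$. For, say, $r(\lambda)\approx\sqrt n$, both give only $\tilde\Omega(\sqrt n)$, far below $n/(g(n)\log n)$. Likewise, ``balancing $\rho$ against $n/(\rho\log n)$'' at a fixed $n$ produces $\rho\approx\sqrt{n/\log n}$, not a functional equation. You hint that the adversary should pick $n$ differently in the two cases, but you never say which $n$, what constraint it must satisfy, or why the two cases meet. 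As written, a threshold $\rho\approx g(n)$ on the level-2 length, combined with your case-(b) bound, would even lose a further $\log n$.

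The missing idea is as follows. Fix $\lambda$ (the paper's $i$). Let $j$ be the smallest integer such that the middle node of a level-2 path of length $6j$, carrying pendant paths of length $\lambda$, declines within $j$ rounds with probability at least $1/2$. Compare the size $n_j=\Theta(j\lambda)$ of that instance with $f^{-1}(\lambda)$, where $f=\Theta(g)$; the paper uses $f=30c\cdot g$, with $c$ the level-1 decline radius (your $n_0$).

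\emph{Case $n_j>f^{-1}(\lambda)$.} The hard instance is a \emph{smaller} one: take $j'<j$ with $f^{-1}(\lambda)<n_{j'}\le 2^{\lambda/(30c)}$. This window is nonempty since $g$ grows faster than $\log$. Level-1 paths still decline w.h.p.\ because $\lambda\ge 30c\log n_{j'}$. The level-2 path cannot decline within $j'$ rounds, which gives $\Omega(n_{j'}/\lambda)=\Omega(n_{j'}/g(n_{j'}))$ because $\lambda<f(n_{j'})$. In the $r(\lambda)\approx\sqrt n$ example above, the hard instance thus has only about $\sqrt n\log n$ nodes.

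\emph{Case $n_j\le f^{-1}(\lambda)$.} Take $n$ as large as possible with $\lambda\ge 30c\log n$. Level-2 gadgets of length $\Theta(j\log n)$ then decline w.h.p. Their size is $O(n_j\log n)\le O(f^{-1}(\lambda)\log n)=O(f(n)\log n)$, the last step because $f(f(n))\ge 30c\log n\approx\lambda$. The level-3 path therefore has length $\Omega(n/(g(n)\log n))$.

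The inequality $f^{-1}(30c\log n)\le f(n)$ is precisely why the two cases cover every possible behaviour of the algorithm. Treating each $\lambda$ separately, with the smallest such $j$, also removes your worry about non-monotone thresholds. Finally, this argument, like the paper's, yields the bound only for infinitely many $n$, so your parenthetical upgrade to all $n$ is unsupported.
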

\begin{proof}
    Let $\fA$ be an arbitrary randomized Las Vegas algorithm that w.h.p.\ solves $3$-hierarchical $2\frac{1}{2}$-coloring (without knowledge of $n$).
    Consider first the case that there exists no positive integer $c$ such that the probability that, at the middle node of a path of length $6c$, Algorithm $\fA$ terminates after at most $c$ rounds and outputs $D$ is at least $1/2$.
    We claim that then, for any sufficiently large $c$, the probability that Algorithm $\fA$ does not terminate after $c$ rounds on a path $P$ of length $6c$ is at least $1/4$.

    To prove our claim, assume for a contradiction that $\fA$ terminates on $P$ after $c$ rounds with probability at least $3/4$.
    Consider two nodes $v$ and $w$ on $P$ that have distance $2c + 1$ from each other and greater than $c$ from both endpoints of the path. 
    The probability that $v$ terminates after at most $c$ rounds and outputs a color is at least $3/4 - 1/2 = 1/4$ (as $v$'s $c$-hop view is isomorphic to the $c$-hop view of the middle node of $P$), and the same holds for $w$.
    Since the $c$-hop neighborhoods of $v$ and $w$ are isomorphic and non-overlapping, this implies that there is some color that either node outputs with probability at least $1/8$.
    As in any correct $2$-coloring, $v$ and $w$ must output different colors (as they have odd distance), it follows that the output of $\fA$ is incorrect with nonzero constant probability, yielding a contradiction (for sufficiently large $c$).
    Hence, the claim is true, which yields an $\Omega(n)$-round lower bound for $\fA$.
    
    Now consider the complementary case, and let $c$ be the smallest positive integer such that the probability that $\fA$ outputs $D$ at the middle node of a path of length $6c$ is at least $1/2$.
    Fix some arbitrary integer $i > 300c$.
    For any positive integer $j$, let $G_j$ denote the graph consisting of a path $P_j$ of length $6j$ together with a path of length $i$ attached at every node of $P_j$.
    Set $n_j := |V(G_j)|$.
    It follows that $n_j = (i + 2)(6j + 1)$.
    We consider two subcases.

    First, consider the case that, when executing $\fA$ on $G_j$, there exists no positive integer $j$ such that the probability that the middle node of $P_{j}$ terminates after at most $j$ rounds and outputs $D$ is at least $1/2$.
    Consider some $j$ such that $30c \log n_{j} \leq i \leq 30c(\log n_j + 1)$ (which exists by the definition of $n_j$ and implies $n_j > 300$).
    Consider $G_{j}$.    
    By splitting each of the $6j + 1$ paths attached to the nodes of $P_j$ into $5\log n_j$ chunks of length at least $6c$ each, we see that, for each such path, the probability that it outputs $D$ is at least $1 - 1/n_j^5$.
    Hence, the probability that no node on $P_j$ outputs $E$ is at least $1 - 1/n_j^4$.
    By an argument analogous to above, it follows that the probability that the middle node of $P_{j}$ terminates after $j$ rounds and outputs a color is smaller than $1/8 + 1/n_j^4$ (where we use that $n_j > 300$).
    Thus, the probability that the middle node of $P_j$ terminates after at most $j$ rounds is at most $1/2 + 1/8 + 2/n_j^4$, which implies that, on $G_j$, the probability that $\fA$ does not terminate after at most $j$ rounds is greater than $1/n_j$.
    Note that, if we initially fix $i$ to be sufficiently large, then we have $j > n_j/(300\log n_j)$.
    
    Now, consider the complementary subcase, and fix $j$ to be the smallest positive integer such that, when executing $\fA$ on $G_j$, the probability that the middle node of $P_{j}$ terminates after at most $j$ rounds and outputs $D$ is at least $1/2$.
    Moreover, let $f:\mathbb{R}_{\geq 1} \rightarrow \mathbb{R}_{\geq 0}$ be the function satisfying $f(x) := 30c \cdot g(x)$, for each real number $x \geq 1$.
    Again, we consider two subcases (of this subcase).

    Consider first the case that $j$ is such that $i < f(n_j)$.
    By the definition of $j$, for each $1 \leq j' < j$, the probability that the middle node of $P_{j'}$ terminates after at most $j'$ rounds and outputs $D$ is smaller than $1/2$.
    Now, with an argumentation analogous to above (and using the fact that $f$ grows faster than the $\log$-function), we obtain that there is some $j'$ satisfying $30c \log n_{j'} \leq i < f(n_{j'})$, such that the probability that, on $G_{j'}$, Algorithm $\fA$ terminates after at most $n_{j'}/(300g(n_{j'}))$ rounds is greater than $1/n_{j'}$ (if $i$ is chosen sufficiently large initially).

    Finally, consider the complementary subcase, i.e., that $j$ is such that $i \geq f(n_j)$.
    For any positive integer $\ell$, let $G_{\ell, \ell'}$ be the graph obtained as follows: start with a path $P'_{\ell}$ of length $\ell$, then attach at each node of $P'_{\ell}$ a path of length  $\ell'$, and finally attach at each node of each such path of length $\ell'$ a path of length $i$.
    Set $n_{\ell,\ell'} := |V(G_{\ell,\ell'})|$.
    It follows that $n_{\ell,\ell'} = (\ell + 1)((i + 2)(\ell' + 1) + 1)$.
    Now let $\ell, \ell'$ be such that
    \begin{enumerate}
        \item $i \geq 30c \log n_{\ell,\ell'}$ and
        \item $\ell' \geq 18j \log n_{\ell, \ell'}$.
    \end{enumerate}
    More precisely, let $\ell, \ell'$ be so that they maximize $n_{\ell,\ell'}$ with the mentioned properties.
    Observe that (due to the fact that, for the $\log$-function, a multiplicative change in the argument only results in an additive change of the function value) for sufficiently large $i$, there exists some constant $c'$ that is independent of the chosen $i$ (only requiring that $i$ is sufficiently large and such that $j$ exists and satisfies $i \geq f(n_j)$) such that
    \begin{enumerate}
        \item $i \leq c' \cdot 30c \log n_{\ell,\ell'}$ and
        \item $\ell' \leq c' \cdot 18j \log n_{\ell, \ell'}$.
    \end{enumerate}
    Since we also have $i \geq f(n_j)$, i.e., $n_j \leq f^{-1}(i)$, we obtain
    \[
        \ell \geq n_{\ell,\ell'}/(300c' \cdot 18 \log n_{\ell,\ell'} \cdot f^{-1}(30c \log n_{\ell,\ell'})).
    \]
    Since $f(f(n_{\ell,\ell'})) \geq f(g(n_{\ell,\ell'})) = 30c \cdot g(g(n_{\ell,\ell'})) = 30c \cdot \log n_{\ell,\ell'}$, we obtain $f(n_{\ell,\ell'}) \geq f^{-1}(30c \cdot \log n_{\ell,\ell'})$.
    Hence, we obtain
    \[
        \ell \geq n_{\ell,\ell'}/(100c' \cdot 18 \log n_{\ell,\ell'} \cdot f(n_{\ell,\ell'})).
    \]
    Observe that, with an argumentation analogous to above (and applying two more union bounds), the probability that all nodes on the attached paths of length $\ell'$ output $D$ is at least $1 - 1/n^2$.
    Since, in this case, the nodes on the path of length $\ell$ need to output a proper $2$-coloring, we obtain that the probability that $\fA$ terminates after at most $\ell/6$ rounds is smaller than $1 - 1/n$.
    This concludes the individual considerations of all of the above cases.
    
    Using that $g(n) = \Theta(f(n))$, we obtain that there is some positive universal constant $\beta$, such that, for any sufficiently large $i$, there is some graph with $n > i$ nodes such that the probability that $\fA$ terminates on this graph after at most $\beta \cdot n/(\log n \cdot g(n))$ rounds is smaller than $1 - 1/n$.
    This implies that there are infinitely many $n$ such that there exists a graph $G$ with $n$ nodes such that $\fA$ does not terminate after at most $\beta \cdot n/(\log n \cdot g(n))$ rounds with probability larger that $1/n$.
    This implies the claimed lower bound of $\Omega(n/(\log n \cdot g(n)))$ rounds.
\end{proof}

\end{document}